\numberwithin{equation}{section}
\newcommand{\ber}{\begin{red}}
\newcommand{\er}{\end{red}}
\newcommand{\beb}{\begin{blue}}
\newcommand{\eb}{\end{blue}}
\newcommand{\tens}[1]{%
  \mathbin{\mathop{\otimes}\displaylimits_{#1}}%
}
\theoremstyle{plain}
\newtheorem{lemma}{Lemma}[section]
\newtheorem{proposition}[lemma]{Proposition}
\newtheorem{theorem}[lemma]{Theorem}
\newtheorem{corollary}[lemma]{Corollary}
\theoremstyle{definition}
\newtheorem{remark}[lemma]{Remark}
\newtheorem{example}[lemma]{Example}
\newtheorem{definition}[lemma]{Definition}
\newtheorem{pdefinition}[lemma]{Proposition-Definition}
\newcommand{\vac}{\left| 0 \right>}
\newcommand{\n}{\mathfrak{n}}
\newcommand{\g}{\mathfrak{g}}
\newcommand{\h}{\mathfrak{h}}
\newcommand{\RR}{\mathbb{R}}
\newcommand{\CC}{\mathbb{C}}
\newcommand{\ZZ}{\mathbb{Z}}
\newcommand{\w}{\omega}
\newcommand{\HH}{\mathcal{H}}
\newcommand{\superz}{Y\big(\left|-\tfrac{\alpha}{\nu}\right>,Z\big)}
\newcommand{\superw}{Y\big(\left|-\tfrac{\alpha}{\nu}\right>,W\big)}
\newcommand{\expo}{e^{-\frac{1}{\nu}\int \alpha(Z)}}
\newlength{\mylength}
\newcommand*{\rom}[1]{\expandafter\@slowromancap\romannumeral #1@}
\title{Free Field Realization of Supersymmetric~W-algebras}
\author
[A. Song]{Arim Song}
\address[A. Song]{Department of Mathematical Sciences, Seoul National University, GwanAkRo 1, Gwanak-Gu, Seoul 08826}
\email{ireansong@snu.ac.kr}
\thanks{This research was supported by Basic Science Research Program through the National Research Foundation of Korea (NRF) funded by the Ministry of Education (RS-2023-00272036,  RS-2024-00409689).}
\begin{document}
\maketitle

\begin{abstract}
  We show that supersymmetric(SUSY) W-algebra of generic level can be realized as an intersection of the kernels of the screening operators. Applying this result to principal SUSY W-algebras, we get their free field realization inside SUSY Heisenberg vertex algebras. Furthermore, the screening operators for principal SUSY W-algebras allow us to present them as intersections of the principal SUSY W-algebras associated with $\mathfrak{osp}(1|2)$,  $\mathfrak{osp}(2|2)$, $\mathfrak{osp}(3|2)$ and $\mathfrak{osp}(4|2)$, tensored with SUSY Heisenberg vertex algebras.
\end{abstract}

\section{Introduction} \label{sec: intro}
W-algebras were first introduced in the physical context, particularly in the study of $2$-dimensional conformal field theory \cite{IMY88,DRS92,MR94}. When a Lie (super)algebra with an $\mathfrak{sl}(2)$ embedding is given, we can define the corresponding W-algebra by quantum Hamiltonian reduction \cite{FF90}. After the invention of the vertex algebra language, this process was translated in terms of the vertex algebra cohomology \cite{KRW04}, which in turn indicates the vertex algebra structure of W-algebras. As a supersymmetric analogue of W-algebras, supersymmetric(SUSY) W-algebras are introduced. Since the vertex algebra language did not align with the physicists' words to study supersymmetric field theory, the appropriate language to study them was required. Thanks to the work \cite{Kac98,HK07} of Heluani and Kac, the notion of $N_K=N$ SUSY vertex algebras was developed, which is later used in \cite{MRS21} to define SUSY W-algebras via SUSY BRST construction.

In this paper, we mainly focus on $N_K=1$ SUSY vertex algebras, which we simply refer to as SUSY vertex algebras. The SUSY vertex algebra is a vertex algebra equipped with an odd linear operator $D$ whose square is equal to the translation operator $\partial$. With respect to this odd map $D$, one can find the supersymmetric counterpart for each element. The smallest but the most important toy example of a SUSY vertex algebra is the Neveu-Schwarz vertex algebra, which is the supersymmetric analogue of a Virasoro vertex algebra. The Neveu-Schwarz vertex algebra $NS^c$ of central charge $c\in \CC$ is a vertex algebra freely generated by an even element $L$ and an odd element $G$. The OPEs between the generators are given by 
\begin{equation} \label{eq: Neveu-Schwarz OPE}
  \begin{aligned}
  L(z)L(w) &\sim \frac{\partial_w L(w)}{z-w}+\frac{2L(w)}{(z-w)^2}+\frac{c}{6(z-w)^4}, \\
  L(z)G(w) & \sim \frac{\partial_w G(w)}{z-w}+\frac{3 G(w)}{2(z-w)^2},\\
  G(z)G(w)&\sim \frac{2L(w)}{z-w}+\frac{c}{3(z-w)^3}.
  \end{aligned}
\end{equation}
Denote the coefficients of the fields by $L(z)=\sum_{n\in \ZZ}L_{(n)}z^{-n-1}$ and $G(z)=\sum_{n\in \ZZ}G_{(n)}z^{-n-1}$. Since we have $D^2=\partial$ for $D:=G_{(0)}$, the Neveu-Schwarz vertex algebra is a SUSY vertex algebra. Furthermore, as one can check in \eqref{eq: Neveu-Schwarz OPE}, the element $L$ generates a Virasoro vertex algebra and $G$ is primary of conformal weight $\frac{3}{2}$ along with the simplest OPE with itself. Therefore, the Neveu-Schwarz vertex algebra $NS^c$ can be regarded as the $N=1$ supersymmetric analogue of a Virasoro vertex algebra.

Moreover, one can observe that the commutator with the supersymmetry $D$ connects the two fields $G(z)$ and $2L(z)$. To be specific, it draws the following commutative diagram:
\begin{center}
  \begin{tikzpicture}
    \node (Ln-1) at (-1.5,1) {$-2nL_{(n-1)}$};
    \node (Ln) at (-1.5,-1) {$2L_{(n)}$};
    \node (Gn-1) at (1.5,0) {$-nG_{(n-1)}$};
    \node (Gn) at (1.5,-2) {$G_{(n)}$};
    \draw[thick, ->] (Gn) -- (Ln) node[midway, sloped, below] {\tiny{$[D, -]$}};
    \draw[thick, ->] (Ln) -- (Gn-1) node[midway, sloped, above] {\tiny{$[D, -]$}};
    \draw[thick, ->] (Gn-1) -- (Ln-1) node[midway, sloped, above] {\tiny{$[D, -]$}};
    \draw[thick, ->] (Ln) -- (Ln-1) node[midway, left] {\tiny{$[\partial, -]$}};
    \draw[thick, ->] (Gn) -- (Gn-1) node[midway, right] {\tiny{$[\partial, -]$}};
  \end{tikzpicture}
\end{center}
Emphasizing that the two fields are associated, we write the vertex operators of $G$ and $2L$ as a single unit in either of the following ways
\begin{equation} \label{eq: two ways for SUSY VA}
  G(z)+2\theta L(z), \quad [2L{}_{\lambda}\cdot]+\chi[G{}_{\lambda}\cdot]
\end{equation}
by introducing an odd formal variable $\theta$ or $\chi$. The first part of \eqref{eq: two ways for SUSY VA} generalizes to the definition of SUSY vertex algebras via superfields, and the second part generalizes to the definition via $\Lambda$-brackets \cite{HK07}. We choose the latter one as a main definition of SUSY vertex algebras since the $\Lambda$-brackets are particularly useful in describing the SUSY BRST construction.

The SUSY BRST construction is the cohomological construction of a SUSY W-algebra when a Lie superalgebra $\g$ and its $\mathfrak{osp}(1|2)$ subalgebra $\mathfrak{s}$ are given. The corresponding SUSY W-algebra of level $k\in \CC$ is denoted by  $W^k(\bar{\g},f)$, where $f$ is an odd nilpotent in $\mathfrak{s}$. We expect the SUSY W-algebras to have subalgebras isomorphic to Neveu-Schwarz vertex algebras, just as W-algebras have Virasoro vertex algebras as subalgebras. Furthermore, we expect the SUSY W-algebras to be decomposed into the direct sum of the eigenspaces with respect to the action of the Neveu-Schwarz subalgebra, which enables us to consider the concept of conformal weight in SUSY W-algebras. This property is known as a superconformality, and numerous papers \cite{Ademollo76, MR94} conjectured the superconformality of SUSY W-algebras. We provide properties of SUSY W-algebras in Section \ref{sec: SUSY W-algebras} along with the proof of the superconformality of SUSY W-algebras in Theorem \ref{thm: superconformal for susy W-alg} for noncritical level. In particular, it implies that the SUSY W-algebra of the noncritical level made up of the Lie superalgebra $\g=\mathfrak{osp}(1|2)$ is isomorphic to the Neveu-Schwarz vertex algebra.

The main goal of this paper is to show the free field realization of SUSY W-algebra as a kernel of screening operators. Free field realization is to give a representation of a conformal field theory in free bosons or free fermions \cite{Frenkel95}. In terms of vertex algebras, it means to embed a conformal vertex algebra into a tensor product of free field vertex algebras, such as Heisenberg or free fermion vertex algebras. Since the free field vertex algebras have far simpler structures than any other vertex algebras, showing the free field realization of SUSY W-algebras would facilitate the study of their properties. Furthermore, the screening operators particularly simplify the theory, since each operator ``screens out'' the need to consider other simple roots. To elaborate on the meaning of the screening, we recall the free field realization of W-algebras in \cite{FFduality92}. In the paper, they show that any W-algebra made up of a finite simple Lie algebra and its principal $\mathfrak{sl}(2)$ subalgebra can be embedded into a Heisenberg vertex algebra. Also, the realization is given by the kernel of screening operators, where each screening operator corresponds to one simple root of the Lie algebra. If we only focus on the kernel of a single screening operator, we can see that it is isomorphic to the Virasoro vertex algebra tensored with the Heisenberg vertex algebra. This fact led to the proof of Feigin-Frenkel duality for principal W-algebras. Additionally, it was a main tool in verifying the coset realizations of W-algebras \cite{ACL19}. In the case of Lie superalgebras instead of Lie algebras, the free field realization of them was shown by \cite{Genra17}, when the level $k\in \CC$ is generic. However, further analogous problems for W-algebras associated with Lie superalgebras are still open.

In this paper, we use spectral sequences to prove the free field realization of SUSY W-algebras as kernels of screening operators. This method is inspired by the paper of Genra in \cite{Genra17} which only needs intrinsic information of W-algebras. We set the $\ZZ_{\geq 0}$-grading called the weight on the SUSY BRST complex and consider the induced spectral sequence. Then we observe that the cohomology of the total complex on the second page gives the original SUSY W-algebra. To be explicit, we have
\begin{equation} \label{eq: intro main}
  W^k(\bar{\g},f)\simeq H(E^k_2, d_2),
\end{equation}
where $E^k_2$ and $d_2$ are the total complex and the differential on the second page. Eventually, the isomorphism \eqref{eq: intro main} gives the free field realization of $W^k(\bar{\g},f)$ under some conditions. In particular, the structure of the total complex $E^k_2$ is given by the SUSY Heisenberg vertex algebra, and the action of the differential $d_2$ gives the formula of screening operators for SUSY W-algebras. We dedicate Section \ref{bigsec: FF realization of SUSY W-algebras} to investigate \eqref{eq: intro main} and prove the following main theorem. Furthermore, we see that the isomorphism in \eqref{eq: intro thm statement} is equal to the Miura map $\mu_k$ for SUSY W-algebras. As a corollary, we get the injectivity of $\mu_k$ for generic $k\in \CC$.

\begin{theorem}[Theorem \ref{thm: main} and \ref{thm: main-screening}] \label{thm: intro main}
For generic $k\in \CC$, the SUSY W-algebra of level $k$ is isomorphic to
  \begin{equation} \label{eq: intro W-algebra as kernel}
    W^k(\bar{\g},f)\simeq \bigcap_{[\alpha]\in[I_0]}\ker\Big(\sum_{\beta\in[\alpha]}(f|u_{\beta}){\phi^{\beta}}_{(0|0)}\Big)\subset V^{\xi_k}(\bar{\g}_0),
  \end{equation}
  where ${\phi^{\alpha}}_{(0|0)}$ is given by the adjoint action on $E^k_2$ and $V^{\xi_k}(\bar{\g}_0)$ is the SUSY affine vertex algebra of shifted level $\xi_k$ associated with $\g_0$. In particular, when $\g_0=\h$, \eqref{eq: intro W-algebra as kernel} can be written as  
\begin{equation} \label{eq: intro thm statement}
    W^k(\bar{\g},f)\simeq \bigcap_{\substack{\alpha\in \Pi\\(f|u_{\alpha})\neq 0}}\! \ker \int  e^{-\frac{1}{\nu} \int \alpha(Z)} dZ\subset \widehat{\pi}.
  \end{equation}
  Here, $\widehat{\pi}$ is the SUSY Heisenberg vertex algebra associated with $\h$ and $\int e^{-\frac{1}{\nu}\int \alpha(Z)} dZ$ is the screening operator for each simple root $\alpha \in \Pi$ with $(f|u_{\alpha})\neq 0$.
\end{theorem}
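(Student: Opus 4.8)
The plan is to obtain both \eqref{eq: intro W-algebra as kernel} and \eqref{eq: intro thm statement} from the spectral-sequence isomorphism \eqref{eq: intro main}, so the first task is to build that spectral sequence. I would start from the SUSY BRST complex $(\mathcal{C}^k,d_{(0)})$ of Section \ref{sec: SUSY W-algebras} computing $W^k(\bar{\g},f)$ --- the tensor product of the SUSY affine vertex algebra $V^k(\bar{\g})$ with the charged and neutral SUSY fermion vertex algebras attached to the $\mathfrak{osp}(1|2)$-grading of $\g$ --- and equip it with a $\ZZ_{\geq 0}$-valued grading, the \emph{weight}, chosen so that (i) it refines the conformal-weight and charge gradings with each joint graded piece finite in the weight direction, and (ii) the differential decomposes as $d_{(0)}=d_{(0)}^{0}+d_{(0)}^{1}+\cdots$ with $d_{(0)}^{i}$ of weight $i$, where $d_{(0)}^{0}$ is the ``Koszul part'' pairing the (neutral and charged) fermions with $\bar{\g}_{>0}$ and $\bar{\g}_{<0}$. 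Property (i) makes the induced spectral sequence $E_r^k$ converge to $\mathrm{gr}\,W^k(\bar{\g},f)$.

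Next I would compute the low pages. The $E_0$-cohomology (cohomology of $d_{(0)}^{0}$) is a standard Koszul / $bc$-system computation: it is concentrated in charge $0$ and collapses the fermions together with the off-diagonal parts $\bar{\g}_{>0}\oplus\bar{\g}_{<0}$ of $V^k(\bar{\g})$, leaving --- \emph{as a SUSY vertex algebra} --- exactly the SUSY affine vertex algebra $V^{\xi_k}(\bar{\g}_0)$, where the level shift $\xi_k$ absorbs the fermionic and off-diagonal contributions to the invariant form. With the weight arranged so that the differential has no weight-$1$ component, the page $E_2^k$ (still a complex, as in \eqref{eq: intro main}) is identified with $V^{\xi_k}(\bar{\g}_0)$, and the differential $d_2$ is read off from the next component of $d_{(0)}$; I would identify it with the operator $\sum_{[\alpha]\in[I_0]}\sum_{\beta\in[\alpha]}(f|u_{\beta})\,{\phi^{\beta}}_{(0|0)}$ acting by the adjoint action. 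A charge/weight count then forces $d_r=0$ for $r\geq 3$, so the sequence degenerates and $W^k(\bar{\g},f)\simeq H(E_2^k,d_2)$; since each screening maps $V^{\xi_k}(\bar{\g}_0)$ into a nontrivial module and hence produces no coboundaries in the degree where the vertex algebra sits, this cohomology is precisely the intersection of kernels in \eqref{eq: intro W-algebra as kernel}. Genericity of $k$ enters here to guarantee that $\xi_k$ is noncritical and nonresonant, so that the $E_0$- and $E_2$-cohomologies are concentrated in a single degree and no extension ambiguities remain.

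For \eqref{eq: intro thm statement} I would specialize to $\g_0=\h$. Then $V^{\xi_k}(\bar{\h})$ is by definition the SUSY Heisenberg vertex algebra $\widehat{\pi}$ attached to $\h$, each class $[\alpha]\in[I_0]$ collapses to a single simple root $\alpha\in\Pi$ with $(f|u_{\alpha})\neq 0$, and what remains is the explicit computation rewriting the adjoint action of ${\phi^{\alpha}}_{(0|0)}$ on $\widehat{\pi}$ in terms of the free SUSY bosonic field: this produces the vertex operator $e^{-\frac{1}{\nu}\int\alpha(Z)}$ and hence the screening $\int e^{-\frac{1}{\nu}\int\alpha(Z)}\,dZ$. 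Finally I would check that the embedding $W^k(\bar{\g},f)\hookrightarrow V^{\xi_k}(\bar{\g}_0)$ produced by the spectral sequence coincides with the Miura map $\mu_k$, by tracing $\mu_k$ through its description as the class of the projection $V^k(\bar{\g})\to V^k(\bar{\g}_0)$ on zeroth cohomology; injectivity of $\mu_k$ for generic $k$ is then immediate.

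The step I expect to be the main obstacle is the vertex-algebra-level control of the two cohomology computations. Getting the underlying vector spaces is routine, but one must verify that $E_2^k\simeq V^{\xi_k}(\bar{\g}_0)$ and $H(E_2^k,d_2)\simeq W^k(\bar{\g},f)$ are isomorphisms of SUSY vertex algebras, pin down the exact shifted level $\xi_k$, and confirm that $d_2$ is genuinely the screening operator rather than merely agreeing with it to leading order. Establishing degeneration at $E_2^k$ for \emph{all} generic $k$ --- equivalently, that the weight filtration on $W^k(\bar{\g},f)$ splits --- is the other delicate point, and is precisely where the genericity hypothesis should do real work.
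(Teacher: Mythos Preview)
Your overall strategy---weight filtration, spectral sequence, identify $E_2$ and $d_2$---matches the paper's, but two of your claimed computations are wrong in ways that matter.

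First, you assert that the $E_0$-cohomology is ``concentrated in charge $0$'' and that $E_2^k$ is identified with $V^{\xi_k}(\bar{\g}_0)$ alone. This cannot be: the differential $d_2$ raises charge by $1$, so if $E_2^k$ lived entirely in charge $0$ it would vanish identically and you would get $W^k(\bar{\g},f)\simeq V^{\xi_k}(\bar{\g}_0)$, which is false. You seem to notice this later (``each screening maps $V^{\xi_k}(\bar{\g}_0)$ into a nontrivial module''), but you never build that module. The paper's actual computation (Proposition \ref{prop: the first total complex}, Theorem \ref{thm: module structure of the first total complex}) gives $E_1^k(\leq 1)\simeq V^{\xi_k}(\bar{\g}_0)\oplus\bigoplus_{[\alpha]\in[I_0]} M_{[\alpha]}$, with the $M_{[\alpha]}$ sitting in charge $1$ as the targets of the screenings; identifying these modules (and, when $\g_0=\h$, recognizing them as Fock modules $\widehat{\pi}_{-\alpha/\nu}$) is half the work. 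Relatedly, the paper first passes to the subcomplex $\widetilde{C}^k$ generated by the $J_{\bar a}$ and $\phi^\alpha$ (Theorem \ref{thm: susy W-alg charge 0}) before filtering by weight; on that complex the zeroth differential is $d_{\textup{st}}{}_{(0|0)}$, not a Koszul map pairing fermions with $\bar{\g}_{>0}\oplus\bar{\g}_{<0}$. (Also: the SUSY BRST complex here has no neutral fermions---see \eqref{eq: BRST complex}.)

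Second, your account of where genericity enters (``$\xi_k$ is noncritical and nonresonant'') is not the mechanism the paper uses, and it is unclear how you would make it precise. The paper instead computes $E_1^\infty$ in the quasi-classical limit, where $\widetilde{C}^\infty$ is a supercommutative differential algebra and the cohomology can be done by hand (Appendix \ref{appendix: proof total complex}), and then invokes a Zariski-density argument (Lemma \ref{lem: Zariski dense}) to transfer the result to $E_1^k$ for generic $k$. Without this device you would have to compute $H(\widetilde{C}^k,d_{\textup{st}}{}_{(0|0)})$ directly for noncommutative $\widetilde{C}^k$, which is substantially harder.
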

 In Theorem \ref{thm: intro main}, we denote by $\g_0$ the subspace of $\g$ given by the eigenspace of eigenvalue $0$ for the adjoint action of $H\in \mathfrak{osp}(1|2)$ and $\h$ the Cartan subalgebra of $\g$. One of the main ideas in proving this theorem is that we use the quasi-classical limit of SUSY W-algebras called the classical SUSY W-algebras. They have the structure of SUSY Poisson vertex algebras, which are the supersymmetric analogue of Poisson vertex algebras. As for the classical W-algebras cases, the classical SUSY W-algebras have a great advantage in doing computations. Due to Lemma \ref{lem: Zariski dense}, this advantage is especially useful, since the classical SUSY W-algebras explains the behavior of $W^k(\bar{\g},f)$ for generic $k\in \CC$.



The most important class of Lie superalgebras that satisfies $\g_0=\h$ would be the Lie superalgebras that admit the principal $\mathfrak{osp}(1|2)$ embedding. Denote the odd nilpotent element of the principal $\mathfrak{osp}(1|2)$ subalgebra by $f_{\textup{prin}}$ and the dual coxeter number of $\g$ by $h^{\vee}(\g)$. Then for the shifted level $\Psi:=k-h^{\vee}(\g)$, we can analyze the structure of a principal SUSY W-algebra $W^{\Psi}(\bar{\g}):=W^{\Psi}(\bar{\g},f_{\textup{prin}})$ by applying Theorem \ref{thm: intro main}.
\begin{theorem}[Theorem \ref{thm: principal SUSY W-algebra}] \label{thm: intro principal}
  Let $\g$ be a finite basic simple Lie superalgebra which is not the type of $D(2,1;\alpha)$. Then the principal SUSY W-algebra $W^{\Psi}(\bar{\g})$ of generic level can be expressed as the intersection of the vertex algebras in the following list:
  \begin{equation} \label{eq: intro principal list}
    W^{\Psi}(\overline{\mathfrak{osp}(1|2)}),\quad W^{\Psi}(\overline{\mathfrak{osp}(2|2)}),\quad W^{\Psi}(\overline{\mathfrak{osp}(3|2)}),\quad W^{\Psi}(\overline{\mathfrak{osp}(4|2)}),
  \end{equation}
  where the list \eqref{eq: intro principal list} is up to the tensor product of SUSY Heisenberg vertex algebra.
\end{theorem}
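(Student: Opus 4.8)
The plan is to deduce Theorem~\ref{thm: intro principal} from Theorem~\ref{thm: intro main} by means of a ``one-step-down'' lemma that identifies the kernel of a \emph{subset} of the screening operators with the principal SUSY W-algebra of a sub-superalgebra, together with a combinatorial analysis of fermionic simple root systems. First I would specialize Theorem~\ref{thm: intro main} to the principal case: since $\g$ carries a principal $\mathfrak{osp}(1|2)$ it admits a simple root system $\Pi=\{\alpha_1,\dots,\alpha_\ell\}$ of odd roots with $\g_0=\h$, and the principal odd nilpotent has the form $f_{\mathrm{prin}}=\sum_i c_i f_{\alpha_i}$ with all $c_i\neq 0$; hence $(f_{\mathrm{prin}}\,|\,u_\alpha)\neq 0$ for every $\alpha\in\Pi$, so that \eqref{eq: intro thm statement} reads
\[
  W^{\Psi}(\bar{\g})\;\simeq\;\bigcap_{\alpha\in\Pi}\ker Q_\alpha\ \subset\ \widehat{\pi},\qquad Q_\alpha:=\int\expo\,dZ ,
\]
with $\widehat{\pi}$ the rank-$\ell$ SUSY Heisenberg vertex algebra attached to $\h$.

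Next I would establish the local lemma: for a subset $S\subseteq\Pi$ that is the simple root system of a basic simple sub-superalgebra $\g_S\subseteq\g$ carrying a principal $\mathfrak{osp}(1|2)$, one has $\bigcap_{\alpha\in S}\ker Q_\alpha\simeq W^{\Psi}(\bar{\g}_S)\otimes\widehat{\pi}_S^{\perp}$ for a SUSY Heisenberg vertex algebra $\widehat{\pi}_S^{\perp}$. The point is that the invariant form restricts nondegenerately to $\h_S:=\sum_{\alpha\in S}\CC\alpha^\vee$ (the Cartan matrix of $\g_S$ being invertible), so $\widehat{\pi}$ splits \emph{orthogonally} as $\widehat{\pi}_S\otimes\widehat{\pi}_S^{\perp}$, with $\widehat{\pi}_S$ attached to $\h_S$; each $Q_\alpha$ with $\alpha\in S$ involves only the $\h_S$-component of the SUSY Heisenberg field and so acts through the first factor, whence $\bigcap_{\alpha\in S}\ker Q_\alpha=\big(\bigcap_{\alpha\in S}\ker(Q_\alpha|_{\widehat{\pi}_S})\big)\otimes\widehat{\pi}_S^{\perp}$; and applying Theorem~\ref{thm: intro main} to $\g_S$ identifies the first factor with $W^{\Psi}(\bar{\g}_S)$, once the normalization $\nu$ is matched --- which is where genericity of the level enters. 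As with Theorem~\ref{thm: intro main} itself, I expect the cleanest route to this identification to be a passage to the quasi-classical limit (classical SUSY W-algebras / SUSY Poisson vertex algebras), followed by Lemma~\ref{lem: Zariski dense} to return to generic $k$.

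Then I would check, case by case through the classification, that a basic simple Lie superalgebra with a principal $\mathfrak{osp}(1|2)$ and not of type $D(2,1;\alpha)$ has a fermionic Dynkin diagram admitting a covering by connected sub-diagrams $S_1,\dots,S_m$, $\bigcup_j S_j=\Pi$, with each $\g_{S_j}$ one of $\mathfrak{osp}(1|2)$, $\mathfrak{osp}(2|2)\cong\mathfrak{sl}(2|1)$, $\mathfrak{osp}(3|2)$, $\mathfrak{osp}(4|2)\cong D(2,1;1)$: a single (necessarily non-isotropic) node occurs only for $\g=\mathfrak{osp}(1|2)$ and gives $\mathfrak{osp}(1|2)$; an edge joining two isotropic nodes gives $\mathfrak{osp}(2|2)$; an edge meeting the non-isotropic node gives $\mathfrak{osp}(3|2)$; and the rank-$3$ branching configuration that appears in the $D$-type families gives $\mathfrak{osp}(4|2)$. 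Since every simple root lies in some $S_j$, combining the first step with the local lemma yields
\[
  W^{\Psi}(\bar{\g})\;\simeq\;\bigcap_{\alpha\in\Pi}\ker Q_\alpha\;=\;\bigcap_{j=1}^{m}\Big(\bigcap_{\alpha\in S_j}\ker Q_\alpha\Big)\;\simeq\;\bigcap_{j=1}^{m}\big(W^{\Psi}(\bar{\g}_{S_j})\otimes\widehat{\pi}_{S_j}^{\perp}\big),
\]
which is the assertion.

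The hard part is the normalization bookkeeping in the local lemma, and it is exactly what forces the exclusion of $D(2,1;\alpha)$: its fermionic Dynkin diagram is a triangle of three isotropic nodes, whose three two-node sub-diagrams each generate a copy of $\mathfrak{sl}(2|1)\cong\mathfrak{osp}(2|2)$, but the restrictions of the invariant form to the three rank-$2$ Cartans carry three mutually incommensurable, $\alpha$-dependent normalizations, so those three $\mathfrak{osp}(2|2)$-factors sit at three different levels and cannot all be brought to the common level $\Psi$ demanded in the statement. Showing that this obstruction is absent for every other diagram --- equivalently, that the screening normalizations attached to the $\mathfrak{osp}(m|2)$-pieces are uniform --- together with the explicit coverings, is the technical heart of the argument.
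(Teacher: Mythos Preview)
Your strategy matches the paper's: specialize Theorem~\ref{thm: intro main} to the principal case to get $W^\Psi(\bar\g)\simeq\bigcap_{\alpha\in\Pi}\ker Q_\alpha\subset\widehat\pi$, split $\widehat\pi\simeq\widehat\pi_S\otimes\widehat\pi_S^\perp$ for each chosen subset $S\subset\Pi$ so that the screening operators for $S$ act only on the first factor (the paper's Lemma~\ref{lem: perp decomposition}), identify that factor with $W^\Psi(\bar\g_S)$ by applying Theorem~\ref{thm: intro main} to $\g_S$, and run through the classification of Proposition~\ref{prop: principal osp(1|2) classification} case by case. The paper uses \emph{disjoint} groupings rather than an overlapping cover, but the intersection identity works either way.

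Two corrections. First, a single non-isotropic node does not arise only for $\g=\mathfrak{osp}(1|2)$: for $\mathfrak{osp}(2n-1|2n)$ the odd simple system has $2n-1$ nodes, the first $2n-2$ pair into copies of $\mathfrak{osp}(2|2)$, and the leftover non-isotropic $\delta_n$ supplies the $\mathfrak{osp}(1|2)$ factor. Second, the local lemma needs no quasi-classical detour. Lemma~\ref{lem: perp decomposition} is a one-line computation from $[\phi^\alpha{}_\Lambda J_{\bar h}]=(-1)^{p(\alpha)}\alpha(h)\phi^\alpha$ showing each $Q_\alpha$ with $\alpha\in S$ kills $\widehat\pi_S^\perp$; then Theorem~\ref{thm: intro main} applied to $\g_S$ gives the identification directly. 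The shifted-level convention $\Psi=k-h^\vee(\g)$ makes $\nu=\sqrt{k}$ independent of which algebra you are in, so the ``normalization bookkeeping'' reduces to checking that the form on $\g$ restricts to the standard form on each $\g_S$, which is immediate for the specific groupings the paper uses. As for $D(2,1;\alpha)$, the paper simply remarks that its rank-$3$ triangular diagram (same shape as $\mathfrak{osp}(4|2)$, different edge labels) cannot be split further and is not itself in the list~\eqref{eq: intro principal list}; your edge-level mismatch argument reaches the same conclusion by a longer route.
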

Note that the Lie superalgebras $\mathfrak{osp}(1|2)$, $\mathfrak{osp}(2|2)$, $\mathfrak{osp}(3|2)$ and $\mathfrak{osp}(4|2)$ are conjecturally the minimal Lie superalgebras whose associated principal W-algebras have $N=1,2,3,$ and $4$ superconformal structure up to the tensor product of free field vertex algebras \cite{LS84,RSS96,GS88}. As a result of Theorem \ref{thm: intro principal}, we can analyze the structure of any principal SUSY W-algebras on a much smaller scale. To be specific, the study of SUSY W-algebras in \eqref{eq: intro principal list} would help us understand $W^k(\bar{\g})$ for generic $k\in \CC$. Recall from earlier in Section \ref{sec: intro} that the analogous statement of Theorem \ref{thm: intro principal} holds in the study of principal W-algebras associated with a Lie algebra $\g$ \cite{FFduality92}. In this case, the list \eqref{eq: intro principal list} is replaced with the W-algebra associated with $\mathfrak{sl}(2)$, in other words, the Virasoro vertex algebra. Similarly to the proof of Feigin-Frenkel duality, the author anticipates that each SUSY W-algebra in the list \eqref{eq: intro principal list} would also have nontrivial dualities, which in turn gives the Feigin-Frenkel type duality for principal SUSY W-algebras.

The structure of this paper is as follows. In Section \ref{bigsec: SUSY VA vs. VA}, we recall SUSY vertex algebras and compare this structure with vertex algebras in Theorem \ref{thm: SUSY VA vs. VA}. Next, we see examples of SUSY vertex algebras in Section \ref{sec: examples of susy va}, namely, SUSY affine vertex algebras, SUSY fermionic vertex algebras, and SUSY Heisenberg vertex algebras. Section \ref{sec: SUSY W-algebras} provides a summary of SUSY W-algebras and their properties, accompanied by Theorem \ref{thm: superconformal for susy W-alg}, which shows the superconformality of SUSY W-algebras. Next, we prove the main Theorem \ref{thm: main-screening}, the free field realization of SUSY W-algebras in Section \ref{bigsec: FF realization of SUSY W-algebras}. The application of the main theorem to principal SUSY W-algebras is presented in Section \ref{sec: application to principal SUSY W-algebras}. Finally, we deal with the technical proofs supporting Section \ref{bigsec: FF realization of SUSY W-algebras}, and provide another definition of SUSY vertex algebra with superfield formalism in the appendix.

\section{Supersymmetric vertex algebras vs. vertex algebras}\label{bigsec: SUSY VA vs. VA} \hfill \\
In this section, we review the definition of supersymmetric vertex algebras, especially the one via the $\Lambda$-brackets. Next, by comparing their structure with vertex algebra, we get Theorem \ref{thm: SUSY VA vs. VA}, the simpler description of supersymmetric vertex algebras. Here, the supersymmetric vertex algebra means an $N_K=1$ supersymmetric vertex algebra for the $N_K=N$ supersymmetric vertex algebra introduced by Heluani-Kac \cite{HK07}.

\subsection{Notations}\label{sec: notations} \hfill \\
The notations in this section will be used throughout this paper. Consider any vector superspace $V=V_{\bar{0}}\oplus V_{\bar{1}}$. We write $p(A)=0$ (resp. $1$) if $A\in V_{\bar{0}}$ (resp. $V_{\bar{1}}$). For an odd endomorphism $D$ and an even endomorphism $\partial$ on $V$, we write these as a tuple
\begin{equation} \label{eq: nabla}
  \nabla=(\partial, D)
\end{equation}
if they satisfy $D^2=\partial$. For the above $\nabla$, denote the unital superassociative algebra generated by the elements as $\CC[\nabla]$. For formal variables $\lambda, \chi$ with $\lambda$ even and $\chi$ odd, we represent them as $\Lambda=(\lambda, \chi)$ when
\begin{equation} \label{eq: Lambda}
  [\chi,\chi]=-2\lambda, \quad [\lambda,\chi]=0.
\end{equation}
Let $\CC[\Lambda]$ be the unital superassociative algebra generated by $\Lambda$.
The superspace $\CC[\Lambda]\otimes V$ can be viewed as a $\CC[\nabla]$-module via
\begin{equation} \label{eq: D chi commutator}
  D(\chi A)=-\chi(DA)+2\lambda A
\end{equation}
for $A\in V$, in which the tensor notations are omitted.

\subsection{Supersymmetric vertex algebra}\hfill \\
We review the definitions of supersymmetric vertex algebras and superconformal vectors. In the original paper \cite{HK07} of Heluani and Kac, they proposed two equivalent definitions of supersymmetric vertex algebras, via superfields and via $\Lambda$-brackets. We give the latter as a definition of supersymmetric vertex algebras in this paper. For the former, we briefly introduce it in Appendix \ref{appendix: superfield formalism}. We first introduce supersymmetric Lie conformal algebra, which attributes to define supersymmetric vertex algebras.

\begin{definition}[\cite{HK07}] \label{def: SUSY LCA}
  Let $R$ be a superspace with a tuple of endomorphisms $\nabla$. We say that $R$ is a \textit{supersymmetric(SUSY) Lie conformal algebra(LCA)} if it is an $\CC[\nabla]$-module equipped with an odd linear map
  \begin{equation} \label{eq: Lambda bracket}
    [\cdot {}_{\Lambda} \cdot] : R\otimes R  \rightarrow \CC[\Lambda]\otimes R
  \end{equation}
  satisfying
  \begin{enumerate}[]
    \item (super sesquilinearity) $[DA{}_{\Lambda}B]=\chi[A{}_{\Lambda}B]$, $[A{}_{\Lambda}DB]=(-1)^{p(A)+1}(\chi+D)[A{}_{\Lambda}B]$,
    \item (super skew-symmetry) $[B{}_{\Lambda}A]=(-1)^{p(A)p(B)}[A{}_{-\Lambda-\nabla}B]$,
    \item (super Jacobi identity)
    \begin{equation*}
    [A{}_{\Lambda}[B{}_{\Gamma}C]]=(-1)^{p(A)+1}[[A{}_{\Lambda}B]{}_{\Lambda+\Gamma}C]+(-1)^{(p(A)+1)(p(B)+1)}[B{}_{\Gamma}[A{}_{\Lambda}B]]
    \end{equation*}
  \end{enumerate}
  for any $A,B,C\in R$. The action of $\nabla$ on the space $\CC[\Lambda]\otimes R$ is given by \eqref{eq: D chi commutator}, and $\Gamma=(\gamma, \mu)$ is a tuple of variables satisfying \eqref{eq: Lambda}, where $\chi$ and $\mu$ are supercommutative.
\end{definition}
Given any SUSY LCA, we denote the coefficients of the $\Lambda$-bracket \eqref{eq: Lambda bracket} by
\begin{equation} \label{eq: Lambda bracket coefficients}
  [A{}_{\Lambda}B]=\sum_{n\geq 0}\frac{\lambda^n}{n!} A_{(n|0)}B+\chi \sum_{n\geq 0} \frac{\lambda^n}{n!} A_{(n|1)}B.
\end{equation}
The computation of the RHS of super skew-symmetry needs to be noted. If we ignore the sign $(-1)^{p(A)p(B)}$, the RHS can be computed by applying the tuple $-\Lambda-\nabla=(-\lambda-\partial, -\chi-D)$ in replace of $\Gamma$ in $[A{}_{\Gamma}B]$. Using the notation in \eqref{eq: Lambda bracket coefficients},
\begin{equation*}
  [A{}_{-\Lambda-\nabla}B]=\sum_{n\geq 0}\frac{(-\lambda-\partial)^n}{n!} A_{(n|0)}B-(\chi+D) \sum_{n\geq 0} \frac{(-\lambda-\partial)^n}{n!} A_{(n|1)}B.
\end{equation*}

For any binary operation $:\ :$ on a SUSY LCA $R$, we say that the operation is \textit{super quasi-commutative} and \textit{super quasi-associative} if the following equations \eqref{eq: quasi-commutativity} and \eqref{eq: quasi-associativity} hold for any elements of $R$, respectively:
\begin{gather}
  :\!AB\!:-(-1)^{p(A)p(B)}:\!BA\!:=-\sum_{n\geq 1}\frac{(-\partial)^n}{n!}A_{(n-1|1)}B, \label{eq: quasi-commutativity}\\
  ::\!AB\!:\!C\!:-:\!A\!:\!BC\!::=\sum_{n\geq 0}A_{(-n-2|1)}B_{(n|1)}C+(-1)^{p(A)p(B)}\sum_{n\geq 0}B_{(-n-2|1)}A_{(n|1)}C. \label{eq: quasi-associativity}
\end{gather}
In \eqref{eq: quasi-associativity}, $A_{(-n-2|1)}B=\frac{1}{(n+1)!}:\!(\partial^{n+1} A)B\!:$ for positive $n$.

\begin{definition}[\cite{HK07}] \label{def: SUSY VA}
  Consider a superspace $V$ with a tuple of endomorphisms $\nabla$ and an even distinguished element $\vac\in V$. The superspace $V$ is called a \textit{supersymmetric(SUSY) vertex algebra} if
  \begin{itemize}
    \item $V$ is a SUSY LCA with $\Lambda$-bracket $[\cdot {}_{\Lambda}\cdot]$,
    \item $V$ is equipped with a parity-preserving binary operation $:\ :$ such that the algebra $(V, \vac, :\ :, D\in \nabla)$ is a super quasi-commutative and super quasi-associative unital algebra with an odd derivation $D$,
    \item the $\Lambda$-bracket and the binary operation $:\ :$ is related by the super Wick formula, i.e., 
    \begin{equation}\label{eq: Wick formula}
      [A{}_{\Lambda}:\!BC\!:]=:\![A{}_{\Lambda}B]C\!:+(-1)^{(p(A)+1)p(B)}:\!B[A{}_{\Lambda}C]\!:+\sum_{n\geq 1}\frac{\lambda^n}{n!}[A{}_{\Lambda}B]_{(n-1|1)}C
    \end{equation}
    for any $A,B,C\in V$.
  \end{itemize}
\end{definition}
Note that the $(n-1|1)$th product on the last term of \eqref{eq: Wick formula} only applies to the coefficients. To be specific, the last term equals the following expression.
\begin{equation*}
  \sum_{n\geq 1} \frac{\lambda^n}{n!} \big(\sum_{m\geq 0} \frac{\lambda^m}{m!}(A_{(m|0)}B)_{(n-1|1)}C+\chi \sum_{m\geq 0} \frac{\lambda^m}{m!}(A_{(m|1)}B)_{(n-1|1)}C\big).
\end{equation*}

The binary operation $:\ :$ of SUSY vertex algebra is called the \textit{normally ordered product}. Note that the notation and the name for this binary operation are the same as the one in vertex algebra. In fact, SUSY vertex algebra has a vertex algebra structure whose normally ordered product is given by $:\ :$. We check this in Theorem \ref{thm: SUSY VA vs. VA} shortly.

Recall that one of the most important features of vertex algebra that should be checked is whether it has a conformal vector. The action of the conformal vector gives a grading to the vertex algebra, called conformal weight. In SUSY vertex algebra, we have the notion of a superconformal vector, which plays a similar role in the supersymmetric setting.
\begin{definition}[\cite{HK07}] \label{def: superconformal vector}
  Let $V$ be a SUSY vertex algebra. An odd element $G\in V$ is called a \textit{superconformal vector of central charge $c\in \CC$} if it satisfies the two following conditions.
  \begin{enumerate}[(i)]
    \item The $\Lambda$-bracket between $G$ and itself is
    \begin{equation} \label{eq: conformal Lambda-bracket}
      [G{}_{\Lambda}G]=\left(2\partial+3\lambda+\chi D\right)G+\frac{\lambda^2\chi}{3}c.
    \end{equation} 
    \item $V$ decomposes into the direct sum $V=\bigoplus_{\Delta \geq M}V_{\Delta}$ for some $M\in \RR$, where each $V_\Delta$ for $\Delta\in \RR$ consists of the elements $a\in V$ satisfying
    \begin{equation} \label{eq: conformal weight Delta}
      [G{}_{\Lambda}a]=\left(2\partial+2\Delta \lambda +\chi D\right)a+O(\Lambda^2).
    \end{equation}     
    In the above, $O(\Lambda^2)$ is a polynomial in $\lambda$ and $\chi$, whose constant and linear terms are vanishing. 
  \end{enumerate}
When $V$ has a superconformal vector and $a\in V_{\Delta}$, we say that $a$ has a \textit{conformal weight} $\Delta$. Furthermore, we call $V=\bigoplus V_{\Delta}$ the \textit{conformal weight decomposition} of $V$. In particular, when $O(\Lambda^2)=0$ in \eqref{eq: conformal weight Delta},  we say that $a$ is \textit{primary}.
\end{definition}

\subsection{SUSY vertex algebras vs. Vertex algebras} \label{sec: SUSY VA vs. VA} \hfill \\
In this section, we compare the definitions of vertex algebras and SUSY vertex algebras to obtain the simpler description of SUSY vertex algebra in Theorem \ref{thm: SUSY VA vs. VA}. The idea of the proof originates from the author's previous paper \cite{RSS23N=2} with Ragoucy and Suh, in which they simplified the notion of SUSY Poisson vertex algebra. For the definition of vertex algebras, refer to Chapter 2 in \cite{Kac98}. Recall that on a vertex algebra $V$, an odd endomorphism $D$ is said to be an \textit{odd derivation} if it is an odd derivation for all the $(n)$th product, i.e., $D$ is an odd derivation for the normally ordered product $:\ :$, and satisfies
\begin{equation} \label{eq: D commutator}
  D[A{}_{\lambda}B]=[DA{}_{\lambda}B]+(-1)^{p(A)}[A{}_{\lambda}DB]
\end{equation}
for any $A,B\in V$. In particular, we have $D\vac =0$ for an odd derivation $D$.

\begin{theorem} \label{thm: SUSY VA vs. VA}
  Every SUSY vertex algebra is a vertex algebra. Conversely, any vertex algebra with an odd derivation $D$ satisfying $D^2=\partial$ is a SUSY vertex algebra.
\end{theorem}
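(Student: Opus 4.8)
The plan is to establish the two directions by carefully translating between the $\Lambda$-bracket data $[A{}_{\Lambda}B] = \sum_{n\geq 0}\frac{\lambda^n}{n!}A_{(n|0)}B + \chi\sum_{n\geq 0}\frac{\lambda^n}{n!}A_{(n|1)}B$ and the $\lambda$-bracket (equivalently, the $(n)$th products) of a vertex algebra. The key observation is that the two families of products $A_{(n|0)}$ and $A_{(n|1)}$ extracted from a SUSY $\Lambda$-bracket should be packaged into the usual $(n)$th products of a vertex algebra by interleaving them. Concretely, I expect the correct dictionary to be that the vertex-algebra $\lambda$-bracket $[A{}_\lambda B] = \sum_{n\geq 0}\frac{\lambda^n}{n!}A_{[n]}B$ has its components built from the $A_{(n|0)}B$ together with $A_{(n|1)}B = (-1)^{p(A)}D(A_{(n|0)}B) - A_{(n|0)}(DB)$ or a similar relation forced by super sesquilinearity; one reads off from $[DA{}_\Lambda B]=\chi[A{}_\Lambda B]$ and the $\chi$-expansion exactly how the odd-derivation $D$ shifts between the $(n|0)$ and $(n|1)$ slots.

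For the first direction (SUSY VA $\Rightarrow$ VA), I would first fix the normally ordered product of the vertex algebra to be the given super quasi-commutative, super quasi-associative product $:\ :$, and define the remaining $(n)$th products (for $n\geq 0$) from the $\Lambda$-bracket components. Then the verification reduces to checking the Borcherds/vertex-algebra axioms: (i) translation covariance and the vacuum axioms, which follow from $\partial = D^2$, $D\vac = 0$ (a consequence of $D$ being an odd derivation), and the unit axioms for $:\ :$; (ii) super quasi-commutativity and super quasi-associativity of $:\ :$ are exactly the normally-ordered-product axioms of a vertex algebra once one checks that the correction terms $\sum_{n\geq 1}\frac{(-\partial)^n}{n!}A_{(n-1|1)}B$ in \eqref{eq: quasi-commutativity} and the double sum in \eqref{eq: quasi-associativity} match the vertex-algebra correction terms under the above dictionary; (iii) skew-symmetry of the $\lambda$-bracket from super skew-symmetry; and (iv) the Jacobi identity for the $\lambda$-bracket together with the (non-commutative) Wick formula relating $\lambda$-bracket and $:\ :$, which should come from combining the super Jacobi identity with the super Wick formula \eqref{eq: Wick formula}. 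The main bookkeeping is to confirm that the $\chi$-linear and $\chi$-free parts of each SUSY axiom, when $D$ is resolved, reassemble into precisely one classical vertex-algebra axiom with no leftover terms.

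For the converse (VA with odd derivation $D$, $D^2=\partial$ $\Rightarrow$ SUSY VA), I would run the same dictionary backwards: define $A_{(n|1)}B$ from $A_{(n|0)}B := A_{(n)}B$ via the formula dictated by sesquilinearity (using $D$), assemble $[A{}_\Lambda B]$ by \eqref{eq: Lambda bracket coefficients}, and keep $:\ :$ as the given normally ordered product. The SUSY LCA axioms then follow: super sesquilinearity is built in by construction together with $D[A{}_\lambda B] = [DA{}_\lambda B] + (-1)^{p(A)}[A{}_\lambda DB]$ from \eqref{eq: D commutator}; super skew-symmetry and the super Jacobi identity are the $\ZZ_2$-graded vertex-algebra identities repackaged; and the super Wick formula \eqref{eq: Wick formula} follows from the ordinary (non-commutative) Wick formula of a vertex algebra after inserting $D$. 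The remaining point is that $\CC[\Lambda]\otimes V$ really is a $\CC[\nabla]$-module via \eqref{eq: D chi commutator}, which is immediate.

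I expect the main obstacle to be purely combinatorial: verifying that the super Jacobi identity, expanded in the odd variables $\chi,\mu$ and with every occurrence of $D$ pushed through using sesquilinearity, is genuinely \emph{equivalent} to — not merely implied by — the vertex-algebra Jacobi identity, so that both directions go through with the same dictionary. This requires tracking signs $(-1)^{p(A)}$, $(-1)^{p(A)p(B)}$ carefully and checking that the shift operators $\lambda \mapsto \lambda + \partial$ hidden in $-\Lambda-\nabla$ and in $\Lambda+\Gamma$ act consistently on both the even and odd components. Once this correspondence of axioms is set up cleanly, the proof is a matter of matching terms; I would isolate the dictionary as a preliminary lemma (relating $(n|0)$, $(n|1)$ products to $(n)$th products via $D$) and then dispatch each axiom in turn.
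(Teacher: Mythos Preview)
Your overall strategy---set up a dictionary between the $\Lambda$-bracket data and the ordinary $\lambda$-bracket, then verify the axioms one by one---is correct and is exactly what the paper does. However, you have the dictionary backwards, and this would cause real trouble in the verification.

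The clean identification is $A_{(n)} := A_{(n|1)}$, not $A_{(n|0)}$. Equivalently, the vertex-algebra $\lambda$-bracket is the \emph{$\chi$-coefficient} of the $\Lambda$-bracket: $[A{}_\lambda B] = \sum_{n\geq 0}\tfrac{\lambda^n}{n!}A_{(n|1)}B$. Super sesquilinearity $[DA{}_\Lambda B]=\chi[A{}_\Lambda B]$ then forces $A_{(n|0)}B = (DA)_{(n|1)}B = (DA)_{(n)}B$, and one obtains the compact formula
\[
[A{}_\Lambda B] = [DA{}_\lambda B] + \chi[A{}_\lambda B].
\]
There is no ``interleaving'': the $(n|0)$-products are not independent data but are determined by the $(n|1)$-products together with $D$. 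With this dictionary, the super quasi-commutativity \eqref{eq: quasi-commutativity}, super quasi-associativity \eqref{eq: quasi-associativity}, and the super Wick formula \eqref{eq: Wick formula} are \emph{literally} the vertex-algebra versions---the correction terms already involve only $(n|1)$-products---so there is nothing to match. For sesquilinearity, skew-symmetry, and Jacobi, one simply reads off the $\chi$-containing part of each SUSY axiom.

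Your converse direction sets $A_{(n|0)}B := A_{(n)}B$ and then tries to recover $A_{(n|1)}B$ from it. This is the wrong way around: the relation $(DA)_{(n|1)} = A_{(n|0)}$ lets you compute $(n|1)$-products only for elements in the image of $D$, which is not all of $V$. The paper instead \emph{defines} the $\Lambda$-bracket by the displayed formula above, so that $A_{(n|1)}=A_{(n)}$ and $A_{(n|0)}=(DA)_{(n)}$ by construction; all SUSY axioms then follow from the vertex-algebra axioms plus the derivation property \eqref{eq: D commutator} of $D$ by direct (if tedious) computation. Once you fix the dictionary, your plan and the paper's proof coincide.
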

\begin{proof}
  Let $V$ be a SUSY vertex algebra. For the $\Lambda$-bracket of $V$, take the parity preserving linear map
  \begin{equation} \label{eq: VA structure of SUSY VA}
    \begin{aligned}
    [\cdot {}_{\lambda} \cdot]: V\otimes V &\rightarrow \CC[\lambda]\otimes V,\\ A\otimes B&\mapsto [A{}_{\lambda}B]=\sum_{n\geq 0}\frac{\lambda^n}{n!}A_{(n|1)}B.
    \end{aligned}
  \end{equation}
  Using \eqref{eq: VA structure of SUSY VA}, the $\Lambda$-bracket for $V$ is written as
  \begin{equation} \label{eq: SUSY nonSUSY lambda bracket relation}
    [A{}_{\Lambda}B]=[DA{}_{\lambda}B]+\chi[A{}_{\lambda}B],
  \end{equation}
  because the super sesquilinearity $[DA{}_{\Lambda}B]=\chi[A{}_{\Lambda}B]$ implies that $(DA)_{(n|1)}B=A_{(n|0)}B$ for any $A,B\in V$. Note from \eqref{eq: SUSY nonSUSY lambda bracket relation} that $\eqref{eq: VA structure of SUSY VA}$ captures the $\chi$ containing part of the $\Lambda$-bracket. By comparing the $\chi$ containing part of the super sesquilinearity, super skew-symmetry and super Jacobi identity in Definition \ref{def: SUSY LCA}, one can say that $V$ with $\lambda$-bracket \eqref{eq: VA structure of SUSY VA} is a Lie conformal algebra. For the normally ordered product $:\, :$ on $V$, the quasi-commutativity and quasi-associativity of $:\, :$ directly follow from \eqref{eq: quasi-commutativity} and \eqref{eq: quasi-associativity}. Lastly, one gets the Wick formula for $\eqref{eq: VA structure of SUSY VA}$ by comparing $\chi$ containing part of the super Wick formula as follows. Using the equality \eqref{eq: SUSY nonSUSY lambda bracket relation}, the super Wick formula can be written as
  \begin{equation*}
    \begin{aligned}
      [DA&{}_{\lambda}:\!BC\!:]+\chi[A{}_{\lambda}:\!BC\!:]\\
      =&:\![DA{}_{\lambda}B]C\!:+\chi\!:\![A{}_{\lambda}B]C\!:+(-1)^{(p(A)+1)p(B)}:\!B[DA{}_{\lambda}C]\!:+(-1)^{p(A)p(B)}\chi\!:\!B[DA{}_{\lambda}C]\!:\\
      &+\sum_{n\geq 1} \frac{\lambda^n}{n!} [DA{}_{\lambda}B]_{(n-1|1)}C+\chi \sum_{n\geq 1} \frac{\lambda^n}{n!} [A{}_{\lambda}B]_{(n-1|1)}C.
    \end{aligned}
  \end{equation*}
  Picking up the $\chi$ containing term, one gets the Wick formula for \eqref{eq: VA structure of SUSY VA}.

  Conversely, assume that $V$ is a vertex algebra with an odd derivation $D$ satisfying $D^2=\partial$. Using the $\lambda$-bracket $[\cdot{}_{\lambda}\cdot]$ for vertex algebra $V$, define an odd linear map $[\cdot{}_{\Lambda} \cdot]$ on $V$ by \eqref{eq: SUSY nonSUSY lambda bracket relation}. Then one can show all the required properties of the $\Lambda$-bracket by direct computation. Refer to Section 3 of \cite{RSS23N=2} for the detailed proof. In \cite{RSS23N=2}, they deal with the classical version, but the computation process is similar.
\end{proof}

In the rest of this paper, whenever we mention the vertex algebra structure of a SUSY vertex algebra, it means that we are considering \eqref{eq: VA structure of SUSY VA}. Furthermore, we denote
\begin{equation*}
  A_{(n)}=A_{(n|1)}
\end{equation*}
for any $n\in \ZZ_{\geq 0}$ in SUSY vertex algebra. Note that this coincides with the $(n)$th product convention of vertex algebra related to the coefficients of the $\lambda$-bracket. In addition, we define \textit{SUSY vertex subalgebras(resp. SUSY ideals)} by requiring them not just to be vertex subalgebras(resp. ideals), but also to be closed under $D$.

One of the advantages of Theorem \ref{thm: SUSY VA vs. VA} is that we can find a new supersymmetric structure whenever we can take an appropriate odd derivation $D$ on a vertex algebra. We present here one of the examples, the Neveu-Schwarz vertex algebra.
\begin{example} [Neveu-Schwarz vertex algebra]
  For $c\in \CC$, the Neveu-Schwarz vertex algebra $NS^c$ is a vertex algebra freely generated by even $L$ and odd $G$. The $\lambda$-brackets between them are given by
  \begin{equation*}
    [L{}_{\lambda}L]=(\partial +2\lambda)L+\frac{c}{6}\lambda^3, \quad
    [L{}_{\lambda}G]=\big(\partial+\frac{3}{2}\lambda\big)G, \quad [G{}_{\lambda}G]=2L+\frac{c}{3}\lambda^2.
  \end{equation*}
  Denote the coefficients of the $\lambda$-brackets by $[A{}_{\lambda}B]=\sum_{n\geq 0} \frac{\lambda^n}{n!}A_{(n)}B$. Then for $D=G_{(0)}$, we have
  \begin{equation*}
    [D,D]=[G_{(0)},G_{(0)}]=(G_{(0)}G)_{(0)}=(2L)_{(0)}=2\partial,
  \end{equation*}
  which implies that $D^2=\partial$. Any $(0)$th product of an odd element is an odd derivation on a vertex algebra. Thus, $NS^c$ is a SUSY vertex algebra with $D=G_{(0)}$. In particular, its $\Lambda$-bracket is given by \eqref{eq: SUSY nonSUSY lambda bracket relation}. Note that the $\Lambda$-bracket between $G$ with itself is written as
  \begin{equation*}
    [G{}_{\Lambda}G]=[2L{}_{\lambda}G]+\chi[G{}_{\lambda}G]=(2\partial+3\lambda+\chi D)G+\frac{\lambda^2 \chi}{3}c,
  \end{equation*}
  which shows that $G$ is a superconformal vector of $NS^c$.
\end{example}

\begin{remark}
  Note in the above example that $\frac{1}{2}DG=L$ is a conformal vector for a superconformal vector $G$.  This fact holds in general and the conformal weight \eqref{eq: conformal weight Delta} is the same as the one given by the conformal vector $\frac{1}{2}DG$. These can be easily checked by combining \eqref{eq: conformal Lambda-bracket} and \eqref{eq: SUSY nonSUSY lambda bracket relation}.
\end{remark}

Furthermore, Theorem \ref{thm: SUSY VA vs. VA} lets us think of the quasi-classical limit of SUSY vertex algebras naturally since it already has a vertex algebra structure. In Proposition \ref{prop: quasi-classical limit}, we see that the quasi-classical limit of SUSY vertex algebra gives a SUSY Poisson vertex algebra. Refer to \cite{DK06} for the quasi-classical limit of vertex algebras. We briefly recall the notion of SUSY Poisson vertex algebra.

\begin{pdefinition}[\cite{RSS23N=2}] Let $P$ be a Poisson vertex algebra with a $\lambda$-bracket $\{\cdot {}_{\lambda}\cdot\}$. Assume that $P$ has an odd derivation $D$ satisfying
  \begin{equation} \label{eq: susy pva D assumption}
    D^2=\partial, \quad D\{a{}_{\lambda}b\}=\{Da{}_{\lambda}b\}+(-1)^{p(a)}\{a{}_{\lambda}Db\}.
  \end{equation}
  Then, $P$ is a \textit{supersymmetric Poisson vertex algebra}. In this case, we write its $\Lambda$-bracket $\{\cdot {}_{\Lambda}\cdot\}: P \otimes P \rightarrow \CC[\Lambda]\otimes P$ as
  \begin{equation}
    \{a{}_{\Lambda}b\}=\{Da{}_{\lambda}b\}+\chi \{a{}_{\lambda}Db\}.
  \end{equation}
\end{pdefinition}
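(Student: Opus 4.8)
The plan is to argue in complete parallel with Theorem~\ref{thm: SUSY VA vs. VA}, of which this Proposition-Definition is the quasi-classical counterpart: the normally ordered product is replaced by the supercommutative product of $P$, and the super Wick formula \eqref{eq: Wick formula} collapses to the ordinary left and right Leibniz rules, so nothing beyond the PVA axioms for $\{\cdot{}_{\lambda}\cdot\}$ and the two relations in \eqref{eq: susy pva D assumption} should be needed. Concretely, one must check that the stated expression for $\{\cdot{}_{\Lambda}\cdot\}$ lands in $\CC[\Lambda]\otimes P$ and is well defined, that it satisfies super sesquilinearity, super skew-symmetry and the super Jacobi identity of Definition~\ref{def: SUSY LCA}, and that it obeys the left Leibniz rule $\{a{}_{\Lambda}bc\}=\{a{}_{\Lambda}b\}c+(-1)^{(p(a)+1)p(b)}b\{a{}_{\Lambda}c\}$; together with $D^{2}=\partial$ and $D$ being an odd derivation of the product, this is precisely the data of a SUSY Poisson vertex algebra.

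First I would collect the elementary identities that make the bookkeeping mechanical: from $D^{2}=\partial$ and the second relation of \eqref{eq: susy pva D assumption} one converts freely between $\{Da{}_{\lambda}b\}$, $\{a{}_{\lambda}Db\}$ and $D\{a{}_{\lambda}b\}$; ordinary PVA sesquilinearity gives $\{\partial a{}_{\lambda}b\}=-\lambda\{a{}_{\lambda}b\}$ and $\{a{}_{\lambda}\partial b\}=(\lambda+\partial)\{a{}_{\lambda}b\}$; and the $\CC[\nabla]$-module rule \eqref{eq: D chi commutator}, in particular $\chi^{2}=-\lambda$, governs how $D$ and $\chi$ move through $\CC[\Lambda]\otimes P$. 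With these in hand, super sesquilinearity and the left Leibniz rule drop out by a one-line substitution into the defining formula; super skew-symmetry reduces to the ordinary skew-symmetry $\{b{}_{\lambda}a\}=-(-1)^{p(a)p(b)}\{a{}_{-\lambda-\partial}b\}$, provided one performs the substitution $-\Lambda-\nabla=(-\lambda-\partial,-\chi-D)$ \emph{into the variable}, as in the remark following \eqref{eq: Lambda bracket coefficients}. The step I expect to be the real obstacle is the super Jacobi identity: here I would expand each of its three terms through the defining formula, sort the result by $\chi$-, $\mu$- and $\chi\mu$-components, and match each component against the classical PVA Jacobi identity for $\{\cdot{}_{\lambda}\cdot\}$, repeatedly using \eqref{eq: susy pva D assumption} to commute $D$ past brackets. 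The danger is essentially all in the signs — the factors $(-1)^{p(a)+1}$ and $(-1)^{(p(a)+1)(p(b)+1)}$ of the SUSY Jacobi identity, the anticommutation of $\chi$ with $\mu$ and with odd elements of $P$, and the $2\lambda A$-type terms produced when $D$ passes a $\chi$ — so I would organize the verification by $\chi$-degree rather than term by term.

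For the converse direction — that any SUSY PVA, with $\{\cdot{}_{\lambda}\cdot\}$ defined as the $\chi$-coefficient of its $\Lambda$-bracket exactly as in \eqref{eq: VA structure of SUSY VA}, is a PVA whose odd derivation $D$ satisfies \eqref{eq: susy pva D assumption} and whose $\Lambda$-bracket is recovered by the stated formula — I would read off super sesquilinearity (which forces precisely the second relation of \eqref{eq: susy pva D assumption}) and extract the $\chi$-containing parts of the remaining SUSY axioms, just as in the first half of the proof of Theorem~\ref{thm: SUSY VA vs. VA}. Since the whole argument is the quasi-classical shadow of that theorem, I would also point the reader to Section~3 of \cite{RSS23N=2}, where this computation is carried out in full.
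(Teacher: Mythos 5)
Your proposal is correct and follows essentially the same route as the paper: this Proposition--Definition is quoted from \cite{RSS23N=2} without proof here, and the only verification the paper itself carries out --- the converse half of Theorem \ref{thm: SUSY VA vs. VA} --- is precisely your $\chi$-degree comparison of the SUSY axioms against the classical ones, followed by a deferral to Section 3 of \cite{RSS23N=2} for the detailed computation. One point to settle before executing the plan: the displayed formula $\{a{}_{\Lambda}b\}=\{Da{}_{\lambda}b\}+\chi\{a{}_{\lambda}Db\}$ appears to be a typo, since as written it already fails super sesquilinearity --- one computes $\{Da{}_{\Lambda}b\}=-\lambda\{a{}_{\lambda}b\}+\chi\{Da{}_{\lambda}Db\}$ whereas $\chi\{a{}_{\Lambda}b\}=\chi\{Da{}_{\lambda}b\}-\lambda\{a{}_{\lambda}Db\}$ --- and the intended bracket is $\{a{}_{\Lambda}b\}=\{Da{}_{\lambda}b\}+\chi\{a{}_{\lambda}b\}$, consistent with \eqref{eq: SUSY nonSUSY lambda bracket relation} and \eqref{eq: quasi-classical Lambda bracket}. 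Your converse direction, which reads $\{\cdot{}_{\lambda}\cdot\}$ off as the $\chi$-coefficient of the $\Lambda$-bracket as in \eqref{eq: VA structure of SUSY VA}, already presupposes this corrected form, so you should state the correction explicitly lest the two halves of your argument fail to match.
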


\begin{proposition} \label{prop: quasi-classical limit}
  Consider the family of SUSY vertex algebras
  \[(V_y, \vac_y, \nabla_y, [\cdot{}_{\Lambda}\cdot]_y, :\ :_y)\]
  over $\CC[y]$ such that $V_y$ is a free $\CC[y]$-module, and $[V_y{}_{\Lambda}V_y]\subset \CC[\Lambda]\otimes y V_y$. Take the \textit{quasi-classical limit} of this family $V_y$ of SUSY vertex algebras by regarding them as vertex algebras over $\CC[y]$. As a result, the algebra obtained from the quasi-classical limit is a SUSY Poisson vertex algebra.
\end{proposition}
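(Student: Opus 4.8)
The plan is to reduce the statement to the well-understood quasi-classical limit of ordinary vertex algebras and then verify that the odd derivation passes to the limit. First, by Theorem~\ref{thm: SUSY VA vs. VA} each $V_y$ is a vertex algebra over $\CC[y]$ equipped with the odd derivation $D\in\nabla_y$ satisfying $D^2=\partial$, and its $\lambda$-bracket, given by \eqref{eq: VA structure of SUSY VA}, is precisely the $\chi$-linear coefficient of the $\Lambda$-bracket. Consequently the hypothesis $[V_y{}_{\Lambda}V_y]\subset\CC[\Lambda]\otimes yV_y$ forces $[V_y{}_{\lambda}V_y]\subset\CC[\lambda]\otimes yV_y$, so the family $(V_y,\vac_y,\partial_y,[\cdot{}_{\lambda}\cdot]_y,:\ :_y)$, regarded merely as vertex algebras over $\CC[y]$, is of the type for which a quasi-classical limit is defined.

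Second, I would apply the quasi-classical limit construction for vertex algebras (cf.\ \cite{DK06}). Set $P:=V_y/yV_y$, which is well behaved since $V_y$ is free over $\CC[y]$. The normally ordered product descends to a product on $P$, and this product is commutative and associative because the correction terms in the super quasi-commutativity \eqref{eq: quasi-commutativity} and super quasi-associativity \eqref{eq: quasi-associativity} are assembled from the $(n|1)$-products, i.e.\ from coefficients of the $\lambda$-bracket, hence lie in $yV_y$ and vanish modulo $y$. The operator $\partial$ induces a translation operator on $P$, and $\{\bar a{}_{\lambda}\bar b\}:=\overline{\,y^{-1}[a{}_{\lambda}b]\,}$ is a well-defined $\lambda$-bracket making $(P,\cdot,\partial,\{\cdot{}_{\lambda}\cdot\})$ a Poisson vertex algebra; the right-hand side makes sense exactly because $[a{}_{\lambda}b]\in yV_y[\lambda]$ by the previous paragraph, and the Leibniz rules and Jacobi identity for $\{\cdot{}_{\lambda}\cdot\}$ follow from those for $[\cdot{}_{\lambda}\cdot]$ in the standard way.

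Third, I would lift this to the supersymmetric level. Since $D$ is $\CC[y]$-linear and preserves $yV_y$, it induces an odd endomorphism $\bar D$ on $P$, and $D^2=\partial$ passes to $\bar D^2=\bar\partial$. Being an odd derivation of $:\ :$ on $V_y$, $\bar D$ is an odd derivation of the commutative product of $P$. Moreover \eqref{eq: D commutator}, namely $D[a{}_{\lambda}b]=[Da{}_{\lambda}b]+(-1)^{p(a)}[a{}_{\lambda}Db]$, holds on $V_y$; dividing by $y$ (with which $D$ commutes) and reducing modulo $y$ yields $\bar D\{\bar a{}_{\lambda}\bar b\}=\{\bar D\bar a{}_{\lambda}\bar b\}+(-1)^{p(a)}\{\bar a{}_{\lambda}\bar D\bar b\}$, which is exactly \eqref{eq: susy pva D assumption}. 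Hence $P$ together with $\bar D$ satisfies the hypotheses of the Proposition-Definition above, so it is a SUSY Poisson vertex algebra, with $\Lambda$-bracket $\{\bar a{}_{\Lambda}\bar b\}=\{\bar D\bar a{}_{\lambda}\bar b\}+\chi\{\bar a{}_{\lambda}\bar D\bar b\}$.

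I expect the only genuinely delicate point to be the bookkeeping that connects the SUSY and non-SUSY descriptions: confirming that the $yV_y$-valuedness of the $\Lambda$-bracket transfers correctly to the ordinary $\lambda$-bracket and to the quasi-commutativity/associativity corrections, and that all the odd-variable signs in pushing $\bar D$ through $y^{-1}[\cdot{}_{\lambda}\cdot]$ and through the normally ordered product are consistent. This is the same type of comparison carried out in \cite{RSS23N=2} in the purely classical setting, so I would mirror that argument; everything else is the standard quasi-classical limit of vertex algebras, which I would cite rather than reprove.
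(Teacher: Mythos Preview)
Your proposal is correct and follows essentially the same approach as the paper's own proof: both invoke Theorem~\ref{thm: SUSY VA vs. VA} to view $V_y$ as an ordinary vertex algebra with an odd derivation $D$, then perform the standard quasi-classical limit of vertex algebras (citing \cite{DK06}) to get a Poisson vertex algebra $P=V_y/yV_y$, and finally check that $D$ descends to an odd derivation on $P$ satisfying \eqref{eq: susy pva D assumption}. You supply a bit more detail than the paper (explicitly noting why the quasi-commutativity and quasi-associativity corrections vanish modulo $y$), but the argument is the same.
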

\begin{proof}
  Due to Theorem \ref{thm: SUSY VA vs. VA}, each $V_y$ is a vertex algebra with an odd derivation $D_y$ satisfying $D_y^2=\partial_y$. To be specific, $D_y$ is an odd derivation with respect to $:\ :_y$ and satisfies
  \begin{equation} \label{eq: quasi-classical limit process}
    D_y[A{}_{\lambda}B]_y=[D_yA{}_{\lambda}B]_y+(-1)^{p(A)}[A{}_{\lambda}D_yB]_y
  \end{equation}
  for any $A,B\in V_y$. Take a quasi-classical limit of this family so that we get a Poisson vertex algebra $P:=V_y/y V_y$ with Poisson $\lambda$-bracket
  \begin{equation*}
    \{A{}_{\lambda}B\}=\frac{1}{y}[A{}_{\lambda}B]_y \vert_{y=0},
  \end{equation*}
  and an even derivation $\partial$ induced from $\partial_y$. Let $D$ be the odd endomorphism on $P$ induced by $D_y$. From the properties of $D_y$ including \eqref{eq: quasi-classical limit process}, one can conclude that $D$ is an odd derivation satisfying the assumptions \eqref{eq: susy pva D assumption}. Thus, the resulting Poisson vertex algebra is a SUSY Poisson vertex algebra with Poisson $\Lambda$-bracket
  \begin{equation} \label{eq: quasi-classical Lambda bracket}
    \{A{}_{\Lambda}B\}=\{DA{}_{\lambda}B\}+\chi\{A{}_{\lambda}B\}=\frac{1}{y}[A{}_{\Lambda}B]_y \vert_{y=0}.
  \end{equation}
\end{proof}

\section{Examples of SUSY vertex algebras}\label{sec: examples of susy va}
In this section, we present a few major examples of SUSY vertex algebras, namely, SUSY affine vertex algebras, SUSY fermionic vertex algebras, and SUSY Heisenberg vertex algebras. The first two, introduced in Section \ref{sec: SUSY affine and SUSY fermionic}, will be used as ingredients to describe the SUSY BRST complexes for SUSY W-algebras. In Section \ref{sec: SUSY Heisenberg and Fock}, we introduce the last example, which plays a key role in Section \ref{bigsec: FF realization of SUSY W-algebras}.
\subsection{SUSY affine vertex algebras and SUSY fermionic vertex algebras} \label{sec: SUSY affine and SUSY fermionic} \hfill \\
For a SUSY vertex algebra $V$ and its totally ordered subset $\mathcal{B}=\{A_i\mid i\in \mathcal{I}\}$, we say that the SUSY vertex algebra $V$ is \textit{freely generated} by $\mathcal{B}$ if $V$ is freely generated by $\mathcal{B}\sqcup D\mathcal{B}$ as a vertex algebra. To avoid confusion, we make it precise here that the `freely generated' means the supersymmetric sense from now on. In this section, we denote the parity-reversed space of a vector superspace $\mathfrak{U}$ by $\bar{\mathfrak{U}}$ and denote its element by $\bar{u}$ for $u\in \mathfrak{U} $. Before we jump into the two examples, we introduce the reconstruction theorem, which demonstrates the supersymmetric version of the universal enveloping vertex algebra for a given SUSY LCA.
 
\begin{theorem} [Reconstruction theorem \cite{HK07}]\label{thm: universal enveloping SUSY va}
  Let $R$ be a SUSY LCA and $\mathcal{B}$ be its totally ordered subset. Assume that $\mathcal{B}\sqcup D\mathcal{B}$ forms a $\CC$-basis of $R$. Then there exists a unique supersymmetric vertex algebra $V(R)$ such that
  \begin{enumerate}[(i)]
    \item $V(R)$ is freely generated by $\mathcal{B}$,
    \item $D$ is an odd derivation of $V(R)$,
    \item the $\Lambda$-bracket of $R$ extends to that of $V(R)$ via the super Wick formula.
  \end{enumerate}
\end{theorem}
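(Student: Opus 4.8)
The plan is to build the vertex algebra $V(R)$ as the supersymmetric analogue of the universal enveloping vertex algebra, and then transport the existing reconstruction theorem for ordinary vertex algebras through Theorem \ref{thm: SUSY VA vs. VA}. First I would pass from the SUSY LCA $R$ to its underlying Lie conformal algebra $R^{\mathrm{cl}}$: by the construction in the proof of Theorem \ref{thm: SUSY VA vs. VA}, the $\lambda$-bracket $[A{}_\lambda B]=\sum_{n\ge 0}\tfrac{\lambda^n}{n!}A_{(n|1)}B$ makes $R$ a Lie conformal algebra, and $R^{\mathrm{cl}}$ has $\mathcal{B}\sqcup D\mathcal{B}$ as a $\CC[\partial]$-generating set with the $A_{(n|0)}B=(DA)_{(n|1)}B$ relations absorbed into $\partial$ acting on $DA$. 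Concretely, as a $\CC[\partial]$-module $R^{\mathrm{cl}}$ is free on $\mathcal{B}\sqcup D\mathcal{B}$, since $\mathcal{B}\sqcup D\mathcal{B}$ is a $\CC$-basis of $R$ and $\partial=D^2$ acts on this basis in the obvious way. Then I would invoke the classical reconstruction theorem for vertex algebras (the universal enveloping vertex algebra $V(R^{\mathrm{cl}})$, freely generated by $\mathcal{B}\sqcup D\mathcal{B}$, with $\partial$ and the Wick-formula-extended $\lambda$-bracket), for which one can cite Chapter 2 of \cite{Kac98}.

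Next I would equip $V(R^{\mathrm{cl}})$ with an odd endomorphism $D$ extending the given $D$ on $R$, defined first on the generating set by $D(A)$ for $A\in\mathcal{B}$ and $D(DA):=\partial A$, then extended to all of $V(R^{\mathrm{cl}})$ as an odd derivation of the normally ordered product with $D\vac=0$. The key verifications are: (1) $D^2=\partial$, which holds on generators by construction and propagates because both $D^2$ and $\partial$ are even derivations agreeing on generators; (2) $D$ is a derivation of the $\lambda$-bracket in the sense of \eqref{eq: D commutator}, which one checks on generators from super sesquilinearity in $R$ and then extends by the Wick formula (both sides are derivations in each argument in a compatible way). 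Once these hold, Theorem \ref{thm: SUSY VA vs. VA} applies and shows $V(R):=V(R^{\mathrm{cl}})$ with this $D$ is a SUSY vertex algebra; its $\Lambda$-bracket is then forced to be $[A{}_\Lambda B]=[DA{}_\lambda B]+\chi[A{}_\lambda B]$ by \eqref{eq: SUSY nonSUSY lambda bracket relation}, and one checks this restricts to the original $\Lambda$-bracket on $R$ using $(DA)_{(n|1)}B=A_{(n|0)}B$. Property (i), free generation in the SUSY sense, is immediate from the definition since $V(R)$ is freely generated by $\mathcal{B}\sqcup D\mathcal{B}$ as a vertex algebra; (ii) is the construction of $D$; (iii) is the Wick-formula extension, which for the $\chi$-free part is the classical statement and for the $\chi$-part follows by applying $D$ to the first argument.

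For uniqueness, I would argue that any SUSY vertex algebra $V'$ freely generated by $\mathcal{B}$ with $D$ an odd derivation and $\Lambda$-bracket extending that of $R$ has, via Theorem \ref{thm: SUSY VA vs. VA}, an underlying vertex algebra freely generated by $\mathcal{B}\sqcup D\mathcal{B}$ whose $\lambda$-bracket on generators is determined by the $\Lambda$-bracket of $R$ (the $\chi$-coefficient recovers $A_{(n|1)}B$, and $A_{(n|0)}B=(DA)_{(n|1)}B$ is then determined). By the uniqueness clause of the classical reconstruction theorem there is a unique vertex algebra isomorphism $V(R)\to V'$ fixing the generators, and it intertwines $D$ because $D$ on both sides is the odd derivation determined by its values on generators. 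Hence the isomorphism is one of SUSY vertex algebras.

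The main obstacle I anticipate is verifying that the naively-defined $D$ on $V(R^{\mathrm{cl}})$ is genuinely compatible with \emph{all} the $n$-th products — i.e. that extending $D$ as a derivation of $:\ :$ alone automatically makes it a derivation of the $\lambda$-bracket (condition \eqref{eq: D commutator}) and satisfies $D^2=\partial$ as an \emph{operator identity}, not just on generators. This requires a careful induction using quasi-associativity, quasi-commutativity, and the Wick formula to show that the set of pairs $(A,B)$ on which \eqref{eq: D commutator} holds is closed under normally ordered products and under $\partial$; the bookkeeping of signs (odd $D$, parities of $A,B,C$) in these induction steps is the delicate part. This is exactly the kind of computation that the cited reference \cite{HK07}, and the analogous classical treatment in \cite{RSS23N=2}, carry out in detail, so in the write-up I would state the result and refer to those sources for the routine but lengthy derivation-propagation lemmas rather than reproducing them.
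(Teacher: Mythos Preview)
The paper does not give its own proof of this statement: it is stated with a citation to \cite{HK07} and no proof environment follows. So there is no ``paper's proof'' to compare against directly; the original argument in \cite{HK07} proceeds via the superfield formalism (state--superfield correspondence and locality), not via the $\Lambda$-bracket machinery.

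Your approach is nonetheless correct and is precisely the route the paper itself alludes to in the Remark immediately following the theorem: pass to the underlying Lie conformal algebra, build the classical universal enveloping vertex algebra $\widetilde{V}(R)$, extend $D$ as an odd derivation of the normally ordered product, and then invoke Theorem~\ref{thm: SUSY VA vs. VA}. The paper's Remark runs this logic in the other direction (it uses the already-established uniqueness of $V(R)$ to identify $\widetilde{V}(R)$ with it), whereas you use it constructively to \emph{build} $V(R)$. Either order works. What your route buys is that the existence and uniqueness both reduce cleanly to the classical reconstruction theorem plus the derivation-propagation lemma you flagged; what the \cite{HK07} route buys is a direct superfield construction that does not require first developing Theorem~\ref{thm: SUSY VA vs. VA}. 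Your identification of the only nontrivial step --- checking that the odd derivation extended on $:\ :$ alone automatically satisfies \eqref{eq: D commutator} and $D^2=\partial$ globally --- is accurate, and your plan to cite \cite{RSS23N=2} for the parallel bookkeeping is appropriate.
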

For any SUSY LCA $R$, the SUSY vertex algebra $V(R)$ in Theorem \ref{thm: universal enveloping SUSY va} is called \textit{the universal enveloping supersymmetric vertex algebra} of $R$.

\begin{remark}
  Taking the $\lambda$-bracket as in \eqref{eq: VA structure of SUSY VA}, one can see that a SUSY LCA $R$ also has a Lie conformal algebra structure. Then its universal enveloping vertex algebra $\widetilde{V}(R)$ is isomorphic to $V(R)$ as vertex algebras. It follows from the uniqueness of $V(R)$, since we can extend the endomorphism $D$ of $R$ to $\widetilde{V}(R)$ by
  \[D(:\!AB\!:)=:\!(DA)B\!:+(-1)^{p(A)}:\!A(DB)\!:.\]
\end{remark}

\begin{example} [SUSY affine vertex algebras] \label{ex: susy affine va}
  Let $\g$ be a finite simple or abelian Lie superalgebra endowed with a nondegenerate even supersymmetric invariant bilinear form $(\cdot |\cdot).$ Consider the $\CC[\nabla]$-module $R_{\textup{cur}}=\CC[\nabla]\otimes \bar{\g} \oplus \CC K$ with $\nabla$ acting trivially on $K$. Define the $\Lambda$-bracket on the space $R_{\textup{cur}}$ by
  \begin{equation} \label{eq: susy affine Lambda-bracket}
    [\bar{A}{}_{\Lambda}\bar{B}]=(-1)^{p(A)(p(B)+1)}\overline{[A,B]}+\chi (A|B) K
  \end{equation}
  for $\bar{A},\bar{B}\in \bar{\g}$, and let $K$ be central. Note that the $\Lambda$-bracket \eqref{eq: susy affine Lambda-bracket} can be extended to the whole space $R_{\textup{cur}}$ via super sesquilinearity. Then the extended $\Lambda$-bracket makes $R_{\textup{cur}}$ a SUSY LCA. For $k\in \CC$, the SUSY vertex algebra
  \begin{equation*}
    V^k(\bar{\g})=V(R_{\textup{cur}})/\langle K-(k+h^{\vee})\vac \rangle
  \end{equation*}
  is called the \textit{SUSY affine vertex algebra of level $k\in \CC$} associated with $\g$. Here, $h^{\vee}$ is the dual coxeter number of $\g$, i.e., $2 h^{\vee}$ is the eigenvalue of the Casimir element on $\g$ with the adjoint action. Also, $V(R_{\textup{cur}})$ is the universal enveloping SUSY vertex algebra of $R_{\textup{cur}}$, and $\langle K-(k+h^{\vee})\vac \rangle$ is the SUSY vertex algebra ideal of $V(R_{\textup{cur}})$ generated by $K-(k+h^{\vee})\vac$. This SUSY vertex algebra has a superconformal vector
  \begin{equation} \label{eq: Kac-Todorov}
    \omega=\frac{1}{k+h^{\vee}}\sum_{i,j} (v_i|v_j):\bar{v}^i (D\bar{v}^j):+\frac{1}{2(k+h^{\vee})^2}\sum_{i,j,r} (-1)^{p(v_j)}(v_i|[v_j,v_r]):\bar{v}^k \bar{v}^j \bar{v}^r:
  \end{equation}
  of central charge $c_k=\frac{k \textup{sdim} \g}{k+h^{\vee}}+\frac{1}{2}\textup{sdim} \g$, due to the Kac-Todorov construction \cite{KT85}, where $\textup{sdim}(\g)$ is the superdimension of $\g$. In \eqref{eq: Kac-Todorov}, $\{v_i\}$ and $\{v^i\}$ are the homogeneous dual bases of $\g$ such that $(v^i|v_j)=\delta_{i,j}$. Each $\bar{A}\in \bar{\g}$ is primary of conformal weight $\frac{1}{2}$ with respect to $\omega$.
\end{example}
\begin{remark}
  The definition of SUSY affine vertex algebra is different from the original definition in Example 5.9 of \cite{HK07}, where the only difference is the sign in \eqref{eq: susy affine Lambda-bracket}. One can easily see that the map \eqref{eq: susy affine equivalence map} from the SUSY affine vertex algebra in Example \ref{ex: susy affine va} to the original SUSY affine vertex algebra gives an isomorphism.
  \begin{equation} \label{eq: susy affine equivalence map}
    \bar{A}\mapsto \left\{\begin{array}{cc}
      \bar{A} & \text{ if $A$ is even}\\
      \sqrt{-1}\bar{A} & \text{ if $A$ is odd.}
    \end{array}\right.
  \end{equation}
\end{remark}
\begin{example}[Generalization of SUSY affine vertex algebras] \label{ex: susy affine va generalized}
  We can generalize the definition of SUSY affine vertex algebras to any finite Lie superalgebra $\g$ with an even supersymmetric invariant bilinear form $(\cdot |\cdot)$.  Consider the killing form $\kappa_{\g}$ on $\g$. Using the similar argument in Example \ref{ex: susy affine va}, one can construct the SUSY vertex algebra freely generated by the basis elements of $\bar{\g}$ with the $\Lambda$-bracket
  \begin{equation*}
    [\bar{A}{}_{\Lambda}\bar{B}]=(-1)^{p(A)(p(B)+1)}\overline{[A,B]}+\chi \tau_k(A|B)
  \end{equation*}
  for $A,B\in \g$, where $\tau_k$ is the bilinear form on $\g$ as follows:
  \[\tau_k(A|B)=k(A|B)+\frac{1}{2}\kappa_{\g}(A|B).\]
  This SUSY vertex algebra is also called the \textit{SUSY affine vertex algebra of level $k$} associated with $\g$ and denoted by $V^k(\bar{\g})$. For a simple or abelian $\g$, the following relation \cite{KW94}
  \begin{equation*}
    \kappa_{\g}(A|B)=2h^{\vee}(A|B), \quad A,B\in \g
  \end{equation*}
  shows that the two definitions in Example \ref{ex: susy affine va} and \ref{ex: susy affine va generalized} are identical.
\end{example}

\begin{example} [SUSY fermionic vertex algebras] \label{ex: susy fermionic}
  Let $\mathfrak{A}$ be a vector superspace with an even nondegenerate skew-supersymmetric bilinear form $\langle \cdot |\cdot \rangle$. The skew-supersymmetry means that the induced bilinear form on the parity-reversed space $\bar{\mathfrak{A}}$ is supersymmetric. Consider the $\CC[\nabla]$-module $R_{\textup{ch}}=\CC[\nabla]\otimes \bar{\mathfrak{A}} \oplus \CC K$. This superspace has a SUSY LCA structure via the $\Lambda$-bracket
  \begin{equation} \label{eq: fermionic Lambda-bracket}
    [\bar{A}{}_{\Lambda}\bar{B}]=\langle A|B \rangle K
  \end{equation}
  for $\bar{A},\bar{B}\in \bar{\mathfrak{A}}$, and central $K$. Note that \eqref{eq: fermionic Lambda-bracket} satisfies the conditions in Definition \ref{def: SUSY LCA}. For example, the super skew-symmetry can be checked as follows.
  \begin{equation*}
    [\bar{B}{}_{\Lambda}\bar{A}]=\langle B|A \rangle K=(-1)^{(p(A)+1)(p(B)+1)}\langle A|B\rangle K = (-1)^{p(\bar{A})p(\bar{B})}[\bar{A}{}_{-\Lambda-\nabla}\bar{B}].
  \end{equation*}
   For the universal enveloping SUSY vertex algebra $V(R_{\textup{ch}})$ of $R_{\textup{ch}}$, define the \textit{SUSY fermionic vertex algebra} associated with $\mathfrak{A}$ to be the SUSY vertex algebra
  \begin{equation*}
    F(\bar{\mathfrak{A}})=V(R_{\textup{ch}})/\langle K-\vac \rangle,
  \end{equation*}
  where $\langle K-\vac \rangle$ is the SUSY vertex algebra ideal of $V(R_{\text{ch}})$.
\end{example}

\subsection{SUSY Heisenberg vertex algebras and Fock representations} \label{sec: SUSY Heisenberg and Fock} \hfill \\
Recall that the direct sum of the Fock representations of Heisenberg algebra gives rise to the vertex algebra called lattice vertex algebra in nonSUSY theory. The supersymmetric analogue of lattice vertex algebra has been explored in the paper \cite{HK08lattice} of Heluani-Kac, so it conversely motivates the definitions of SUSY Heisenberg algebras and their Fock representations in this section. The algebras defined in this section are the essential ingredients for the free field realization of SUSY W-algebras in Section \ref{bigsec: FF realization of SUSY W-algebras}.

\begin{definition} \label{def: susy heisenberg}
  Let $\mathfrak{a}$ be a finite-dimensional vector space with a symmetric bilinear form $(\cdot |\cdot)$. Let $l=\dim(\mathfrak{a})$ and fix a basis $\{v_1,\cdots v_l\}$ of $\mathfrak{a}$. The \textit{SUSY Heisenberg algebra} $\mathcal{H}^{\mathfrak{a}}$ associated with $\mathfrak{a}$ is an infinite-dimensional Lie superalgebra spanned by the linearly independent elements
  \[\bar{b}_i(n),\ D\bar{b}_i(n),\ \mathbbm{1}, \quad i=1, \cdots l,\ n\in \ZZ,\]
  where $D\bar{b}_i(n)$ and $\mathbbm{1}$ are even, while $\bar{b}_i(n)$ are odd. The Lie brackets between them are given by
    \begin{equation*}
    \begin{gathered}
    [D\bar{b}_i(m), D\bar{b}_j(n)]=m(v_i|v_j)\delta_{m+n,0}\mathbbm{1},\\
    [\bar{b}_i(m),\bar{b}_j(n)]=(v_i|v_j)\delta_{m+n,-1}\mathbbm{1},\quad  [D\bar{b}_i(m),  \bar{b}_j(n)]=0,
    \end{gathered}
  \end{equation*}
  while $\mathbbm{1}$ being central.
\end{definition}
Notice that in Definition \ref{def: susy heisenberg}, the space $\mathfrak{a}$ consists only of even vectors but $\HH^{\mathfrak{a}}$ is a vector superspace because $\bar{b_i}(n)$'s have an odd parity.

\begin{remark} \label{rmk: susy heisenberg enveloping algebra}
  The SUSY Heisenberg algebra $\HH^{\mathfrak{a}}$ is independent of the choice of basis of $\mathfrak{a}$. It follows from the fact that it is the universal enveloping Lie algebra of the vertex algebra $V^{1}(\bar{\mathfrak{a}})$. By decomposing the $\Lambda$-bracket \eqref{eq: susy affine Lambda-bracket} of $V^{1}(\bar{\mathfrak{a}})$ with the equality \eqref{eq: SUSY nonSUSY lambda bracket relation}, we have
  \begin{equation} \label{eq: susy heisenberg Lambda bracket decomposition}
    [D\bar{v}_i{}_{\lambda} D\bar{v}_j]=(v_i|v_j)\lambda, \quad [\bar{v}_i{}_{\lambda}\bar{v}_j]=(v_i|v_j), \quad [D\bar{v}_i{}_{\lambda} \bar{v}_j]=0.
  \end{equation}
  If we denote the $(n)$th product of $\bar{v}_i$ (resp., $D\bar{v}_i$) by $\bar{b}_i (n)$ (resp., $D\bar{b}_i (n)$), then the equality \eqref{eq: susy heisenberg Lambda bracket decomposition} gives the Lie bracket relations in Definition \ref{def: susy heisenberg}.
\end{remark}

\begin{definition} \label{def: susy fock rep}
  For a SUSY Heisenberg algebra $\HH^{\mathfrak{a}}$, consider its abelian Lie subalgebra
  \[\mathfrak{b}^{\mathfrak{a}}=\textup{Span}_{\CC}\{D\bar{b}_i(n), \bar{b}_i(n), \mathbbm{1}\,|\,n\geq 0,\ 1 \leq i \leq l \}.\]
  For $\alpha\in \mathfrak{a}^*$, let $\CC_{\alpha}:=\CC|\alpha \rangle$(resp., $\widetilde{\CC}_{\alpha}:=\CC|\alpha\rangle$) be the $\mathfrak{b}^{\mathfrak{a}}$-module such that
  \begin{itemize}
    \item $|\alpha\rangle$ is an even(resp., odd) vector,
    \item $D\bar{b}_i(m), \bar{b}_i(n)$ for $m>0, n\geq 0$ act trivially on $|\alpha \rangle$,
    \item $D\bar{b}_i (0)|\alpha \rangle=\alpha(v_i)|\alpha \rangle$
  \end{itemize}
  for any $i=1, \cdots, l$. Then, the \textit{even Fock representation} and \textit{odd Fock representation} of $\HH^{\mathfrak{a}}$ of highest weight $\alpha\in \mathfrak{a}^*$ are the induced modules
  \begin{equation*}
    \pi_{\alpha}^{\mathfrak{a}}=\textup{Ind}_{{\mathfrak{b}}^{\mathfrak{a}}}^{{\HH}^{\mathfrak{a}}}\ \CC_{\alpha}, \quad \widetilde{\pi}_{\alpha}^{\mathfrak{a}}=\textup{Ind}_{{\mathfrak{b}}^{\mathfrak{a}}}^{{\HH}^{\mathfrak{a}}}\ \widetilde{\CC}_{\alpha},
  \end{equation*}
  respectively. In particular, when $\alpha=0$, we usually omit the subscript for the even Fock representation so that $\pi^{\mathfrak{a}}_0={\pi}^{\mathfrak{a}}$.
\end{definition}

Note that the linear map ${\pi}^{\mathfrak{a}} \rightarrow V^{1}(\bar{\mathfrak{a}})$
\begin{equation} \label{eq: SUSY Heisenberg VA structure}
  \begin{aligned}
  D\bar{b}_{i_1}(-m_1) \cdots D\bar{b}_{i_r}(-m_r)&\bar{b}_{j_1}(-n_1)\cdots \bar{b}_{j_s}(-n_s)\vac\\
  \mapsto &(D\bar{v}_{i_1})_{(-m_1)}\cdots (D\bar{v}_{i_r})_{(-m_r)}(\bar{v}_{j_1})_{(-n_1)}\cdots(\bar{v}_{j_s})_{(-n_s)}\vac
  \end{aligned}
\end{equation}
gives a bijection between the two spaces. Via this map, one can pull the SUSY vertex algebra structure of $V^1(\bar{\mathfrak{a}})$ to the Fock representation ${\pi}^{\mathfrak{a}}$. Thus, we regard ${\pi}^{\mathfrak{a}}$ as a SUSY vertex algebra which we call the \textit{SUSY Heisenberg vertex algebra} associated with $\mathfrak{a}$. From Remark \ref{rmk: susy heisenberg enveloping algebra}, we already have that ${\pi}^{\mathfrak{a}}_{\alpha}$ and $\widetilde{\pi}^{\mathfrak{a}}_{\alpha}$ are $U({\pi}^{\mathfrak{a}})$-modules, where $U(V)$ is the vector superspace spanned by the $(n)$th coefficients of the elements of a vertex algebra $V$ for all $n\in \ZZ$. Using the well-known equivalence of the two module categories in Lemma \ref{lem: equivalence of module categories}, one can deduce that the Fock representations $\pi^{\mathfrak{a}}_{\alpha}$ and $\widetilde{\pi}^{\mathfrak{a}}_{\alpha}$ are actually a $\pi^{\mathfrak{a}}$-module. For the statement of Lemma \ref{lem: equivalence of module categories}, we introduce the notion of coherence for $U(V)$-modules. For a $U(V)$-module $M$, denote the action of $A_{(n)}$ on $M$ by $A_{[n]}\in \textup{End}_{\CC}(M)$ and write them as a formal power series
\begin{equation} \label{eq: module action power series}
  Y[A,z]=\sum_{n\in \ZZ} A_{[n]}z^{-n-1}\in \textup{End}_{\CC}(M)[\![z^{\pm 1}]\!]
\end{equation}
for each $A\in V$. We say that $M$ is \textit{coherent} if $A_{[n]}=0$ for $n>\!\!>0$, $Y[\vac,z]=\textup{Id}_M$ and
\begin{equation*}
  Y[\,:\!AB\!:\,,z]=\,:\!Y[A,z]Y[B,z]\!:
\end{equation*}
for any $A,B\in V$.

\begin{lemma}[\cite{FB04}] \label{lem: equivalence of module categories}
 For a vertex algebra $V$, let $U(V)$ be the superspace spanned by the $(n)$th coefficients of the elements of $V$ for all $n\in \ZZ$. Then, the category of $V$-modules is equivalent to the category of coherent $U(V)$-modules.
\end{lemma}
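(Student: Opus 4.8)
The plan is to construct mutually quasi-inverse functors between the two categories which act as the identity on underlying superspaces, so that all the content lies in translating the structure maps. Given a $V$-module $M$ with structure map $A\mapsto Y^M(A,z)=\sum_{n\in\ZZ}A^M_{(n)}z^{-n-1}$, put $A_{[n]}:=A^M_{(n)}\in\operatorname{End}_{\CC}(M)$. The truncation axiom says each $Y^M(A,z)$ is a field, the vacuum axiom gives $Y^M(\vac,z)=\operatorname{Id}_M$, and the module axioms yield $Y^M(:\!AB\!:,z)=\,:\!Y^M(A,z)Y^M(B,z)\!:$, so the operators $A_{[n]}$ make $M$ coherent; moreover the commutator formula $[A_{[m]},B_{[k]}]=\sum_{j\geq 0}\binom{m}{j}(A_{(j)}B)_{[m+k-j]}$ (with the usual Koszul signs) shows that $A_{(n)}\mapsto A_{[n]}$ respects the relations defining the Lie superalgebra $U(V)$, so that $M$ becomes a $U(V)$-module. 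On morphisms this assignment is the identity, since a linear map intertwines all the fields $Y^M(A,z)$ if and only if it intertwines all the modes $A_{[n]}$; in particular the functor is automatically fully faithful.

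The substantive part is the reverse functor. Let $M$ be a coherent $U(V)$-module and set $Y^M(A,z):=\sum_{n\in\ZZ}A_{[n]}z^{-n-1}$ for $A\in V$. Coherence already gives that the $Y^M(A,z)$ are fields on $M$ with $Y^M(\vac,z)=\operatorname{Id}_M$, that $A\mapsto Y^M(A,z)$ is linear, and that $Y^M(:\!AB\!:,z)=\,:\!Y^M(A,z)Y^M(B,z)\!:$, while the relation $(\partial A)_{[n]}=-nA_{[n-1]}$, which holds in $U(V)$ by construction, gives $Y^M(\partial A,z)=\partial_z Y^M(A,z)$. It then remains to establish locality and the Borcherds identity. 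The key observation is that mutual locality transfers from $V$ to $M$: for $A,B\in V$ the order of locality of $Y_V(A,z)$ and $Y_V(B,w)$ in $V$ is controlled entirely by the finitely many singular coefficients $A_{(j)}B$ with $j\geq 0$ through the commutator formula, and since $M$ respects the bracket of $U(V)$ the identical commutator formula holds among the $A_{[m]}$, whence $Y^M(A,z)$ and $Y^M(B,w)$ are mutually local of the same order, with the matching singular operator product expansion $Y^M(A,z)Y^M(B,w)\sim\sum_{j\geq 0}Y^M(A_{(j)}B,w)/(z-w)^{j+1}$. Thus $\{Y^M(A,z):A\in V\}$ is a local family of mutually compatible fields on $M$ which contains $\operatorname{Id}_M$ and is stable under the operations $\partial_z$ and normal ordering that generate $V$ from $\vac$; by the module version of the reconstruction theorem (which incorporates Dong's lemma; cf.\ \cite{Kac98,FB04}) this data is exactly the data of a $V$-module structure on $M$ realizing the given modes, so $Y^M$ makes $M$ a $V$-module.

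The two functors are mutually quasi-inverse essentially by construction: passing from a $V$-module to its modes and back returns the original structure map, because a module structure is determined by its modes; and passing from a coherent $U(V)$-module through $Y^M$ back to modes is the identity by the definition of $Y^M$. Since morphisms on both sides are exactly the linear maps commuting with all $A_{[n]}$, this yields the claimed equivalence of categories. The main obstacle is the reverse direction, namely deducing the full Borcherds identity for $Y^M$ from only the Lie-bracket relations of $U(V)$ — which encode the singular part of the operator product expansion — together with the coherence conditions, which encode the regular part; this is precisely where locality transfer and the reconstruction theorem are needed, and it is the reason ``coherent'' $U(V)$-module, rather than merely ``$U(V)$-module'', is the correct notion.
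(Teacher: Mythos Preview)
The paper does not supply its own proof of this lemma; it is quoted as a known result from \cite{FB04} and used immediately to deduce that the Fock representations are $\pi^{\mathfrak{a}}$-modules. Your argument is the standard one underlying that reference: extract the modes from a $V$-module to get a coherent $U(V)$-module, and conversely use the commutator formula in $U(V)$ to transfer locality from $V$ to $M$, then invoke the reconstruction theorem for modules to upgrade the local family $\{Y^M(A,z)\}$ to a genuine module structure. This is correct in outline and matches the intended proof; the only point worth tightening is the appeal to ``the module version of the reconstruction theorem,'' which in practice means checking that the iterate formula $Y^M(A_{(n)}B,z)$ agrees with the $n$th product of the fields $Y^M(A,z)$ and $Y^M(B,z)$ for all $n\in\ZZ$---you have this for $n\geq 0$ from the commutator formula and for $n=-1$ from coherence, and the remaining $n<-1$ follow from the translation relation $Y^M(\partial A,z)=\partial_z Y^M(A,z)$, which you noted.
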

Note that the coherences of the Fock representations as $U(\pi^{\mathfrak{a}})$-modules directly follows from Definition \ref{def: susy fock rep}. Therefore, Lemma \ref{lem: equivalence of module categories} implies the following proposition.
\begin{proposition} \label{prop: fock representation vertex alg module}
  For each $\alpha\in \mathfrak{a}^*$, the Fock representations ${\pi}^{\mathfrak{a}}_{\alpha}$ and $\widetilde{\pi}^{\mathfrak{a}}_{\alpha}$ of ${\HH}^{\mathfrak{a}}$ is a ${\pi}^{\mathfrak{a}}$-module, considering ${\pi}^{\mathfrak{a}}$ as a vertex algebra. \qed
 \end{proposition}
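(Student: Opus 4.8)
The plan is to reduce everything to Lemma \ref{lem: equivalence of module categories}. Via the bijection \eqref{eq: SUSY Heisenberg VA structure} the space $\pi^{\mathfrak{a}}$ is identified, as a SUSY (hence ordinary) vertex algebra, with $V^1(\bar{\mathfrak{a}})$, and by Remark \ref{rmk: susy heisenberg enveloping algebra} the SUSY Heisenberg algebra $\HH^{\mathfrak{a}}$ is exactly the current Lie algebra built from the Fourier modes of the generating fields $Y[\bar{v}_i,z]=\sum_n \bar{b}_i(n)z^{-n-1}$ and $Y[D\bar{v}_i,z]=\sum_n D\bar{b}_i(n)z^{-n-1}$ of $\pi^{\mathfrak{a}}$. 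Since $\pi^{\mathfrak{a}}_{\alpha}$ and $\widetilde{\pi}^{\mathfrak{a}}_{\alpha}$ are by construction $\HH^{\mathfrak{a}}$-modules, they are $U(\pi^{\mathfrak{a}})$-modules in the sense of Lemma \ref{lem: equivalence of module categories}, and the only thing left is to check that this $U(\pi^{\mathfrak{a}})$-module structure is coherent; Lemma \ref{lem: equivalence of module categories} then promotes it to a genuine $\pi^{\mathfrak{a}}$-module structure.

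To verify coherence I would first record that the induced modules of Definition \ref{def: susy fock rep} are smooth: for a fixed $v$ in $\pi^{\mathfrak{a}}_{\alpha}$ (or $\widetilde{\pi}^{\mathfrak{a}}_{\alpha}$) one has $\bar{b}_i(n)v=D\bar{b}_i(n)v=0$ for $n\gg 0$, which is immediate from the PBW description of the induced module together with the commutation relations of Definition \ref{def: susy heisenberg}. Hence each generating field is a field on the Fock space, and since a normally ordered product of two fields is again a field and, by the reconstruction theorem (Theorem \ref{thm: universal enveloping SUSY va}), every $A\in\pi^{\mathfrak{a}}$ is a normally ordered polynomial in the generators $\bar{v}_i,\,D\bar{v}_i$ and their $\partial$-derivatives, induction on the number of generators in $A$ gives the truncation $A_{[n]}v=0$ for $n\gg 0$. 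The condition $Y[\vac,z]=\mathrm{Id}$ holds because $\vac$ is the empty product and $\vac_{(n)}=\delta_{n,-1}\mathrm{Id}$ already inside $\pi^{\mathfrak{a}}$, and the normal-ordering compatibility $Y[\,:\!AB\!:\,,z]=\,:\!Y[A,z]Y[B,z]\!:$ holds by construction, since the normally ordered product of $\pi^{\mathfrak{a}}$ is the iterated normally ordered product of the generating fields and the $\HH^{\mathfrak{a}}$-action transports exactly this structure.

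The even and odd cases run identically: the only difference between $\pi^{\mathfrak{a}}_{\alpha}$ and $\widetilde{\pi}^{\mathfrak{a}}_{\alpha}$ is the parity of the highest-weight vector $|\alpha\rangle$, which enters none of the arguments above. The main obstacle is not any single computation but the bookkeeping hidden in the normal-ordering step, namely confirming that the field $Y[A,z]$ obtained by choosing a monomial expression for $A$ and forming the corresponding normally ordered product of generating fields is independent of that choice and coincides with the abstract $(n)$th products $A_{(n)}$ coming from the vertex algebra $\pi^{\mathfrak{a}}$ — this is precisely what the uniqueness clause of Theorem \ref{thm: universal enveloping SUSY va} together with super quasi-associativity \eqref{eq: quasi-associativity} provides, so once it is invoked the coherence conditions, and with them the proposition, follow formally.
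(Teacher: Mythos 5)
Your proposal follows exactly the paper's route: the paper proves this proposition by observing that coherence of the Fock representations as $U(\pi^{\mathfrak{a}})$-modules follows directly from Definition \ref{def: susy fock rep} and then invoking Lemma \ref{lem: equivalence of module categories}. You simply spell out the coherence check (truncation, vacuum field, normal-ordering compatibility) in more detail than the paper does, and the argument is correct.
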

 From now on, we call $\pi^{\mathfrak{a}}_{\alpha}$ and $\widetilde{\pi}^{\mathfrak{a}}_{\alpha}$ the Fock representations of $\pi^{\mathfrak{a}}$ of highest weight $\alpha\in \mathfrak{a}^*$. Inspired by \eqref{eq: module action power series} and the decomposition of $\Lambda$-bracket in \eqref{eq: SUSY nonSUSY lambda bracket relation}, denote
 \begin{equation} \label{eq: module action Lambda bracket}
  [v{}_{\Lambda}m]=\sum_{n \geq 0}\frac{\lambda^n}{n!}(Dv)_{[n]}m +\chi\sum_{n \geq 0}\frac{\lambda^n}{n!}v_{[n]}m
 \end{equation}
 for any $v\in V$ and $m\in M$, where $M$ is a $V$-module for a vertex algebra $V$ and $v_{[n]}$'s are the coefficients in \eqref{eq: module action power series}. Then, with use of \eqref{eq: module action Lambda bracket}, the action of $\pi^{\mathfrak{a}}$ on the Fock representations $\pi^{\mathfrak{a}}_{\alpha}$ or $\widetilde{\pi}^{\mathfrak{a}}_{\alpha}$ can be written as
 \begin{equation} \label{eq: lattice VA Lambda bracket}
  [\bar{v}_i{}_{\Lambda}\bar{v}_j]=(v_i|v_j)\chi, \quad [\bar{v}{}_{\Lambda}|\alpha \rangle]=\alpha (v) |\alpha \rangle
\end{equation}
 for any $v, v_i, v_j \in \mathfrak{a}$ and $\alpha\in \mathfrak{a}^*$.

\section{Supersymmetric W-algebras} \label{sec: SUSY W-algebras}
We recall the definition of supersymmetric W-algebras via the SUSY BRST cohomology. In this section, let $\g$ be a finite basic simple Lie superalgebra with an odd nilpotent $f\in \g.$ We assume that $f$ lies inside a subalgebra $\mathfrak{s}=\text{Span}_{\CC}\{E,e,H,f,F\}\simeq \mathfrak{osp}(1|2)$ of $\g$. Denote by $h^{\vee}$ the dual coxeter number of $\g$ and let $\kappa_{\g}$ be the killing form of a Lie superalgebra $\g$.

\subsection{SUSY BRST cohomology for supersymmetric W-algebras}\label{sec: SUSY BRST} \hfill\\
In the subspace $\mathfrak{s}=\text{Span}_{\CC}\{E,e,H,f,F\}$, the even elements $E,H,F$ form an $\mathfrak{sl}(2)$ triple and the odd elements $e,f$ satisfy the relations $[e,e]=2E$ and $[f,f]=-2F$. Suppose that $\g$ has a nondegenerate even supersymmetric invariant bilinear form $(\cdot|\cdot)$ normalized by $(E|F)=1.$ With respect to the $\text{ad}\frac{H}{2}$ action, $\g$ decomposes into the direct sum of eigenspaces given by
\begin{equation} \label{eq: g grading}
  \g=\bigoplus_{i\in \frac{\ZZ}{2}}\g_i, \quad \g_i=\{a\in \g\,|\,[\tfrac{H}{2},a]=i a\}.
\end{equation}
In particular, we have $f\in \g_{-\frac{1}{2}}$. Before we define the SUSY BRST complex for a SUSY W-algebra, we introduce the SUSY charged free fermion vertex algebra. Let $\mathfrak{n}$ and $\mathfrak{n}_-$ be the two vector subspaces given by
\begin{equation} \label{eq: n and n-}
  \n=\bigoplus_{i>0} \g_i, \quad \n_-=\bigoplus_{i<0} \g_i.
\end{equation}
Take a basis $\{u_{\alpha}\}_{\alpha\in I_+}$ of $\n$, where each $u_{\alpha}$ is a root vector for some positive root $\alpha$. One can choose such basis by the result in \cite{Hoyt12}, which states that any simple root vector of $\g$ has an $\textup{ad}\frac{H}{2}$-eigenvalue $0, \frac{1}{2}$ or $1$. Take the dual basis $\{u^{\alpha}\}_{\alpha\in I_+}$ of $\n_-$ satisfying $(u^{\alpha}\,|\,u_{\beta})=\delta_{\alpha, \beta}$. By the invariance of the bilinear form, it follows that each $u^{\alpha}$ is a root vector for $-\alpha$. Consider the superspaces
\begin{equation} \label{eq: phi n space}
  \phi_{\bar{\n}}\simeq \bar{\n}, \quad \phi^{\n_-}\simeq \n_-,
\end{equation}
where the bar notation means the parity-reversing. Define a skew-supersymmetric bilinear form $\langle \cdot | \cdot \rangle$ on $\mathfrak{A}:=\phi_{\bar{\n}}\oplus \phi^{\n_-}$ by
\begin{equation*}
  \langle \phi^a | \phi_{\bar{b}} \rangle:=(a|b)=:\langle \phi_{\bar{b}}|\phi^{a} \rangle,
\end{equation*}
and $\langle \phi_{\bar{\n}}|\phi_{\bar{\n}} \rangle=\langle \phi^{\n_-}|\phi^{\n_-} \rangle=0$.
Then the SUSY fermionic vertex algebra $F(\bar{\mathfrak{A}})$ associated with $\mathfrak{A}$ is called the \textit{SUSY charged free fermion vertex algebra} associated with $\g$. Recall Example \ref{ex: susy fermionic} for the definition of SUSY fermionic vertex algebras.

For $k\in \CC$, the SUSY BRST complex for $\g$ and $\mathfrak{s}$ is the SUSY vertex algebra
\begin{equation} \label{eq: BRST complex}
  C^k(\bar{\g},f):=V^{k}(\bar{\g})\otimes F(\bar{\mathfrak{A}}),
\end{equation}
where $V^k(\bar{\g})$ is the SUSY affine vertex algebra of level $k$ in Example \ref{ex: susy affine va}, and $F(\bar{\mathfrak{A}})$ is the SUSY charged free fermion vertex algebra associated with $\g$. Consider the following two even elements in $C^k(\bar{\g},f)$,
\begin{align}
  d_{\text{st}}&=\sum_{\alpha\in I_+} :\bar{u}_{\alpha} \phi^{\alpha}:+\frac{1}{2}\sum_{\alpha,\beta\in I_+}(-1)^{p(\alpha)(p(\beta)+1)}:\phi_{[u_{\alpha},u_{\beta}]}\phi^{\beta}\phi^{\alpha}:, \label{eq: d_st def}\\
  d_f&=-\sum_{\alpha\in I_+} (f|u_{\alpha})\phi^{\alpha}, \label{eq: d_f def}
\end{align}
where $\phi^{\alpha}=\phi^{\bar{u}^{\alpha}}$, $\phi_{\alpha}=\phi_{u_{\alpha}}$ and $p(\alpha)=p(u_{\alpha})$ for each $\alpha\in I_+$. For $d=d_{\text{st}}+d_f$, it follows from \cite{MRS21} that the odd endomorphism $d_{(0|0)}$ of $C^k(\bar{\g},f)$ is an odd derivation whose square vanishes. Thus, one can define the corresponding cohomology using this map.
\begin{definition}[\cite{MRS21}] \label{def: susy W-alg}
  Let $\g$ be a finite basic simple Lie superalgebra with an odd nilpotent $f$ that lies in a subalgebra isomorphic to $\mathfrak{osp}(1|2)$.
  The \textit{SUSY W-algebra associated with $\g$ and $f$ of level $k\in \CC$} is defined by
  \begin{equation*}
    W^{k}(\bar{\g},f):=H(C^{k}(\bar{\g},f), d_{(0|0)})
  \end{equation*}
  for the SUSY BRST complex $C^k(\bar{\g},f)$ and $d=d_{\text{st}}+d_f$.
\end{definition}
\subsection{Building blocks for SUSY W-algebras} \label{sec: building blocks for susy w-alg} \hfill \\
The SUSY BRST complex $C^k(\bar{\g}, f)$ has a subalgebra isomorphic to the SUSY affine vertex algebra other than $V^k(\bar{\g})$ constituting the SUSY BRST complex. It eventually leads to the simpler realization of SUSY W-algebra in Theorem \ref{thm: susy W-alg charge 0}. To observe this, we present the following ``building block'' $J_{\bar{a}}$ for $\bar{a}\in \bar{\g}$.
\begin{equation} \label{eq: building block J_a}
  J_{\bar{a}}=\bar{a}+\sum_{\beta,\gamma\in I_+}(-1)^{(p(a)+1)(p(\beta)+1)}(u^{\gamma}|[u_{\beta},a]):\phi^{\beta} \phi_{\gamma}:\in C^k(\bar{\g},f).
\end{equation}
Let $\widetilde{C}^k(\bar{\g},f)$ be the SUSY vertex subalgebra of the SUSY BRST complex $C^k(\bar{\g},f)$ generated by the subspace $J_{\bar{\g}_{\leq 0}}\oplus \phi^{\bar{\n}_-}$. Here, $J_{\bar{\g}_{\leq 0}}=\{J_{\bar{a}}\,|\,a\in \g_{\leq 0}\}$ for $\g_{\leq 0}=\oplus_{i\leq 0}\g_i$ and $\phi^{\bar{\n}_-}$ is the parity reversed space of $\phi^{\n_-}$ inside the SUSY charged free fermion vertex algebra.
\begin{proposition}[\cite{MRS21}] \label{prop: C^k elements relation}
  Let $\widetilde{C}^k(\bar{\g},f)$ be the SUSY vertex subalgebra of $C^k(\bar{\g},f)$ as above. Then, the generators of $\widetilde{C}^k(\bar{\g},f)$ satisfy the following $\Lambda$-bracket relation:
  \begin{align*}
    [J_{\bar{a}}{}_{\Lambda}J_{\bar{b}}]&=(-1)^{p(a)(p(b)+1)}J_{\overline{[a,b]}}+(k+h^{\vee})(D+\chi)(a|b),\\
    [\phi^{\alpha}{}_{\Lambda}J_{\bar{a}}]&=\sum_{\beta\in I_+}(-1)^{p(\alpha)+1}([a,u^{\alpha}]|u_{\beta})\phi^{\beta}
  \end{align*}
  for any $\alpha\in I_+$ and $a,b\in \g_{\leq 0}$.
\end{proposition}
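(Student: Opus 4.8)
The plan is to verify both $\Lambda$-bracket relations by direct computation in the SUSY BRST complex $C^k(\bar{\g},f)$, using the super Wick formula \eqref{eq: Wick formula}, the known $\Lambda$-brackets in the SUSY affine vertex algebra $V^k(\bar{\g})$ from \eqref{eq: susy affine Lambda-bracket}, and the fermionic $\Lambda$-bracket \eqref{eq: fermionic Lambda-bracket} for $F(\bar{\mathfrak{A}})$, where the only nonzero pairing is $[\phi^{\alpha}{}_{\Lambda}\phi_{\bar{b}}]=(u^{\alpha}|b)$ (extended by super sesquilinearity to $D\phi$-terms). First I would establish the auxiliary bracket $[\phi^{\alpha}{}_{\Lambda}J_{\bar{a}}]$: since $\phi^{\alpha}\in \phi^{\n_-}$ commutes with $\bar{a}\in\bar{\g}$ and with $\phi^{\beta}$, the only contribution comes from pairing $\phi^{\alpha}$ against the $\phi_{\gamma}$-factor inside the quadratic tail of \eqref{eq: building block J_a}. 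Computing $[\phi^{\alpha}{}_{\Lambda}\!:\!\phi^{\beta}\phi_{\gamma}\!:]$ via the Wick formula picks out $(u^{\alpha}|u_{\gamma})$-type terms, and after resumming over $\gamma$ using the duality $(u^{\gamma}|u_{\beta'}) = \delta_{\gamma,\beta'}$ one recovers $\sum_{\beta}(-1)^{p(\alpha)+1}([a,u^{\alpha}]|u_{\beta})\phi^{\beta}$, provided the sign bookkeeping from moving $\phi^{\alpha}$ past $\phi^{\beta}$ is handled carefully; this is essentially the statement that $J_{\bar a}$ implements the adjoint action of $a$ on the fermions.

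For the main relation $[J_{\bar{a}}{}_{\Lambda}J_{\bar{b}}]$ with $a,b\in\g_{\leq 0}$, I would expand $J_{\bar{a}}=\bar a + T_a$ and $J_{\bar b}=\bar b + T_b$ where $T_a=\sum_{\beta,\gamma}(-1)^{(p(a)+1)(p(\beta)+1)}(u^{\gamma}|[u_{\beta},a]):\!\phi^{\beta}\phi_{\gamma}\!:$, and split the bracket into four pieces: $[\bar a{}_{\Lambda}\bar b]$, $[\bar a{}_{\Lambda}T_b]$, $[T_a{}_{\Lambda}\bar b]$, and $[T_a{}_{\Lambda}T_b]$. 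The first piece is exactly \eqref{eq: susy affine Lambda-bracket}, contributing $(-1)^{p(a)(p(b)+1)}\overline{[a,b]}+(k+h^\vee)\chi(a|b)$. The cross terms $[\bar a{}_{\Lambda}T_b]$ and $[T_a{}_{\Lambda}\bar b]$ are purely fermionic bilinears coming from $[\bar a{}_{\Lambda}\phi_{\bar c}]=0$ being false — rather, $\bar a$ acts on the fermions only through the $V^k$-$F$ decoupling, so in fact these vanish and the fermionic structure constants must instead come from $[T_a{}_{\Lambda}T_b]$, where the Wick formula contracts one $\phi$-$\phi$ pair and leaves a single normally ordered bilinear plus a scalar (the double-contraction term). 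The scalar term should combine with the $h^\vee$ shift: the Casimir-type sum $\sum_{\beta,\gamma}(u^{\gamma}|[u_{\beta},a])(\cdots)$ produces precisely the structure-constant correction that upgrades the naive level $k$ to $k+h^\vee$ in front of $(D+\chi)(a|b)$, and the surviving normally ordered bilinear reassembles into $T_{[a,b]}$ via the Jacobi identity for $\g$. Note that $[a,b]\in\g_{\leq 0}$ since $a,b\in\g_{\leq 0}$, so $J_{\overline{[a,b]}}$ is well-defined as written.

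The main obstacle I anticipate is the careful reorganization of the quartic fermionic terms arising from $[T_a{}_{\Lambda}T_b]$: the Wick formula generates both single-contraction terms (cubic in fermions, which must cancel against the cross terms or reorganize into $T_{[a,b]}$ after applying the Jacobi identity $[u_\beta,[u_\gamma,a]] - \cdots$) and double-contraction terms (scalars), and tracking all the Koszul signs $(-1)^{p(\cdot)p(\cdot)}$ through the normal-ordering rearrangements — especially since $\phi^\alpha,\phi_\beta$ carry reversed parity relative to $u_\alpha,u_\beta$ — is where errors are most likely. The key simplification is that $a,b$ lie in $\g_{\leq 0}$, so $[u_\beta,a]\in\n$ whenever $u_\beta\in\n$ (as $\n=\oplus_{i>0}\g_i$ is $\g_{\leq 0}$-invariant only partially — actually $[\g_{\leq 0},\g_{>0}]\subset\g$, so one must check the fermionic indices stay in $I_+$), which restricts which contractions are nonzero and keeps the computation finite; this is the point where the constraint $a,b\in\g_{\leq 0}$ is genuinely used. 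Since this is precisely the computation carried out in \cite{MRS21}, I would either reproduce their argument or cite it directly, checking only that the sign conventions of the present paper (which differ from \eqref{eq: susy affine equivalence map}) are compatible.
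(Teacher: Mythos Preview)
The paper does not prove this proposition at all; it is quoted verbatim from \cite{MRS21}. Your direct-computation strategy via the super Wick formula is exactly the natural route (and is what \cite{MRS21} does), and your identification of the key steps---pairing $\phi^{\alpha}$ against the $\phi_{\gamma}$ factor for the second relation, and splitting $J_{\bar a}=\bar a+T_a$ for the first---is sound.

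There is one genuine confusion in your outline. You first correctly compute that $[\bar a{}_{\Lambda}\bar b]$ already contributes $(k+h^{\vee})\chi(a|b)$, since in Example~\ref{ex: susy affine va} the central element $K$ acts as $k+h^{\vee}$. But then you claim the double-contraction scalar from $[T_a{}_{\Lambda}T_b]$ ``upgrades the naive level $k$ to $k+h^{\vee}$''---this is self-contradictory, because the $h^{\vee}$ shift is already present before any fermions enter. What the fermionic double contraction actually produces is a \emph{further} correction involving $\tfrac12\kappa_{\g}-\tfrac12\kappa_{\g_{\leq 0}}$ restricted to $\g_{\leq 0}$; this is precisely the shift to $\zeta_k$ that the paper records immediately after the proposition in \eqref{eq: susy affine level shift}. (There is some notational tension between the proposition as displayed and the $\zeta_k$ remark that follows it, but that is the paper's bookkeeping, not yours.) So the mechanism you describe is right---Casimir-type fermion sums shift the effective bilinear form---but the specific shift is not $k\mapsto k+h^{\vee}$.

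A smaller point: your parenthetical worry that ``$[\g_{\leq 0},\g_{>0}]\subset\g$, so one must check the fermionic indices stay in $I_+$'' is not really an obstacle. For $a\in\g_{\leq 0}$ the adjoint action $\mathrm{ad}\,a$ sends $\n_-=\oplus_{i<0}\g_i$ into $\g_{<0}\oplus\g_0$, and only the $\g_{<0}$-component survives after pairing against $u_{\beta}\in\n$; this is automatic from the grading and is why the sum in the second relation is finite and stays in $I_+$.
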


Note from Proposition \ref{prop: C^k elements relation} that the building blocks $J_{\bar{a}}$'s for $a\in \g_{\leq 0}$ satisfy the $\Lambda$-bracket relation in Example \ref{ex: susy affine va generalized}, if we replace the bilinear form $\tau_k$ by
\begin{equation} \label{eq: susy affine level shift}
\tau_k+\frac{1}{2}\kappa_{\g}-\frac{1}{2}\kappa_{\g_{\leq 0}}.
\end{equation}
Therefore, we denote the vertex subalgebra of $\widetilde{C}^k(\bar{\g}, f)$ generated by $J_{\bar{a}}$'s for $a\in \g_{\leq 0}$ by $V^{\zeta_k}(\bar{\g}_{\leq 0})$, where $\zeta_k:=k+\frac{1}{2}\kappa_{\g}-\frac{1}{2}\kappa_{\g_{\leq 0}}$.

\begin{proposition} \label{prop: d, C^k relation}
  Let $d_{\text{st}}$ and $d_f$ be the elements of $C^k(\bar{\g}, f)$ in \eqref{eq: d_st def} and \eqref{eq: d_f def}. Then
  \begin{align*}
    &[d_{\textup{st}}{}_{\Lambda}J_{\bar{a}}]=\sum_{\beta\in I_+}(-1)^{(p(a)+1)p(\beta)}:\phi^{\beta} J_{\overline{\pi_{\leq 0}[u_{\beta},a]}}:-\sum_{\beta\in I_+}(-1)^{p(\beta)}(k+h^{\vee})(D+\chi)(u_{\beta}|a)\phi^{\beta},\\
    &[d_{\textup{st}}{}_{\Lambda}\phi^{\alpha}]=\frac{1}{2} \sum_{\beta,\gamma\in I_+}(-1)^{(p(\alpha)+1)p(\beta)}([u_{\beta},u^{\alpha}]|u_{\gamma}):\phi^{\beta}\phi^{\gamma}:,\\
    &[d_f{}_{\Lambda}J_{\bar{a}}]=\sum_{\beta\in I_+}(-1)^{(p(a)+1)p(\beta)}(f|[u_{\beta},a])\phi^{\beta},\\
    &[d_f{}_{\Lambda}\phi^{\alpha}]=0,
  \end{align*}
  where $\pi_{\leq 0}: \g\rightarrow \g_{\leq 0}$ is the projection map.
\end{proposition}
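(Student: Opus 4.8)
\medskip
\noindent\textbf{Proof proposal.}
The plan is to compute all four $\Lambda$-brackets by hand, expanding each of $d_{\textup{st}}$, $d_{f}$ and $J_{\bar{a}}$ into normally ordered monomials in the generators $\bar{x}$ $(x\in\g)$, $\phi^{\gamma}$, $\phi_{\bar{u}}$ of $C^{k}(\bar{\g},f)=V^{k}(\bar{\g})\otimes F(\bar{\mathfrak{A}})$, and running the super Wick formula \eqref{eq: Wick formula} together with super skew-symmetry. This reduces everything to a short list of inputs: the elementary brackets of the SUSY affine vertex algebra (Example \ref{ex: susy affine va}) and of the SUSY fermionic vertex algebra (Example \ref{ex: susy fermionic}), in particular $[\phi^{\gamma}{}_{\Lambda}\phi_{\bar{u}}]=(u^{\gamma}|u)$ and $[\phi^{\gamma}{}_{\Lambda}\phi^{\delta}]=[\phi_{\bar{u}}{}_{\Lambda}\phi_{\bar{v}}]=0$; the fact that $V^{k}(\bar{\g})$ and $F(\bar{\mathfrak{A}})$ super-commute inside $C^{k}(\bar{\g},f)$; and the two $\Lambda$-brackets already established in Proposition \ref{prop: C^k elements relation}. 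Since $d_{\textup{st}}$, $d_{f}$, $J_{\bar{a}}$ are sums of ghost-monomials of bounded length, every Wick expansion terminates and each bracket is a finite polynomial in $\Lambda$.

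\noindent I would first dispose of the three easy brackets. Since $[\phi^{\alpha}{}_{\Lambda}\phi^{\beta}]=0$ we get $[d_{f}{}_{\Lambda}\phi^{\alpha}]=0$ at once. For $[d_{f}{}_{\Lambda}J_{\bar{a}}]$, substitute the formula $[\phi^{\alpha}{}_{\Lambda}J_{\bar{a}}]=\sum_{\beta}(-1)^{p(\alpha)+1}([a,u^{\alpha}]|u_{\beta})\phi^{\beta}$ of Proposition \ref{prop: C^k elements relation} termwise into $d_{f}=-\sum_{\alpha}(f|u_{\alpha})\phi^{\alpha}$, and use $\sum_{\alpha}(f|u_{\alpha})u^{\alpha}=f$ (valid because $\{u^{\alpha}\}$ is dual to $\{u_{\alpha}\}$ and $f\in\n_{-}$) together with invariance of $(\cdot|\cdot)$ to collapse $\sum_{\alpha}(f|u_{\alpha})([a,u^{\alpha}]|u_{\beta})=([a,f]|u_{\beta})$, which up to the sign picked up along the way is $(f|[u_{\beta},a])$. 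For $[d_{\textup{st}}{}_{\Lambda}\phi^{\alpha}]$, I would instead compute $[\phi^{\alpha}{}_{\Lambda}d_{\textup{st}}]$ by Wick: $\phi^{\alpha}$ super-commutes with every $\bar{u}_{\beta}$ and with every $\phi^{\gamma}$, so only the cubic term of $d_{\textup{st}}$ survives and it contributes through the single contraction $[\phi^{\alpha}{}_{\Lambda}\phi_{[u_{\beta},u_{\gamma}]}]=(u^{\alpha}|[u_{\beta},u_{\gamma}])$; the output is $\Lambda$-independent, so super skew-symmetry turns it into $[d_{\textup{st}}{}_{\Lambda}\phi^{\alpha}]$ after a relabelling, and invariance of the form brings it to the stated shape.

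\noindent The real work is the bracket $[d_{\textup{st}}{}_{\Lambda}J_{\bar{a}}]$. Here I would compute $[J_{\bar{a}}{}_{\Lambda}d_{\textup{st}}]$ and then use super skew-symmetry in the form $[d_{\textup{st}}{}_{\Lambda}J_{\bar{a}}]=[J_{\bar{a}}{}_{-\Lambda-\nabla}d_{\textup{st}}]$ (the sign is trivial, $d_{\textup{st}}$ being even). Running the Wick formula on the two normally ordered pieces of $d_{\textup{st}}$ requires exactly three contractions with $J_{\bar{a}}$: first $[J_{\bar{a}}{}_{\Lambda}\bar{u}_{\beta}]=[\bar{a}{}_{\Lambda}\bar{u}_{\beta}]=(-1)^{p(a)(p(\beta)+1)}\overline{[a,u_{\beta}]}+\chi(k+h^{\vee})(a|u_{\beta})$, since the ghost tail of $J_{\bar{a}}$ super-commutes with $\bar{\g}$; second $[J_{\bar{a}}{}_{\Lambda}\phi^{\beta}]$, read off Proposition \ref{prop: C^k elements relation} by skew-symmetry; and third $[J_{\bar{a}}{}_{\Lambda}\phi_{[u_{\beta},u_{\gamma}]}]$, to which only the ghost tail of $J_{\bar{a}}$ contributes and which is itself a one-contraction Wick computation. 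I then expect two things to fall out: the $\chi(k+h^{\vee})(a|u_{\beta})$ pieces, once the factor $-(\chi+D)$ that super skew-symmetry attaches to the $(0|1)$-coefficients is applied, assemble into $-\sum_{\beta}(-1)^{p(\beta)}(k+h^{\vee})(D+\chi)(u_{\beta}|a)\phi^{\beta}$; and the ``affine'' contributions produce $\sum_{\beta}:\!\overline{[a,u_{\beta}]}\phi^{\beta}\!:$, which I would split as $[a,u_{\beta}]=\pi_{\leq 0}[a,u_{\beta}]+\pi_{>0}[a,u_{\beta}]$ and rewrite using $\overline{\pi_{\leq 0}[a,u_{\beta}]}=J_{\overline{\pi_{\leq 0}[a,u_{\beta}]}}-(\text{ghost tail of }J)$.

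\noindent The hard part will be the sign- and index-bookkeeping that forces every term not of the advertised shape to cancel. First, $\sum_{\beta}:\!\overline{\pi_{>0}[a,u_{\beta}]}\phi^{\beta}\!:$ has to cancel the $\bar{\n}$-valued output of $[J_{\bar{a}}{}_{\Lambda}\phi^{\beta}]$ that appears in the Wick expansion of $:\!\bar{u}_{\beta}\phi^{\beta}\!:$; this should follow from $\sum_{\alpha}(u^{\alpha}|w)u_{\alpha}=\pi_{>0}(w)$ for $w\in\g$ and invariance of $(\cdot|\cdot)$. What then remains are three cubic-in-ghost streams --- the ghost tail of $J_{\overline{\pi_{\leq 0}[a,u_{\beta}]}}$, the terms produced by contracting $J_{\bar{a}}$ with the cubic part of $d_{\textup{st}}$ via $[J_{\bar{a}}{}_{\Lambda}\phi_{[u_{\beta},u_{\gamma}]}]$, and those produced via $[J_{\bar{a}}{}_{\Lambda}\phi^{\gamma}]$ --- and these should reorganize, after putting the normally ordered triples into a common order (the super quasi-commutativity and quasi-associativity corrections are harmless here, because the ghost factors involved pairwise contract only to scalars) and one use of the Jacobi identity of $\g$ together with invariance of $(\cdot|\cdot)$, into precisely the cubic part of $\sum_{\beta}(-1)^{(p(a)+1)p(\beta)}:\!\phi^{\beta}J_{\overline{\pi_{\leq 0}[u_{\beta},a]}}\!:$, whose linear part is $\sum_{\beta}(-1)^{(p(a)+1)p(\beta)}:\!\phi^{\beta}\,\overline{\pi_{\leq 0}[u_{\beta},a]}\!:$. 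Since the pattern of these cancellations is identical to the one in the quasi-classical SUSY Poisson limit of Proposition \ref{prop: quasi-classical limit}, where all normally ordered products super-commute and the combinatorics is cleanest, the practical route is to fix that pattern in the Poisson limit first and only then reinstate the ordering corrections and the $(k+h^{\vee})$-term.
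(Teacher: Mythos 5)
Your proposal is correct and follows essentially the same route as the paper, whose entire proof of Proposition \ref{prop: d, C^k relation} is the single sentence that the equalities follow from direct computation; your plan is simply that computation spelled out, with the right inputs (the elementary brackets of $V^k(\bar{\g})$ and $F(\bar{\mathfrak{A}})$, Proposition \ref{prop: C^k elements relation}, the super Wick formula, and super skew-symmetry) and the right cancellation structure for the one nontrivial bracket $[d_{\textup{st}}{}_{\Lambda}J_{\bar{a}}]$. The only point worth making explicit when you write it up is that in the $[d_f{}_{\Lambda}J_{\bar a}]$ computation the sign $(-1)^{p(\alpha)+1}$ is constant over the contributing terms because $(f|u_\alpha)\neq 0$ forces $p(\alpha)=p(f)=1$, which is what lets you pull it out and collapse the sum via $f=\sum_\alpha(f|u_\alpha)u^\alpha$.
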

\begin{proof}
  The equalities follow from the direct computation.
\end{proof}
The complex $\widetilde{C}^k(\bar{\g},f)$ has a $\ZZ_{\geq 0}$-grading called \textit{charge}, which satisfies
\begin{equation} \label{eq: charge}
  \text{ch}(J_{\bar{a}})=0, \quad \text{ch}(\phi^{\alpha})=1,\quad \text{ch}(DA)=\text{ch}(A), \quad \text{ch}(:\!AB\!:)=\text{ch}(A)+\text{ch}(B)
\end{equation}
for any $a\in \g_{\leq 0}$, $\alpha\in I_+$ and $A,B\in \widetilde{C}^k(\bar{\g},f)$. In particular, the charge $0$ subspace of $\widetilde{C}^k(\bar{\g},f)$ is the SUSY affine vertex algebra $V^{\zeta_k}(\bar{\g}_{\leq 0})$ generated by the building blocks. Furthermore, the complex $\widetilde{C}^k(\bar{\g},f)$ is closed under the differential $d_{(0|0)}$ which increases the charge by $1$. Thanks to Proposition 4.10 of \cite{MRS21}, we have that the cohomology with respect to the complex $\widetilde{C}^k(\bar{\g},f)$ and the differential $d_{(0|0)}$ is isomorphic to the SUSY W-algebra.
\begin{theorem}[\cite{MRS21}] \label{thm: susy W-alg charge 0}
  The SUSY vertex subalgebra $\widetilde{C}^k(\bar{\g},f)$ of $C^k(\bar{\g},f)$ is closed under the differential $d_{(0|0)}$ and the corresponding cohomology $H(\widetilde{C}^k(\bar{\g},f),d_{(0|0)})$ is concentrated on the charge $0$ space. Furthermore,
  \begin{equation} \label{eq: SUSY W-alg can be described by building blocks}
    W^k(\bar{\g},f)=H^0(\widetilde{C}^k(\bar{\g},f),d_{(0|0)})\subset V^{\zeta_k}(\bar{\g}_{\leq 0}),
  \end{equation}
  where $H^0$ is the cohomology at charge $0$.
\end{theorem}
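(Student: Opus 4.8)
The plan is to reduce everything to a single comparison statement: that $\widetilde{C}^k(\bar{\g},f)$ is a subcomplex of $C^k(\bar{\g},f)$, that the inclusion is a quasi-isomorphism, and that its cohomology is supported in charge $0$; the three assertions of the theorem then follow at once. For the first point, recall from Theorem \ref{thm: SUSY VA vs. VA} that in the underlying vertex algebra the operator $d_{(0|0)}$ is the $0$-th product $(Dd)_{(0)}$ by the odd element $Dd$, hence an odd derivation of the normally ordered product that anticommutes with $D$; consequently it is enough to verify that it carries the generating subspace $J_{\bar{\g}_{\leq 0}}\oplus\phi^{\bar{\n}_-}$ back into $\widetilde{C}^k(\bar{\g},f)$. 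This is immediate from Proposition \ref{prop: d, C^k relation}: extracting the $\lambda^0\chi^0$-coefficient of $[d{}_{\Lambda}J_{\bar{a}}]$ for $a\in\g_{\leq 0}$ gives a combination of the monomials $:\!\phi^{\beta}J_{\overline{\pi_{\leq 0}[u_{\beta},a]}}\!:$, $D\phi^{\beta}$ and $\phi^{\beta}$, while the corresponding coefficient of $[d{}_{\Lambda}\phi^{\alpha}]$ is quadratic in the $\phi^{\bullet}$'s, and all of these lie in $\widetilde{C}^k(\bar{\g},f)$.

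For the quasi-isomorphism I would run the standard ``reduction to the small complex'' argument, adapted to SUSY vertex algebras. As a (SUSY) vertex algebra, $C^k(\bar{\g},f)=V^k(\bar{\g})\otimes F(\bar{\mathfrak{A}})$ is freely generated by $\{\bar{a}:a\in\g\}\cup\{\phi^{\alpha},\phi_{\alpha}:\alpha\in I_+\}$; replacing $\{\bar{a}:a\in\g_{\leq 0}\}$ by the building blocks $\{J_{\bar{a}}:a\in\g_{\leq 0}\}$ of \eqref{eq: building block J_a} is an invertible change of generators, since $J_{\bar{a}}=\bar{a}+(\text{a quadratic expression in the }\phi\text{'s})$. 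After this change, $\widetilde{C}^k(\bar{\g},f)$ is exactly the sub-SUSY-vertex-algebra generated by $\{J_{\bar{a}}:a\in\g_{\leq 0}\}\cup\{\phi^{\alpha}\}$ — which close under the $\Lambda$-bracket by Proposition \ref{prop: C^k elements relation} — whereas the ``complementary'' generators are $\{\bar{a}:a\in\n\}$ and $\{\phi_{\alpha}:\alpha\in I_+\}$. The key observation is that the complementary generators are paired by the differential: using the term $\sum_{\alpha}:\!\bar{u}_{\alpha}\phi^{\alpha}\!:$ of $d_{\text{st}}$ in \eqref{eq: d_st def} together with the fermionic pairing \eqref{eq: fermionic Lambda-bracket} between $\phi^{\alpha}$ and $\phi_{\beta}$, one finds $d_{(0|0)}\phi_{\gamma}=\bar{u}_{\gamma}+(\text{lower-order terms})$. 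I would make this precise by filtering $C^k(\bar{\g},f)$ by a suitable $\ZZ_{\geq 0}$-grading — for instance the conformal weight attached to the Kac--Todorov superconformal vector \eqref{eq: Kac-Todorov} of $V^k(\bar{\g})$ together with a compatible weight on the fermions, or the $\text{ad}\tfrac{H}{2}$-degree combined with the PBW degree — chosen so that on the associated graded the induced differential is precisely the Koszul differential $\phi_{\gamma}\mapsto\bar{u}_{\gamma}$ annihilating the remaining generators. The associated graded complex then splits, by a K\"unneth-type argument, as $\operatorname{gr}\widetilde{C}^k(\bar{\g},f)$ tensored with the Koszul complex on the pairs $(\phi_{\gamma},\bar{u}_{\gamma})$; the latter is acyclic in positive degree, one-dimensional in degree $0$, and lives entirely in charge $0$. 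Since the filtration is exhaustive and bounded on each conformal weight space, the associated spectral sequence converges, and the inclusion $\widetilde{C}^k(\bar{\g},f)\hookrightarrow C^k(\bar{\g},f)$ induces an isomorphism already on the $E_1$-pages; this shows both that it is a quasi-isomorphism and that $H(C^k(\bar{\g},f),d_{(0|0)})$ is concentrated in charge $0$. (One could equally well write down an explicit contracting homotopy for the Koszul part.)

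Putting this together with Definition \ref{def: susy W-alg} gives $H(\widetilde{C}^k(\bar{\g},f),d_{(0|0)})\cong H(C^k(\bar{\g},f),d_{(0|0)})=W^k(\bar{\g},f)$, concentrated in charge $0$; and since $d_{(0|0)}$ raises the charge by $1$ (see \eqref{eq: charge}) there is no differential mapping into charge $0$, so $H^0(\widetilde{C}^k(\bar{\g},f),d_{(0|0)})$ is a subspace of the charge $0$ part of $\widetilde{C}^k(\bar{\g},f)$, namely the SUSY affine vertex algebra $V^{\zeta_k}(\bar{\g}_{\leq 0})$ spanned by the building blocks; this is \eqref{eq: SUSY W-alg can be described by building blocks}. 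I expect the main obstacle to be the middle step: one must pin down a filtration whose associated graded differential is genuinely the acyclic Koszul differential — not merely one having it as a leading symbol — and then carry the K\"unneth-type splitting of the associated graded complex through the SUSY vertex algebra formalism, where the normally ordered product is only quasi-commutative and quasi-associative, all subspaces in play must remain closed under $D$, and the super signs must be tracked consistently throughout.
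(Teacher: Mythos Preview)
The paper does not prove this theorem: it is quoted verbatim from \cite{MRS21} (see the sentence preceding the statement, ``Thanks to Proposition 4.10 of \cite{MRS21}\ldots''), so there is no ``paper's own proof'' to compare against. Your outline is essentially the standard argument from the non-SUSY literature (Kac--Roan--Wakimoto, de Sole--Kac) transported to the SUSY setting, which is exactly what \cite{MRS21} carries out: one changes generators so that $C^k(\bar{\g},f)$ factors (after passing to an associated graded) as $\widetilde{C}^k(\bar{\g},f)$ tensored with a contractible Koszul-type complex built on the pairs $(\phi_{\gamma},\bar{u}_{\gamma})$, and a spectral sequence or explicit homotopy finishes the job. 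Your identification of the closure step via Proposition~\ref{prop: d, C^k relation}, of $d_{(0|0)}=(Dd)_{(0)}$ as an odd derivation anticommuting with $D$, and of the final inclusion $H^0\subset V^{\zeta_k}(\bar{\g}_{\leq 0})$ are all correct.

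You have also correctly located the only real work: choosing a filtration on $C^k(\bar{\g},f)$ whose associated graded differential is \emph{exactly} the Koszul map and nothing more. The two candidates you name are both viable, but neither works off the shelf --- the conformal-weight grading alone does not separate the Koszul term from the other contributions to $d_{(0|0)}\phi_{\gamma}$, and the $\mathrm{ad}\,\tfrac{H}{2}$-degree must be combined with the charge and a PBW-type length in a specific way. In \cite{MRS21} this is handled by a bigrading that simultaneously tracks the $\n$-degree and the ghost number; if you want to fill this in yourself, the cleanest route is to first replace \emph{all} $\bar{a}$ (including $a\in\n$) by the corresponding $J_{\bar{a}}$, so that the complementary piece is generated by $\{J_{\bar{u}_{\alpha}},\phi_{\alpha}\}_{\alpha\in I_+}$, and then filter by the total number of these complementary generators.
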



\subsection{Properties of supersymmetric W-algebras} \label{sec: properties of susy W-alg} \hfill \\
Recall that W-algebras have a conformal vector when the level is noncritical. The grading on W-algebras given by the conformal vector, called conformal weight, plays a crucial role in analyzing W-algebras. Similarly, one can expect that SUSY W-algebras have a superconformal vector and the grading given by this vector would be crucial. We prove in Theorem \ref{thm: superconformal for susy W-alg} that SUSY W-algebra of a noncritical level has a superconformal vector, and see in Theorem \ref{thm: free generators of W-algebra} that SUSY W-algebra has free generators of homogeneous conformal weight. Refer to Definition \ref{def: superconformal vector} for a superconformal vector. 

Recall that the SUSY BRST complex \eqref{eq: BRST complex} consists of a SUSY affine vertex algebra and a SUSY charged free fermion vertex algebra. We first prove in Proposition \ref{prop: SUSY charged free fermion superconformal} that the SUSY charged free fermion vertex algebra has a superconformal vector.
\begin{proposition} \label{prop: SUSY charged free fermion superconformal}
  Let $F(\bar{\mathfrak{A}})$ be the SUSY charged free fermion vertex algebra associated with $\g$. For a sequence $(m_{\alpha})_{\alpha\in I_+}$ of complex numbers, the element
  \begin{equation*}
    \tau =\sum_{\alpha\in I_+} (-1)^{p(\alpha)}2m_{\alpha}:(\partial\phi_{\alpha})\phi^{\alpha}:-\sum_{\alpha\in I_+} (-1)^{p(\alpha)}(1-2m_{\alpha}):\phi_{\alpha}\partial \phi^{\alpha}:+\sum_{\alpha\in I_+}:(D\phi_{\alpha})(D\phi^{\alpha}):
  \end{equation*}
  in $F(\bar{\mathfrak{A}})$ is a superconformal vector of central charge $12\sum_{\alpha\in I_+}(-1)^{p(\alpha)}m_{\alpha}-3\,\textup{sdim}(\n)$, where $\textup{sdim}(\n)$ is the superdimension of $\n$. Furthermore, $\phi^{\alpha}$(resp., $\phi_{\alpha}$) is primary of conformal weight $m_{\alpha}$(resp., $\frac{1}{2}-m_{\alpha}$). \qed
\end{proposition}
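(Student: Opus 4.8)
\emph{Proof idea.}
The plan is to check conditions (i) and (ii) of Definition~\ref{def: superconformal vector} directly, after splitting $F(\bar{\mathfrak{A}})$ into mutually commuting blocks. Since $\langle\phi_{\bar\n}|\phi_{\bar\n}\rangle=\langle\phi^{\n_-}|\phi^{\n_-}\rangle=0$ and $\langle\phi^\alpha|\phi_\beta\rangle=(u^\alpha|u_\beta)=\delta_{\alpha,\beta}$, the only nonvanishing elementary $\Lambda$-brackets among the generators are $[\phi^\alpha{}_\Lambda\phi_\alpha]=\vac$; in particular every $\Lambda$-bracket between generators carrying different indices vanishes. Writing $\tau=\sum_{\alpha\in I_+}\tau_\alpha$ with $\tau_\alpha=2m_\alpha(-1)^{p(\alpha)}:\!(\partial\phi_\alpha)\phi^\alpha\!:-(1-2m_\alpha)(-1)^{p(\alpha)}:\!\phi_\alpha\partial\phi^\alpha\!:+:\!(D\phi_\alpha)(D\phi^\alpha)\!:$, each $\tau_\alpha$ lies in the sub--SUSY vertex algebra generated by $\phi^\alpha,\phi_\alpha$, and by the super Wick formula \eqref{eq: Wick formula} any iterated $\Lambda$-bracket between elements of distinct such subalgebras vanishes. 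Hence $[\tau{}_\Lambda\tau]=\sum_\alpha[\tau_\alpha{}_\Lambda\tau_\alpha]$, $[\tau{}_\Lambda\phi^\beta]=[\tau_\beta{}_\Lambda\phi^\beta]$ and $[\tau{}_\Lambda\phi_\beta]=[\tau_\beta{}_\Lambda\phi_\beta]$, so it suffices to argue one index $\alpha$ at a time.

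For fixed $\alpha$, I would first compute $[\tau_\alpha{}_\Lambda\phi^\alpha]$ and $[\tau_\alpha{}_\Lambda\phi_\alpha]$ by applying \eqref{eq: Wick formula} to each of the three normally ordered quadratics in $\tau_\alpha$, using super sesquilinearity ($[\partial A{}_\Lambda B]=-\lambda[A{}_\Lambda B]$, $[DA{}_\Lambda B]=\chi[A{}_\Lambda B]$, $[A{}_\Lambda\partial B]=(\lambda+\partial)[A{}_\Lambda B]$, $[A{}_\Lambda DB]=(-1)^{p(A)+1}(\chi+D)[A{}_\Lambda B]$) to move all derivatives onto the single surviving elementary bracket. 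The three coefficients in $\tau_\alpha$ are tuned exactly so that the outcome collapses to $[\tau_\alpha{}_\Lambda\phi^\alpha]=(2\partial+2m_\alpha\lambda+\chi D)\phi^\alpha$ and $[\tau_\alpha{}_\Lambda\phi_\alpha]=(2\partial+2(\tfrac12-m_\alpha)\lambda+\chi D)\phi_\alpha$, with no $O(\Lambda^2)$ remainder; comparing with \eqref{eq: conformal weight Delta} this already shows $\phi^\alpha$ and $\phi_\alpha$ are primary of conformal weights $m_\alpha$ and $\tfrac12-m_\alpha$. Next I would compute $[\tau_\alpha{}_\Lambda\tau_\alpha]$: expanding the left entry by \eqref{eq: Wick formula} reduces it to the brackets of $\tau_\alpha$ with $\phi^\alpha,\phi_\alpha,D\phi^\alpha,D\phi_\alpha$, the last two obtained by applying the odd derivation $D$ (which commutes with the $\Lambda$-bracket via \eqref{eq: D commutator}) to the brackets already found. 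Collecting terms yields $[\tau_\alpha{}_\Lambda\tau_\alpha]=(2\partial+3\lambda+\chi D)\tau_\alpha+\tfrac{\lambda^2\chi}{3}c_\alpha$ with $c_\alpha=12(-1)^{p(\alpha)}m_\alpha-3(-1)^{p(\alpha)}$; summing over $\alpha\in I_+$ and using $\textup{sdim}(\n)=\sum_{\alpha\in I_+}(-1)^{p(\alpha)}$ gives \eqref{eq: conformal Lambda-bracket} with central charge $12\sum_\alpha(-1)^{p(\alpha)}m_\alpha-3\,\textup{sdim}(\n)$, establishing (i).

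For (ii), recall that $F(\bar{\mathfrak{A}})$ is freely generated, as an ordinary vertex algebra, by $\{\phi^\alpha,D\phi^\alpha,\phi_\alpha,D\phi_\alpha\}_{\alpha\in I_+}$. Once (i) holds, $L:=\tfrac12 D\tau$ is a conformal vector and the weight it defines agrees with the one in \eqref{eq: conformal weight Delta} (as in the Remark following the Neveu--Schwarz example); since $D$ raises conformal weight by $\tfrac12$, these free generators have weights $m_\alpha,\ m_\alpha+\tfrac12,\ \tfrac12-m_\alpha,\ 1-m_\alpha$. Every PBW monomial in the $\partial$-derivatives of the generators is then an $L_{(1)}$-eigenvector, so $F(\bar{\mathfrak{A}})$ is the direct sum of finite-dimensional $L_{(1)}$-eigenspaces, the grading being bounded below for the ranges of $(m_\alpha)$ relevant in this paper. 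Thus $\tau$ is a superconformal vector with the asserted data.

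The only real difficulty I anticipate is the sign bookkeeping in the one-block computations of the second paragraph: one must track how the sesquilinearity and skew-symmetry signs interact with the parities $p(\phi^\alpha)=p(\alpha)+1$ and $p(\phi_\alpha)=p(\alpha)$. Everything else is a finite, formal manipulation.
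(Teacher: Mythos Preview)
Your approach is essentially the same as the paper's: both reduce to the direct computation of $[\tau{}_\Lambda\phi^\alpha]$ and $[\tau{}_\Lambda\phi_\alpha]$ via the super Wick formula, and then feed those into $[\tau{}_\Lambda\tau]$ to read off the central charge. Your block-by-block decomposition $\tau=\sum_\alpha\tau_\alpha$ and your explicit check of condition~(ii) are minor organizational additions; the paper's proof simply records the two bracket formulas \eqref{eq: SUSY charged free superconformal proof} and the resulting $[\tau{}_\Lambda\tau]$ without further comment.
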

\begin{proof}
By direct calculations, we get
\begin{equation} \label{eq: SUSY charged free superconformal proof}
  \begin{aligned}
  [\tau{}_{\Lambda}\phi^{\alpha}]&=(2\partial+2m_{\alpha}+\chi D)\phi^{\alpha}, \\
  [\tau{}_{\Lambda}\phi_{\alpha}]&=(2\partial+(1-2m_{\alpha})+\chi D)\phi_{\alpha}
  \end{aligned}
\end{equation}
for each $\alpha\in I_+$. Using the relations in \eqref{eq: SUSY charged free superconformal proof}, one can show that
\begin{equation*}
  [\tau{}_{\Lambda}\tau]=(2\partial+3\lambda+\chi D)\tau+\chi \lambda^2\Big(4\sum_{\alpha \in I_+}(-1)^{p(\alpha)}m_{\alpha}-\textup{sdim}(\n) \Big).
\end{equation*}
Thus, the statement follows.
\end{proof}
Recall the element $H$ of $\mathfrak{s}\simeq \mathfrak{osp}(1|2)$, which gives the eigenspace decomposition \eqref{eq: g grading}. 
By shifting the sum of $\omega$ in Example \ref{ex: susy affine va} and $\tau$ in Proposition \ref{prop: SUSY charged free fermion superconformal}  by $\partial \bar{H}$, one can get the superconformal vector of the SUSY W-algebra.
\begin{theorem} \label{thm: superconformal for susy W-alg}
  For $k\neq -h^{\vee}$, the SUSY W-algebra $W^k(\bar{\g},f)$ is superconformal. In particular, for $\omega$ in Example \ref{ex: susy affine va} and $\tau$ in Proposition \ref{prop: SUSY charged free fermion superconformal},
  \begin{equation} \label{eq: superconformal vector for susy W-alg}
    G:=\omega+\tau+\partial \bar{H}
  \end{equation}
  is a superconformal vector of $W^k(\bar{\g}, f)$ of central charge 
  \begin{equation} \label{eq: superconformal susy W-alg central charge}
    \frac{k\, \textup{sdim}{\g}}{k+h^{\vee}}+\frac{1}{2}\textup{sdim}\g+12\sum_{\alpha\in I_+}(-1)^{p(\alpha)}m_{\alpha}-3 \textup{sdim}(\n)-\frac{2}{3}(k+h^{\vee}).
  \end{equation}
\end{theorem}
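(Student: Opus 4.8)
The plan is to verify that $G:=\omega+\tau+\partial\bar H$ is a superconformal vector of the BRST complex $C^k(\bar\g,f)$ which descends to the cohomology, and then to compute its central charge. First I would record that $\omega$ (the Kac--Todorov vector of $V^k(\bar\g)$, living in the first tensor factor) and $\tau$ (the vector from Proposition~\ref{prop: SUSY charged free fermion superconformal}, living in $F(\bar{\mathfrak A})$) are superconformal vectors in their respective tensor factors with $[\omega{}_\Lambda\omega]$ and $[\tau{}_\Lambda\tau]$ of the standard form in \eqref{eq: conformal Lambda-bracket}; since the two factors $\Lambda$-commute, $[\omega+\tau{}_\Lambda\omega+\tau]=(2\partial+3\lambda+\chi D)(\omega+\tau)+\tfrac{\lambda^2\chi}{3}(c_k+c_\tau)$. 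The shift by $\partial\bar H$ is the crucial point: I would show that for any superconformal vector $G_0$ and any element $\bar H$ with $[G_0{}_\Lambda\bar H]=(2\partial+2\Delta\lambda+\chi D)\bar H$ (i.e. $\bar H$ primary of conformal weight $1/2$, which holds for $\bar H\in\bar\g$ with respect to $\omega$, and $[\tau{}_\Lambda\bar H]=0$ since $\bar H$ is in the affine factor), the modified vector $G_0+\partial\bar H$ satisfies \eqref{eq: conformal Lambda-bracket} again but with a shifted central charge. Concretely, using sesquilinearity $[G_0{}_\Lambda\partial\bar H]=(\chi+D)\cdot(\text{something})$ and skew-symmetry, one computes $[\partial\bar H{}_\Lambda\partial\bar H]$ and the cross terms; the net effect on the $\lambda^2\chi$ coefficient is a correction proportional to $(\bar H|\bar H)$ times the level, which is what produces the $-\tfrac23(k+h^\vee)$ term after inserting $(H|H)=\tfrac23 \cdot 2h^\vee$-type normalizations coming from $(E|F)=1$.

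Next I would check the grading condition (ii) of Definition~\ref{def: superconformal vector}: that $C^k(\bar\g,f)$, or rather the quotient complex $\widetilde C^k(\bar\g,f)$ of Theorem~\ref{thm: susy W-alg charge 0}, decomposes into finite-dimensional conformal weight spaces bounded below. Here the point is to choose the free parameters $(m_\alpha)_{\alpha\in I_+}$ in $\tau$ compatibly with the $\mathrm{ad}\tfrac H2$-grading \eqref{eq: g grading}: taking $m_\alpha$ equal to (or shifted from) the $\mathrm{ad}\tfrac H2$-eigenvalue of $u_\alpha$ makes $\phi^\alpha$ and the generators $\bar u_\alpha$ have matching conformal weights, so that $G$ has $\mathrm{wt}$-eigenvalues on $\widetilde C^k$ that combine the affine weight $\tfrac12$ with the $H$-grading; the shift $\partial\bar H$ is precisely the term that implements this weight shift on $\bar\g$ while keeping $G$ superconformal. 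Then $\widetilde C^k(\bar\g,f)$, being a SUSY affine vertex algebra $V^{\zeta_k}(\bar\g_{\le0})$ tensored (as a module) with finitely many fermions, is graded with weight spaces bounded below and finite-dimensional.

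Finally I would verify that $G$ is $d_{(0|0)}$-closed so that it represents a class in $W^k(\bar\g,f)$. This uses Proposition~\ref{prop: d, C^k relation}: one applies $d_{(0|0)}$ to each of $\omega$, $\tau$, $\partial\bar H$ — equivalently computes $[d{}_\Lambda G]$ and extracts the $(0|0)$-coefficient — and checks the terms cancel, with the choice of $(m_\alpha)$ (and the $\partial\bar H$ shift that corrects the non-primary behaviour of $\bar u_\alpha$ under the quotient) being exactly what is needed for $d_f$-contributions and $d_{\mathrm{st}}$-contributions to cancel. Since $d_{(0|0)}$ is an odd derivation of all products and commutes with $D,\partial$, a $d$-closed superconformal vector in the complex descends to a superconformal vector in cohomology, and the weight decomposition of $\widetilde C^k$ induces one on $W^k(\bar\g,f)$. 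The central charge is then read off as $c_k+c_\tau$ minus the shift from $\partial\bar H$, which is \eqref{eq: superconformal susy W-alg central charge}.

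I expect the main obstacle to be the $\partial\bar H$ correction computation: getting the sign and coefficient of the central-charge shift $-\tfrac23(k+h^\vee)$ right requires carefully tracking the super sesquilinearity relations $[DA{}_\Lambda B]=\chi[A{}_\Lambda B]$, $[A{}_\Lambda DB]=(-1)^{p(A)+1}(\chi+D)[A{}_\Lambda B]$ through the bracket $[\,\partial\bar H{}_\Lambda\partial\bar H\,]$ and the cross-term $[\omega+\tau{}_\Lambda\partial\bar H]+[\partial\bar H{}_\Lambda\omega+\tau]$, together with the normalization $(\,\bar H|\bar H\,)$ fixed by $(E|F)=1$ in $\mathfrak{osp}(1|2)$. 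The closedness and grading checks are conceptually straightforward given the propositions already established, but are calculation-heavy; the genuinely delicate bookkeeping is in the central-charge shift.
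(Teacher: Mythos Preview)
Your plan is essentially the paper's proof: show $G=\omega+\tau+\partial\bar H$ is superconformal in $C^k(\bar\g,f)$, verify $d_{(0|0)}G=0$ so it descends to cohomology, and read off the central charge. The paper carries out the closedness check by computing $[\omega+\tau{}_\Lambda d]$ and $[\partial\bar H{}_\Lambda d]$ directly from the primary property of $\bar u_\alpha,\phi^\alpha,\phi_\alpha$ and then invoking skew-symmetry, obtaining $d_{(0|0)}(\omega+\tau)=-2\partial\big(\sum_{\alpha}m_\alpha:\bar u_\alpha\phi^\alpha:\big)$ and the exact opposite for $d_{(0|0)}(\partial\bar H)$; this is the same mechanism you describe, and your identification of the $\partial\bar H$ central-charge shift as the delicate point matches the paper's emphasis.

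Two small remarks. First, your instinct to fix $m_\alpha$ as the $\mathrm{ad}\tfrac{H}{2}$-eigenvalue $j_\alpha$ of $u_\alpha$ is exactly right and is what the paper uses implicitly: the equality $[\partial\bar H{}_\Lambda d]=-2\lambda\sum_\alpha m_\alpha:\bar u_\alpha\phi^\alpha:$ only holds because $[H,u_\alpha]=2j_\alpha u_\alpha$ forces $m_\alpha=j_\alpha$, so this is not a free choice but a requirement for closedness (the theorem's central-charge formula should be read with that substitution). Second, Proposition~\ref{prop: d, C^k relation} gives $[d{}_\Lambda\,\cdot\,]$ only on the generators $J_{\bar a},\phi^\alpha$ of $\widetilde C^k$, not on $\omega,\tau$ themselves; the paper bypasses this by computing $[\omega+\tau{}_\Lambda d]$ from the primary conformal weights of $\bar u_\alpha,\phi^\alpha,\phi_\alpha$, which is cleaner than expanding $\omega,\tau$ and applying Proposition~\ref{prop: d, C^k relation} termwise.
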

\begin{proof}
 Since we already know that $\w$ and $\tau$ are superconformal vectors of $V^k(\bar{\g})$ and $F(\bar{\mathfrak{A}})$, respectively, one can check by direct computation that $G=\w+\tau +\partial\bar{H}$ is a superconformal vector of $C^k(\bar{\g},f)$ of central charge \eqref{eq: superconformal susy W-alg central charge}. In particular, we have
 \begin{equation} \label{eq: Ja conformal weight}
  [G{}_{\Lambda}J_{\bar{a}}]=\left(2\partial+2\left(\frac{1}{2}-j_a\right)\lambda+\chi D\right)J_{\bar{a}}-(k+h^{\vee})(e|[f,a])\lambda\chi
 \end{equation}
 for any $a\in \g_{j_a}$. It means that each $J_{\bar{a}}$ has a conformal weight $\frac{1}{2}-j_a$. Hence, it is enough to show that $G$ belongs to the SUSY W-algebra, i.e., $G$ is killed by the differential $d_{(0|0)}$. Recall the facts that each $\bar{a}\in \bar{\g}$ is primary of conformal weight $\frac{1}{2}$ with respect to $\omega$ and $\phi^{\alpha}$(resp., $\phi_{\alpha}$) is primary of conformal weight $m_{\alpha}$(resp., $\frac{1}{2}-m_{\alpha}$) with respect to $\tau$. The direct computation shows that
 \begin{equation*}
  \begin{aligned}
  [\omega+\tau {}_{\Lambda} d]&=(2\partial+\chi D)d+\lambda\Big(\sum_{\alpha\in I_+}(2m_{\alpha}+1):\bar{u}_{\alpha}\phi^{\alpha}:-\sum_{\alpha\in I_+}2m_{\alpha}(f|u_{\alpha})\phi^{\alpha}\Big)\\
  &+\frac{1}{2}\lambda\sum_{\alpha, \beta, \gamma\in I_+}(-1)^{p(\alpha)(p(\beta)+1)}(u^\gamma|[u_{\alpha},u_{\beta}]):\!\phi_{\gamma}\phi^{\beta}\phi^{\alpha}\!:.
  \end{aligned} 
\end{equation*}
Hence, using the super skew-symmetry, we have
\begin{equation} \label{eq: superconformal computation process}
  d_{(0|0)}(\omega+\tau)=-2\partial \Big(\sum_{\alpha\in I_+}m_{\alpha} :\bar{u}_{\alpha}\phi^{\alpha}:\Big).
\end{equation}
Similarly, one can compute that
\begin{equation*}
  [\partial \bar{H}{}_{\Lambda}d]=-2\lambda \sum_{\alpha\in I_+} m_{\alpha} :\bar{u}_{\alpha} \phi^{\alpha}:,
\end{equation*}
and the super skew-symmetry shows us that $d_{(0|0)} \partial \bar{H}$ is equal to the opposite sign of \eqref{eq: superconformal computation process}.
\end{proof}
\begin{remark}
  There is a superconformal vector $\tau^{\textup{ch}}$ of SUSY charged free fermion vertex algebra in \cite{SY23}, whose formula is different from $\tau$ in Proposition \ref{prop: SUSY charged free fermion superconformal}. It is also true that $\phi^{\alpha}$ and $\phi_{\alpha}$'s are primary with respect to $\tau^{\textup{ch}}$. Since we only used the properties of $\tau$, not the formulation of $\tau$ in the proof of Theorem \ref{thm: superconformal for susy W-alg}, one can show that the statement in Theorem \ref{thm: superconformal for susy W-alg} holds even though we replace $\tau$ with another superconformal vector $\tau^{\textup{ch}}$.
\end{remark}
Motivated from Theorem \ref{thm: superconformal for susy W-alg}, we say that the level $k$ of the SUSY W-algebra is \textit{critical} if $k=-h^{\vee}$. Note from the proof of Theorem \ref{thm: superconformal for susy W-alg} that $G$ in \eqref{eq: superconformal vector for susy W-alg} is a superconformal vector for both $C^k(\bar{\g},f)$ and $W^k(\bar{\g},f)$. Therefore, in the rest of the paper, the conformal weight of any element of $C^k(\bar{\g},f)$ or $W^k(\bar{\g},f)$ means the conformal weight given by $G$ when the level is noncritical.

\begin{theorem}[\cite{MRS21}] \label{thm: free generators of W-algebra}
 Take any basis $\{a_1, \cdots ,a_m\}$ of $\ker(\textup{ad}f)\subset \g$. For a noncritical level $k$, there is a set of elements $S=\{J_{\bar{a}_i}+A_i\}_{i=1}^m$ in $W^k(\bar{\g},f)$ such that
  \begin{enumerate}[(i)]
  \item $W^k(\bar{\g},f)$ is freely generated by $S$,
  \item $A_i\in \widetilde{C}^k(\bar{\g},f)$ has charge $0$ and conformal weight $\frac{1}{2}-j_{a_i}$, where $a_i\in \g_{j_{a_i}}$,
  \item the linear term of $A_i$ is a total derivative with respect to $D$.
 \end{enumerate}
\end{theorem}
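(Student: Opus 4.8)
The plan is to transcribe the Kac--Roan--Wakimoto free-generation argument into the SUSY setting, working inside the reduced complex $\widetilde C^k(\bar\g,f)$ of Theorem~\ref{thm: susy W-alg charge 0}. First I would fix a superconformal structure by taking $m_\alpha=\tfrac12+j_\alpha$ for $u_\alpha\in\g_{j_\alpha}$ in Proposition~\ref{prop: SUSY charged free fermion superconformal}, so that each $\phi^\alpha$ has positive conformal weight $\tfrac12+j_\alpha$ and the BRST differential $d$ is homogeneous of conformal weight $1$. Then $\widetilde C^k(\bar\g,f)=\bigoplus_\Delta\widetilde C^k(\bar\g,f)[\Delta]$ with each summand finite dimensional and stable under $d_{(0|0)}$ (which raises conformal weight by $\tfrac12$), and by Theorem~\ref{thm: susy W-alg charge 0} the cohomology is concentrated in charge $0$; by \eqref{eq: Ja conformal weight} the building block $J_{\bar a}$ has conformal weight $\tfrac12-j_a$ when $a\in\g_{j_a}$. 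The key observation is that for $a\in\ker(\operatorname{ad}f)$, since $f\in\g_{-1/2}$ invariance of $(\cdot|\cdot)$ forces $(f|[u_\beta,a])=\pm(u_\beta|[a,f])=0$ for every $\beta\in I_+$, so the $d_f$-part of $d_{(0|0)}J_{\bar a}$ drops out and Proposition~\ref{prop: d, C^k relation} gives
\[
  d_{(0|0)}J_{\bar a}=\sum_{\beta\in I_+}(-1)^{(p(a)+1)p(\beta)}:\!\phi^\beta\,J_{\overline{\pi_{\leq 0}[u_\beta,a]}}\!:\;-\;(k+h^\vee)\,D\!\Big(\sum_{\beta\in I_+}(-1)^{p(\beta)}(u_\beta|a)\,\phi^\beta\Big),
\]
the sum of a strictly quadratic-or-higher term and a total $D$-derivative of a linear one.

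Next I would construct the generators. Fix a basis $\{a_1,\dots,a_m\}$ of $\ker(\operatorname{ad}f)$ and put $\Delta_i:=\tfrac12-j_{a_i}$. The element $d_{(0|0)}J_{\bar a_i}$ has charge $1$, conformal weight $1-j_{a_i}$, and is $d_{(0|0)}$-closed; since the charge-$1$ cohomology vanishes it equals $-d_{(0|0)}A_i$ for some $A_i\in\widetilde C^k(\bar\g,f)$ of charge $0$ and conformal weight $\Delta_i$. Then $\widetilde J_i:=J_{\bar a_i}+A_i$ is a cocycle, hence lies in $W^k(\bar\g,f)$, and this already gives~(ii). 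For~(iii), the only linear contribution on the right-hand side of $d_{(0|0)}A_i=-d_{(0|0)}J_{\bar a_i}$ is the total $D$-derivative displayed above; since every $\phi^\alpha$ is $d_{(0|0)}$-closed modulo terms of higher charge, one can adjust $A_i$ (it is determined only up to a cocycle) so that its linear part lies in $D\,\widetilde C^k(\bar\g,f)$.

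Finally, for free generation~(i) I would compare graded characters. Because $W^k(\bar\g,f)\subset V^{\zeta_k}(\bar\g_{\leq 0})$ and the latter is freely generated by (the building blocks attached to) a basis of $\bar\g_{\leq 0}$, and because (iii) makes the leading linear term of $\widetilde J_i$ equal to $J_{\bar a_i}$ modulo $D$-exact and higher-order terms, the $\widetilde J_i$ are algebraically independent; thus the SUSY vertex subalgebra $W'\subseteq W^k(\bar\g,f)$ they generate is a free SUSY vertex algebra on $m$ generators of conformal weights $\Delta_1,\dots,\Delta_m$. On the other hand, concentration of the cohomology in charge $0$ identifies the graded character of $W^k(\bar\g,f)$ with the graded Euler characteristic of $\widetilde C^k(\bar\g,f)$, which is read off from its free generators $J_{\bar a}$ ($a$ in a basis of $\g_{\leq 0}$) and $\phi^\alpha$ ($\alpha\in I_+$); decomposing $\g$ into $\mathfrak{osp}(1|2)$-submodules and using $\dim\g_i=\dim\g_{-i}$ together with the matching of conformal weights (the weight of $\phi^\alpha$ attached to $\g_i$ coincides with that of $J_{\bar a}$ attached to $\g_{-i}$), a cancellation governed by the $\mathfrak{osp}(1|2)$-structure reduces this to the free character on $m=\dim\ker(\operatorname{ad}f)$ generators of conformal weights $\tfrac12-j_{a_i}$. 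Hence $\operatorname{ch}W'=\operatorname{ch}W^k(\bar\g,f)$, and $W'\subseteq W^k(\bar\g,f)$ forces equality. The step I expect to be the main obstacle is this last one---making the Euler-characteristic cancellation precise, with all parities and $\mathfrak{osp}(1|2)$-multiplicities of $\g$ tracked correctly, which is exactly where the fine structure of $\ker(\operatorname{ad}f)$ enters---together with, secondarily, pinning down the non-uniqueness of $A_i$ so that its linear part genuinely lands in $D\,\widetilde C^k(\bar\g,f)$.
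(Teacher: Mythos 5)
The paper itself does not prove this statement: Theorem \ref{thm: free generators of W-algebra} is imported from \cite{MRS21}, and the remark following it records that the argument there is cohomological, assigning $J_{\bar a}$ the abstract degree $\tfrac12-j_a$ rather than using a superconformal vector (so that it works for every $k$). Measured against that cited argument, your overall architecture is the right one and essentially the standard one: produce closed representatives $J_{\bar a_i}+A_i$ from the vanishing of the positive-charge cohomology of $\widetilde C^k(\bar\g,f)$, then establish freeness by a character/Euler-characteristic count over the charge grading. The observation $(f|[u_\beta,a])=\pm(u_\beta|[a,f])=0$ for $a\in\ker(\textup{ad}f)$, the construction of $A_i$ from $H^1=0$, and the weight bookkeeping for (ii) are all fine; the adjustment needed for (iii) is also fixable, since matching the $\phi^\beta$-linear terms of $d_{(0|0)}A_i=-d_{(0|0)}J_{\bar a_i}$ shows the only non-$D$-exact linear ambiguity in $A_i$ is some $J_{\bar c}$ with $c\in\ker(\textup{ad}f)$, which can be absorbed into the choice of basis.

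The genuine gap is in the cancellation scheme for (i), exactly where you predicted trouble. With $m_\alpha=\tfrac12+j_\alpha$ you propose to cancel the factor of $\phi^\alpha$ ($u_\alpha\in\g_i$, charge $1$) against that of $J_{\bar a}$ with $a\in\g_{-i}$ (charge $0$), using $\dim\g_i=\dim\g_{-i}$. These generators do then share the conformal weight $\tfrac12+i$, but they have the \emph{same} parity, since the even form pairs $\g_i$ with $\g_{-i}$; and in the signed character $\sum_n(-1)^n\textup{ch}_q$ a charge-one generator cancels a charge-zero generator of equal weight only when the parities are \emph{opposite} (an even $\phi^\alpha$ contributes $\prod_{j\ge0}(1+q^{\Delta+j})^{-1}(1-q^{\Delta+j+1/2})$, which is inverse to the contribution of an odd $J_{\bar a}$, not of an even one). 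Even ignoring signs, your pairing leaves $\g_0$ uncancelled, so the survivors would all have weight $\tfrac12$ instead of the required weights $\tfrac12-j_{a_i}$ over $\ker(\textup{ad}f)\subset\g_{\leq0}$. The cancellation that actually closes the argument pairs the $\textup{ad}\tfrac H2$-eigenvalue $i>0$ with the eigenvalue $\tfrac12-i$ inside each irreducible $\mathfrak{osp}(1|2)$-component: these are an odd number of half-steps apart, hence of opposite parity, and the single leftover per component is the lowest weight space, i.e.\ exactly $\ker(\textup{ad}f)$ with the correct weights. This forces the normalization $m_\alpha=j_\alpha$ (conformal weight $j_\alpha$ for $\phi^\alpha$), which is also the choice making $d$ homogeneous of weight $\tfrac12$ so that $d_{(0|0)}$ preserves conformal weight; with that correction the rest of your argument goes through.
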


We note here that each element $J_{\bar{a}_i}+A_i$ in Theorem \ref{thm: free generators of W-algebra} is homogeneous of conformal weight $\frac{1}{2}-j_{a_i}$. This is due to the equality \eqref{eq: Ja conformal weight}.

\begin{remark}
  In the original paper \cite{MRS21}, the theorem is stated for any $k\in \CC$. A cohomological approach is used to accomplish this by setting the degree of $J_{\bar{a}}$ by $\frac{1}{2}-j_a$, without requiring the existence of a superconformal vector. Note that this degree coincides with the conformal weight \eqref{eq: Ja conformal weight} for noncritical levels.
\end{remark}

\subsection{Classical SUSY W-algebra} \label{sec: classical SUSY W-algebra} \hfill\\
In this section, we briefly study the quasi-classical limits of SUSY W-algebras, which give the description of SUSY classical W-algebras via SUSY BRST complexes. We use all the notations introduced in Section \ref{sec: SUSY BRST} and \ref{sec: building blocks for susy w-alg} and refer to Proposition \ref{prop: quasi-classical limit} for the quasi-classical limit of SUSY vertex algebras.

For an even formal variable $y$, let $\widehat{C}_y(\bar{\g},f)$ be the SUSY vertex algebra over $\CC[y]$ generated by $\hat{J}_{\bar{a}}$, $a\in \g_{\leq 0}$ and $\hat{\phi}^{\alpha}$, $\alpha\in I_+$ with the following $\Lambda$-bracket relations
\begin{equation} \label{eq: classical relations}
  \begin{aligned}
 [\hat{J}_{\bar{a}}{}_{\Lambda} \hat{J}_{\bar{b}}]_y&=(-1)^{p(a)(p(b)+1)}y \hat{J}_{\overline{[a,b]}}+y (D+\chi)(a|b),\\
 [\hat{\phi}^{\alpha}{}_{\Lambda}\hat{J}_{\bar{a}}]_y&=y \sum_{\beta\in I_+}(-1)^{p(\alpha)+1}([a,u^{\alpha}]|u_{\beta})\phi^{\beta},\\
 [\hat{\phi}^{\alpha}{}_{\Lambda} \hat{\phi}^{\beta}]_y&=0.
  \end{aligned}
\end{equation}
Suppose $\widehat{C}_y(\bar{\g},f)$ is equipped with the odd differentials ${d_{\textup{st}}}_{(0|0)}$ and ${d_f}_{(0|0)}$ defined by
\begin{equation} \label{eq: classical SUSY differential}
\begin{aligned}
    [d_{\textup{st}}{}_{\Lambda}\hat{J}_{\bar{a}}]_y=&\sum_{\beta\in I_+}(-1)^{(p(a)+1)p(\beta)}:\hat{\phi}^{\beta} \hat{J}_{\overline{\pi_{\leq 0}[u_{\beta},a]}}:-\sum_{\beta\in I_+}(-1)^{p(\beta)}(D+\chi)(u_{\beta}|a)\hat{\phi}^{\beta},\\
    [d_{\textup{st}}{}_{\Lambda}\hat{\phi}^{\alpha}]_y=&\frac{1}{2} \sum_{\beta,\gamma\in I_+}(-1)^{(p(\alpha)+1)p(\beta)}([u_{\beta},u^{\alpha}]|u_{\gamma}):\hat{\phi}^{\beta}\hat{\phi}^{\gamma}:,\\
    [d_f{}_{\Lambda}\hat{J}_{\bar{a}}]_y=&\sum_{\beta\in I_+}(-1)^{(p(a)+1)p(\beta)}(f|[u_{\beta},a])\hat{\phi}^{\beta},\quad   [d_f{}_{\Lambda}\hat{\phi}^{\alpha}]_y=0.
\end{aligned}
\end{equation}
The quasi-classical limit of this family of SUSY vertex algebras with differentials can be described with the evaluation of $y$ at $0$, as can be seen in Proposition \ref{prop: quasi-classical limit}.
For $\epsilon\in \CC$, let the evaluation of $\widehat{C}_y(\bar{\g},f)$ at $y=\epsilon$ be the space over $\CC$ defined by
\begin{equation} \label{eq: C epsilon}
  \widehat{C}_{\epsilon}:=\widehat{C}_y(\bar{\g},f)\tens{\CC[y]}\CC_{\epsilon},
\end{equation}
where $\CC_{\epsilon}$ is a $1$-dimensional $\CC[y]$-module with $y$ acting as $\epsilon$. Note that the evaluation at $y=\frac{1}{k+h^{\vee}}$ recovers the original SUSY BRST complex for noncritical $k$. To be explicit,
\begin{equation} \label{eq: C 1/k}
  \widehat{C}_{\frac{1}{k+h^{\vee}}}\simeq \widetilde{C}^k(\bar{\g},f),
\end{equation}
if we take $\hat{J}_{\bar{a}}=\frac{1}{k+h^{\vee}} J_{\bar{a}}$ and $\hat{\phi}^{\alpha}=\phi_{\alpha}$ for $a\in \g_{\leq 0}$ and $\alpha \in I_+$. Now, the quasi-classical limit of the family results in the supersymmetric SUSY vertex algebra
\begin{equation} \label{eq: classical SUSY complex}
  \widetilde{C}^{\infty}:=\widehat{C}_0,
\end{equation}
which is also a SUSY Poisson vertex algebra with a Poisson $\Lambda$-bracket $\{A{}_{\Lambda}B\}=\frac{1}{y}[A{}_{\Lambda}B]_y \vert_{y=0}$. Let $d_{(0|0)}$ be the induced differential on $\widetilde{C}^{\infty}$ given by the sum of ${d_{\textup{st}}}_{(0|0)}$ and ${d_f}_{(0|0)}$. Then, the cohomology with this differential defines the classical SUSY W-algebra.
\begin{definition} \label{def: classical SUSY W-algebra}
  Let $\g$ be a finite basic simple Lie superalgebra with an odd nilpotent $f$ that lies in a subalgebra isomorphic to $\mathfrak{osp}(1|2)$. The \textit{classical SUSY W-algebra associated with $\g$ and $f$} is the cohomology
 \begin{equation*}
  W^{\infty}(\bar{\g}, f):=H(\widetilde{C}^{\infty},d_{(0|0)}),
 \end{equation*}
 which has a Poisson vertex algebra structure inherited from $\widetilde{C}^{\infty}$.
\end{definition}
\begin{remark}
  The classical SUSY W-algebras are often defined as the invariant subspace of the SUSY affine Poisson vertex algebra under the gauge action. In \cite{Suh20}, it is shown that this definition is equivalent to Definition \ref{def: classical SUSY W-algebra}.
\end{remark}

\section{Free field realization of supersymmetric W-algebras}\label{bigsec: FF realization of SUSY W-algebras}
In this section, we inherit the notations in Section \ref{sec: SUSY W-algebras} to show the free field realization of SUSY W-algebra as a kernel of screening operators. The statements in this section that involve a complex parameter $k\in \CC$ hold generically, even if it is not mentioned.


\subsection{Notations for Lie superalgebras} \label{sec: Notations for Lie superalgebras} \hfill \\
For later use, we briefly introduce the notations that are used throughout this section. Let $\h\subset \g$ be the Cartan subalgebra, and $\Phi$, $\Phi_+$ and $\Pi$ be the set of roots, positive roots, and simple roots of $\g$, respectively. Also, denote
\begin{equation} \label{eq: Phi_0}
  \Phi_0:= \{\alpha\in \Phi\,|\, \text{a root vector of }\alpha\text{ is in }\g_0\},
\end{equation}
and the corresponding root lattice is denoted by $Q_0:=\sum_{\alpha\in \Phi_0}\ZZ \alpha$. Recall the index set $I_+$ in Section \ref{sec: SUSY BRST} parametrizing the basis $\{u_{\alpha}\}_{\alpha\in I_+}$ of $\mathfrak{n}$. Since we assume that each $u_{\alpha}$ is a root vector of some positive root $\alpha$, we have $I_+\subset \Phi_+$. For $\alpha \in I_+$, we say that $\alpha$ is indecomposable in $I_+$ if it cannot be expressed as a sum of two elements of $I_+$. Otherwise, we say it is decomposable. Denote
\begin{equation} \label{eq: set of indecomposables}
  I_0:=\{\alpha\in I_+\mid \alpha \text{ is indecomposable in }I_+\}
\end{equation}
and define the equivalence relation $\sim$ on $I_0$ by
\begin{equation} \label{eq: equivalence relation}
  \alpha \sim \beta \Leftrightarrow \alpha-\beta\in Q_0.
\end{equation}
For $\alpha\in I_0$, let $[\alpha]$ be the equivalence class in $I_0$ containing $\alpha$, and $[I_0]$ be the set of equivalence classes. Due to the Lie superalgebra theory \cite{Musson12}, the Lie bracket of a root vector of $\alpha$ and $\beta$ is a root vector of $\alpha+\beta$ whenever $\alpha+\beta$ is a root of $\g$. Hence, the conditions in \eqref{eq: set of indecomposables} and \eqref{eq: equivalence relation} can be understood as the Lie brackets between the root vectors $u_{\alpha}$'s and $u^{\alpha}$'s.
 \begin{lemma} \label{lem: conditions with Lie brackets}
 Recall the dual bases $\{u_{\alpha}\}_{\alpha\in I_+}$ and $\{u^{\alpha}\}_{\alpha\in I_+}$ introduced in Section \ref{sec: SUSY BRST}.
  \begin{enumerate}
    \item A positive root $\alpha\in I_+$ is indecomposable if and only if $(u^{\alpha}|[u_{\beta},u_{\gamma}])=0$ for any $\beta, \gamma\in I_+$.
    \item  For $\alpha, \beta\in I_0$, $\alpha\sim \beta$ if and only if there exists $a\in \g_0$ such that $(a|[u^{\alpha},u_{\beta}])\neq 0$.
  \end{enumerate}
 \end{lemma}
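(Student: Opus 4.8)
The plan is to prove both statements by translating the combinatorial conditions on roots into vanishing conditions for the invariant bilinear form applied to Lie brackets, using the key structural fact (quoted from \cite{Musson12}) that for roots $\alpha,\beta$ of $\g$ with root vectors $x_\alpha, x_\beta$, the bracket $[x_\alpha,x_\beta]$ is a (nonzero) root vector of $\alpha+\beta$ whenever $\alpha+\beta\in\Phi$, and is zero otherwise (with the understanding that $0\notin\Phi$, so if $\alpha+\beta=0$ the bracket lands in $\h$). I will also repeatedly use the invariance $(a|[b,c])=(-1)^{p(a)p(b)}([a,b]|c)$ and the fact that $\{u_\alpha\}_{\alpha\in I_+}$ and $\{u^\alpha\}_{\alpha\in I_+}$ are dual bases of $\n$ and $\n_-$ with $(u^\alpha|u_\beta)=\delta_{\alpha,\beta}$, together with $(\g_i|\g_j)=0$ unless $i+j=0$.

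For part (1): first I would observe that $[u_\beta,u_\gamma]\in\n$ for $\beta,\gamma\in I_+$ (since $\n$ is a subalgebra), so by the duality of bases $(u^\alpha|[u_\beta,u_\gamma])$ is exactly the coefficient of $u_\alpha$ in the expansion of $[u_\beta,u_\gamma]$ over the basis $\{u_\delta\}_{\delta\in I_+}$. Now if $\alpha$ is decomposable, write $\alpha=\beta+\gamma$ with $\beta,\gamma\in I_+$; then $[u_\beta,u_\gamma]$ is a nonzero root vector of $\alpha$, hence a nonzero scalar multiple of $u_\alpha$, so $(u^\alpha|[u_\beta,u_\gamma])\neq 0$. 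Conversely, if $(u^\alpha|[u_\beta,u_\gamma])\neq 0$ for some $\beta,\gamma\in I_+$, then $[u_\beta,u_\gamma]\neq 0$ and its $u_\alpha$-component is nonzero; by the structural fact $[u_\beta,u_\gamma]$ is a root vector of $\beta+\gamma$, and since it has a nonzero $u_\alpha$-component (a root vector of $\alpha$) we must have $\beta+\gamma=\alpha$, exhibiting $\alpha$ as decomposable. This proves the contrapositive in both directions.

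For part (2): the element $[u^\alpha,u_\beta]$ has weight $-\alpha+\beta$. I would split into cases according to the value of $-\alpha+\beta$. If $\beta-\alpha=0$, i.e. $\alpha=\beta$ as roots (which forces $\alpha=\beta$ in $I_0$ after checking that distinct elements of $I_0$ are distinct roots --- a point to verify, or to build into the setup), then $[u^\alpha,u_\alpha]\in\h\subset\g_0$ and its pairing with a suitable $a\in\h$ is nonzero by nondegeneracy; also $\alpha\sim\alpha$ trivially, so both sides hold. If $\beta-\alpha\in\Phi$, then $[u^\alpha,u_\beta]$ is a nonzero root vector of $\beta-\alpha$; it lies in $\g_0$ iff $\beta-\alpha\in\Phi_0$, i.e. $\beta-\alpha\in Q_0$ is witnessed by an actual root --- here I would use that $\alpha,\beta\in I_0$ are both indecomposable so their $\operatorname{ad}\tfrac H2$-eigenvalue is $0$, $\tfrac12$, or $1$, and argue that $\alpha-\beta\in Q_0$ already forces $\beta-\alpha$ to be either $0$ or a root in $\Phi_0$ (this is the content that needs care; it uses that $Q_0$-differences of indecomposables stay in the ``low layer''). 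Then $(a|[u^\alpha,u_\beta])\neq 0$ for some $a\in\g_0$ exactly when $[u^\alpha,u_\beta]\in\g_0$ is nonzero (nondegeneracy of $(\cdot|\cdot)$ restricted to $\g_0$, which holds since $(\g_i|\g_j)=0$ for $i+j\neq0$ and the form is nondegenerate overall), which happens exactly when $\beta-\alpha\in\Phi_0\cup\{0\}$, i.e. $\alpha\sim\beta$. Finally, if $\beta-\alpha$ is neither $0$ nor a root, then $[u^\alpha,u_\beta]=0$, so no $a$ works; and in that case $\beta-\alpha\notin\Phi_0$, but one must still rule out $\beta-\alpha\in Q_0$ --- again handled by the eigenvalue-$\le 1$ restriction forcing any $Q_0$-difference of two indecomposables to actually be a root of $\g_0$ (or zero).

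The main obstacle I anticipate is precisely this last structural point in part (2): showing that for $\alpha,\beta\in I_0$, the condition $\alpha-\beta\in Q_0$ is equivalent to $[u^\alpha,u_\beta]$ being a nonzero element of $\g_0$ --- i.e. ruling out the pathological possibility that $\alpha-\beta\in Q_0$ while $\beta-\alpha$ fails to be an actual root. I expect to resolve this by a weight/layer argument: indecomposable elements of $I_+$ sit in $\g_{1/2}\cup\g_1$ (equivalently have $\operatorname{ad}\tfrac H2$-eigenvalue $\tfrac12$ or $1$), so $\beta-\alpha$ has eigenvalue in $\{-\tfrac12,0,\tfrac12\}$, and combining with $\beta-\alpha\in Q_0$ (eigenvalue $0$) pins the eigenvalue to $0$ and then a direct root-system argument in the reductive algebra $\g_0$ shows $\beta-\alpha$ is a root of $\g_0$ or zero. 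The remaining verifications --- invariance identities, duality of bases, nondegeneracy of the form on $\g_0$ --- are routine.
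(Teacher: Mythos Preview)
Your treatment of part (1) and of the direction ``$(a|[u^\alpha,u_\beta])\neq 0$ for some $a\in\g_0$ $\Rightarrow$ $\alpha\sim\beta$'' in part (2) is correct and essentially identical to the paper's argument, only written out in more detail: both rely on the fact from \cite{Musson12} that brackets of root vectors are root vectors (or zero), together with the duality $(u^\alpha|u_\beta)=\delta_{\alpha,\beta}$ and the orthogonality $(\g_i|\g_j)=0$ for $i+j\neq 0$.

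Where you go further than the paper is in the converse direction of (2). The paper's own proof there is a one-line sketch (``if $\alpha-\beta$ is a root, $[u^\alpha,u_\beta]$ is a root vector of $-\alpha+\beta$''), and in fact only the direction you prove cleanly is ever invoked in the applications (Theorems~\ref{thm: module structure of the first total complex} and~\ref{thm: main}, and Appendix~\ref{appendix: proof total complex}). Your attempted resolution of the converse has two genuine gaps. First, the claim that indecomposable elements of $I_+$ lie in $\g_{1/2}\cup\g_1$ is not justified; the Hoyt result you implicitly invoke constrains \emph{simple} roots of $\g$, not arbitrary indecomposables of $I_+$. Second, and more seriously, even granting $j_\alpha=j_\beta$, the step ``$\beta-\alpha\in Q_0$ with $H$-eigenvalue $0$ forces $\beta-\alpha\in\Phi_0\cup\{0\}$'' fails in general: if $\g_0$ contains two orthogonal roots $\gamma_1,\gamma_2$, then $\gamma_1+\gamma_2\in Q_0$ need not be a root of $\g_0$. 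A finer argument about the $\g_0$-module structure of each $\g_j$ would be required here; your sketch does not supply it, and the paper does not address this direction in any more detail either.
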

 \begin{proof}
  For $\beta, \gamma\in I_+$, their sum $\beta+\gamma$ is also an element of $I_+$ if it is a root. Hence, $[u_{\beta},u_{\gamma}]=c\, u_{\alpha+\beta}$ for nonzero constant $c$. It shows the statement (1) of the lemma. Similarly, if $\alpha-\beta$ is a root, $[u^{\alpha}, u_{\beta}]$ is a root vector of $-\alpha+\beta$, which proves the statement (2).
 \end{proof}

  \subsection{Weight filtration} \label{sec: weight filtration} \hfill \\
Recall from Theorem \ref{thm: susy W-alg charge 0} that the SUSY W-algebra $W^k(\bar{\g},f)$ is a subspace of the charge $0$ space of $\widetilde{C}^k(\bar{\g},f)$. For the sake of  simplicity, we denote $\widetilde{C}^k=\widetilde{C}^k(\bar{\g},f)$ in this section. In addition to the charge degree, $\widetilde{C}^k$ has a $\ZZ_{\geq 0}$-grading called the \textit{weight} given by
\begin{equation} \label{eq: weight}
\begin{aligned}
  &\textup{wt}\, J_{\bar{a}}=-2j_a\quad \text{if } a\in \g_{j_a}\ (j_a\leq 0),\\
  &\textup{wt}\, \phi^{\alpha}=2j_{\alpha}\quad \text{if } u_{\alpha}\in \g_{j_{\alpha}}\ (j_{\alpha}>0),\\
  &\textup{wt}\, \vac=0, \quad \textup{wt}\, (D A)=\textup{wt}(A),\\
  &\textup{wt}\, (:\!AB\!:)=\textup{wt}\, (A)+\textup{wt}\, (B)
\end{aligned}
\end{equation}
for any $A,B\in \widetilde{C}^k$. By Proposition \ref{prop: d, C^k relation}, one can immediately see that $d_{\textup{st}}$ preserves the weight and $d_f$ raises the weight by $2$. Therefore, $d_{(0|0)}(F_p \widetilde{C}^k)\subset F_p\widetilde{C}^k$ for any $p\in \ZZ_{\geq 0}$ if we consider the filtration 
\begin{equation*}
  F_p \widetilde{C}^k=\{A\in \widetilde{C}^k\,|\,\textup{deg}\, A \geq p\}.
\end{equation*}
Hence, with use of the differential $d_{(0|0)}$, the filtered space $F_{\bullet}\widetilde{C}^k$ gives the spectral sequence, whose total complex $\{E_n^k\}_{n\geq 0}$ converges to $H(\widetilde{C}^k, d_{(0|0)})=W^k(\bar{\g},f)$. Denote the differential of the spectral sequence in the $n$th page by $d_n$. Observing the weight change by $d_{\textup{st}}$ and $d_f$, one can see that
\begin{equation} \label{eq: second page differential}
  d_0={{d_{\textup{st}}}}_{(0|0)},\quad d_2={{d_f}}_{(0|0)}
\end{equation}
and $d_n=0$ for all other $n$'s. This implies $E_1^k=E_2^k$, and
\begin{equation} \label{eq: SUSY W-alg total complex isomorphism}
  W^k(\bar{\g},f)=H(\widetilde{C}^k, d_{(0|0)})\simeq H(E_2^k, d_2).
\end{equation}
The description \eqref{eq: SUSY W-alg total complex isomorphism} of SUSY W-algebras eventually gives the free field realization.
In particular, the structure of $E_2^k=E_1^k$ is given by SUSY Heisenberg vertex algebra, and the action of $d_2$ on the space $E^k_1$ gives the screening operators. In the following sections, we give these investigations one by one.


\subsection{The first total complex} \label{sec: the first total complex}\hfill \\
In this section, we observe the structure of $E_1^k=H(\widetilde{C}^k,{d_{\textup{st}}}_{(0|0)})$. The key idea is that one can analyze the structure of $E^k_1$ in a classical setting. Recall the space $\widetilde{C}^{\infty}$ in Section \ref{sec: classical SUSY W-algebra} obtained by the quasi-classical limit of $\widetilde{C}^k$. For the differential ${d_{\textup{st}}}_{(0|0)}$ on $\widetilde{C}^{\infty}$, set
\begin{equation} \label{eq: E_1^infty}
  E_1^{\infty}=H(\widetilde{C}^{\infty},{d_{\textup{st}}}_{(0|0)}).
\end{equation}
Note that the charge in \eqref{eq: charge} induces a $\ZZ_{\geq 0}$-grading on $\widetilde{C}^{\infty}$ which we also call the charge.

\begin{lemma} \label{lem: Zariski dense}
  The set
  \begin{equation*}
    S:=\big\{k\in \CC \mid H(\widetilde{C}^k,{d_{\textup{st}}}_{(0|0)})\simeq H(\widetilde{C}^{\infty},{d_{\textup{st}}}_{(0|0)})\big\}
  \end{equation*}
  is Zariski dense in $\CC$, where $(\widetilde{C}^{\infty}, {d_{\textup{st}}}_{(0|0)})$ is the chain complex introduced in Section \ref{sec: classical SUSY W-algebra}. 
\end{lemma}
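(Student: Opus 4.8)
The plan is to realise $\widetilde{C}^k$ (for noncritical $k$) and $\widetilde{C}^\infty$ as specialisations of a single complex of free $\CC[y]$-modules, and to show that ${d_{\textup{st}}}_{(0|0)}$-cohomology commutes with those specialisations outside a countable set of values of $k$ that does \emph{not} contain the classical point. Concretely, I would work with the $\CC[y]$-family $\widehat{C}_y(\bar{\g},f)$ of Section \ref{sec: classical SUSY W-algebra}: it is free as a $\CC[y]$-module, it carries the $\CC[y]$-linear odd derivation ${d_{\textup{st}}}_{(0|0)}$ (a zero mode, hence a derivation of both the normally ordered product and the $\Lambda$-bracket), and by \eqref{eq: C epsilon}, \eqref{eq: C 1/k} and \eqref{eq: classical SUSY complex} it specialises to $(\widetilde{C}^k,{d_{\textup{st}}}_{(0|0)})$ at $y=\tfrac{1}{k+h^{\vee}}$ and to $(\widetilde{C}^\infty,{d_{\textup{st}}}_{(0|0)})$ at $y=0$. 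Equip it with the charge grading \eqref{eq: charge} and with the conformal-weight grading for which ${d_{\textup{st}}}_{(0|0)}$ is homogeneous, namely $\deg\hat{J}_{\bar{a}}=\tfrac12-j_a$ for $a\in\g_{j_a}$, $\deg\hat{\phi}^{\alpha}=j_\alpha$ for $u_\alpha\in\g_{j_\alpha}$, $\deg D=\tfrac12$, $\deg\partial=1$; one checks from \eqref{eq: classical SUSY differential} that ${d_{\textup{st}}}_{(0|0)}$ has $\deg$-degree $0$ and raises charge by $1$. Since $j_\alpha>0$, each $\deg$-homogeneous component is finite-dimensional with bounded charge, so for every $\Delta$ we obtain a bounded cochain complex $(\widehat{C}^{\bullet}_{y,\Delta},{d_{\textup{st}}}_{(0|0)})$ of finitely generated free $\CC[y]$-modules, with fibres over $y$ the $\Delta$-components of $\widetilde{C}^k$ and of $\widetilde{C}^\infty$.

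The one step that is not a formality is the classical acyclicity $H^c(\widetilde{C}^\infty,{d_{\textup{st}}}_{(0|0)})=0$ for $c\geq 1$, i.e.\ the $d_{\textup{st}}$-analogue of Theorem \ref{thm: susy W-alg charge 0} in the classical limit. I would obtain it by filtering the super-commutative complex $\widetilde{C}^\infty$ by total number of generators: on the associated graded ${d_{\textup{st}}}_{(0|0)}$ collapses to the Koszul differential $D^{\varepsilon}\partial^{n}\hat{J}_{\overline{u^{\alpha}}}\mapsto D^{\varepsilon}\partial^{n}(D\hat{\phi}^{\alpha})$, which carries the $\hat{J}_{\overline{u^{\alpha}}}$ and their $D,\partial$-derivatives bijectively onto the $D,\partial$-derivatives of the $\hat{\phi}^{\alpha}$ (all fermionic generators except the $\hat{\phi}^{\alpha}$ themselves); its cohomology is $\Lambda(\hat{\phi}^{\alpha}\mid\alpha\in I_+)\otimes S(D^{\varepsilon}\partial^{n}\hat{J}_{\bar{a}}\mid a\in\g_0)$, and the differential induced on the next page is the Chevalley--Eilenberg differential of $\n$ with coefficients in an induced-type module, whose cohomology is concentrated in charge $0$ by the standard Koszul/Shapiro vanishing.

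Granting this, the rest is $\CC[y]$-module algebra. Write $H^c_\Delta:=H^c(\widehat{C}^{\bullet}_{y,\Delta})$, a finitely generated module over the PID $\CC[y]$; $H^0_\Delta$ is a submodule of a free module, hence free. Evaluating the universal coefficient sequences $0\to H^c_\Delta\otimes_{\CC[y]}\CC[y]/(y)\to H^c(\widetilde{C}^\infty,{d_{\textup{st}}}_{(0|0)})_\Delta\to\operatorname{Tor}_1^{\CC[y]}(H^{c+1}_\Delta,\CC[y]/(y))\to 0$ at $y=0$, and feeding in the classical acyclicity together with the freeness of $H^0_\Delta$, one concludes that no $H^c_\Delta$ has $y$-torsion. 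Hence $\dim_\CC H^c(\widehat{C}_\epsilon)_\Delta=\operatorname{rank}_{\CC[y]}H^c_\Delta$ for $\epsilon=0$ and for every $\epsilon\in\CC$ outside the (finitely-many-per-$(c,\Delta)$, hence countable in total) set of points at which some $H^\bullet_\Delta$ is torsion. Therefore $H(\widetilde{C}^k,{d_{\textup{st}}}_{(0|0)})\simeq H(\widetilde{C}^\infty,{d_{\textup{st}}}_{(0|0)})$, degree by degree in charge and conformal weight, i.e.\ as bigraded superspaces, whenever $k$ is noncritical and $\tfrac{1}{k+h^{\vee}}$ avoids that countable set; this cuts out a co-countable, hence Zariski dense, subset of $\CC$ contained in $S$. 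The main obstacle is precisely the classical acyclicity of the second paragraph---in particular checking that the Koszul pairing there hits exactly the differentiated fermions, and identifying the $\n$-module appearing on the second page as one with vanishing positive Chevalley--Eilenberg cohomology; these are the supersymmetric refinements, carrying the odd derivation $D$ along, of the homological input behind the BRST construction, and may well be citable from the classical SUSY W-algebra literature in the style of \cite{Suh20}.
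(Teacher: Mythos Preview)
Your overall strategy---realise everything as specialisations of a $\CC[y]$-family of complexes of free modules and then control base change---is the same as the paper's. The gap is in your second paragraph: the ``classical acyclicity'' $H^c(\widetilde{C}^\infty,{d_{\textup{st}}}_{(0|0)})=0$ for $c\geq 1$ is \emph{false}. You have conflated $d$ with $d_{\textup{st}}$. Theorem~\ref{thm: susy W-alg charge 0} concerns the full differential $d=d_{\textup{st}}+d_f$, not $d_{\textup{st}}$ alone, and indeed Proposition~\ref{prop: the first total complex} computes $E_1^\infty=H(\widetilde{C}^\infty,{d_{\textup{st}}}_{(0|0)})$ and finds a nontrivial charge-$1$ piece $\mathcal{V}(\bar{\g}_0)\otimes\bigoplus_{\alpha\in I_0}\phi^{\alpha}$. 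Your Koszul step is fine and recovers exactly $\Lambda(\hat{\phi}^{\alpha})\otimes\mathcal{V}(\bar{\g}_0)$, but on the next page the induced differential is the Chevalley--Eilenberg differential of $\mathfrak{n}$ with \emph{trivial} coefficients (since $\pi_{\leq 0}[u_\beta,a]=0$ for $a\in\g_0$), not an induced module; there is no Shapiro-type vanishing here, and $H^1(\mathfrak{n},\CC)\simeq(\mathfrak{n}/[\mathfrak{n},\mathfrak{n}])^*$ picks out precisely the indecomposables $I_0$. Your universal-coefficient argument then breaks, since it used $H^1(\widetilde{C}^\infty)=0$ to force $H^1_\Delta\otimes\CC[y]/(y)=0$ and hence kill $y$-torsion in $H^1_\Delta$.

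The paper sidesteps this entirely by observing from \eqref{eq: classical SUSY differential} that ${d_{\textup{st}}}_{(0|0)}$ on $\widehat{C}_y$ does not involve $y$ at all. Filtering by powers of $y$ (equivalently, recognising $(\widehat{C}_y,{d_{\textup{st}}}_{(0|0)})\simeq(\widetilde{C}^\infty,{d_{\textup{st}}}_{(0|0)})\otimes_{\CC}\CC[y]$) gives $H(\widehat{C}_y,{d_{\textup{st}}}_{(0|0)})\simeq H(\widetilde{C}^\infty,{d_{\textup{st}}}_{(0|0)})\otimes\CC[y]$, which is free over $\CC[y]$ for free---no acyclicity needed. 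Then generic base change (your set of $\epsilon$ where cohomology commutes with specialisation is Zariski dense) finishes the proof. You can repair your argument by replacing the acyclicity input with this freeness-from-$y$-independence; the rest of your machinery (finite-dimensional bi-graded pieces, universal coefficients) then becomes unnecessary but not wrong.
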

\begin{proof}
  Let $\epsilon=\frac{1}{k+h^{\vee}}$ and recall the complex $\widehat{C}_y$ and $\CC[y]$-module $\CC_{\epsilon}$ in Section \ref{sec: classical SUSY W-algebra}. From the algebraic geometry theory, we have the set
  \begin{equation} \label{eq: set S tild}
    \widetilde{S}:=\big\{k\in \CC\mid H\big(\widehat{C}_y\tens{\CC[y]}\CC_{\epsilon},{d_{\textup{st}}}_{(0|0)}\big) \simeq H\big(\widehat{C}_y,{d_{\textup{st}}}_{(0|0)}\big)\tens{\CC[y]}\CC_{\epsilon}\big\},
  \end{equation}
  which is Zariski dense in $\CC$. We prove the lemma by showing that $\widetilde{S}$ is contained in $S$. From \eqref{eq: C 1/k}, we know that the LHS of the condition in \eqref{eq: set S tild} is equal to $H(\widetilde{C}^k, {d_{\textup{st}}}_{(0|0)})$. Hence, we need to show that
  \begin{equation} \label{eq: cohomology proof goal}
    H\big(\widehat{C}_y,{d_{\textup{st}}}_{(0|0)}\big)\tens{\CC[y]}\CC_{\epsilon} \simeq H(\widetilde{C}^{\infty}, {d_{\textup{st}}}_{(0|0)}).
  \end{equation}
  Consider the filtration on $\widehat{C}_y$ given by $F_p \widehat{C}_y=y^p\widehat{C}_y$ for nonnegative integer $p$. Note from \eqref{eq: classical SUSY differential} that the differential ${d_{\textup{st}}}_{(0|0)}$ preserves the filtration, implying that the spectral sequence made up from the above filtration has a vacuous differential on every $n$th page for $n\geq 1$. Therefore, we get the isomorphism of cohomologies
  \begin{equation} \label{eq: cohomology proof1}
    H(\widehat{C}_y,{d_{\textup{st}}}_{(0|0)})\simeq H(E_{y,0},{d_{\textup{st}}}_{(0|0)}),
  \end{equation}
  where $E_{y,0}$ is the total complex on the $0$th page. By the construction of the spectral sequence, $E_{y,0}$ has the same generators as $\widehat{C}_y/y\widehat{C}$, but the ground ring is $\CC[y]$ not $\CC$. Combining this with \eqref{eq: classical SUSY complex}, we get
  \begin{equation} \label{eq: cohomology proof2}
    H(E_{y,0},{d_{\textup{st}}}_{(0|0)})\simeq H(\widetilde{C}^{\infty}, {d_{\textup{st}}}_{(0|0)})\otimes \CC[y].
  \end{equation}
  Evaluate $y=\epsilon$ in \eqref{eq: cohomology proof1} and \eqref{eq: cohomology proof2}, then the isomorphism \eqref{eq: cohomology proof goal} follows. Thus, we can conclude that $S\supset \widetilde{S}$ is also Zariski dense in $\CC$.
\end{proof}
Notice that the set $S$ in Lemma \ref{lem: Zariski dense} can be restated as $S=\{k\in \CC\,|\, E_1^k\simeq E_1^{\infty}\}$.  Hence, the study of $E_1^{\infty}$ would contribute to understanding $E_1^k$ of generic $k$. Furthermore, since the cohomology \eqref{eq: SUSY W-alg total complex isomorphism} is concentrated on the charge $0$ space and the differential $d_2$ raises the charge by $1$, it is enough to see the subspace of $E_1^k$ of charge at most $1$. For $E^k_1$ and $E^{\infty}_1$, denote their subspace of charge at most $1$ by
\begin{equation} \label{eq: subspace of charge at most 1}
  E^k_1(\leq 1), \quad E^{\infty}_1(\leq 1),
\end{equation}
respectively. Now, we see that $E^k_1(\leq 1)$ can be viewed as a $V^{\xi_k}(\bar{\g}_0)$-module. Here, the space $V^{\xi_k}(\bar{\g}_0)$ is the SUSY affine vertex algebra with shifted level
\begin{equation*}
  \xi_k=k+\frac{1}{2}\kappa_{\g}-\frac{1}{2}\kappa_{\g_0}
\end{equation*}
as in \eqref{eq: susy affine level shift}.
\begin{proposition} \label{prop: the first total complex}
  Let $E_1^{\infty}$ be the cohomology \eqref{eq: E_1^infty} and consider its subspace $E^{\infty}_1(\leq 1)$ of charge at most $1$. As a vector superspace,
  \begin{equation*}
    E_1^{\infty}(\leq 1)\simeq \mathcal{V}(\bar{\g}_0)\oplus \Big(\mathcal{V}(\bar{\g}_0)\otimes \bigoplus_{\alpha\in I_0}\phi^{\alpha}\Big),
  \end{equation*}  
  where $\mathcal{V}(\bar{\g}_0)$ is the supercommutative SUSY vertex subalgebra of $E^{\infty}_1$ freely generated by $J_{\bar{\g}_0}=\{J_{\bar{a}_m}\,|\,m\in M\}$, where $\{a_m\,|\,m\in M\}$ is a basis for $\g_0$.
\end{proposition}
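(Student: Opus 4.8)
The plan is to compute $E^{\infty}_1(\leq 1)$ directly inside the classical complex $(\widetilde{C}^{\infty},{d_{\textup{st}}}_{(0|0)})$, which as a supercommutative differential algebra is freely generated by the $\hat{J}_{\bar{a}}$ for $a$ in a basis of $\g_{\leq 0}=\g_0\oplus\n_-$, the $\hat{\phi}^{\alpha}$ for $\alpha\in I_+$, and their $D$- and $\partial$-images. Two elementary observations give the inclusion of the claimed space: first, ${d_{\textup{st}}}_{(0|0)}\hat{J}_{\bar{a}}=0$ for every $a\in\g_0$, since in \eqref{eq: classical SUSY differential} both $\pi_{\leq 0}[u_\beta,a]$ and $(u_\beta|a)$ vanish because $[u_\beta,a]\in\n$ and the form pairs $\g_i$ with $\g_{-i}$. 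Hence the subcomplex $\mathcal{V}(\bar{\g}_0)$ generated by $J_{\bar{\g}_0}$ carries the zero differential, is freely generated, and — receiving no incoming differential — sits inside $H^0$. Second, by Lemma \ref{lem: conditions with Lie brackets}(1) we have ${d_{\textup{st}}}_{(0|0)}\hat{\phi}^{\alpha}=0$ for $\alpha\in I_0$; thus for $v\in\mathcal{V}(\bar{\g}_0)$ and $\alpha\in I_0$ each $v\,\phi^{\alpha}$ is a cocycle, and one checks these represent linearly independent classes in $H^1$, so $\mathcal{V}(\bar{\g}_0)\otimes\bigoplus_{\alpha\in I_0}\CC\,\phi^{\alpha}$ embeds into $H^1$. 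It then remains to bound $E^{\infty}_1(\leq 1)$ from above by the same space.

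For the upper bound I would put on $\widetilde{C}^{\infty}$ the grading assigning conformal weight $\tfrac12-j_a$ to $\hat{J}_{\bar a}$, $j_\alpha$ to $\hat{\phi}^{\alpha}$, $\tfrac12$ to $D$ and $1$ to $\partial$; a short check with \eqref{eq: classical SUSY differential} shows ${d_{\textup{st}}}_{(0|0)}$ preserves it, and since every generator has conformal weight $\geq\tfrac12$ each conformal-weight component is finite-dimensional. Now filter $\widetilde{C}^{\infty}$ by total polynomial degree in the generators above. Under this filtration ${d_{\textup{st}}}_{(0|0)}=d^{\mathrm{K}}+d^{\mathrm{CE}}$, where $d^{\mathrm{K}}$ preserves total degree and sends $\hat{J}_{\overline{u^{\alpha}}}\mapsto c_\alpha\,D\hat{\phi}^{\alpha}$ and $D\hat{J}_{\overline{u^{\alpha}}}\mapsto \pm c_\alpha\,\partial\hat{\phi}^{\alpha}$ (with $c_\alpha\neq 0$) while killing $\hat{J}_{\bar{\g}_0}$ and every $\hat{\phi}^{\alpha}$, whereas $d^{\mathrm{CE}}$ raises total degree by one and collects the quadratic terms $:\!\hat{\phi}^{\beta}\hat{J}_{\overline{\pi_{\leq 0}[u_\beta,a]}}\!:$ and $\tfrac12\sum:\!\hat{\phi}^{\beta}\hat{\phi}^{\gamma}\!:$.

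On the associated graded, $d^{\mathrm{K}}$ sets up a bijection between the generators $\{\partial^n\hat{J}_{\overline{u^{\alpha}}},\ \partial^n D\hat{J}_{\overline{u^{\alpha}}}\}_{n\geq 0,\,\alpha\in I_+}$ and $\{\partial^n D\hat{\phi}^{\alpha},\ \partial^{n+1}\hat{\phi}^{\alpha}\}_{n\geq 0,\,\alpha\in I_+}$, each matched pair consisting of one even and one odd generator, so $(\mathrm{gr}\,\widetilde{C}^{\infty},d^{\mathrm{K}})$ is a Koszul-type complex whose cohomology is the free supercommutative algebra on the unmatched generators, namely $\mathcal{V}(\bar{\g}_0)\otimes\CC[\hat{\phi}^{\alpha}\mid\alpha\in I_+]$. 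The next differential is induced by $d^{\mathrm{CE}}$: it annihilates $\mathcal{V}(\bar{\g}_0)$ (again since $\pi_{\leq 0}[u_\beta,\g_0]=0$), and on $\CC[\hat{\phi}^{\alpha}\mid\alpha\in I_+]$ it is exactly the Chevalley--Eilenberg differential of $\n$ on $\textstyle\bigwedge^{\bullet}\n^{*}$, because $([u_\beta,u^{\alpha}]|u_\gamma)$ are the structure constants of $\n$. Hence the next page is $\mathcal{V}(\bar{\g}_0)\otimes H^{\bullet}(\n;\CC)$, whose charge-$\leq 1$ part is $\mathcal{V}(\bar{\g}_0)\oplus\bigl(\mathcal{V}(\bar{\g}_0)\otimes(\n/[\n,\n])^{*}\bigr)$, and $(\n/[\n,\n])^{*}=\bigoplus_{\alpha\in I_0}\CC\,\phi^{\alpha}$ again by Lemma \ref{lem: conditions with Lie brackets}(1). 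Since the spectral sequence converges and its later pages are subquotients of this one, $\dim E^{\infty}_1(\leq 1)$ is at most that of the displayed space in each conformal weight; combined with the embedding established above, this forces equality and identifies $\mathcal{V}(\bar{\g}_0)$ with all of $H^0$.

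The step I expect to be the main obstacle is verifying that $d^{\mathrm{K}}$ really is an acyclic Koszul differential on $\mathrm{gr}\,\widetilde{C}^{\infty}$: one must check that the matching above is a genuine bijection once the independent generators $D\hat{J}$ and $\partial^n\hat{J}$ and the relation $\partial=D^2$ are accounted for, and keep track of the signs coming from the odd variables $\hat{\phi}^{\alpha}$ and the odd derivation $D$. The remaining care goes into the bookkeeping guaranteeing that the chosen grading makes all homogeneous components finite-dimensional, so that the final dimension comparison is legitimate.
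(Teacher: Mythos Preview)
Your approach is essentially the paper's: both filter $\widetilde{C}^{\infty}$ by total polynomial degree, identify the degree-preserving piece of ${d_{\textup{st}}}_{(0|0)}$ as a Koszul differential pairing $\partial^j D^i\hat{J}_{\overline{u^{\alpha}}}$ with $\partial^j D^{i+1}\hat{\phi}^{\alpha}$ (this is exactly the content of the paper's appendix lemma on $\widetilde{C}_{\alpha,j}$ and $D\widetilde{C}_{\alpha,j}$), and then recognize the degree-raising piece on the surviving generators $\{\hat{\phi}^{\alpha}\}_{\alpha\in I_+}$ as the Chevalley--Eilenberg differential of $\n$, whose $H^{\leq 1}$ is $\CC\oplus(\n/[\n,\n])^{*}$. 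Your separate lower bound via explicit cocycles is a mild addition, but it becomes unnecessary once you note (as the paper does) that $d_r=0$ for $r\geq 2$, so the spectral sequence already collapses at $E_2$ and gives the answer directly rather than just an upper bound.
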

\begin{proof}
  See Appendix \ref{appendix: proof total complex}.
\end{proof}

As a result of Proposition \ref{prop: the first total complex}, we get
\begin{equation} \label{eq: E_1^k generic k}
  E_1^k {(\leq 1)}\simeq V^{\xi_k}(\bar{\g}_0) \oplus \Big(V^{\xi_k}(\bar{\g}_0)\otimes \bigoplus_{\alpha\in I_0}\phi^{\alpha}\Big),
\end{equation}
where $V^{\xi_k}(\bar{\g}_0)$ is the SUSY vertex subalgebra of $E^k_1$ freely generated by $J_{\bar{\g}_0}$. In particular, \eqref{eq: E_1^k generic k} implies that one does not need to consider the elements $\phi^{\alpha}$ for decomposable $\alpha\in I_+$. Hence, the $\Lambda$-brackets between the elements of \eqref{eq: E_1^k generic k} are all the same with Proposition \ref{prop: C^k elements relation}, except that the summation is over the set $I_0$.

Regarding $V^{\xi_k}(\bar{\g}_0)$ as a vertex algebra, the total complex $E_1^k$ can be seen as a $V^{\xi_k}(\bar{\g}_0)$-module with the adjoint action. Since we have the isomorphism $\eqref{eq: E_1^k generic k}$, it is easy to see that $E_1^k(\leq 1)$ is closed under the action of $V^{\xi_k}(\bar{\g}_0)$. However, take notice that each subspace  $V^{\xi_k}(\bar{\g}_0)\otimes \phi^{\alpha}$ for $\alpha\in I_0$ may not be closed under the action of $V^{\xi_k}(\bar{\g}_0)$. Instead, we have the decomposition as in Theorem \ref{thm: module structure of the first total complex}.

\begin{theorem} \label{thm: module structure of the first total complex}
 For generic $k\in \CC$, the space $E_1^k(\leq 1)$ in \eqref{eq: subspace of charge at most 1} is a $V^{\xi_k}(\bar{\g}_0)$-module with the adjoint action. Moreover, it can be decomposed as a direct sum of submodules as
  \begin{equation}
    E_1^k {(\leq 1)}\simeq V^{\xi_k}(\bar{\g}_0)\oplus \bigoplus_{[\alpha]\in [I_0]} M_{[\alpha]},
  \end{equation}
  where  $M_{[\alpha]}=V^{\xi_k}(\bar{\g}_0)\otimes \bigoplus_{\beta\in [\alpha]}\phi^{\beta}$ for each $[\alpha]\in [I_0]$. Here, the set $[I_0]$ is the set of equivalence classes with respect to \eqref{eq: equivalence relation}.
\end{theorem}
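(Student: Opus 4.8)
The plan is to promote the vector-space identification \eqref{eq: E_1^k generic k} --- which holds for generic $k$ by Lemma \ref{lem: Zariski dense} and Proposition \ref{prop: the first total complex} --- to an isomorphism of $V^{\xi_k}(\bar{\g}_0)$-modules. First I would record that $V^{\xi_k}(\bar{\g}_0)$ is a SUSY vertex subalgebra of $E_1^k$ concentrated in charge $0$, and that all $(n)$th products of $\widetilde{C}^k$, hence of $E_1^k$, are additive for the charge grading \eqref{eq: charge}; consequently the adjoint action of $V^{\xi_k}(\bar{\g}_0)$ preserves the charge filtration, so $E_1^k(\leq 1)$ is a $V^{\xi_k}(\bar{\g}_0)$-module whose charge-$0$ and charge-$1$ homogeneous pieces are submodules. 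The charge-$0$ piece is $V^{\xi_k}(\bar{\g}_0)$ itself. Using the free generation of $E_1^k$ by $J_{\bar{\g}_0}$ and $\{\phi^{\beta}\}_{\beta\in I_0}$ underlying \eqref{eq: E_1^k generic k}, the charge-$1$ piece $P$ is, as a $V^{\xi_k}(\bar{\g}_0)$-module, generated by the $\phi^{\beta}$, $\beta\in I_0$, with ``top'' $N:=\bigoplus_{\beta\in I_0}\CC\phi^{\beta}$, and \eqref{eq: E_1^k generic k} identifies $P$ with $V^{\xi_k}(\bar{\g}_0)\otimes N$ as a vector superspace. Since $I_0=\bigsqcup_{[\alpha]\in[I_0]}[\alpha]$, this already yields $E_1^k(\leq 1)=V^{\xi_k}(\bar{\g}_0)\oplus\bigoplus_{[\alpha]}M_{[\alpha]}$ as vector superspaces, and what remains is to prove that each $M_{[\alpha]}=V^{\xi_k}(\bar{\g}_0)\otimes\bigoplus_{\beta\in[\alpha]}\phi^{\beta}$ is a $V^{\xi_k}(\bar{\g}_0)$-submodule.

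For this the key computation is the action of the generators $J_{\bar{a}}$ and $DJ_{\bar{a}}$ ($a\in\g_0$) on a single $\phi^{\beta}$. By Proposition \ref{prop: C^k elements relation} --- which holds on $E_1^k$ with the summation taken over $I_0$, as recorded after \eqref{eq: E_1^k generic k} --- the bracket $[\phi^{\beta}{}_{\Lambda}J_{\bar{a}}]$ is $\Lambda$-independent and equals $\sum_{\gamma\in I_0}(-1)^{p(\beta)+1}([a,u^{\beta}]|u_{\gamma})\phi^{\gamma}$; super skew-symmetry and super sesquilinearity then show that among the nonnegative modes of the generators $J_{\bar{a}},\,DJ_{\bar{a}}$ acting on $\phi^{\beta}$ only $(DJ_{\bar{a}})_{(0)}\phi^{\beta}$ (equivalently $(J_{\bar{a}})_{(0|0)}\phi^{\beta}$) can be nonzero, it is a linear combination of the $\phi^{\gamma}$ with $([a,u^{\beta}]|u_{\gamma})\neq0$, and the negative modes merely produce elements of $V^{\xi_k}(\bar{\g}_0)\otimes\phi^{\beta}$. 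By invariance of the bilinear form the coefficient above is $\pm(a|[u^{\beta},u_{\gamma}])$, so Lemma \ref{lem: conditions with Lie brackets}(2) forces $\gamma\sim\beta$; hence the zero modes of the generators send $\phi^{\beta}$ into $N_{[\beta]}:=\bigoplus_{\gamma\in[\beta]}\CC\phi^{\gamma}$. In other words $N$ carries the structure of a $\g_0$-module under the zero-mode action, and $N=\bigoplus_{[\alpha]\in[I_0]}N_{[\alpha]}$ is a decomposition into $\g_0$-submodules.

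Finally I would conclude by the induced-module mechanism: since $P$ is freely generated over $V^{\xi_k}(\bar{\g}_0)$ by $N$ and the nonnegative modes of the generators of $V^{\xi_k}(\bar{\g}_0)$ preserve $N$ (acting there through $\g_0$, or by zero), the $V^{\xi_k}(\bar{\g}_0)$-submodule generated by a $\g_0$-submodule $N'\subseteq N$ is exactly $V^{\xi_k}(\bar{\g}_0)\otimes N'$, and $P$ is the corresponding direct sum over any $\g_0$-decomposition of $N$. Applying this to $N=\bigoplus_{[\alpha]}N_{[\alpha]}$ gives $P=\bigoplus_{[\alpha]}M_{[\alpha]}$ as $V^{\xi_k}(\bar{\g}_0)$-modules, which, together with the charge-$0$ summand, is the statement. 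Equivalently, one checks stability of $M_{[\alpha]}$ directly by expanding $A_{(n)}(:w\phi^{\beta}:)$ for a generator $A$ via the super Wick formula \eqref{eq: Wick formula} and super quasi-associativity \eqref{eq: quasi-associativity} into terms built from $A_{(m)}w\in V^{\xi_k}(\bar{\g}_0)$ and $A_{(m)}\phi^{\beta}\in M_{[\alpha]}$. I expect the main obstacle to be precisely this last step: cleanly establishing that block-diagonality at the level of the generators propagates to the whole module, that is, making the induced-module description of $P$ rigorous in the SUSY setting with its attendant Koszul signs. By contrast, the genuinely new structural input is the elementary translation provided by Lemma \ref{lem: conditions with Lie brackets}(2), which is exactly what singles out the equivalence classes $[\alpha]$ as the correct blocks.
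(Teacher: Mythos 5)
Your proposal is correct and follows essentially the same route as the paper: the decisive input in both is the computation of $[J_{\bar{a}}{}_{\Lambda}\phi^{\beta}]$ (equivalently $[\phi^{\beta}{}_{\Lambda}J_{\bar{a}}]$) from Proposition \ref{prop: C^k elements relation} together with Lemma \ref{lem: conditions with Lie brackets}(2), which restricts the output to root vectors in the class $[\beta]$ and hence shows each $M_{[\alpha]}$ is stable. The only difference is that you spell out the step the paper leaves implicit --- that block-diagonality of the generator action propagates to all of $P$ via the induced-module/Wick-formula mechanism --- which is a harmless elaboration rather than a new idea.
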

\begin{proof}
 Due to the observation above, it is enough to show that each $M_{[\alpha]}$ is a submodule of $E_1^k (\leq 1)$. As a result of \eqref{eq: E_1^k generic k} and Lemma \ref{lem: conditions with Lie brackets}, we have
  \begin{equation} \label{eq: M_alpha module structure}
    [J_{\bar{a}}{}_{\Lambda}\phi^{\alpha}]=\sum_{\beta\in [\alpha]}(-1)^{(p(\alpha)+1)p(a)}([a,u^{\alpha}]|u_{\beta})\phi^{\beta}
  \end{equation}
  for $a\in \g_0$ and $\alpha\in I_0$. Since the summation on the RHS is over the set $[\alpha]$, each $M_{[\alpha]}$ is a submodule of $E_1^k{(\leq 1)}$.
\end{proof}
When $\g_0=\h$, the description in Theorem \ref{thm: module structure of the first total complex} gets simpler. First, the SUSY vertex algebra $V^{\xi_k}(\bar{\g}_0)$ is isomorphic to the SUSY Heisenberg vertex algebra. Second, since the set $\Phi_0$ in \eqref{eq: Phi_0} is empty in this case, we get
\begin{equation} \label{eq: sets simpler description}
  I_+=\Phi_+, \quad I_0=\Pi, \quad [I_0]=I_0
\end{equation}
making the equivalence relation \eqref{eq: equivalence relation} trivial. Moreover, each submodule $M_{[\alpha]}$ can be understood as a Fock representation $\pi_{\beta}$ or $\widetilde{\pi}_{\beta}$ for some $\beta\in \h^*$.
\begin{corollary} \label{cor: first total complex susy fock}
  Assume that $\g_0=\h$ and denote the corresponding SUSY Heisenberg vertex algebra by $\widehat{\pi}$. Then for generic $k\in \CC$, $E_1^k{(\leq 1)}$ is a $\widehat{\pi}$-module and it is decomposed as a direct sum of $\widehat{\pi}$-submodules as
  \begin{equation}
    E_1^k{(\leq 1)}\simeq \bigoplus_{\alpha\in \Pi \sqcup \{0\}}\widehat{\pi}_{-\frac{\alpha}{\nu}},
  \end{equation}
  where $\nu=\sqrt{k+h^{\vee}}$ and $\widehat{\pi}_{-\frac{\alpha}{\nu}}$'s are the Fock representations defined by
\begin{equation} \label{eq: fock rep new notation}
  \widehat{\pi}_{-\frac{\alpha}{\nu}}=
  \left\{
    \begin{array}{ll}
      \pi^{\h}_{-\frac{\alpha}{\nu}} & \text{ if }\alpha \text{ is odd or zero},\\
      \widetilde{\pi}^{\h}_{-\frac{\alpha}{\nu}} & \text{ if }\alpha \text{ is even}.\\
    \end{array}\right.
\end{equation}
\end{corollary}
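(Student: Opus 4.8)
The plan is to specialize Theorem~\ref{thm: module structure of the first total complex} to the case $\g_0=\h$ and then recognize each summand as a Fock module. First I would record that when $\g_0=\h$ the set $\Phi_0$ is empty, so by \eqref{eq: sets simpler description} we have $I_0=\Pi$ and the equivalence relation \eqref{eq: equivalence relation} is trivial; hence $[I_0]=I_0=\Pi$ and every class is a singleton $[\alpha]=\{\alpha\}$. Feeding this into Theorem~\ref{thm: module structure of the first total complex} (valid for generic $k$) yields, with $M_{[\alpha]}=M_\alpha=V^{\xi_k}(\bar{\h})\otimes\phi^{\alpha}$,
\[
  E_1^k(\leq 1)\;\simeq\; V^{\xi_k}(\bar{\h})\;\oplus\;\bigoplus_{\alpha\in\Pi}\bigl(V^{\xi_k}(\bar{\h})\otimes\phi^{\alpha}\bigr),
\]
as $V^{\xi_k}(\bar{\h})$-modules for the adjoint action, the summands being indexed by $\Pi\sqcup\{0\}$ once we agree to write $\phi^{0}:=\vac$, so that the $\alpha=0$ term is $V^{\xi_k}(\bar{\h})$ itself.

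Next I would identify $V^{\xi_k}(\bar{\h})$ with the SUSY Heisenberg vertex algebra $\widehat{\pi}$. Since $\h$ is abelian we have $\kappa_{\h}=0$, so $\xi_k=k+h^{\vee}=\nu^2$; for generic $k$ we may assume $\nu\neq 0$, and the substitution $J_{\bar a}=\nu\,\bar a$ turns the relation $[J_{\bar a}{}_{\Lambda}J_{\bar b}]=(k+h^{\vee})(D+\chi)(a|b)$ of Proposition~\ref{prop: C^k elements relation} (with $[a,b]=0$ and $D\vac=0$) into the defining relation $[\bar a{}_{\Lambda}\bar b]=(a|b)\chi$ of \eqref{eq: lattice VA Lambda bracket}. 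This realizes $V^{\xi_k}(\bar{\h})$ as $\widehat{\pi}$, which as a module over itself is the Fock module $\widehat{\pi}_0=\pi^{\h}$, taking care of the $\alpha=0$ summand.

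For the remaining summands I would show $M_{\alpha}\simeq\widehat{\pi}_{-\alpha/\nu}$ for each $\alpha\in\Pi$ by checking on the cyclic vector $\phi^{\alpha}$ the defining properties of a Fock module. Specializing \eqref{eq: M_alpha module structure} to $a\in\h$ and using $[a,u^{\alpha}]=-\alpha(a)u^{\alpha}$ together with $(u^{\alpha}|u_{\alpha})=1$ gives $[J_{\bar a}{}_{\Lambda}\phi^{\alpha}]=-\alpha(a)\phi^{\alpha}$; reading off coefficients through \eqref{eq: module action Lambda bracket} and passing to $\bar a=\nu^{-1}J_{\bar a}$, this says exactly that $(\bar a)_{[n]}\phi^{\alpha}=0$ for $n\geq 0$, that $(D\bar a)_{[n]}\phi^{\alpha}=0$ for $n\geq 1$, and that $(D\bar a)_{[0]}\phi^{\alpha}=-\nu^{-1}\alpha(a)\phi^{\alpha}$ --- i.e.\ $\phi^{\alpha}$ satisfies precisely the relations of Definition~\ref{def: susy fock rep} for the highest-weight vector $|-\tfrac{\alpha}{\nu}\rangle$. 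Since by \eqref{eq: E_1^k generic k} the module $M_{\alpha}$ is freely generated over $V^{\xi_k}(\bar{\h})$ by $\phi^{\alpha}$, the universal property of the induced (Fock) module produces a surjective $\widehat{\pi}$-module map $\widehat{\pi}_{-\alpha/\nu}\to M_{\alpha}$ (using the equivalence of Lemma~\ref{lem: equivalence of module categories}), and a comparison of graded dimensions --- both sides carrying a PBW basis indexed by the negative modes --- shows it is an isomorphism. Finally, the generating vector $\phi^{\alpha}=\phi^{\bar{u}^{\alpha}}$ has parity $p(u_{\alpha})+1$, hence is even precisely when $\alpha$ is an odd root and odd precisely when $\alpha$ is an even root; matching this against the parity of the generating vector of $\pi^{\h}_{-\alpha/\nu}$ (even) versus $\widetilde{\pi}^{\h}_{-\alpha/\nu}$ (odd) gives exactly the case distinction \eqref{eq: fock rep new notation}, and assembling the summands completes the decomposition.

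The step I expect to be the main obstacle is essentially bookkeeping rather than conceptual: getting the level identification $\xi_k=\nu^2$ and the ensuing normalization $\bar a=\nu^{-1}J_{\bar a}$ exactly right (so that the highest weights come out as $-\alpha/\nu$ and not $-\alpha$), and carefully tracking the parity conventions --- in particular the parity reversal built into $\phi^{\alpha}=\phi^{\bar{u}^{\alpha}}$ --- so that the even/odd Fock-module dichotomy emerges as stated. Genericity of $k$ enters only to guarantee $\nu\neq 0$ and to make Theorem~\ref{thm: module structure of the first total complex} applicable; no further restriction on $k$ is needed.
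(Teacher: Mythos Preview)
Your proof is correct and follows essentially the same route as the paper: specialize Theorem~\ref{thm: module structure of the first total complex} to $\g_0=\h$ using \eqref{eq: sets simpler description}, rescale $J_{\bar h}$ by $\nu^{-1}$ to identify $V^{\xi_k}(\bar\h)\simeq\widehat\pi$, and then match each $M_\alpha$ with a Fock module by checking the highest-weight relations for $\phi^\alpha$ against \eqref{eq: lattice VA Lambda bracket}. The paper does exactly this (its proof is just terser), so there is nothing substantively different to compare; your added remarks on the PBW dimension count and on the parity of $\phi^\alpha$ make explicit what the paper leaves implicit.
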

\begin{proof}
  Recall that $\widehat{\pi}$ is isomorphic to the SUSY affine vertex algebra $V^1(\bar{\h})$. In fact, the SUSY affine vertex algebra $V^{\xi_k}(\bar{\g}_0)=V^{k+h^{\vee}}(\bar{\h})$ for $k\neq -h^{\vee}$ is regardless of its level, since the generators $j_h:=\frac{1}{\nu}J_{\bar{h}}\in V^{k+h^{\vee}}(\bar{\h})$ for $h\in \h$ satisfy the relation
  \begin{equation} \label{eq: proof Heisenberg basis}
    [j_h{}_{\Lambda}j_{h'}]=(h|h')\chi,
  \end{equation}
  resolving the level difference. It implies that $V^{\xi_k}(\bar{\g}_0)\simeq V^1(\bar{\h})\simeq \widehat{\pi}$. Also, we have $[I_0]=\Pi$ in \eqref{eq: sets simpler description} and hence Theorem \ref{thm: module structure of the first total complex} gives us 
  \begin{equation}
    E_1^k{(\leq 1)}\simeq \widehat{\pi}_0 \oplus \bigoplus_{\alpha\in \Pi} M_{\alpha},
  \end{equation}
  where $M_{\alpha}=V^{\xi_k}(\bar{\g}_0)\otimes \phi^{\alpha}$ is a $\widehat{\pi}$-module with the adjoint action for each $\alpha\in \Pi$. Rewrite \eqref{eq: M_alpha module structure} with the new generators $j_h$'s for $\widehat{\pi}$  to get 
  \begin{equation} \label{eq: susy heisenberg action with Lambda bracket}
    [j_{\bar{h}}{}_{\Lambda}\phi^{\alpha}]=\frac{1}{\nu}([h,u^{\alpha}]| u_{\alpha})\phi^{\alpha}=-\frac{1}{\nu}\alpha(h) \phi^{\alpha}.
  \end{equation}
  Notice that the brackets \eqref{eq: proof Heisenberg basis} and \eqref{eq: susy heisenberg action with Lambda bracket} look the same as \eqref{eq: lattice VA Lambda bracket}. Thus, we have the following isomorphism of $\widehat{\pi}$-modules:
  \begin{equation} \label{eq: M_alpha as fock space}
    M_{\alpha} \rightarrow \widehat{\pi}_{-\frac{\alpha}{\nu}}, \quad \phi^{\alpha}\mapsto \lvert -\tfrac{\alpha}{\nu}\rangle.
  \end{equation}
\end{proof}

\subsection{Free field realization of SUSY W-algebras} \label{sec: free field realization} \hfill \\
We gather the results of Section \ref{sec: weight filtration} and \ref{sec: the first total complex} to get the realization of SUSY W-algebras in Theorem \ref{thm: main} and \ref{thm: main-screening}. Recall in \eqref{eq: SUSY W-alg total complex isomorphism} that SUSY W-algebras are isomorphic to the cohomology $H(E^k_1, d_2)$. Since we already have the description of $E^k_1$ from Section \ref{sec: the first total complex}, it remains to study the differential $d_2$ on $E^k_1$. In particular, when $\g_0=\h$, the action of $d_2$ induces the formula for the screening operators. Recall from \eqref{eq: second page differential} that $d_2={d_f}_{(0|0)}$ on $E^k_1$, where $d_f=-\sum_{\alpha\in I_+} (f|u_{\alpha})\phi^{\alpha}$.

\begin{lemma} \label{lem: action of phi^alpha decomposable}
  For generic $k\in \CC$, any $J_{\bar{a}}\in E_1^k$ for $a\in \g_0$ satisfies
  \begin{equation}
   {\phi^{\alpha}}_{(0|0)}J_{\bar{a}}=0
  \end{equation}  
  if $\alpha\in I_+$ is decomposable.
\end{lemma}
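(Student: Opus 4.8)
The plan is to write out $\phi^{\alpha}_{(0|0)}J_{\bar a}$ explicitly by means of the $\Lambda$-bracket relations already recorded, and then to observe that, once one remembers which fermions actually survive in $E_1^k$, every coefficient drops out when $\alpha$ is decomposable.

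Concretely, I would start from Proposition \ref{prop: C^k elements relation}, which gives $[\phi^{\alpha}{}_{\Lambda}J_{\bar a}]=\sum_{\beta\in I_+}(-1)^{p(\alpha)+1}([a,u^{\alpha}]\,|\,u_{\beta})\,\phi^{\beta}$ for $a\in\g_0$; since the right-hand side carries no $\lambda$ and no $\chi$, it equals $\phi^{\alpha}_{(0|0)}J_{\bar a}$. Passing to $E_1^k$ — which for generic $k$ is described by \eqref{eq: E_1^k generic k}, via Lemma \ref{lem: Zariski dense} and Proposition \ref{prop: the first total complex} — the fermions $\phi^{\beta}$ with $\beta$ decomposable are absent, so the sum collapses to $\sum_{\beta\in I_0}(-1)^{p(\alpha)+1}([a,u^{\alpha}]\,|\,u_{\beta})\,\phi^{\beta}$. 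Thus it suffices to prove that $([a,u^{\alpha}]\,|\,u_{\beta})=0$ for all $\beta\in I_0$ whenever $\alpha$ is decomposable.

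For this I would use the root-theoretic content of Lemma \ref{lem: conditions with Lie brackets}: if $\alpha$ is decomposable, write $\alpha=\beta'+\gamma'$ with $\beta',\gamma'\in I_+$; then $[u^{\beta'},u^{\gamma'}]$ is a nonzero multiple of $u^{\alpha}$, so $u^{\alpha}\in[\n_-,\n_-]$. Since $[\g_0,\g_i]\subseteq\g_i$, the action of $\g_0$ preserves $\n_-$, hence also $[\n_-,\n_-]$ (by the Jacobi identity), so $[a,u^{\alpha}]\in[\n_-,\n_-]$. Finally, $[\n_-,\n_-]$ is spanned by the root vectors $u^{\eta}$ with $\eta\in I_+$ decomposable, and for such $\eta$ and any $\beta\in I_0$ one has $(u^{\eta}\,|\,u_{\beta})=\delta_{\eta,\beta}=0$ because $\eta\neq\beta$. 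Hence $([a,u^{\alpha}]\,|\,u_{\beta})=0$ for every $\beta\in I_0$, and $\phi^{\alpha}_{(0|0)}J_{\bar a}=0$ in $E_1^k$, as required; the same conclusion then propagates from the generators $J_{\bar a}$ to all of $V^{\xi_k}(\bar{\g}_0)$ since $\phi^{\alpha}_{(0|0)}$ acts as a derivation.

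The point that needs the most care — and the one I would spell out in the write-up — is the passage made in the second paragraph: a decomposable $\phi^{\alpha}$ is not a $({d_{\textup{st}}})_{(0|0)}$-cocycle, so the symbol $\phi^{\alpha}_{(0|0)}J_{\bar a}$ in $E_1^k$ has to be read through the operator $({d_f})_{(0|0)}$, which does descend to $E_1^k$ as $d_2$, and it is exactly the identification \eqref{eq: E_1^k generic k} — that only the indecomposable fermions survive in $E_1^k$ — that licenses discarding the $\phi^{\beta}$ with $\beta$ decomposable. Once that bookkeeping is in place, the computation above is immediate.
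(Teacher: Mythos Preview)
Your proof is correct and follows essentially the same approach as the paper: both compute $[\phi^{\alpha}{}_{\Lambda}J_{\bar a}]$ via Proposition~\ref{prop: C^k elements relation}, use the Jacobi identity to see that $[a,u^{\alpha}]\in[\n_-,\n_-]$ (the paper writes this out as $\phi^{[[a,u^{\beta}],u^{\gamma}]}+\cdots$ after choosing $u^{\alpha}=[u^{\beta},u^{\gamma}]$), and then invoke \eqref{eq: E_1^k generic k} to kill the resulting decomposable $\phi^{\tau}$'s in $E_1^k$. Your final paragraph of caution about how $\phi^{\alpha}_{(0|0)}$ is to be interpreted on $E_1^k$ is a reasonable remark, but the paper simply computes in $\widetilde{C}^k$ and reduces, without dwelling on this point.
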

\begin{proof}
 Since $\alpha$ is decomposable, assume $u^{\alpha}=[u^{\beta},u^{\gamma}]$ for some $\beta, \gamma\in I_+$. Then Proposition \ref{prop: C^k elements relation} implies that in $\widetilde{C}^k$,
 \begin{equation} \label{eq: proof for decomposables}
  [\phi^{\alpha}{}_{\Lambda}J_{\bar{a}}]=(-1)^{p(\alpha)+1}\phi^{[[a,u^{\beta}],u^{\gamma}]}+(-1)^{p(a)p(\beta)+p(\alpha)+1}\phi^{[u^{\beta}[a,u^{\gamma}]]}
 \end{equation}
 for any $a\in \g_0$. Here, each of $[a,u^{\beta}]$ and $[a,u^{\gamma}]$ belongs to $\n$, which implies that each term in the RHS of \eqref{eq: proof for decomposables} is expressed as a linear sum of $\phi^{\tau}$'s for decomposable $\tau\in I_+$. As pointed out in \eqref{eq: E_1^k generic k}, $\phi^{\tau}=0$ in $E^k_1$ if $\tau$ is decomposable. Thus, the statement follows.
\end{proof}
Due to Lemma \ref{lem: action of phi^alpha decomposable}, the element $\phi^{\alpha}$ for decomposed $\alpha\in I_+$ does not contribute to the action of ${d_f}_{(0|0)}$ on $E^k_1$. Hence, one can regard the differential ${d_f}_{(0|0)}:E_1^k\rightarrow E_1^k$ as
\begin{equation} \label{eq: d_f modified}
  d_f=-\sum_{\alpha\in I_0} (f|u_{\alpha})\phi^{\alpha},
\end{equation}
whose summation is over the set of indecomposables $I_0$. Considering the $V^{\xi_k}(\bar{\g}_0)$-module structure of the first total complex in Theorem \ref{thm: module structure of the first total complex}, one can divide the summation \eqref{eq: d_f modified} using the equivalence relation in the set $I_0$.

\begin{theorem} \label{thm: main}
  For generic $k\in \CC$, the SUSY W-algebra of level $k$ is isomorphic to
  \begin{equation} \label{eq: statement W-algebra as kernel}
    W^k(\bar{\g},f)\simeq \bigcap_{[\alpha]\in[I_0]}\ker\Big(\sum_{\beta\in[\alpha]}(f|u_{\beta}){\phi^{\beta}}_{(0|0)}\Big)\subset V^{\xi_k}(\bar{\g}_0),
  \end{equation}
  where ${\phi^{\alpha}}_{(0|0)}: V^{\xi_k}(\bar{\g}_0)\rightarrow M_{[\alpha]}$ is given by the adjoint action on $E^k_1$.
\end{theorem}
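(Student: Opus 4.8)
The plan is to read the statement off the spectral sequence of Section~\ref{sec: weight filtration} together with the structural results of Section~\ref{sec: the first total complex}; almost everything is already in place, and what remains is to identify which kernel the differential $d_2$ cuts out on charge~$0$. By \eqref{eq: SUSY W-alg total complex isomorphism} we have $W^k(\bar\g,f)\simeq H(E_2^k,d_2)=H(E_1^k,d_2)$, and by \eqref{eq: second page differential} the differential on this page is $d_2={d_f}_{(0|0)}$; it is well defined on ${d_{\textup{st}}}_{(0|0)}$-cohomology because expanding $d_{(0|0)}^2=0$ according to the weight grading shows that ${d_{\textup{st}}}_{(0|0)}$ and ${d_f}_{(0|0)}$ graded-commute. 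Since $d_f$ has charge~$1$, the operator ${d_f}_{(0|0)}$ raises the charge of $E_1^k$ by one, and as the charge grading of $E_1^k$ takes only nonnegative values this gives $H^0(E_1^k,d_2)=\ker\bigl({d_f}_{(0|0)}\colon (E_1^k)_0\to (E_1^k)_1\bigr)$, with no coboundaries to quotient out.

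It therefore remains to show that $H(E_1^k,d_2)$ is concentrated in charge~$0$ and to unwind this kernel. For the first point, the weight filtration is finite on each conformal weight subspace of $\widetilde C^k$, so the spectral sequence converges, and since $d_n=0$ for $n\neq 0,2$ we get $H(E_1^k,d_2)=E_\infty^k\simeq \mathrm{gr}^{\mathrm{wt}}W^k(\bar\g,f)$ as charge-graded superspaces; by Theorem~\ref{thm: susy W-alg charge 0} the right-hand side sits entirely in charge~$0$, hence so does $H(E_1^k,d_2)$, and it coincides with the kernel above. For the second point, Theorem~\ref{thm: module structure of the first total complex} identifies $(E_1^k)_0=V^{\xi_k}(\bar\g_0)$ and $(E_1^k)_1=\bigoplus_{[\alpha]\in[I_0]}M_{[\alpha]}$ as $V^{\xi_k}(\bar\g_0)$-modules, with each $\phi^{\beta}$ for $\beta\in[\alpha]$ lying in the summand $M_{[\alpha]}$, and by Lemma~\ref{lem: action of phi^alpha decomposable} we may replace $d_f$ by its reduced form \eqref{eq: d_f modified}, $d_f=-\sum_{\alpha\in I_0}(f|u_\alpha)\phi^{\alpha}$. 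Then for $X\in V^{\xi_k}(\bar\g_0)$,
\[
d_2 X=-\sum_{[\alpha]\in[I_0]}\Bigl(\sum_{\beta\in[\alpha]}(f|u_\beta)\,{\phi^{\beta}}_{(0|0)}X\Bigr),
\]
and, by super skew-symmetry, the inner sum over $[\alpha]$ is the image of $X$ under the $V^{\xi_k}(\bar\g_0)$-action on $M_{[\alpha]}$ and hence lies in that summand. Since the $M_{[\alpha]}$ are independent direct summands, $d_2X=0$ if and only if each inner sum vanishes, which is exactly membership in $\bigcap_{[\alpha]\in[I_0]}\ker\bigl(\sum_{\beta\in[\alpha]}(f|u_\beta){\phi^{\beta}}_{(0|0)}\bigr)$; combined with the previous point this gives \eqref{eq: statement W-algebra as kernel}.

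The substantive inputs are Theorem~\ref{thm: module structure of the first total complex} — which is where the genericity of $k$ enters, through Lemma~\ref{lem: Zariski dense} and the classical computation of $E_1^\infty(\leq 1)$ in Proposition~\ref{prop: the first total complex} — and Lemma~\ref{lem: action of phi^alpha decomposable}; granting these, everything above is bookkeeping. The one step that needs an idea rather than a computation is the passage from ``$E_1^k(\leq 1)$ is as described'' to ``$H(E_1^k,d_2)$ is concentrated in charge~$0$'': we have only analyzed $E_1^k$ in charges $\leq 1$, so the vanishing of the cohomology in higher charge cannot be seen on the $E_1$-page and must be imported from the charge~$0$ concentration of $W^k(\bar\g,f)$ via convergence of the spectral sequence. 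I expect this to be the only genuine obstacle.
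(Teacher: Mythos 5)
Your proposal is correct and follows essentially the same route as the paper: both pass through the isomorphism \eqref{eq: SUSY W-alg total complex isomorphism}, use the charge-$0$ concentration from Theorem \ref{thm: susy W-alg charge 0} to reduce to the kernel of $d_2={d_f}_{(0|0)}$ on $V^{\xi_k}(\bar{\g}_0)$, discard decomposable roots via Lemma \ref{lem: action of phi^alpha decomposable}, and then split that kernel into the intersection over $[I_0]$ by observing that each ${\phi^{\beta}}_{(0|0)}$ for $\beta\in[\alpha]$ lands in the direct summand $M_{[\alpha]}$ of Theorem \ref{thm: module structure of the first total complex}. The extra care you take about well-definedness of $d_2$, convergence, and the absence of coboundaries in charge $0$ only makes explicit what the paper leaves implicit.
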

\begin{proof}
Combining Theorem \ref{thm: module structure of the first total complex}, \eqref{eq: d_f modified} and \eqref{eq: SUSY W-alg total complex isomorphism}, we have
\begin{equation} \label{eq: proof W-algebra as kernel}
  W^k(\bar{\g},f)\simeq \ker \Big(\sum_{\alpha\in I_0}(f|u_{\alpha})\phi^{\alpha}_{(0|0)}\Big)\subset V^{\xi_k}(\bar{\g}_0).
\end{equation}
This is because the SUSY W-algebra is concentrated on the charge zero subspace. To show the statement, it is enough to see that the range of $\phi^{\alpha}_{(0|0)}$ and $\phi^{\beta}_{(0|0)}$ for $\alpha, \beta\in I_0$ are disjoint in $E_1^k{(\leq 1)}$ if $\alpha, \beta$ are not equivalent. In particular, we show that the range of $\phi^{\alpha}_{(0|0)}$ is contained in the $V^{\xi_k}(\bar{\g}_0)$-module $M_{[\alpha]}$ in Theorem \ref{thm: module structure of the first total complex}.
  Notice from Proposition \ref{prop: C^k elements relation} and Lemma \ref{lem: conditions with Lie brackets} that for any $\alpha\in I_0$ and $a\in \g_0$, we have
  \begin{equation*}
    [\phi^{\alpha}{}_{\Lambda}J_{\bar{a}}]=\sum_{\beta\in [\alpha]}(-1)^{p(\alpha)+1}([a,u^{\alpha}]|u_{\beta})\phi^{\beta}
  \end{equation*}
  on $E^k_1$. Since the summation on the RHS is over the equivalence class $[\alpha]$, we conclude that the range of the linear map
  \begin{equation}
    {\phi^{\alpha}}_{(0|0)}:V^{\xi_k}(\bar{\g}_0)\rightarrow M_{[\alpha]}, \quad J_{\bar{a}}\mapsto {\phi^{\alpha}}_{(0|0)}J_{\bar{a}}.
  \end{equation}
  is contained in $M_{[\alpha]}$.
\end{proof}
When $\g_0=\h$, the SUSY affine vertex algebra $V^{\xi_k}(\bar{\g}_0)$ is isomorphic to the SUSY Heisenberg vertex algebra $\widehat{\pi}$ as in Corollary \ref{cor: first total complex susy fock}. Furthermore, we have the simpler description of the set $[I_0]$ and the range of $\phi^{\alpha}$'s in \eqref{eq: sets simpler description} and \eqref{eq: M_alpha as fock space}. Thus, Theorem \ref{thm: main} immediately shows that the SUSY W-algebra of a generic level is embedded into the SUSY Heisenberg vertex algebra.
\begin{corollary} \label{cor: FF realization of susy W-alg}
  Assume that $\g_0=\h$. Consider the corresponding SUSY Heisenberg vertex algebra $\widehat{\pi}$, and their Fock representations $\widehat{\pi}_{-\frac{\alpha}{\nu}}$'s for $\alpha\in \Pi$ defined in \eqref{eq: fock rep new notation}. Then for generic $k\in \CC$, the SUSY W-algebra of level $k$ can be realized as a vertex subalgebra of $\widehat{\pi}$ given by
  \begin{equation} \label{eq: statement W-algebra FF realization}
    W^k(\bar{\g},f)\simeq \bigcap_{\substack{\alpha\in \Pi\\ (f|u_{\alpha})\neq 0}}\ker\left({\left| -\frac{\alpha}{\nu} \right>}_{(0|0)}\right)\subset \widehat{\pi},
  \end{equation}
  where $\nu=\sqrt{k+h^{\vee}}$ and ${\left| -\frac{\alpha}{\nu} \right>}_{(0|0)}:\widehat{\pi}\rightarrow \widehat{\pi}_{-\frac{\alpha}{\nu}}$ is given by the adjoint action in $E^k_1(\leq 1)\simeq \bigoplus_{\alpha\in \Pi\sqcup\{0\}}\widehat{\pi}_{-\frac{\alpha}{\nu}}$ via the identification $\left| -\frac{\alpha}{\nu}\right>\in \widehat{\pi}_{-\frac{\alpha}{\nu}}$ with $\phi^{\alpha}$ in $E^k_1(\leq 1)$. \qed
\end{corollary}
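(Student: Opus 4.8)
The plan is to derive the corollary directly from Theorem~\ref{thm: main} by specializing to $\g_0=\h$ and then rewriting everything through the identifications collected in \eqref{eq: sets simpler description}, in Corollary~\ref{cor: first total complex susy fock}, and in \eqref{eq: M_alpha as fock space}. First I would invoke \eqref{eq: sets simpler description}: under the hypothesis $\g_0=\h$ one has $I_+=\Phi_+$, $I_0=\Pi$, and $[I_0]=I_0$, so the equivalence relation \eqref{eq: equivalence relation} is trivial and each class $[\alpha]\in[I_0]$ is the singleton $\{\alpha\}$ with $\alpha\in\Pi$. Consequently the inner sum $\sum_{\beta\in[\alpha]}(f|u_\beta)\phi^\beta_{(0|0)}$ in \eqref{eq: statement W-algebra as kernel} collapses to the single operator $(f|u_\alpha)\phi^\alpha_{(0|0)}$, and Theorem~\ref{thm: main} becomes
\begin{equation*}
  W^k(\bar{\g},f)\;\simeq\;\bigcap_{\alpha\in\Pi}\ker\!\bigl((f|u_\alpha)\,\phi^\alpha_{(0|0)}\bigr)\;\subset\;V^{\xi_k}(\bar{\g}_0).
\end{equation*}

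Next I would prune and renormalize this intersection. For $\alpha\in\Pi$ with $(f|u_\alpha)=0$ the operator $(f|u_\alpha)\phi^\alpha_{(0|0)}$ is identically zero --- this is visible already from $d_f=-\sum_\alpha(f|u_\alpha)\phi^\alpha$ in \eqref{eq: d_f modified} --- so its kernel is all of $V^{\xi_k}(\bar{\g}_0)$ and may be omitted from the intersection; for the remaining $\alpha$ the scalar $(f|u_\alpha)$ is nonzero and does not affect the kernel, so $\ker\bigl((f|u_\alpha)\phi^\alpha_{(0|0)}\bigr)=\ker\bigl(\phi^\alpha_{(0|0)}\bigr)$. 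This already yields $W^k(\bar{\g},f)\simeq\bigcap_{\alpha\in\Pi,\,(f|u_\alpha)\neq0}\ker\bigl(\phi^\alpha_{(0|0)}\bigr)$ inside $V^{\xi_k}(\bar{\g}_0)$.

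Finally I would transport this along the isomorphism $V^{\xi_k}(\bar{\g}_0)\simeq\widehat{\pi}$ of Corollary~\ref{cor: first total complex susy fock} (which rescales $J_{\bar{h}}$ into the Heisenberg generators $j_h=\nu^{-1}J_{\bar{h}}$), under which each charge-one submodule $M_\alpha=V^{\xi_k}(\bar{\g}_0)\otimes\phi^\alpha$ becomes the Fock module $\widehat{\pi}_{-\alpha/\nu}$ via $\phi^\alpha\mapsto\left|-\tfrac{\alpha}{\nu}\right>$, exactly as in \eqref{eq: M_alpha as fock space}. Since $\phi^\alpha_{(0|0)}$ is by definition the $(0|0)$-coefficient of the $\Lambda$-bracket --- i.e.\ the adjoint action of $\phi^\alpha$ on $E^k_1(\leq 1)$ restricted to the charge-zero part $\widehat{\pi}$ --- and since the $\widehat{\pi}$-module action on $\widehat{\pi}_{-\alpha/\nu}$ is exactly the one recorded in \eqref{eq: lattice VA Lambda bracket}, the transported map is precisely the screening operator ${\left|-\tfrac{\alpha}{\nu}\right>}_{(0|0)}:\widehat{\pi}\to\widehat{\pi}_{-\alpha/\nu}$; taking images under the isomorphism then gives \eqref{eq: statement W-algebra FF realization}. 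I do not expect a genuine obstacle here: all the content sits in Theorem~\ref{thm: main} and Corollary~\ref{cor: first total complex susy fock}, and the only step needing a moment's care is the last one --- checking that the $\widehat{\pi}$-module isomorphism \eqref{eq: M_alpha as fock space} intertwines $\phi^\alpha_{(0|0)}$ with ${\left|-\tfrac{\alpha}{\nu}\right>}_{(0|0)}$ --- which is a matter of comparing $(0|0)$-coefficients of $\Lambda$-brackets, not a new computation.
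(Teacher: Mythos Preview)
Your proposal is correct and follows essentially the same approach as the paper: the corollary is marked with \qed and the paper's justification, given in the paragraph immediately preceding it, is precisely to specialize Theorem~\ref{thm: main} via \eqref{eq: sets simpler description}, Corollary~\ref{cor: first total complex susy fock}, and \eqref{eq: M_alpha as fock space}. Your write-up is in fact more explicit than the paper's, spelling out the pruning of the $(f|u_\alpha)=0$ terms and the transport of $\phi^\alpha_{(0|0)}$ along the module isomorphism.
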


The identification in Corollary \ref{cor: FF realization of susy W-alg} is also used in proving Corollary \ref{cor: first total complex susy fock}. Recall in the proof that the isomorphisms are given by
\begin{equation} \label{eq: heisenberg and fock isomorphism}
  \begin{array}{rclrcl}
    V^{k+h^{\vee}}(\bar{\h}) & \rightarrow& \widehat{\pi}\ , & \quad  j_h=\frac{1}{\nu}J_{\bar{h}} & \mapsto& \bar{h}_{(-1)}\vac\ ; \\
    V^{k+h^{\vee}}(\bar{\h})\otimes \phi^{\alpha} &\rightarrow &\widehat{\pi}_{-\frac{\alpha}{\nu}}, & \quad \phi^{\alpha} & \mapsto& \left|-\frac{\alpha}{\nu} \right> \ .
  \end{array}
\end{equation}
From now on, we fix $\widehat{\pi}_{-\frac{\alpha}{\nu}}$ to denote the Fock representations in \eqref{eq: fock rep new notation} for $\alpha\in \Pi\sqcup \{0\}$, and do not distinguish the elements of each pair of the isomorphic spaces under the map \eqref{eq: heisenberg and fock isomorphism}. Using the properties of the superfields in Appendix \ref{appendix: superfield formalism}, one can write explicitly the formula of $\left|-\frac{\alpha}{\nu}\right>_{(0|0)}$ for each $\alpha\in \Pi$ in \eqref{eq: statement W-algebra FF realization}. 

For the statement, we introduce the necessary notations. For an even variable $z$ and an odd variable $\theta$, denote them as a tuple $Z=(z, \theta)$ if $\theta^2=0$. For $n\in \ZZ$ and $i=0,1$, define the $n|i$th power of $Z$ by
\begin{equation} \label{eq: notation tuple power}
  Z^{n|i}:=z^n \theta^i.
\end{equation}
Let $V$ be a vector superspace. The space of formal Laurent series in $Z$ with coefficients in $V$ is denoted by $V(\!(Z)\!):=\CC(\!(z)\!)\otimes \CC[\theta]\otimes V$. An element of $V(\!(Z)\!)$ is of the form
\begin{equation} \label{eq: Laurent series}
  A(Z)=\sum_{\substack{n\geq N\\ i=0,1}} Z^{n|i}A_{n|i}=\sum_{n\geq N} z^n A_{n|0}+\theta\sum_{n\geq N}z^n A_{n|1}
\end{equation}
for some $N\in \ZZ$ and $A_{n|i}\in V$. 
The linear map called \textit{super residue} is given by 
\begin{equation} \label{eq: super residue}
  \int dZ : V(\!(Z)\!) \rightarrow V,\quad A(Z) \mapsto\int A(Z) dZ=A_{-1|1}
\end{equation}
 for  $A(Z)$ in \eqref{eq: Laurent series}. For any simple root $\alpha\in \Pi$, define $e^{-\frac{1}{\nu}\int \alpha(Z)}$ to be the formal Laurent series with coefficients in $\textup{End}_{\CC}(\widehat{\pi}, \widehat{\pi}_{-\frac{\alpha}{\nu}})$ by
\begin{equation} \label{eq: exponential definition}
  \begin{aligned}
    e^{-\frac{1}{\nu}\int \alpha(Z)}=&s_{-\frac{\alpha}{\nu}}\exp\Big(\!-\frac{1}{\nu}\sum_{n<0}Z^{-n-1|1}\alpha_{(n|1)}\Big)\exp\Big(\, \frac{1}{\nu}\sum_{n<0}\frac{Z^{-n|0}}{j}\alpha_{(n|0)}\Big)\\
    &\exp\Big(\,\frac{1}{\nu}\sum_{n>0}\frac{Z^{-n|0}}{n}\alpha_{(n|0)}\Big)\exp\Big(\!-\frac{1}{\nu}\sum_{n\geq 0}Z^{-n-1|1}\alpha_{(n|1)}\Big)Z^{-\frac{1}{\nu}\alpha_{(0|0)}|0}.
  \end{aligned}
\end{equation}
In \eqref{eq: exponential definition}, the exponential of the Laurent series is defined using the power sum, and the element $\alpha\in \widehat{\pi}$ is given by $\alpha=(\bar{h}_{\alpha})_{(-1)}\vac \in \widehat{\pi}$ for the coroot $h_{\alpha}\in \h$ of $\alpha$. Also, $s_{-\frac{\alpha}{\nu}}:\widehat{\pi}\rightarrow \widehat{\pi}_{-\frac{\alpha}{\nu}}$ is the linear operator of parity $p(u_\alpha)+1$ determined by the properties
\begin{equation} \label{eq: commutator with shift operator}
  \begin{aligned}
  s_{-\frac{\alpha}{\nu}}\vac=\left|-\frac{\alpha}{\nu}\right>, \quad [\bar{h}_{(n|i)},s_{-\frac{\alpha}{\nu}}]=0 \text{ for } (n|i)\neq 0, \quad [\bar{h}_{(0|0)}, s_{-\frac{\alpha}{\nu}}]=-\frac{1}{\nu} \beta(h)s_{-\frac{\alpha}{\nu}}
  \end{aligned}
\end{equation} 
for any $h\in \h$. 
\begin{theorem} \label{thm: main-screening}
  Assume that $\g_0=\h$. Then the SUSY W-algebra of generic level $k$ can be realized as a SUSY vertex subalgebra of the SUSY Heisenberg vertex algebra $\widehat{\pi}$ in the following way:
  \begin{equation} \label{eq: susy screening}
    W^k(\bar{\g},f)\simeq \bigcap_{\substack{\alpha\in \Pi\\(f|u_{\alpha})\neq 0}}\! \ker \int  e^{-\frac{1}{\nu} \int \alpha(Z)} dZ\subset \widehat{\pi}.
  \end{equation}
   Here, the formula of $e^{-\frac{1}{\nu}\int \alpha(Z)}$ is given in \eqref{eq: exponential definition} for $\nu=\sqrt{k+h^{\vee}}$, and $\int dZ$ is the super residue in \eqref{eq: super residue}. For each $\alpha\in \Pi$ with $(f|u_{\alpha})\neq 0$, we call $\int  e^{-\frac{1}{\nu} \int \alpha(Z)} dZ$ the \textit{screening operator} for $\alpha.$
\end{theorem}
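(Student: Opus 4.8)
The plan is to deduce Theorem~\ref{thm: main-screening} from Corollary~\ref{cor: FF realization of susy W-alg} by showing that, under the identifications in \eqref{eq: heisenberg and fock isomorphism}, the zero-mode operator $\bigl|-\tfrac{\alpha}{\nu}\bigr>_{(0|0)}\colon \widehat{\pi}\to\widehat{\pi}_{-\alpha/\nu}$ coincides with $\int e^{-\frac{1}{\nu}\int\alpha(Z)}\,dZ$. Once this identification of operators is in hand, the two kernels agree root by root, and taking the intersection over all $\alpha\in\Pi$ with $(f|u_\alpha)\neq 0$ gives \eqref{eq: susy screening} directly from \eqref{eq: statement W-algebra FF realization}. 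So the whole theorem reduces to one explicit computation of a vertex-operator-type action on a Fock module, phrased in the superfield formalism of Appendix~\ref{appendix: superfield formalism}.

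First I would recall the superfield $Y\bigl(\bigl|-\tfrac{\alpha}{\nu}\bigr>,Z\bigr)$ realizing the state--field correspondence for the module element $\bigl|-\tfrac{\alpha}{\nu}\bigr>\in\widehat{\pi}_{-\alpha/\nu}$, acting $\widehat{\pi}\to\widehat{\pi}_{-\alpha/\nu}(\!(Z)\!)$; by definition its expansion coefficients are the operators $\bigl|-\tfrac{\alpha}{\nu}\bigr>_{(n|i)}$, and $\int Y\bigl(\bigl|-\tfrac{\alpha}{\nu}\bigr>,Z\bigr)\,dZ = \bigl|-\tfrac{\alpha}{\nu}\bigr>_{(-1|1)}$. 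One must be slightly careful that the screening operator in Corollary~\ref{cor: FF realization of susy W-alg} is the $(0|0)$-mode, not the $(-1|1)$-mode; but conformal-weight bookkeeping (the state $\bigl|-\tfrac{\alpha}{\nu}\bigr>$ has conformal weight determined by $(\alpha|\alpha)/2\nu^2$, which for generic $k$ sits so that the relevant screening current has weight $1$ in the appropriate sense) shows that the mode extracting a map $\widehat{\pi}\to\widehat{\pi}_{-\alpha/\nu}$ of the right degree is exactly the one written; this is the standard shift between "the $(0)$-th product on states of weight $h$" and "the residue of a weight-one field," adapted to the $N=1$ setting where the odd variable $\theta$ carries half a unit of weight. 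I would make this precise by comparing the $\Lambda$-bracket action \eqref{eq: lattice VA Lambda bracket}, $[\bar v_{\Lambda}|\alpha\rangle] = \alpha(v)|\alpha\rangle$, with the general formula \eqref{eq: module action Lambda bracket} for module actions, which pins down $\bigl(\bar h\bigr)_{(n|i)}$ acting on the Fock module and hence all the modes appearing in \eqref{eq: exponential definition}.

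Next comes the heart of the calculation: writing out $Y\bigl(\bigl|-\tfrac{\alpha}{\nu}\bigr>,Z\bigr)$ explicitly. Using the standard free-field/vertex-operator construction for a SUSY Heisenberg (lattice-type) module --- the supersymmetric analogue of $e^{\lambda\cdot\varphi}$ --- one splits the exponential into the contributions of the creation modes $\alpha_{(n|i)}$ with $n<0$, the annihilation modes with $n\geq 0$, the zero-mode $\alpha_{(0|0)}$ giving the factor $Z^{-\frac{1}{\nu}\alpha_{(0|0)}|0}$, and the shift operator $s_{-\alpha/\nu}$ encoding the change of Fock module, exactly matching the four exponential factors and the prefactor of \eqref{eq: exponential definition}. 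The bosonic (even) modes $\alpha_{(n|0)}$ contribute the familiar $\exp\bigl(\pm\frac{1}{\nu}\sum \frac{Z^{-n|0}}{n}\alpha_{(n|0)}\bigr)$ pieces, while the odd modes $\alpha_{(n|1)}$ --- which exist precisely because $D$ produces the partner field --- contribute the $Z^{-n-1|1}\alpha_{(n|1)}$ pieces with the correct signs; the relation $D^2=\partial$ together with super sesquilinearity forces the pairing of even and odd modes into a single superfield. I would verify the formula either by checking that it satisfies the defining OPE/$\Lambda$-bracket with $\bar h_{(n|i)}$ for all $h\in\h$ (this is \eqref{eq: commutator with shift operator} plus the Heisenberg commutators from Definition~\ref{def: susy heisenberg}), which characterizes the intertwining field uniquely, or by lifting the known nonsupersymmetric lattice vertex operator along the superfield formalism. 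Then $\int\,dZ$ extracts the $Z^{-1|1}$-coefficient, yielding the screening operator, and taking the kernel intersection completes the proof.

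The main obstacle I expect is purely bookkeeping rather than conceptual: getting every sign, every normalization of $\nu=\sqrt{k+h^\vee}$, and --- most delicately --- the placement of the odd variable $\theta$ and the parity factor $p(u_\alpha)+1$ of $s_{-\alpha/\nu}$ correct, so that the $(0|0)$-mode of the abstractly-defined operator $\bigl|-\tfrac{\alpha}{\nu}\bigr>_{(0|0)}$ literally equals $\int e^{-\frac{1}{\nu}\int\alpha(Z)}\,dZ$ and not, say, a nonzero scalar multiple or a sign-twisted variant. A secondary subtlety is that \eqref{eq: exponential definition} as displayed has an apparent typo (the factor $Z^{-n|0}/j$ should presumably read $Z^{-n|0}/n$), so part of the write-up will be to state the corrected screening current and confirm that with that correction the intertwining property holds; none of this affects the structure of the argument, only its careful execution, which I would relegate to the appendix alongside the other technical proofs supporting Section~\ref{bigsec: FF realization of SUSY W-algebras}.
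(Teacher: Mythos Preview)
Your overall strategy is right and matches the paper: reduce to Corollary~\ref{cor: FF realization of susy W-alg} and then identify $\bigl|-\tfrac{\alpha}{\nu}\bigr>_{(0|0)}$ with the super residue of the exponential field. But there is one genuine miscalculation and one missing idea.

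First, the mode bookkeeping. From the superfield expansion $Y(A,Z)=\sum Z^{-n-1|1-i}A_{(n|i)}$ and the super residue \eqref{eq: super residue} picking out the coefficient of $Z^{-1|1}$, one gets $-n-1=-1$, $1-i=1$, hence $\int Y(A,Z)\,dZ=A_{(0|0)}$ on the nose. There is no discrepancy between the $(0|0)$-mode and the super residue, so your entire paragraph on conformal-weight bookkeeping is resolving a problem that does not exist; you can simply delete it.

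Second, and more substantively: the paper does not verify the exponential by matching commutators alone. It first \emph{derives} a differential equation that the superfield must satisfy, coming from the BRST complex. Namely, since ${d_{\textup{st}}}_{(0|0)}J_{\bar u^{\alpha}}=0$ in $E_1^k(\leq 1)$, one obtains $D\phi^{\alpha}=-\tfrac{1}{k+h^\vee}:J_{\bar h_\alpha}\phi^{\alpha}:$, i.e.\ $D\bigl|-\tfrac{\alpha}{\nu}\bigr>=-\tfrac{1}{\nu}:\alpha\,\bigl|-\tfrac{\alpha}{\nu}\bigr>:$, and hence
\[
(\partial_\theta+\theta\partial_z)\,Y\bigl(\bigl|-\tfrac{\alpha}{\nu}\bigr>,Z\bigr)=-\tfrac{1}{\nu}\,{:}\alpha(Z)\,Y\bigl(\bigl|-\tfrac{\alpha}{\nu}\bigr>,Z\bigr){:}.
\]
This differential equation, together with the vacuum condition and the Heisenberg commutator $[\beta(Z),Y(\cdot,W)]=-\tfrac{1}{\nu}(\alpha|\beta)\delta(Z,W)Y(\cdot,W)$, uniquely characterizes the superfield; the two lemmas in Appendix~\ref{appendix: proof screening operator} then check that $e^{-\frac{1}{\nu}\int\alpha(Z)}$ satisfies all three. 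Your plan checks only the commutator condition, which by itself determines the field only up to multiplication by an arbitrary function of $Z$ (super)commuting with the Heisenberg modes; the differential equation in the odd direction is exactly what pins down the $\theta$-dependence and the odd-mode exponentials in \eqref{eq: exponential definition}. So either add this differential equation (it is the SUSY analogue of $\partial_z Y(|\lambda\rangle,z)={:}\lambda(z)Y(|\lambda\rangle,z){:}$) or argue separately that your commutator check together with the vacuum axiom and the $D$-compatibility of the state--field map forces the full formula. Your observation about the typo $Z^{-n|0}/j$ versus $Z^{-n|0}/n$ is correct.
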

\begin{proof}
  We postpone the proof until Appendix \ref{appendix: proof screening operator} since it uses the properties of the superfields in Appendix \ref{appendix: superfield formalism}.
\end{proof}

\begin{remark}
  Recall in the W-algebra theory, the formulas of the screening operators are given by the vertex operators of the highest weight vectors in lattice vertex algebras. For SUSY W-algebras, the analogous statement holds. This can be checked by the paper \cite{HK08lattice} which shows that \eqref{eq: exponential definition} is the superfield of $\left| -\frac{\alpha}{\nu}\right>$ in the SUSY lattice vertex algebra $V:=\bigoplus_{\alpha\in \Pi\sqcup\{0\}}\widehat{\pi}_{-\frac{\alpha}{\nu}}$. Moreover, one can see from Appendix \ref{appendix: proof screening operator} that the SUSY vertex algebra structure of $V$ given by the isomorphism in Corollary \ref{cor: first total complex susy fock} is the same as the one in \cite{HK08lattice}.
\end{remark}
\subsection{Relation with Miura map} \label{sec: relation with Miura} \hfill \\
In this section, we give the concrete expression of the realization of SUSY W-algebras in Theorem \ref{thm: main}. To be precise, we show that the isomorphism is the same as the Miura map.  From this description, one can find the corresponding element in $V^{\xi_k}(\bar{\g}_0)$ whenever we find the element of SUSY W-algebra in terms of the building blocks.

Recall from Theorem \ref{thm: susy W-alg charge 0} that the SUSY W-algebra $W^k(\bar{\g},f)$ can be realized as a vertex subalgebra of $V^{\zeta_k}(\bar{\g}_{\leq 0})$. Compose it with the the projection $V^{\zeta_k}(\bar{\g}_{\leq 0})\twoheadrightarrow V^{\xi_k}(\bar{\g}_0)$ to obtain the linear map
\begin{equation} \label{eq: miura map definition}
 \mu_k:  W^k(\bar{\g},f) \rightarrow V^{\xi_k}(\bar{\g}_0),
\end{equation}
called the \textit{Miura map} for a SUSY W-algebra of level $k\in \CC$. Note that any element of $W^k(\bar{\g},f)$ can be presented using the building blocks $J_{\bar{a}}$'s for $a\in \g_{\leq 0}$, and the Miura map $\mu_k$ sends all the $J_{\bar{a}}$'s to $0$ unless $a\in \g_{0}$.
\begin{theorem} \label{thm: miura}
  For generic $k$, the image of the Miura map $\mu_k$ in \eqref{eq: miura map definition} coincides with the description of SUSY W-algebra in Theorem \ref{thm: main}. In particular, the Miura map $\mu_k$ is injective generically.
\end{theorem}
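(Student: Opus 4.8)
The plan is to show that $\mu_k$ is, concretely, the restriction to $W^k(\bar{\g},f)$ of the projection of $V^{\zeta_k}(\bar{\g}_{\leq 0})$ onto its weight-$0$ subspace for the grading \eqref{eq: weight}, and that this projection realizes exactly the isomorphism of Theorem~\ref{thm: main}. First I would note that the defining surjection $V^{\zeta_k}(\bar{\g}_{\leq 0})\twoheadrightarrow V^{\xi_k}(\bar{\g}_0)$ of $\mu_k$ \emph{is} the projection onto the weight-$0$ component: one has $\textup{wt}\,J_{\bar{a}}=0$ for $a\in\g_0$ and $\textup{wt}\,J_{\bar{a}}>0$ for $a\in\g_{<0}$, and since the weight is additive under the normally ordered product and is preserved by $\partial$ and by $D$, a normally ordered monomial in the building blocks has weight $0$ precisely when it involves only the $J_{\bar{a}}$ with $a\in\g_0$, which are exactly the monomials not killed by the projection. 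Hence, writing $w\in W^k(\bar{\g},f)\subset V^{\zeta_k}(\bar{\g}_{\leq 0})$ in its weight decomposition $w=\sum_{p\geq 0}w_p$, we have $\mu_k(w)=w_0\in V^{\xi_k}(\bar{\g}_0)$.

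Next I would prove injectivity of $\mu_k$. Let $w\in W^k(\bar{\g},f)$ satisfy $\mu_k(w)=w_0=0$ but $w\neq 0$, and let $p_0\geq 1$ be the least weight with $w_{p_0}\neq 0$. Since $d_{\textup{st}}$ preserves the weight and $d_f$ raises it by $2$, the weight-$p_0$ part of the identity $d_{(0|0)}w=0$ reads ${d_{\textup{st}}}_{(0|0)}w_{p_0}=0$, the term ${d_f}_{(0|0)}w_{p_0-2}$ vanishing by minimality of $p_0$. Then $w_{p_0}$ lies in the charge-$0$ part of $\ker({d_{\textup{st}}}_{(0|0)})$ inside $V^{\zeta_k}(\bar{\g}_{\leq 0})$, which by \eqref{eq: E_1^k generic k} is $V^{\xi_k}(\bar{\g}_0)$, a space concentrated in weight $0$; so $p_0=0$, a contradiction. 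Equivalently, $W^k(\bar{\g},f)\cap F_1\widetilde{C}^k=0$.

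Finally I would trace through the isomorphism of Theorem~\ref{thm: main}, which is the convergence isomorphism $W^k(\bar{\g},f)=H(\widetilde{C}^k,d_{(0|0)})\simeq H(E_2^k,d_2)$ of \eqref{eq: SUSY W-alg total complex isomorphism} composed with the inclusion $H(E_2^k,d_2)=\ker\bigl(d_2|_{V^{\xi_k}(\bar{\g}_0)}\bigr)\hookrightarrow V^{\xi_k}(\bar{\g}_0)$; here the kernel is, by the argument in the proof of Theorem~\ref{thm: main} (using Lemma~\ref{lem: action of phi^alpha decomposable} and the submodule decomposition of Theorem~\ref{thm: module structure of the first total complex}), the right-hand side of \eqref{eq: statement W-algebra as kernel}. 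Because $W^k(\bar{\g},f)\cap F_1\widetilde{C}^k=0$ by the previous step, the weight filtration on $H(\widetilde{C}^k,d_{(0|0)})$ has $F_1=0$, and therefore the convergence isomorphism is nothing but the restriction to $W^k(\bar{\g},f)$ of the projection $\widetilde{C}^k\twoheadrightarrow\mathrm{gr}_0\widetilde{C}^k=V^{\xi_k}(\bar{\g}_0)$ onto the weight-$0$ part, i.e.\ the map $w\mapsto w_0=\mu_k(w)$. Thus the isomorphism of Theorem~\ref{thm: main} coincides with $\mu_k$, its image equals the right-hand side of \eqref{eq: statement W-algebra as kernel}, and $\mu_k$ is injective. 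As a consistency check, the weight-$2$ part of $d_{(0|0)}w=0$ gives ${d_f}_{(0|0)}w_0=-{d_{\textup{st}}}_{(0|0)}w_2\in\mathrm{im}({d_{\textup{st}}}_{(0|0)})$, so $d_2 w_0=0$ in $E_1^k$, which indeed splits as $\sum_{\beta\in[\alpha]}(f|u_{\beta}){\phi^{\beta}}_{(0|0)}w_0=0$ for every $[\alpha]\in[I_0]$.

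I expect the main obstacle to be the bookkeeping in the last step: making precise that the weight spectral sequence converges to $H(\widetilde{C}^k,d_{(0|0)})$ with the claimed filtration (which needs the conformal-weight subspaces of $\widetilde{C}^k$ to be finite-dimensional), and, relatedly, that the residual differential $d_2$ on $E_1^k$ is computed by the adjoint action of the $\phi^{\alpha}$ modulo ${d_{\textup{st}}}_{(0|0)}$-coboundaries, with the decomposable $\phi^{\alpha}$ dropping out by Lemma~\ref{lem: action of phi^alpha decomposable} and each ${\phi^{\alpha}}_{(0|0)}$ landing in the single summand $M_{[\alpha]}$, so that the kernel conditions decouple over $[I_0]$.
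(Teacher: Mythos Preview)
Your proposal is correct and follows essentially the same approach as the paper. Both arguments identify $\mu_k$ with the projection onto the weight-$0$ summand, use \eqref{eq: E_1^k generic k} to see that the charge-$0$ part of $E_1^k$ is concentrated in weight $0$ (hence every nonzero $d_{(0|0)}$-cocycle of charge $0$ has nonzero weight-$0$ component), and conclude that the spectral-sequence convergence isomorphism is exactly this projection; your version simply spells out the injectivity step and the weight bookkeeping more explicitly than the paper does.
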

\begin{proof}
  Recall that the isomorphism in \eqref{eq: SUSY W-alg total complex isomorphism} gives the image of SUSY W-algebras in Theorem \ref{thm: main}. If we also consider the charge degree, we have
  \begin{equation} \label{eq: miura map proof isomorphism}
    W^k(\bar{\g},f)=H^0(\widetilde{C}^k,d_{(0|0)})\simeq H^0(E_1^k, d_2).
  \end{equation}
  Note that this isomorphism is induced from the spectral sequence argument for the weight filtration. Considering the mechanism of the spectral sequence \cite{Chow06,Weibel94}, the terms of minimal weight of the cohomology class in $H(\widetilde{C}^k,d_{(0|0)})$ are realized as the corresponding element of $E_1^k$. However, the space $E^k_1$ contains extra elements that do not correspond to the cohomology class in $H(\widetilde{C}^k, d_{(0|0)})$, and the cohomology in \eqref{eq: miura map proof isomorphism} with the differential $d_2$ removes these extra elements. As a result of \eqref{eq: E_1^k generic k}, the charge $0$ subspace of $E_1^k$ is equal to the space $V^{\xi_k}(\bar{\g}_0)$, which is exactly same as the weight $0$ subspace of $\widetilde{C}^k$. Hence, it implies that each cohomology class in $H(\widetilde{C}^k, d_{(0|0)})$ has nonzero weight $0$ terms and the isomorphism \eqref{eq: miura map proof isomorphism} is the projection onto the weight $0$ summand. It coincides with the definition of Miura map $\mu_k$. 
\end{proof}

We still need to figure out the value of $k$ that Theorem \ref{thm: miura} holds. In the theory of W-algebras associated with Lie algebras, this problem is partially resolved via the Wakimoto modules. Recall that if we consider a Lie algebra $\g$, the corresponding affine vertex algebra of a generic level has its free field realization given by the Wakimoto module \cite{Wakimoto86, FeiginFrenkel90}. Applying the Hamiltonian reduction functor(or Whittaker reduction functor for finite case) to this Wakimoto module, one gets the free field realization of the corresponding W-algebra \cite{Arakawa07}. Furthermore, it lets us to some extent find values of $k\in \CC$ which enable the free field realization of W-algebras. The existence of the SUSY Wakimoto module was conjectured in \cite{DKPR85,Ito92,MR94}, so one can expect the supersymmetric version of this theory. This is a work in progress.

\begin{remark}
  In the case of classical SUSY W-algebras, the injectivity of the Miura map follows from the explicit formulas of the generators. Using the notations of \cite{Suh20}, let $\{r_j\,|\,j\in J^f\}$ and $\{r^j\,|\,j\in J^f\}$ be bases of $\ker(\textup{ad}f)$ and $\ker(\textup{ad}e)$ in $\g$ satisfying $(r^i|r_j)=\delta_{i,j}$. Here $e$ and $f$ are the odd nilpotent elements in the subalgebra of $\g$ isomorphic to $\mathfrak{osp}(1|2)$. Let
  \begin{equation*}
    r^i_m=(\textup{ad}f)^m r^i, \quad r_j^n=c_{j,n}(\textup{ad}e)^n r_j
  \end{equation*}
  for $i,j\in J^f$, $m,n\in \ZZ_{\geq 0}$ and some constants $c_{j,n}$ so that $(r^i_m|r_j^n)=\delta_{m,n}\delta_{i,j}$. Then, the classical W-algebra is generated by $\{\omega(\bar{r}_j)\,|\,j\in J^f\}$, where $\omega(\bar{r}_j)$ is an element of the classical SUSY W-algebra containing $\bar{r}_j$ as a linear term. Hence, it is enough to show that the Miura map is injective on the set of generators. Due to the presentation of $\omega(\bar{r}_j)$ in Theorem 6.3 of \cite{Suh20}, the terms of $\omega(\bar{r}_j)$ contained in $\CC[\nabla]\otimes \g_0$ is equal to $(-D)^{2l}\,\bar{r}_j^{2l}$, when $r_j\in \g_{-l}$. Thus, the injectivity follows.
\end{remark}

\section{Application: principal SUSY W-algebras} \label{sec: application to principal SUSY W-algebras}
The most important types of Lie superalgebras that satisfy the assumption $\g_0=\h$ would be the Lie superalgebras that admit the existence of the principal $\mathfrak{osp}(1|2)$ embedding. For such Lie superalgebra $\g$, denote the odd nilpotent element in the principal $\mathfrak{osp}(1|2)$ subalgebra by $f_{\textup{prin}}$. Additionally, we denote the shifted level of a SUSY W-algebra by $\Psi:=k-h^{\vee}(\g)$ for $k\in \CC$, where $h^{\vee}(\g)$ is the dual coxeter number of $\g$. In this section, we analyze the structure of the principal SUSY W-algebra $W^{\Psi}(\bar{\g}):=W^{\Psi}(\bar{\g}, f_{\textup{prin}})$ using Theorem \ref{thm: main-screening}. Notice here that the shifted level $\Psi$ depends on the choice of the Lie superalgebra $\g$. Without further mention, we assume that all the statements apply to SUSY W-algebras of generic levels.


\begin{proposition}[\cite{FRS93,LSS86}]\label{prop: principal osp(1|2) classification}
  Let $\g$ be a finite basic simple Lie superalgebra that admits a principal $\mathfrak{osp}(1|2)$ embedding. Then $\g$ is isomorphic to one of the following:
  \begin{equation*}
    \mathfrak{sl}(n\pm 1|n),\ \mathfrak{osp}(2n\pm 1|2n),\ \mathfrak{osp}(2n+2|2n),\ \mathfrak{osp}(2n|2n),\ D(2,1;\alpha)\ (\alpha\in \CC\setminus\{0, \pm 1\}).
  \end{equation*}
\end{proposition}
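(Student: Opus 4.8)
This classification goes back to \cite{FRS93, LSS86}; here I sketch the approach. The plan is to reformulate the existence of a principal $\mathfrak{osp}(1|2)$ as a condition on the simple roots of $\g$, and then run through the Kac list of finite-dimensional basic simple Lie superalgebras, retaining each family for which the condition can be met.

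First I would fix terminology: an $\mathfrak{osp}(1|2)$-triple $\mathfrak{s}=\mathrm{Span}_{\CC}\{E,e,H,f,F\}$ in $\g$ is \emph{principal} when, for a suitable positive system, the odd nilpotents are $f=\sum c_{\alpha}f_{\alpha}$ and $e=\sum c'_{\alpha}e_{\alpha}$ with the sums over simple roots and $H=[e,f]$ the corresponding (unique) grading element. Since $f$ is odd, only odd simple roots can occur with nonzero coefficient, so the crucial structural input is that $\g$ must admit a system of simple roots with as many odd roots as possible, and that $E=\tfrac12[e,e]$ must be the principal nilpotent of the semisimple part $\g_{\bar 0}^{\mathrm{ss}}$ of the even subalgebra $\g_{\bar 0}$. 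In an all-odd simple system of type $A$ the isotropic odd roots give $[e_{\alpha},e_{\alpha}]=0$, so $E=\tfrac12[e,e]$ reduces to a sum of the even root vectors attached to the consecutive pairs of simple roots; one then checks that this is the principal nilpotent of $\g_{\bar 0}^{\mathrm{ss}}$ precisely when the number of odd simple roots exceeds $\operatorname{rank}\g_{\bar 0}^{\mathrm{ss}}$ by exactly one. (Non-type-$A$ Dynkin data and non-isotropic odd simple roots modify, but do not essentially change, this bookkeeping; the result of Hoyt recalled in Section \ref{sec: SUSY BRST} confirms a posteriori that such an $\mathfrak{s}$ produces a grading \eqref{eq: g grading} with $\g_0=\h$.) This reduces the statement to a finite combinatorial check.

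Running through Kac's list: $\mathfrak{sl}(m|n)$ and $\mathfrak{osp}(p|q)$ admit a sufficiently odd simple system only when the two ``sizes'' are nearly balanced, and the rank comparison above then holds exactly for $\mathfrak{sl}(n\pm1|n)$, $\mathfrak{osp}(2n\pm1|2n)$, $\mathfrak{osp}(2n+2|2n)$, $\mathfrak{osp}(2n|2n)$ (the four smallest being the $\mathfrak{osp}(m|2)$, $m\le 4$, appearing in the abstract); in each of these one writes down the principal $\mathfrak{osp}(1|2)$ explicitly. Every other $\mathfrak{sl}(m|n)$ or $\mathfrak{osp}(p|q)$ fails either because no sufficiently odd simple system exists or because the resulting $E$ is subprincipal in $\g_{\bar 0}$ --- for instance $\mathfrak{psl}(n|n)$, where $\operatorname{rank}\g_{\bar 0}^{\mathrm{ss}}=\operatorname{rank}\g$ leaves one even simple direction uncovered. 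Finally $D(2,1;\alpha)$ is handled directly (it admits the embedding for $\alpha\neq 0,\pm1$, the excluded values being non-simple or already on the list), and the exceptionals $F(4)$, $G(3)$ are eliminated by inspecting their Cartan matrices in all inequivalent Borels.

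The main obstacle is making this last, case-by-case part uniform: the non-type-$A$ superalgebras ($\mathfrak{osp}$, $D(2,1;\alpha)$) and the exceptionals require careful attention to branch nodes, double bonds, non-isotropic odd roots, and --- crucially --- to the fact that the answer depends on the choice of Borel, so one must quantify over all odd-reflection classes of simple systems before declaring a family excluded.
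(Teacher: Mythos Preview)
The paper does not give its own proof of this proposition: it is stated with a citation to \cite{FRS93,LSS86} and used as a black box. The only hint the paper provides about the argument is the remark following Theorem~\ref{thm: principal SUSY W-algebra}, where it recalls from \cite{LSS86} that each Lie superalgebra in the list admits a simple root system consisting entirely of odd roots, with the odd principal nilpotent $f_{\textup{prin}}$ given as a linear combination of the simple root vectors. Your sketch is consistent with this: you correctly identify that the existence of an all-odd simple system is the governing combinatorial condition, and your rank-balancing heuristic (that $E=\tfrac12[e,e]$ must be principal in $\g_{\bar 0}^{\mathrm{ss}}$) is the right mechanism for singling out the near-balanced families. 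Since there is no proof in the paper to compare against, your outline stands on its own as a faithful summary of the cited approach.
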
 

Due to the above classification, we only deal with Lie superalgebras listed in Proposition \ref{prop: principal osp(1|2) classification} except for $D(2,1;\alpha)$.

\subsection{Analysis of principal SUSY W-algebras} \hfill \\
We first provide the statement and the main idea of Theorem \ref{thm: principal SUSY W-algebra} in this section. For each Lie superalgebra $\g$, we denote its Cartan subalgebra and the associated SUSY Heisenberg vertex algebra by $\h$ and $\widehat{\pi}$, respectively.

\begin{theorem} \label{thm: principal SUSY W-algebra}
  Let $\g$ be a finite basic simple Lie superalgebra that admits a principal $\mathfrak{osp}(1|2)$ embedding. Then except for $\g=D(2,1;\alpha)$, the principal SUSY W-algebra $W^{\Psi}(\bar{\g})$ of generic level can be expressed as the intersection of the vertex algebras in the following list:
  \begin{equation} \label{eq: principal list}
    W^{\Psi}(\overline{\mathfrak{osp}(1|2)}),\quad W^{\Psi}(\overline{\mathfrak{osp}(2|2)}),\quad W^{\Psi}(\overline{\mathfrak{osp}(3|2)}),\quad W^{\Psi}(\overline{\mathfrak{osp}(4|2)}),
  \end{equation}
  where the list \eqref{eq: principal list} is up to the tensor product with a SUSY Heisenberg vertex algebra.
\end{theorem}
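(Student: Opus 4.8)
The plan is to apply the screening realization of Theorem~\ref{thm: main-screening} to $\g$ with $f=f_{\textup{prin}}$, to break the resulting intersection of kernels apart along the Dynkin diagram of $\g$, and to identify each local piece --- by the \emph{same} theorem applied to a rank $\le 3$ regular subalgebra --- as one of the four algebras in \eqref{eq: principal list} tensored with a SUSY Heisenberg vertex algebra. The $\mathfrak{osp}(1|2)$ being principal, $\g_0=\h$; moreover, in the simple root system $\Pi$ adapted to $f_{\textup{prin}}$ one has $f_{\textup{prin}}=\sum_{\alpha\in\Pi}c_\alpha u^\alpha$ with every $c_\alpha\neq 0$, so $(f_{\textup{prin}}|u_\alpha)\neq 0$ for all $\alpha\in\Pi$, and Theorem~\ref{thm: main-screening} gives, for generic $\Psi$,
\[
  W^{\Psi}(\bar\g)\;\simeq\;\bigcap_{\alpha\in\Pi}\ker Q_\alpha\;\subset\;\widehat{\pi},
  \qquad Q_\alpha:=\int e^{-\frac1\nu\int\alpha(Z)}\,dZ,\quad \nu=\sqrt{\Psi+h^\vee(\g)} .
\]

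Next I would build, by inspection of the list in Proposition~\ref{prop: principal osp(1|2) classification} (excluding $D(2,1;\alpha)$), a covering $\Pi=\Pi_1\cup\dots\cup\Pi_r$ with $|\Pi_i|\le 3$ such that the regular subalgebra $\g^{(i)}\subset\g$ generated by the root spaces $\g_{\pm\alpha}$ ($\alpha\in\Pi_i$) is basic simple, isomorphic to one of $\mathfrak{osp}(1|2),\mathfrak{osp}(2|2),\mathfrak{osp}(3|2),\mathfrak{osp}(4|2)$, and carries the principal $\mathfrak{osp}(1|2)$ with odd nilpotent $\sum_{\alpha\in\Pi_i}c_\alpha u^\alpha$. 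For this one fixes, for each $\g$, the simple root system making the principal $\mathfrak{osp}(1|2)$ manifest: its Dynkin diagram is a chain, or a chain with a single branch node, built predominantly from isotropic odd roots and with at most one non-isotropic odd root at an end. Letting $\Pi_i$ run over the edges of this diagram, the isolated non-isotropic odd vertex if there is one, and the branch triple, a direct check against the classification of basic simple Lie superalgebras of rank $\le 3$ shows that these pieces generate, respectively, $\mathfrak{osp}(2|2)$ or $\mathfrak{osp}(3|2)$, $\mathfrak{osp}(1|2)$, and $\mathfrak{osp}(4|2)=D(2,1;1)$. Since $\bigcup_i\Pi_i=\Pi$, one has $\bigcap_{\alpha\in\Pi}\ker Q_\alpha=\bigcap_i\big(\bigcap_{\alpha\in\Pi_i}\ker Q_\alpha\big)$ tautologically.

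Then I would localize each block. Let $\h_i\subset\h$ be the span of the coroots $\{h_\alpha:\alpha\in\Pi_i\}$; it is the Cartan subalgebra of $\g^{(i)}$ and the invariant form of $\g$ restricts non-degenerately to it, so $\widehat{\pi}\simeq\widehat{\pi}^{\h_i}\otimes\widehat{\pi}^{\h_i^{\perp}}$. Since $\alpha=(\bar{h}_\alpha)_{(-1)}\vac$ with $h_\alpha\in\h_i$, the operator $Q_\alpha$ acts only on the first factor, so
\[
  \bigcap_{\alpha\in\Pi_i}\ker Q_\alpha\;\simeq\;\Big(\bigcap_{\alpha\in\Pi_i}\ker\big(Q_\alpha|_{\widehat{\pi}^{\h_i}}\big)\Big)\otimes\widehat{\pi}^{\h_i^{\perp}} .
\]
Applying Theorem~\ref{thm: main-screening} to $\g^{(i)}$ with its principal $\mathfrak{osp}(1|2)$ identifies the inner intersection with $W^{\Psi}(\overline{\g^{(i)}})$ once one checks that $Q_\alpha|_{\widehat{\pi}^{\h_i}}$ is precisely the screening for $\alpha$ attached to $W^{\Psi}(\overline{\g^{(i)}})$ through \eqref{eq: exponential definition}; assembling the blocks then gives
\[
  W^{\Psi}(\bar\g)\;\simeq\;\bigcap_{i=1}^{r}\Big(W^{\Psi}(\overline{\g^{(i)}})\otimes\widehat{\pi}^{\h_i^{\perp}}\Big)\;\subset\;\widehat{\pi},
\]
the finitely many genericity conditions from Theorem~\ref{thm: main-screening} (for $\g$ and for the $\g^{(i)}$) being met for generic $\Psi$.

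The step I expect to be the main obstacle is this last matching: verifying that after restriction to the sub-Heisenberg the screening $Q_\alpha$ coincides, \emph{at the same shifted level} $\Psi$, with the screening defining $W^{\Psi}(\overline{\g^{(i)}})$. This requires tracking in \eqref{eq: exponential definition} the interplay between $\nu$, the choice of coroots, and the normalization of the invariant form under the passage to the regular subalgebra; the shift by $h^\vee$ in the definition of $\Psi$ is exactly what renders this comparison level-independent, converting $\nu=\sqrt{\Psi+h^\vee(\g)}$ into $\sqrt{\Psi+h^\vee(\g^{(i)})}$ after the natural rescaling of the Heisenberg generators. A secondary, purely finite point is the diagram bookkeeping of the second step.
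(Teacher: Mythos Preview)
Your proposal is correct and follows the same strategy as the paper: apply Theorem~\ref{thm: main-screening} to realize $W^{\Psi}(\bar{\g})$ as an intersection of screening kernels in $\widehat{\pi}$, split $\h$ orthogonally along small subsets of $\Pi$ (using that the form is nondegenerate on the span of the chosen coroots, which is exactly Lemma~\ref{lem: perp decomposition} in the paper), and re-apply Theorem~\ref{thm: main-screening} to each piece to identify it with one of the rank~$\le 3$ principal SUSY $W$-algebras tensored with a sub-Heisenberg. The paper differs only cosmetically: it chooses a disjoint partition of $\Pi$ into consecutive pairs, together with a lone non-isotropic root or a branch triple when $|\Pi|$ is odd, and works case by case through Proposition~\ref{prop: principal osp(1|2) classification}, rather than your overlapping cover by Dynkin edges.

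One clarification on what you flag as the main obstacle: it is not one. Under the paper's convention $W^{\Psi}(\cdot)$ means the $W$-algebra at level $k-h^{\vee}(\cdot)$ for a \emph{common} $k$, so $\nu=\sqrt{k}$ is literally the same number for $\g$ and for every $\g^{(i)}$; the coroot $h_{\alpha}$ is unchanged since the bilinear form on $\g^{(i)}$ is just the restriction of that on $\g$, and hence the screening $Q_{\alpha}$ restricted to $\widehat{\pi}^{\h_i}$ is on the nose the screening from Theorem~\ref{thm: main-screening} for $\g^{(i)}$ at shifted level $\Psi$, with no rescaling required.
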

It is worth to be noted that in the list \eqref{eq: principal list}, each level $\Psi$ is shifted with the dual coxeter number for each Lie superalgebra. If we clarify the dual coxeter numbers, \eqref{eq: principal list} is written as
  \[W^{k-\frac{3}{2}}(\overline{\mathfrak{osp}(1|2)}), \quad W^{k-1}(\overline{\mathfrak{osp}(2|2)}), \quad W^{k-\frac{1}{2}}(\overline{\mathfrak{osp}(3|2)}), \quad W^{k}(\overline{\mathfrak{osp}(4|2)}).\]
  Refer to \cite{Musson12} for explicit values of the dual coxeter numbers for basic simple Lie superalgebras.
  
  For the proof, we use the results of Section \ref{bigsec: FF realization of SUSY W-algebras}. Note that all the results are applied with $\nu$ being replaced with $\sqrt{k}$ since we are dealing with the shifted level $\Psi$. Furthermore, we can assume that the intersection in \eqref{eq: susy screening} runs over all the odd simple roots. It follows from the proof of Proposition \ref{prop: principal osp(1|2) classification} in \cite{LSS86}. They show that each Lie superalgebra in the list has a simple root system $\Pi$ which only consists of odd roots. Moreover, the odd principal nilpotent element $f_{\textup{prin}}$ of $\g$ is given by the linear sum of the simple root vectors. It implies that none of the simple root $\alpha\in \Pi$ satisfies $(f_{\textup{prin}}|u_{\alpha})=0$. Lastly, the following observation in Lemma \ref{lem: perp decomposition} enables us to prove the theorem.

\begin{lemma} \label{lem: perp decomposition}
  Let $\g$ be one of the Lie superalgebras listed in Proposition \ref{prop: principal osp(1|2) classification}. Fix one simple root $\alpha$ of $\g$ and let $\mathfrak{u}_{\alpha}^{\perp}$ be the subspace of $\h$ given by
  \begin{equation} \label{eq: subspace u_alpha}
  \mathfrak{u}_{\alpha}^{\perp}=\textup{Span}_{\CC}\{h\in \h\mid \alpha(h)=0\}.
  \end{equation}
 Then the SUSY Heisenberg vertex algebra $\widehat{\pi}_{\alpha}^{\perp}$ associated with $\mathfrak{u}_{\alpha}^{\perp}$ can be regarded as a subalgebra of $\widehat{\pi}$ and the screening operator $\int e^{-\frac{1}{\sqrt{k}}\int \alpha(Z)}dZ$ acts trivially on $\widehat{\pi}_{\alpha}^{\perp}\subset \widehat{\pi}$.
\end{lemma}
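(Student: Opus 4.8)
The plan is to reduce everything to the explicit shape \eqref{eq: exponential definition} of the screening operator, the only genuine input being that the coroot $h_\alpha$ pairs trivially with $\mathfrak{u}_\alpha^\perp$. First I would fix a basis of $\mathfrak{u}_\alpha^\perp$ and extend it to a basis of $\h$. Since $\widehat{\pi}\simeq V^1(\bar{\h})$ is freely generated by the associated elements $\bar{h}$ (Theorem \ref{thm: universal enveloping SUSY va}), the SUSY vertex subalgebra of $\widehat{\pi}$ generated by $\{\bar{h}\mid h\in\mathfrak{u}_\alpha^\perp\}$ is freely generated by the part of the basis lying in $\mathfrak{u}_\alpha^\perp$, and its $\Lambda$-bracket is just the restriction of \eqref{eq: lattice VA Lambda bracket}; hence it is a copy of the SUSY Heisenberg vertex algebra $\widehat{\pi}_\alpha^\perp$ attached to $\mathfrak{u}_\alpha^\perp$ equipped with the restricted form. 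The restricted form may well be degenerate (e.g. $h_\alpha\in\mathfrak{u}_\alpha^\perp$ when $\alpha$ is isotropic), but Definition \ref{def: susy heisenberg} allows an arbitrary symmetric bilinear form, so this is harmless; from now on I identify $\widehat{\pi}_\alpha^\perp$ with this subalgebra. This settles the first assertion.

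Next I would record the key orthogonality. Writing $\alpha=(\bar{h}_\alpha)_{(-1)}\vac$ with $h_\alpha$ the coroot, one has $(h_\alpha|h)=\alpha(h)=0$ for every $h\in\mathfrak{u}_\alpha^\perp$, so $[\alpha{}_\Lambda\bar{h}]=0$ by \eqref{eq: lattice VA Lambda bracket}. The subspace $\{w\in\widehat{\pi}\mid[\alpha{}_\Lambda w]=0\}$ is a SUSY vertex subalgebra — it is $D$-stable by super sesquilinearity and $:\ :$-stable by the super Wick formula \eqref{eq: Wick formula} — so it contains all of $\widehat{\pi}_\alpha^\perp$. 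Consequently $\alpha_{(m|0)}w=\alpha_{(m|1)}w=0$ for every $w\in\widehat{\pi}_\alpha^\perp$ and every $m\geq0$; in particular $\alpha_{(0|0)}w=0$.

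Finally I would apply \eqref{eq: exponential definition} (with $\nu=\sqrt{k}$) to $w\in\widehat{\pi}_\alpha^\perp$, composing the factors from right to left. By the previous step the three rightmost factors $Z^{-\frac{1}{\sqrt k}\alpha_{(0|0)}|0}$, $\exp\!\big(-\tfrac1{\sqrt k}\sum_{n\geq0}Z^{-n-1|1}\alpha_{(n|1)}\big)$ and $\exp\!\big(\tfrac1{\sqrt k}\sum_{n>0}\tfrac{Z^{-n|0}}{n}\alpha_{(n|0)}\big)$ all act as the identity on $w$, leaving only
\[
  e^{-\frac1{\sqrt k}\int\alpha(Z)}\,w \;=\; s_{-\frac{\alpha}{\sqrt k}}\,\exp\!\Big(-\tfrac1{\sqrt k}\!\sum_{n<0}Z^{-n-1|1}\alpha_{(n|1)}\Big)\exp\!\Big(\tfrac1{\sqrt k}\!\sum_{n<0}\tfrac{Z^{-n|0}}{n}\alpha_{(n|0)}\Big)\,w ,
\]
where $s_{-\frac{\alpha}{\sqrt k}}$ carries no $Z$, while every monomial $Z^{m|i}$ appearing in the two exponentials has $m\geq0$ (indeed $-n-1\geq0$ and $-n\geq1$ for $n\leq-1$). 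Hence $e^{-\frac1{\sqrt k}\int\alpha(Z)}w$ has no $Z^{-1|1}$-term, so the super residue \eqref{eq: super residue} yields $\int e^{-\frac1{\sqrt k}\int\alpha(Z)}w\,dZ=0$, which is exactly the claim. I do not expect a genuine obstacle here: the whole content is the orthogonality $(h_\alpha|h)=\alpha(h)=0$ together with the bookkeeping of which exponential factors in the formidable-looking formula \eqref{eq: exponential definition} annihilate $\widehat{\pi}_\alpha^\perp$; the one point deserving care is the passage in the second step from the generators $\bar{h}$ to all of $\widehat{\pi}_\alpha^\perp$, which I handle via the ``trivial $\Lambda$-bracket with $\alpha$'' subalgebra rather than a direct mode computation. (A conceptual alternative would invoke the remark that $e^{-\frac1{\sqrt k}\int\alpha(Z)}$ is the superfield of ${\left|-\tfrac{\alpha}{\sqrt k}\right\rangle}$ in the SUSY lattice vertex algebra and argue that ${\left|-\tfrac{\alpha}{\sqrt k}\right\rangle}_{(0|0)}$ super-commutes with $\widehat{\pi}_\alpha^\perp$ and kills $\vac$; this is equivalent and no shorter.)
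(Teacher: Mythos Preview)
Your proof is correct, but it takes a different route from the paper's. The paper works on the BRST side: it recalls that the screening operator for $\alpha$ is, via Corollary~\ref{cor: FF realization of susy W-alg} and the identification \eqref{eq: heisenberg and fock isomorphism}, nothing but $\phi^{\alpha}_{(0|0)}$, and then computes directly from Proposition~\ref{prop: C^k elements relation} that
\[
[\phi^{\alpha}{}_{\Lambda}J_{\bar h}]=(-1)^{p(\alpha)}\alpha(h)\,\phi^{\alpha},
\]
which vanishes for $h\in\mathfrak{u}_\alpha^\perp$; since $\phi^{\alpha}_{(0|0)}$ is a derivation, it therefore kills the subalgebra generated by these $J_{\bar h}$. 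You instead stay entirely on the Heisenberg/Fock side: you first show $[\alpha{}_\Lambda\bar h]=0$ from $(h_\alpha|h)=\alpha(h)=0$, propagate this to all of $\widehat{\pi}_\alpha^\perp$ via the ``trivial $\Lambda$-bracket with $\alpha$'' subalgebra argument, and then feed the vanishing of the annihilation modes into the explicit product formula \eqref{eq: exponential definition} to see that only non-negative $Z$-powers survive, forcing the super residue to vanish.

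Both arguments hinge on the same orthogonality $(h_\alpha|\mathfrak{u}_\alpha^\perp)=0$. The paper's version is shorter because it exploits the BRST origin of the screening operator and the already-computed bracket with $J_{\bar h}$; your version is self-contained on the free-field side and makes transparent exactly which factors of \eqref{eq: exponential definition} are responsible for the triviality. Your handling of the extension from generators to the full subalgebra (via closure of $\{w:[\alpha{}_\Lambda w]=0\}$ under $D$ and $:\ :$) is cleaner than the terse ``directly follows'' in the paper, which implicitly relies on $\phi^{\alpha}_{(0|0)}$ being a derivation.
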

\begin{proof}
  The screening operator for $\alpha$ follows from the action $\phi^{\alpha}_{(0|0)}$ in the SUSY BRST complex. Since $u^{\alpha}$ is a root vector of $-\alpha$, we have
  \begin{equation*}
    [\phi^{\alpha}{}_{\Lambda}J_{\bar{h}}]=\sum_{\beta\in I_+}(-1)^{p(\alpha)+1}([h,u^{\alpha}]|u_{\beta})\phi^{\beta}=(-1)^{p(\alpha)}\alpha(h)\phi^{\alpha}
  \end{equation*}
  for any $h\in\h$. Then the statement directly follows.
\end{proof}

\subsection{Proof of Theorem \ref{thm: principal SUSY W-algebra}} \hfill \\
We devote this section to proving Theorem \ref{thm: principal SUSY W-algebra}, case by case. To denote the simple roots for the Lie superalgebras, we use the notation $\epsilon_i$'s and $\delta_j$'s in \cite{ChengWang12}.

\subsubsection{$\mathfrak{sl}(n\pm 1|n)$} \label{sec: sl example} \hfill \\
Consider $\g=\mathfrak{sl}(n+1|n)$ and identify any root of $\g$ with its coroot using the nondegeneracy of the bilinear form. It has a simple root system $\Pi$, whose Dynkin diagram is given as follows.
\begin{center}
  \begin{tikzpicture}[cross/.style={path picture={ 
    \draw[black]
  (path picture bounding box.south east) -- (path picture bounding box.north west) (path picture bounding box.south west) -- (path picture bounding box.north east);
  }}, every label/.append style={font=\scriptsize}]
\node(AA) at (-2,0) {$\mathfrak{sl}(n+1|n)$ :};
\node[draw,circle,cross,minimum size=0.3cm,label=above:${\delta_1-\epsilon_1}$](A) at (0,0) {};
\node[draw,circle,cross,minimum size=0.3cm,label=above:${\epsilon_1-\delta_2}$](B) at (1.5,0) {};
\node[draw,circle,cross,minimum size=0.3cm,label=above:${\delta_2-\epsilon_2}$](C) at (3,0) {};
\node(D) at (4,0) {$\cdots$};
\node[draw,circle,cross,minimum size=0.3cm,label=above:${\delta_n-\epsilon_n}$](E) at (5,0) {};
\node[draw,circle,cross,minimum size=0.3cm,label=above:${\epsilon_n-\delta_{n+1}}$](F) at (6.5,0) {};
\draw[-] (A) -- (B);
\draw[-] (B) -- (C);
\draw[-] (C) -- (D);
\draw[-] (D) -- (E);
\draw[-] (E) -- (F);
  \end{tikzpicture}
\end{center}
Denote each odd simple root in the above diagram by $\alpha_1. \cdots, \alpha_{2n}$ in sequence. Write the corresponding subspace \eqref{eq: subspace u_alpha} of $\h$ by $\mathfrak{u}_1^{\perp}, \cdots, \mathfrak{u}_{2n}^{\perp}$ and their intersections by $\mathfrak{u}_{i,j}^{\perp}:=\mathfrak{u}_i^{\perp}\cap \mathfrak{u}_j^{\perp}.$ For each $i=1, \cdots, 2n-1$, we have
\begin{equation} \label{eq: cartan decompose into u and u perp}
  \h=\mathfrak{u}_{i, i+1}^{\perp} \oplus \mathfrak{u}_{i,i+1}
\end{equation}
for $\mathfrak{u}_{i,i+1}:=\CC \alpha_{i}+\CC\alpha_{i+1}$. With the use of \eqref{eq: cartan decompose into u and u perp}, we decompose the domain and codomains of the screening operators for $\alpha_i$ and $\alpha_{i+1}$.

Let $\widehat{\pi}_{i, i+1}^{\perp}$ and $\widehat{\pi}_{i, i+1}$ be the SUSY Heisenberg vertex algebras associated with $\mathfrak{u}_{i,i+1}^{\perp}$ and $\mathfrak{u}_{i,i+1}$, respectively. Then, the vertex algebra $\widehat{\pi}$ decomposes as
\begin{equation} \label{eq: Heisenberg decompose into u and u perp}
  \widehat{\pi}\simeq \widehat{\pi}_{i, i+1}^{\perp} \otimes \widehat{\pi}_{i,i+1}.
\end{equation}
Likewise, as a $\CC$-vector space, the codomain $\widehat{\pi}_{-\frac{\alpha_j}{\sqrt{k}}}$ decomposes as
\[\widehat{\pi}_{-\frac{\alpha_j}{\sqrt{k}}}\simeq \widehat{\pi}_{i,i+1}^{\perp}\otimes (\widehat{\pi}_{-\frac{\alpha_j}{\sqrt{k}}})_{i, i+1},\]
where $(\widehat{\pi}_{-\frac{\alpha_j}{\sqrt{k}}})_{i, i+1}$ is the even Fock representation of $\widehat{\pi}_{i,i+1}$ with an highest weight $-\frac{\alpha_j}{\sqrt{k}}$. By Lemma \ref{lem: perp decomposition}, the screening operators for $\alpha_i$ and $\alpha_{i+1}$ both act trivially on $\widehat{\pi}_{i, i+1}^{\perp}$. Thus, the common kernel of the two operators is isomorphic to the common kernel of their restrictions
\begin{equation*}
  \int e^{-\frac{1}{\sqrt{k}}\int \alpha_{i}(Z)}dZ : \widehat{\pi}_{i, i+1} \rightarrow \big(\widehat{\pi}_{-\frac{\alpha_i}{\sqrt{k}}}\big)_{i, i+1}, \quad \int e^{-\frac{1}{\sqrt{k}}\int \alpha_{i+1}(Z)}dZ : \widehat{\pi}_{i, i+1} \rightarrow \big(\widehat{\pi}_{-\frac{\alpha_{i+1}}{\sqrt{k}}}\big)_{i, i+1}
\end{equation*} 
tensored with $\widehat{\pi}_{i, i+1}^{\perp}$. Note here that one cannot erase the interference of other simple roots in this inspection, since each $\alpha_i$ sits inside the space $\mathfrak{u}_i^{\perp}$. Now, make the pairings of the simple roots as follows.
\begin{center}
  \begin{tikzpicture}[cross/.style={path picture={ 
    \draw[black]
  (path picture bounding box.south east) -- (path picture bounding box.north west) (path picture bounding box.south west) -- (path picture bounding box.north east);
  }}, every label/.append style={font=\scriptsize}]
\node(AA) at (-2,0) {$\mathfrak{sl}(n+1|n)$ :};
\node[draw,circle,cross,minimum size=0.3cm,label=above:${\delta_1-\epsilon_1}$](A) at (0,0) {};
\node[draw,circle,cross,minimum size=0.3cm,label=above:${\epsilon_1-\delta_2}$](B) at (1.5,0) {};
\node[draw,circle,cross,minimum size=0.3cm,label=above:${\delta_2-\epsilon_2}$](C) at (3,0) {};
\node[draw,circle,cross,minimum size=0.3cm,label=above:${\epsilon_2-\delta_3}$](D) at (4.5,0) {};
\node(E) at (5.5,0) {$\cdots$};
\node[draw,circle,cross,minimum size=0.3cm,label=above:${\delta_n-\epsilon_n}$](F) at (6.5,0) {};
\node[draw,circle,cross,minimum size=0.3cm,label=above:${\epsilon_n-\delta_{n+1}}$](G) at (8,0) {};
\node(GG) at (5.5,-0.83){$\cdots$};
\draw[-] (A) -- (B);
\draw[-] (B) -- (C);
\draw[-] (C) -- (D);
\draw[-] (D) -- (E);
\draw[-] (E) -- (F);
\draw[-] (F) -- (G);
\draw [decorate,decoration={brace,amplitude=5pt,mirror,raise=1.5ex}] (0,0) -- (1.5,0) node[midway,yshift=-2em]{\scriptsize{$\mathfrak{osp}(2|2)$}};
\draw [decorate,decoration={brace,amplitude=5pt,mirror,raise=1.5ex}] (3,0) -- (4.5,0) node[midway,yshift=-2em]{\scriptsize{$\mathfrak{osp}(2|2)$}};
\draw [decorate,decoration={brace,amplitude=5pt,mirror,raise=1.5ex}] (6.5,0) -- (8,0) node[midway,yshift=-2em]{\scriptsize{$\mathfrak{osp}(2|2)$}};
  \end{tikzpicture}
\end{center}
Considering the shape of the Dynkin diagram, the common kernel of the screening operators for the two adjacent simple roots is isomorphic to the principal SUSY W-algebra corresponding to $\mathfrak{osp}(2|2)$, applying Theorem \ref{thm: main-screening} again. In other words, for $i=1, \cdots n$,
\begin{equation} \label{eq: sl example analogous statement}
  \ker \int e^{-\frac{1}{\sqrt{k}}\int \alpha_{2i-1}(Z)}dZ \cap \ker \int e^{-\frac{1}{\sqrt{k}}\int \alpha_{2i}(Z)}dZ \simeq W^{\Psi}(\overline{\mathfrak{osp}(2|2)})\otimes \widehat{\pi}_{2i-1, 2i}^{\perp}.
\end{equation}
A similar argument for $\mathfrak{sl}(n|n+1)$ follows if we switch the role of $\epsilon$ and $\delta$. In this case, one can also deduce that the common kernel of the $i$th and $(i+1)$th screening operators can be explained by the tensor product of 
$W^{\Psi}(\overline{\mathfrak{osp}(2|2)})$ and a Fock representation.

\begin{center}
  \begin{tikzpicture}[cross/.style={path picture={ 
    \draw[black]
  (path picture bounding box.south east) -- (path picture bounding box.north west) (path picture bounding box.south west) -- (path picture bounding box.north east);
  }}, every label/.append style={font=\scriptsize}]
\node(AA) at (-2,0) {$\mathfrak{sl}(n|n+1)$ :};
\node[draw,circle,cross,minimum size=0.3cm,label=above:${\epsilon_1-\delta_1}$](A) at (0,0) {};
\node[draw,circle,cross,minimum size=0.3cm,label=above:${\delta_1-\epsilon_2}$](B) at (1.5,0) {};
\node[draw,circle,cross,minimum size=0.3cm,label=above:${\epsilon_2-\delta_2}$](C) at (3,0) {};
\node[draw,circle,cross,minimum size=0.3cm,label=above:${\delta_2-\epsilon_3}$](D) at (4.5,0) {};
\node(E) at (5.5,0) {$\cdots$};
\node[draw,circle,cross,minimum size=0.3cm,label=above:${\epsilon_n-\delta_n}$](F) at (6.5,0) {};
\node[draw,circle,cross,minimum size=0.3cm,label=above:${\delta_n-\epsilon_{n+1}}$](G) at (8,0) {};
\node(GG) at (5.5,-0.83){$\cdots$};
\draw[-] (A) -- (B);
\draw[-] (B) -- (C);
\draw[-] (C) -- (D);
\draw[-] (D) -- (E);
\draw[-] (E) -- (F);
\draw[-] (F) -- (G);
\draw [decorate,decoration={brace,amplitude=5pt,mirror,raise=1.5ex}] (0,0) -- (1.5,0) node[midway,yshift=-2em]{\scriptsize{$\mathfrak{osp}(2|2)$}};
\draw [decorate,decoration={brace,amplitude=5pt,mirror,raise=1.5ex}] (3,0) -- (4.5,0) node[midway,yshift=-2em]{\scriptsize{$\mathfrak{osp}(2|2)$}};
\draw [decorate,decoration={brace,amplitude=5pt,mirror,raise=1.5ex}] (6.5,0) -- (8,0) node[midway,yshift=-2em]{\scriptsize{$\mathfrak{osp}(2|2)$}};
  \end{tikzpicture}
\end{center}

\subsubsection{$\mathfrak{osp}(2n+1|2n)$ and $\mathfrak{osp}(2n|2n)$} \label{sec: osp(2n + 1|2n) osp(2n|2n) example} \hfill \\
When $\g=\mathfrak{osp}(2n+1|2n)$ or $\mathfrak{osp}(2n|2n)$, $\g$ has an odd root system $\Pi$ such that $\vert\Pi\vert$ is even. Therefore, the simple roots make pairings two by two as in Section \ref{sec: sl example}. First, consider $\mathfrak{osp}(2n+1|2n)$ equipped with a supertrace form. It has a simple root system which draws a Dynkin diagram as follows.
\begin{center}
  \begin{tikzpicture}[cross/.style={path picture={ 
    \draw[black]
  (path picture bounding box.south east) -- (path picture bounding box.north west) (path picture bounding box.south west) -- (path picture bounding box.north east);
  }}, every label/.append style={font=\scriptsize}, >={Classical TikZ Rightarrow[]}]
\node(AA) at (-2,0) {$\mathfrak{osp}(2n+1|2n)$ :};
\node[draw,circle,cross,minimum size=0.3cm,label=above:${\epsilon_1-\delta_1}$](A) at (0,0) {};
\node[draw,circle,cross,minimum size=0.3cm,label=above:${\delta_1-\epsilon_2}$](B) at (1.5,0) {};
\node[draw,circle,cross,minimum size=0.3cm,label=above:${\epsilon_2-\delta_2}$](C) at (3,0) {};
\node[draw,circle,cross,minimum size=0.3cm,label=above:${\delta_2-\epsilon_3}$](D) at (4.5,0) {};
\node(E) at (5.5,0) {$\cdots$};
\node(GG) at (5.5,-0.83){$\cdots$};
\node[draw,circle,cross,minimum size=0.3cm,label=above:${\epsilon_{n-1}-\delta_{n-1}}$](F) at (6.5,0) {};
\node[draw,circle,cross,minimum size=0.3cm,label=above:${\delta_{n-1}-\epsilon_{n}}$](G) at (8,0) {};
\node[draw,circle,cross,minimum size=0.3cm,label=above:${\epsilon_n-\delta_n}$](H) at (9.5,0) {};
\node[draw,circle,fill=black,minimum size=0.3cm,label=above:$\delta_n$](I) at (11,0) {};
\draw[-] (A) -- (B);
\draw[-] (B) -- (C);
\draw[-] (C) -- (D);
\draw[-] (D) -- (E);
\draw[-] (E) -- (F);
\draw[-] (F) -- (G);
\draw[-] (G) -- (H);
\draw[->, double, double distance=1pt] (H) -- (I);
\draw [decorate,decoration={brace,amplitude=5pt,mirror,raise=1.5ex}] (0,0) -- (1.5,0) node[midway,yshift=-2em]{\scriptsize{$\mathfrak{osp}(2|2)$}};
\draw [decorate,decoration={brace,amplitude=5pt,mirror,raise=1.5ex}] (3,0) -- (4.5,0) node[midway,yshift=-2em]{\scriptsize{$\mathfrak{osp}(2|2)$}};
\draw [decorate,decoration={brace,amplitude=5pt,mirror,raise=1.5ex}] (6.5,0) -- (8,0) node[midway,yshift=-2em]{\scriptsize{$\mathfrak{osp}(2|2)$}};
\draw [decorate,decoration={brace,amplitude=5pt,mirror,raise=1.5ex}] (9.5,0) -- (11,0) node[midway,yshift=-2em]{\scriptsize{$\mathfrak{osp}(3|2)$}};
  \end{tikzpicture}
\end{center}
Denote the odd simple roots in the above diagram by $\alpha_1, \cdots, \alpha_{2n}$, and make pairings of simple roots as above. Then one can analyze the common kernel of $(2i-1)$th and $2i$th screening operators as in Section \ref{sec: sl example}. It follows that
\begin{align}
  \ker \int e^{-\frac{1}{\sqrt{k}}\int \alpha_{2i-1}(Z)}dZ \cap \ker \int e^{-\frac{1}{\sqrt{k}}\int \alpha_{2i}(Z)}dZ &\simeq W^{\Psi}(\overline{\mathfrak{osp}(2|2)})\otimes \widehat{\pi}_{2i-1, 2i}^{\perp},\\
  \ker \int e^{-\frac{1}{\sqrt{k}}\int \alpha_{2n-1}(Z)}dZ \cap \ker \int e^{-\frac{1}{\sqrt{k}}\int \alpha_{2n}(Z)}dZ &\simeq W^{\Psi}(\overline{\mathfrak{osp}(3|2)})\otimes \widehat{\pi}_{2n-1, 2n}^{\perp},
\end{align}
for $i=1, \cdots, n-1$ if we use the same notation in Section \ref{sec: sl example}. Similarly, if we name the following odd simple roots of $\mathfrak{osp}(2n|2n)$ by $\beta_1, \cdots \beta_{2n}$,
\begin{center}
  \begin{tikzpicture}[cross/.style={path picture={ 
    \draw[black]
  (path picture bounding box.south east) -- (path picture bounding box.north west) (path picture bounding box.south west) -- (path picture bounding box.north east);
  }}, every label/.append style={font=\scriptsize}, >={Classical TikZ Rightarrow[]}]
\node(AA) at (-2,0) {$\mathfrak{osp}(2n|2n)$ :};
\node[draw,circle,cross,minimum size=0.3cm,label=above:${\delta_1-\epsilon_1}$](A) at (0,0) {};
\node[draw,circle,cross,minimum size=0.3cm,label=above:${\epsilon_1-\delta_2}$](B) at (2,0) {};
\node(C) at (3,0) {$\cdots$};
\node(GG) at (3,-0.83){$\cdots$};
\node[draw,circle,cross,minimum size=0.3cm,label=above:${\delta_{n-1}-\epsilon_{n-1}}$](D) at (4,0) {};
\node[draw,circle,cross,minimum size=0.3cm,label=above:${\epsilon_{n-1}-\delta_{n}}$](E) at (6,0) {};
\node[draw,circle,cross,minimum size=0.3cm,label=above:${\delta_n-\epsilon_n}$](F) at (8,1) {};
\node[draw,circle,cross,minimum size=0.3cm,label=below:$\epsilon_n+\delta_n$](G) at (8,-1) {};
\draw[-] (A) -- (B);
\draw[-] (B) -- (C);
\draw[-] (C) -- (D);
\draw[-] (D) -- (E);
\draw[-] (E) -- (F);
\draw[-] (F) -- (G);
\draw[-] (E) --(G);
\draw [decorate,decoration={brace,amplitude=5pt,mirror,raise=1.5ex}] (0,0) -- (2,0) node[midway,yshift=-2em]{\scriptsize{$\mathfrak{osp}(2|2)$}};
\draw [decorate,decoration={brace,amplitude=5pt,mirror,raise=1.5ex}] (4,0) -- (6,0) node[midway,yshift=-2em]{\scriptsize{$\mathfrak{osp}(2|2)$}};
\draw [decorate,decoration={brace,amplitude=5pt,raise=1.5ex}] (8,1) -- (8,-1) node[midway,right, xshift=1em]{\scriptsize{$\mathfrak{osp}(2|2)$}};
 \end{tikzpicture}
\end{center}
they make pairs two by two so that the kernel of screening operators for $\beta_{2i-1}$ and $\beta_{2i}$ can be written as a tensor product of $W^{\Psi}(\overline{\mathfrak{osp}(2|2)})$ and a Fock representation.

\subsubsection{$\mathfrak{osp}(2n-1|2n)$ and $\mathfrak{osp}(2n+2|2n)$} \label{sec: osp(2n-1|2n) osp(2n+2|2n) example} \hfill \\
The remainders are the types where the number of roots in the odd root system $\Pi$ is odd, so one cannot make pairings of simple roots.
First, consider $\mathfrak{osp}(2n-1|2n)$.
\begin{center}
  \begin{tikzpicture}[cross/.style={path picture={ 
    \draw[black]
  (path picture bounding box.south east) -- (path picture bounding box.north west) (path picture bounding box.south west) -- (path picture bounding box.north east);
  }}, every label/.append style={font=\scriptsize}, >={Classical TikZ Rightarrow[]}]
\node(AA) at (-2,0) {$\mathfrak{osp}(2n-1|2n)$ :};
\node[draw,circle,cross,minimum size=0.3cm,label=above:${\delta_1-\epsilon_1}$](A) at (0,0) {};
\node[draw,circle,cross,minimum size=0.3cm,label=above:${\epsilon_1-\delta_2}$](B) at (1.7,0) {};
\node[draw,circle,cross,minimum size=0.3cm,label=above:${\delta_2-\epsilon_2}$](C) at (3.4,0) {};
\node[draw,circle,cross,minimum size=0.3cm,label=above:${\epsilon_2-\delta_3}$](D) at (5.1,0) {};
\node(E) at (6.1,0) {$\cdots$};
\node(GG) at (6.1,-0.83){$\cdots$};
\node[draw,circle,cross,minimum size=0.3cm,label=above:${\delta_{n-1}-\epsilon_{n-1}}$](F) at (7.8,0) {};
\node[draw,circle,cross,minimum size=0.3cm,label=above:${\epsilon_{n-1}-\delta_{n}}$](G) at (9.5,0) {};
\node[draw,circle,fill=black,minimum size=0.3cm,label=above:$\delta_n$](H) at (11.2,0) {};
\draw[-] (A) -- (B);
\draw[-] (B) -- (C);
\draw[-] (C) -- (D);
\draw[-] (D) -- (E);
\draw[-] (E) -- (F);
\draw[-] (F) -- (G);
\draw[->, double, double distance=1pt] (G) -- (H);
\draw [decorate,decoration={brace,amplitude=5pt,mirror,raise=1.5ex}] (0,0) -- (1.7,0) node[midway,yshift=-2em]{\scriptsize{$\mathfrak{osp}(2|2)$}};
\draw [decorate,decoration={brace,amplitude=5pt,mirror,raise=1.5ex}] (3.4,0) -- (5.1,0) node[midway,yshift=-2em]{\scriptsize{$\mathfrak{osp}(2|2)$}};
\draw [decorate,decoration={brace,amplitude=5pt,mirror,raise=1.5ex}] (7.8,0) -- (9.5,0) node[midway,yshift=-2em]{\scriptsize{$\mathfrak{osp}(2|2)$}};
\draw [decorate,decoration={brace,amplitude=5pt,mirror,raise=1.5ex}] (11,0) -- (11.4,0) node[midway,yshift=-2em]{\scriptsize{$\mathfrak{osp}(1|2)$}};
  \end{tikzpicture}
\end{center}
Name the simple roots above by $\alpha_1, \cdots, \alpha_{2n-1}$ in sequence, and denote the corresponding spaces \eqref{eq: subspace u_alpha} by $\mathfrak{u}_{i}^{\perp}$'s. Except for the last one, one can make pairings of the simple roots and show the statements \eqref{eq: sl example analogous statement} for $\alpha_1, \cdots \alpha_{2n-2}$. For the last root, we have
\begin{equation} \label{eq: cartan decompose nonisotropic}
  \h= \mathfrak{u}_{2n-1}\oplus \mathfrak{u}_{2n-1}^{\perp}.
\end{equation}
for $\mathfrak{u}_{2n-1}:=\CC \alpha_{2n-1}$. It is worth emphasizing that the statement \eqref{eq: cartan decompose nonisotropic} holds since $\alpha_{2n-1}$ is nonisotropic. If it were, the subspace $\mathfrak{u}_{2n-1}$ is contained in its perpendicular space $\mathfrak{u}_{2n-1}^{\perp}$.

Let $\widehat{\pi}_{2n-1}$ and $\widehat{\pi}^{\perp}_{2n-1}$ be the SUSY Heisenberg vertex algebras for $\mathfrak{u}_{2n-1}$ and $\mathfrak{u}^{\perp}_{2n-1}$, respectively. Also, let $\Big(\widehat{\pi}_{-\frac{\alpha_{2n-1}}{\sqrt{k}}}\Big)_{2n-1}$ be the even Fock representation of $\widehat{\pi}_{2n-1}$ with a highest weight $-\frac{\alpha_{2n-1}}{\sqrt{k}}$. Then, the domain and the codomain of the screening operator $\int e^{-\frac{1}{\sqrt{k}}\int \alpha_{2n-1}(Z)}dZ$ are decomposed as
\begin{equation*}
  \widehat{\pi}\simeq \widehat{\pi}_{2n-1} \otimes \widehat{\pi}_{2n-1}^{\perp}, \quad \widehat{\pi}_{-\frac{\alpha_{2n-1}}{\sqrt{k}}}\simeq \Big(\widehat{\pi}_{-\frac{\alpha_{2n-1}}{\sqrt{k}}}\Big)_{2n-1} \otimes \widehat{\pi}_{2n-1}^{\perp},
\end{equation*}
and it acts as zero on the space $\widehat{\pi}_{2n-1}^{\perp}$ by Lemma \ref{lem: perp decomposition}. Thus, the kernel of the last screening operator is isomorphic to
\begin{equation}
  \ker \int e^{-\frac{1}{\sqrt{k}}\int \alpha_{2n-1}(Z)}dZ \simeq W^{\Psi}(\overline{\mathfrak{osp}(1|2)})\otimes \widehat{\pi}_{2n-1}^{\perp}.
\end{equation}
In the case of $\mathfrak{osp}(2n+2|2n)$, the number of odd simple roots is odd, but one cannot proceed with a similar statement for the last root since all of the simple roots are isotropic. In this case, consider the last three simple roots as a group.
\begin{center}
  \begin{tikzpicture}[cross/.style={path picture={ 
    \draw[black]
  (path picture bounding box.south east) -- (path picture bounding box.north west) (path picture bounding box.south west) -- (path picture bounding box.north east);
  }}, every label/.append style={font=\scriptsize}, >={Classical TikZ Rightarrow[]}]
\node(AA) at (-2,0) {$\mathfrak{osp}(2n+2|2n)$ :};
\node[draw,circle,cross,minimum size=0.3cm,label=above:${\epsilon_1-\delta_1}$](A) at (0,0) {};
\node[draw,circle,cross,minimum size=0.3cm,label=above:${\delta_1-\epsilon_2}$](B) at (1.7,0) {};
\node(C) at (2.7,0) {$\cdots$};
\node(GG) at (2.7,-0.83){$\cdots$};
 \node[draw,circle,cross,minimum size=0.3cm,label=above:${\epsilon_{n-1}-\delta_{n-1}}$](D) at (4.4,0) {};
  \node[draw,circle,cross,minimum size=0.3cm,label=above:${\delta_{n-1}-\epsilon_n}$](E) at (6.1,0) {};
\node[draw,circle,cross,minimum size=0.3cm,label=above:${\epsilon_n-\delta_n}$](F) at (7.8,0) {};
\node[draw,circle,cross,minimum size=0.3cm,label=below:${\delta_n-\epsilon_{n+1}}$](G) at (9.5,-1) {};
\node[draw,circle,cross,minimum size=0.3cm,label=above:$\epsilon_{n+1}+\delta_n$](H) at (9.5,1) {};
\draw[-] (A) -- (B);
\draw[-] (B) -- (C);
\draw[-] (C) -- (D);
\draw[-] (D) -- (E);
\draw[-] (E) -- (F);
\draw[-] (F) -- (G);
\draw[-] (G) -- (H);
\draw[-] (F) -- (H);
\draw [decorate,decoration={brace,amplitude=5pt,mirror,raise=1.5ex}] (0,0) -- (1.7,0) node[midway,yshift=-2em]{\scriptsize{$\mathfrak{osp}(2|2)$}};
\draw [decorate,decoration={brace,amplitude=5pt,mirror,raise=1.5ex}] (4.4,0) -- (6.1,0) node[midway,yshift=-2em]{\scriptsize{$\mathfrak{osp}(2|2)$}};
\draw [decorate,decoration={brace,amplitude=5pt,mirror,raise=1.5ex}] (7.8,-1.5) -- (9.7,-1.5) node[midway,yshift=-2em]{\scriptsize{$\mathfrak{osp}(4|2)$}};
 \end{tikzpicture}
\end{center}
Hence, for the simple roots $\beta_1, \cdots \beta_{2n+1}$ in the above diagram, the analogous statement \eqref{eq: sl example analogous statement} holds for the first $2n-2$ roots, and
\begin{equation}
  \bigcap_{j=2n-1}^{2n+1} \ker \int e^{-\frac{1}{\sqrt{k}}\int \beta_{j}(Z)}dZ\simeq W^{\Psi}(\overline{\mathfrak{osp}(4|2)})\otimes \widehat{\pi}_{2n-1, 2n, 2n+1}^{\perp},
\end{equation}
where $\widehat{\pi}_{2n-1, 2n, 2n+1}^{\perp}$ is the even Fock representation corresponding to $\cap_{j=2n-1}^{2n+1}\mathfrak{u}_{j}^{\perp}$.

\begin{remark}
  As in the $\mathfrak{osp}(4|2)$ case, the principal SUSY W-algebra corresponding to $D(2,1;\alpha)$ cannot be split up anymore. It is because the odd simple root system of $D(2,1;\alpha)$ draws the same Dynkin diagram as in the case of $\mathfrak{osp}(4|2)$, but with constant labels on the line segments between the nodes.
\end{remark}


\appendix
\section{Appendices}
\subsection{Superfield formalism} \label{appendix: superfield formalism} \hfill \\
In this section, we recall another definition of SUSY vertex algebra in terms of a superfield formalism in various physics papers \cite{SS74, FZW74}. For more details, we refer to the first definition of $N_K=1$ SUSY vertex algebra in \cite{HK07}.  Also, we provide here an alternative proof for Theorem \ref{thm: SUSY VA vs. VA} with this definition.
\subsubsection{Supersymmetric vertex algebras via superfield formalism} \hfill \\
We use the tuple $\nabla=(\partial, D)$ of endomorphisms introduced in $\eqref{eq: nabla}$ and the notations for the super variable $Z=(z, \theta)$ from \eqref{eq: notation tuple power} to \eqref{eq: super residue}. In addition, we denote by $\CC[\![Z,Z^{-1}]\!]:=\CC[\![z,z^{-1}]\!]\otimes \CC[\theta]$ the space of formal power series for the super variable $Z$. For any vector superspace $V$, if an element $A(Z)\in \textup{End}_{\CC}(V)\otimes \CC[\![Z,Z^{-1}]\!]$ satisfies
\begin{equation}
  A(Z)v\in V\otimes \CC(\!(Z)\!)
\end{equation}
for any $v\in V$, we call $A(Z)$ an \textit{$\textup{End}_{\CC}(V)$-valued superfield}.
\begin{definition} \label{def: appendix SUSY VA}
  A supersymmetric(SUSY) vertex algebra is a vector superspace $V$ with an even vector $\vac$, a tuple of endomorphisms $\nabla=(\partial, D)$ and a parity preserving linear map
  \begin{equation} \label{eq: superfield}
    \begin{aligned}
    Y(\,\cdot\,, Z):V &\rightarrow \{\textup{End}_{\CC}(V)\text{-valued superfields}\}\\
    A &\mapsto Y(A,Z)=\sum_{\substack{n\in \ZZ\\i=0,1}}Z^{-n-1|1-i}A_{(n|i)} 
    \end{aligned}
  \end{equation}
satisfying the following axioms:
\begin{enumerate}[]
  \item (Vacuum axiom) the superfield corresponding to $\vac$ is $Y(\vac, Z)=\textup{Id}_{V}$ and 
  \begin{equation*}
    Y(A, Z)\!\vac \in \CC[\![Z]\!]\otimes V, \quad Y(A,Z)\!\vac\rvert_{Z=0}=A,
  \end{equation*}
  \item (Translation invariance) the commutator of $D$ and the superfield $Y(A,Z)$ is given by 
  \begin{equation}\label{eq: translation invariance}
    [D, Y(A, Z)]=(\partial_{\theta}-\theta\partial_z)Y(A, Z),
  \end{equation}
  \item (Locality) there exists $n\in \ZZ_{\geq 0}$ such that 
  \begin{equation}
    (z-w)^n[Y(A,Z),Y(B,W)]=0,
  \end{equation} 
\end{enumerate}
for $A,B\in V.$
\end{definition}
Note that the differentials $\partial_z$ and $\partial_{\theta}$ in \eqref{eq: translation invariance} are defined in an obvious way. To be precise, $\partial_z$ is the usual differential with respect to $z$, while $\partial_{\theta}$ is the differential defined by
\begin{equation}
  \partial_{\theta}\Big(\sum_{\substack{n\in \ZZ\\ i=0,1}}Z^{-n-1|1-i}A_{(n|i)}\Big)=\sum_{n\in \ZZ}Z^{-n-1|0}A_{(n|0)}.
\end{equation}
 For any elements $A$ and $B$ in a SUSY vertex algebra $V$, the \textit{normally ordered product} of the two elements is defined by
\begin{equation} \label{eq: appendix normally ordered product definition}
  :\!AB\!:\, =A_{(-1|1)}B.
\end{equation}
We often call it the $(-1|1)$th product of $A$ and $B$. On the other hand, the \textit{normally ordered product} of the corresponding superfields $A(Z):=Y(A,Z)$ and $B(z):=Y(B,Z)$ is defined by
\begin{equation} \label{eq: appendix normally ordered superfields}
  :\!A(Z)B(Z)\!:\, =A(Z)_+B(Z)+(-1)^{p(A)p(B)}B(Z)A(Z)_-,
\end{equation}
where $A(Z)_+$ and $A(Z)_-$ are the creation and annihilation parts of $A(Z)$, that is 
\begin{equation*}
  A(Z)_+=\sum_{\substack{n<0\\i=0,1}}Z^{-n-1|1-i}A_{(n|i)}, \quad A(Z)_-=\sum_{\substack{n\geq 0\\i=0,1}}Z^{-n-1|1-i}A_{(n|i)}.
\end{equation*}
Note here that the normally ordered product of two superfields is also a superfield.

The $(n|i)$th products for positive $n$ of a SUSY vertex algebra can be understood by so-called the SUSY version of the decomposition lemma. To describe the lemma, let us introduce the SUSY delta distribution 
\begin{equation} \label{eq: delta distribution}
  \delta(Z,W)=(\theta-\eta)\sum_{n\in \ZZ}z^n w^{-n-1},
\end{equation}
where $Z=(z, \theta)$ and $W=(w, \eta)$ are super variables. In addition, a pair $(a(Z), b(Z))$ of two $\textup{End}_{\CC}(V)$-valued superfields is called a \textit{local pair} if $(z-w)^n[a(Z),b(W)]=0$
for some $n\in \ZZ_{\geq 0}$. 

\begin{lemma}[Decomposition Lemma \cite{HK07}] \label{lem: decomposition lemma}
  Let $\big(a(Z), b(Z)\big)$ be a local pair. Then the commutator of the two superfields is written as a finite sum
  \begin{equation} \label{eq: decomposition lemma}
    [a(Z),b(W)]=\sum_{\substack{n\in \ZZ_{\geq 0}\\i=0,1}}D_W^{(n|i)}\delta(Z,W)\ C_{n|i}(W),
  \end{equation}
  where $C_{n|i}(W)=\int (Z-W)^{n|i}[a(Z),b(W)] dZ$ and $D_W^{(n|i)}=(-1)^i \frac{1}{n!}\partial_w^n (\partial_\eta+\eta\partial_w)^i.$ \qed
\end{lemma}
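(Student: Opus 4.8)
The plan is to imitate the proof of the non-supersymmetric decomposition theorem for local pairs of formal distributions \cite[Chapter~2]{Kac98}, treating the odd variables of $Z=(z,\theta)$ and $W=(w,\eta)$ as extra bookkeeping and reducing the statement to the classical one. First I would assemble the formal super-calculus attached to $\delta(Z,W)$ from \eqref{eq: delta distribution}: that $(z-w)^{N}$ annihilates it, that $(z-w)\,\partial_w^{(n)}\delta(z-w)=\partial_w^{(n-1)}\delta(z-w)$ with $\partial_w^{(n)}:=\tfrac1{n!}\partial_w^{n}$, that $(\partial_\eta+\eta\partial_w)^{2}=\partial_w$ on functions of $W$, and that $\int\,dZ$ is the super-residue of \eqref{eq: super residue}. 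The target of this preliminary step is the super-residue orthogonality
\begin{equation*}
  \int (Z-W)^{m|j}\,D_W^{(n|i)}\delta(Z,W)\,dZ=\delta_{m,n}\,\delta_{i,j},
\end{equation*}
which is the analogue of the classical $\int (z-w)^{m}\partial_w^{(n)}\delta(z-w)\,dz=\delta_{m,n}$. Here it is essential that $Z-W$ is the super-invariant difference, whose $z$-component is $z-w-\theta\eta$ rather than merely $z-w$: expanding $(z-w-\theta\eta)^{n}=(z-w)^{n}-n(z-w)^{n-1}\theta\eta$ is exactly what cancels the off-diagonal contribution and makes the matrix a Kronecker delta. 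The verification is a direct computation using the expansions $D_W^{(n|0)}\delta(Z,W)=(\theta-\eta)\partial_w^{(n)}\delta(z-w)$ and $D_W^{(n|1)}\delta(Z,W)=\partial_w^{(n)}\delta(z-w)+(n+1)\,\theta\eta\,\partial_w^{(n+1)}\delta(z-w)$.

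Next I would expand the commutator itself. By \eqref{eq: superfield} a general $\textup{End}_{\CC}(V)$-valued superfield has the shape $a(Z)=a_1(z)+\theta\,a_0(z)$, so $[a(Z),b(W)]$ decomposes uniquely as $c_{00}(z,w)+\theta\,c_{10}(z,w)+\eta\,c_{01}(z,w)+\theta\eta\,c_{11}(z,w)$ with the four $c_{\epsilon_1\epsilon_2}$ ordinary $\textup{End}_{\CC}(V)$-valued formal distributions in $z,w$, and locality says precisely that $(z-w)^{N}c_{\epsilon_1\epsilon_2}=0$ for each. The classical decomposition theorem then writes each $c_{\epsilon_1\epsilon_2}$ as a finite sum $\sum_{j}\partial_w^{(j)}\delta(z-w)\,\gamma^{j}_{\epsilon_1\epsilon_2}(w)$, so $[a(Z),b(W)]$ is a finite combination of the $\theta^{\epsilon_1}\eta^{\epsilon_2}\partial_w^{(j)}\delta(z-w)$. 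One then checks, using the two expansions above together with $D_W^{(n|0)}\delta(Z,W)\cdot\eta=\theta\eta\,\partial_w^{(n)}\delta(z-w)$ and $D_W^{(n|1)}\delta(Z,W)\cdot\eta=\eta\,\partial_w^{(n)}\delta(z-w)$, that the family $\{D_W^{(n|i)}\delta(Z,W):n\ge0,\ i=0,1\}$, allowed to be multiplied on the right by $W$-superfields (elements of $\textup{End}_{\CC}(V)[\![w^{\pm 1}]\!]\otimes\CC[\eta]$), spans exactly the space of super-distributions annihilated by a power of $(z-w)$. Hence $[a(Z),b(W)]=\sum_{n\ge0,\ i=0,1}D_W^{(n|i)}\delta(Z,W)\,C_{n|i}(W)$ for some superfields $C_{n|i}(W)$, the sum being finite.

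Finally, applying $\int(Z-W)^{m|j}\,\bullet\;dZ$ to this identity and invoking the orthogonality relation of the first step yields at once $C_{m|j}(W)=\int(Z-W)^{m|j}[a(Z),b(W)]\,dZ$ together with the uniqueness of the $C_{n|i}$. I expect the only genuine labour, and the place most prone to error, to be the Koszul-sign and ordering bookkeeping for the odd data: fixing conventions for $\partial_\eta$, for the super-residue, and --- most importantly --- for the super-invariant difference $Z-W$, and then tracking the signs when $\theta$ and $\eta$ are commuted past operators of prescribed parity, so that the orthogonality matrix comes out diagonal and the coefficient formula appears in exactly the stated form.
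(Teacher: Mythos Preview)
The paper does not supply its own proof of this lemma: it is quoted from \cite{HK07} and closed with a bare \qed. There is therefore nothing in the paper to compare your argument against.

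That said, your strategy is sound and is essentially the proof one finds in \cite{HK07}. Reducing to the classical decomposition theorem by expanding in $\theta,\eta$, and then using the super-residue orthogonality
\[
  \int (Z-W)^{m|j}\,D_W^{(n|i)}\delta(Z,W)\,dZ=\delta_{m,n}\,\delta_{i,j}
\]
to read off the coefficients, is exactly the intended route. Your explicit expansions $D_W^{(n|0)}\delta(Z,W)=(\theta-\eta)\partial_w^{(n)}\delta(z-w)$ and $D_W^{(n|1)}\delta(Z,W)=\partial_w^{(n)}\delta(z-w)+(n+1)\theta\eta\,\partial_w^{(n+1)}\delta(z-w)$ are correct, and you are right that the shift $z-w-\theta\eta$ in the super-invariant difference is precisely what kills the off-diagonal term in the $(m|0)$ versus $(n|1)$ pairing. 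The only caution is the one you already flag: the sign conventions for $\partial_\eta$, for the super-residue, and for commuting odd quantities must be fixed consistently with those of the paper, or the formula for $C_{n|i}(W)$ will acquire stray signs.
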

For a SUSY vertex algebra $V$, the locality axiom in Definition \ref{def: appendix SUSY VA} means that any pair of two superfields $(Y(A, Z), Y(B, Z))$ for $A, B\in V$ is local. Hence, as a result of Lemma \ref{lem: decomposition lemma}, we get the decomposition of the commutator as
\begin{equation} \label{eq: decomposition lemma applied}
  [Y(A,Z),Y(B,W)]=\sum_{\substack{n\in \ZZ_{\geq 0}\\i=0,1}}D_W^{(n|i)}\delta(Z,W)\ C_{n|i}(W)
\end{equation}
for any $A,B$ in $V$.
We conclude this section by stating the key properties of the superfields in a SUSY vertex algebra.

\begin{proposition}[\cite{HK07}] \label{prop: superfield property}
  Let $V$ be a SUSY vertex algebra. For any $A,B\in V$ and the corresponding superfields $Y(A,Z)$ and $Y(B, Z)$, the following statements hold:
  \begin{enumerate}
    \item $Y(:\!AB\!:,Z)=\, :\!Y(A,Z)Y(B,Z)\!:$,
    \item $Y(DA,Z)=(\partial_{\theta}+\theta \partial_z)Y(A,Z)$,
    \item $C_{n|i}(W)=Y(A_{(n|i)}B,W)$ for $C_{n|i}(W)$ in \eqref{eq: decomposition lemma applied}. \qed
  \end{enumerate}
\end{proposition}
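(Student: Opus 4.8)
The plan is to work directly from the superfield axioms in Definition~\ref{def: appendix SUSY VA} and to reduce all three identities to two engines: the \emph{creation formula} and a supersymmetric Goddard-type \emph{uniqueness lemma}. First I would record the creation formula. By the vacuum axiom, $Y(A,Z)\vac = F_A(Z)\in\CC[\![Z]\!]\otimes V$ with $F_A(Z)|_{Z=0}=A$; writing $F_A(Z)=f_A(z)+\theta g_A(z)$ one reads off $f_A(z)=\sum_{n<0}z^{-n-1}A_{(n|1)}\vac$ and $g_A(z)=\sum_{n<0}z^{-n-1}A_{(n|0)}\vac$ from \eqref{eq: superfield}. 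Applying translation invariance \eqref{eq: translation invariance} to $\vac$ and using $D\vac=0$ gives $DF_A(Z)=(\partial_\theta-\theta\partial_z)F_A(Z)$, which yields the component relations $g_A=Df_A$ and $\partial_z f_A=Dg_A$; equivalently $A_{(n|0)}\vac=(DA)_{(n|1)}\vac$ together with the usual translation covariance on the vacuum. Next I would state and prove the uniqueness lemma: if $b(Z)$ is an $\mathrm{End}_\CC(V)$-valued superfield local with respect to $Y(C,W)$ for every $C\in V$ and satisfying $b(Z)\vac=Y(A,Z)\vac$, then $b(Z)=Y(A,Z)$. The argument is the standard Goddard one adapted to the super variable: pick $N$ with $(z-w)^N[b(Z),Y(C,W)]=0$, apply both sides to $\vac$, push $Y(C,W)\vac$ to a creation series, substitute $b(Z)\vac=Y(A,Z)\vac$, and let $W\to 0$.

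With these two engines, identity (2) is immediate. I would set $b(Z)=(\partial_\theta+\theta\partial_z)Y(A,Z)$; since this is a constant-coefficient differential operator applied to a field local with everything, $b(Z)$ is again local with every $Y(C,W)$. Using the creation formula, $b(Z)\vac=(\partial_\theta+\theta\partial_z)F_A(Z)=g_A(z)+\theta\,\partial_z f_A(z)$, and the component relations plus $D^2=\partial$ identify this with $Y(DA,Z)\vac$ (its $\theta$-free part $g_A$ is the creation series of $DA$, and its $\theta$-part $\partial_z f_A$ is the creation series of $\partial A=D(DA)$). The lemma then forces $b(Z)=Y(DA,Z)$. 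For (1) I would take $b(Z)=\,:\!Y(A,Z)Y(B,Z)\!:$, which is again a superfield and is local with all $Y(C,W)$ by the super Dong lemma. Since the annihilation part kills the vacuum, $Y(A,Z)_-\vac=0$, one gets $b(Z)\vac=Y(A,Z)_+F_B(Z)$, whose value at $Z=0$ is $A_{(-1|1)}B=\,:\!AB\!:$ and which agrees with $Y(:\!AB\!:,Z)\vac$; uniqueness yields (1).

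For (3) I would start from the decomposition \eqref{eq: decomposition lemma applied}, so that $C_{n|i}(W)=\int(Z-W)^{n|i}[Y(A,Z),Y(B,W)]\,dZ$ is a superfield, local with respect to all $Y(C,X)$ by the super Dong lemma applied to the coefficients of a commutator of local fields. It remains to compute $C_{n|i}(W)\vac$ and match it with $Y(A_{(n|i)}B,W)\vac$. Writing the commutator on $\vac$ as $Y(A,Z)Y(B,W)\vac-(-1)^{p(A)p(B)}Y(B,W)Y(A,Z)\vac$, both factors become creation series, the second term contributes no pole in $z-w$, and extracting the super-residue $\int(Z-W)^{n|i}(\cdots)\,dZ$ leaves the $\big(A_{(n|i)}B\big)$-creation data at $W$; evaluating at $W=0$ gives $A_{(n|i)}B$, so $C_{n|i}(W)\vac=Y(A_{(n|i)}B,W)\vac$, and the uniqueness lemma finishes (3).

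The hard part will be the super-sign and super-residue bookkeeping in step (3): one must expand $\delta(Z,W)=(\theta-\eta)\sum_n z^nw^{-n-1}$ and the operators $D_W^{(n|i)}=(-1)^i\frac{1}{n!}\partial_w^n(\partial_\eta+\eta\partial_w)^i$ into $\theta,\eta$-components, check that the four scalar commutator formulas among the component fields $Y^0,Y^1$ of $A$ and $B$ assemble correctly, and confirm that the super-residue $\int(Z-W)^{n|i}\,dZ$ selects exactly the mode $A_{(n|i)}B$ with the right parity and sign. A secondary technical point is verifying the super Dong lemma (mutual locality of normally ordered products and of OPE coefficients) and fixing once and for all the sign convention for the action of the odd operator $D$ on $\theta$-monomials, namely $D(\theta v)=-\theta\,Dv$, so that it is consistent with \eqref{eq: translation invariance}.
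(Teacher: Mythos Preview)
The paper does not give its own proof of this proposition: it is stated with a \qed and attributed to \cite{HK07}, so there is no in-paper argument to compare against. Your plan is precisely the standard route (creation formula plus a super Goddard uniqueness lemma), which is how these identities are proved in \cite{HK07} by adapting the classical vertex-algebra arguments of \cite{Kac98}, so your approach is correct and the expected one.

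One small point to tighten in (1): your stated uniqueness lemma requires the full equality $b(Z)\vac = Y(:\!AB\!:,Z)\vac$, not just agreement at $Z=0$, yet you only check the value at $Z=0$. To close this, either verify directly that $:\!Y(A,Z)Y(B,Z)\!:$ satisfies the translation invariance $[D,\,\cdot\,]=(\partial_\theta-\theta\partial_z)(\cdot)$ (which, together with $b(Z)\vac\in V[\![Z]\!]$ and the initial value, forces $b(Z)\vac=e^{z\partial}(1+\theta D)(:\!AB\!:)=Y(:\!AB\!:,Z)\vac$), or invoke the stronger form of Goddard uniqueness that takes translation covariance plus the value at $Z=0$ as hypotheses. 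Either fix is routine.
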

It is worth noting that in (2) of Proposition \ref{prop: superfield property}, the differential on the RHS is $\partial_{\theta}+\theta \partial_z$, not the differential $\partial_{\theta}-\theta \partial_z$ in the translation invariance axiom \eqref{eq: translation invariance}. However, both of the two differentials are square roots of $\partial_z$, i.e.,
\begin{equation} \label{eq: appendix D_Z square}
  (\partial_{\theta}+\theta \partial_z)^2=\partial_z=(\partial_{\theta}-\theta \partial_z)^2.
\end{equation}


\begin{remark} \label{rmk: SUSY VA defs equivalence}
  If we take the supersymmetric formal Fourier transformation to the commutator \eqref{eq: decomposition lemma applied} of the superfields, we get the $\Lambda$-bracket $[A{}_{\Lambda}B]$ in Definition \ref{def: SUSY VA}. Furthermore, the coefficients in \eqref{eq: Lambda bracket coefficients} coincide with the $(n|i)$th products in \eqref{eq: superfield}. We refer to \cite{HK07} for the proof of the equivalence of the two definitions of SUSY vertex algebras.
\end{remark}

\subsubsection{Another proof of Theorem \ref{thm: SUSY VA vs. VA}} \hfill \\
We choose Definition \ref{def: appendix SUSY VA} for SUSY vertex algebras, and prove Theorem \ref{thm: SUSY VA vs. VA} using the superfields. Let $V$ be a SUSY vertex algebra with a map $Y(\cdot, Z)$. Erasing the $\theta$-containing part of each superfield, define $Y(\cdot, z): V\rightarrow \{\textup{End}_{\CC}(V)\textup{-valued fields}\}$ by
\begin{equation} \label{eq: appendix VA structure of SUSY VA}
  \begin{aligned}
  Y(\,\cdot\,,z): V&\rightarrow \CC[\![z,z^{-1}]\!]\otimes \textup{End}_{\CC}(V),\\ A&\mapsto Y(A,z)=\sum_{n\in \ZZ}z^{-n-1}A_{(n)}:=\sum_{n\in \ZZ}z^{-n-1}A_{(n|1)}
  \end{aligned}
\end{equation}
for $A_{(n|1)}$'s in \eqref{eq: superfield}. The vacuum and locality axioms for vertex algebras clearly hold by the definitions and thus it is enough to check the translation invariance. Note that the translation invariance \eqref{eq: translation invariance} for SUSY vertex algebra and the property (2) of Proposition \ref{prop: superfield property} imply that
\begin{equation} \label{eq: commutator nth product}
  [D, A_{(n|1)}]=A_{(n|0)}=(DA)_{(n|1)}, \quad [D, A_{(n|0)}]=-n A_{(n-1|1)}=(DA)_{(n|0)} 
\end{equation}
for $A\in V$ and $n\in \ZZ$. Due to the relations in \eqref{eq: commutator nth product}, we get
\begin{equation*}
  [\partial, A_{(n)}]=\frac{1}{2}[[D,D],A_{(n|1)}]=[D,[D,A_{(n|1)}]]=[D, A_{(n|0)}]=-nA_{(n-1|1)}=-nA_{(n-1)},
\end{equation*}
 which is equivalent to the translation invariance for vertex algebras. Thus, $V$ is a vertex algebra.

  Conversely, assume that $V$ is a vertex algebra with a state-field correspondence $Y(\cdot, z)$, and is equipped with an odd derivation $D$ satisfying $D^2=\partial$. Since the odd derivation $D$ satisfies $[D, A_{(n)}]=(DA)_{(n)}$ for any $A\in V$ and $n\in \ZZ$, we get
  \begin{equation} \label{eq: D commutator field version}
    [D,Y(A,z)]=Y(DA,z).
  \end{equation}
  Define $Y(\cdot,Z)$ to be the parity-preserving linear map
  \begin{equation} \label{eq: induced susy state-field}
    \begin{aligned}
    Y(\,\cdot\,,Z):V&\rightarrow \CC[\![Z,Z^{-1}]\!]\otimes \textup{End}_{\CC}(V)\\
    A&\mapsto Y(A,z)+\theta Y(DA,z).
    \end{aligned}
  \end{equation}
Then the range of the map \eqref{eq: induced susy state-field} is automatically contained in the space of $\textup{End}_{\CC}(V)$-valued superfields, and the superfields satisfy the vacuum and locality axioms for $V$ to be a SUSY vertex algebra. The translation invariance \eqref{eq: translation invariance} can be checked using \eqref{eq: D commutator field version} as follows:
\begin{equation}\label{eq:SUSY vs va}
  \begin{aligned}
    [D, Y(A,Z)]&=[D,Y(A,z)]-\theta[D,Y(DA,z)]\\
    &=Y(DA,z)-\theta Y(\partial A, z)\\
    &=Y(DA,z)-\theta \partial_z Y(A, z)=(\partial_{\theta}-\theta \partial_z)Y(A,Z).
  \end{aligned}
\end{equation}
  In \eqref{eq:SUSY vs va}, the first equality follows from $[D, \theta]=0$, and the third equality is a result of the translation invariance of the vertex algebra. We remark here that the choices of the maps \eqref{eq: appendix VA structure of SUSY VA} and \eqref{eq: induced susy state-field} match with the choices of $\Lambda$ and $\lambda$-brackets in the proof of Theorem \ref{thm: SUSY VA vs. VA} in Section \ref{sec: SUSY VA vs. VA}.

\subsection{Proof of Proposition \ref{prop: the first total complex}} \label{appendix: proof total complex} \hfill \\
We give a proof of Proposition \ref{prop: the first total complex} in this section, using the notations in Section \ref{sec: the first total complex}. For our convenience, denote $d^{\infty}={d_{\textup{st}}}_{(0|0)}$. As we have seen in Section \ref{sec: classical SUSY W-algebra}, the space $\widetilde{C}^{\infty}$ consists of polynomials with variables in
\begin{gather*}
  T:=\{\partial^j J_{\bar{u}^{\alpha}},\, \partial^j D J_{\bar{u}^{\alpha}}\}_{\alpha\in I_+, j\geq 0} \cup\{\partial^j J_{\bar{a}_m},\, \partial^j D J_{\bar{a}_m}\}_{m\in M, j\geq 0}\cup \{\partial^j \phi^{\alpha},\, \partial^j D\phi^{\alpha}\}_{\alpha\in I_+, j\geq 0},
\end{gather*}
where $\{a_m\}_{n\in M}$ is a basis of $\g_0$. For a nonnegative integer $p$, let us consider the filtration 
\begin{equation} \label{eq: proof polynomial filtration}
  F_p \widetilde{C}^{\infty}=\textup{Span}_{\CC}\{X_1 X_2\cdots X_r\,|\, X_i\in T, r\geq p\}\subset \widetilde{C}^{\infty}.
\end{equation}
On the space $\widetilde{C}^{\infty}$, the action of $d^{\infty}$ is induced from \eqref{eq: classical SUSY differential}. To be explicit,
\begin{equation}\label{eq: proof classical SUSY differential}
  \begin{aligned}
      \{d_{\textup{st}}{}_{\Lambda}{J}_{\bar{a}}\}=&\sum_{\beta\in I_+}(-1)^{(p(a)+1)p(\beta)}{\phi}^{\beta} {J}_{\overline{\pi_{\leq 0}[u_{\beta},a]}}-\sum_{\beta\in I_+}(-1)^{p(\beta)}(D+\chi)(u_{\beta}|a){\phi}^{\beta},\\
      \{d_{\textup{st}}{}_{\Lambda}{\phi}^{\alpha}\}=&\frac{1}{2} \sum_{\beta,\gamma\in I_+}(-1)^{(p(\alpha)+1)p(\beta)}([u_{\beta},u^{\alpha}]|u_{\gamma}){\phi}^{\beta}{\phi}^{\gamma}
    \end{aligned}
  \end{equation}
and $d^{\infty}$ picks up the 
 terms without $\lambda$ or $\chi$. One can see from \eqref{eq: proof classical SUSY differential} that $d^{\infty}$ increases the degree by at most $1$. Hence, concerning the spectral sequence induced from the filtration \eqref{eq: proof polynomial filtration}, the differential on the $r$th page is $0$ for $r\geq 2$. It implies that the induced total complex of the spectral sequence made up from this filtration is convergent to the original cohomology $E_1^{\infty}=H(\widetilde{C}^{\infty}, d^{\infty})$. Denote the differential on the $0$th page by $d^{\infty}_0$. Then for the elements in the set $T$, the images by $d_0^\infty$ are
\begin{equation} \label{eq: proof d_0, C_1, C_2}
  \begin{aligned}
    d^{\infty}_0\, (\partial^j J_{\bar{u}^{\alpha}})= - \partial^jD\phi^{\alpha},\quad  d^{\infty}_0\, (\partial^j D J_{\bar{u}^{\alpha}})= \partial^{j+1}\phi^{\alpha}
  \end{aligned}
\end{equation}
for $j\in \mathbb{Z}_{\geq 0}$ and $\alpha\in I_+$, while the other elements are killed by $d^{\infty}_0$. The equalities \eqref{eq: proof d_0, C_1, C_2} can be induced from \eqref{eq: proof classical SUSY differential} by erasing the polynomials of degree$\geq 2$ on the RHS. 

Inspired by \eqref{eq: proof d_0, C_1, C_2}, we decompose the complex $\widetilde{C}^{\infty}$ into four parts, where one part does not affect the other parts in the process of taking the cohomology with $d^{\infty}_0$. For each $\alpha\in I_+$ and $j\geq 0$, let $\mathcal{V}(\bar{\g}_0)$, $\widetilde{C}_+$, $\widetilde{C}_{\alpha, j}$ and $D\widetilde{C}_{\alpha, j}$ be the subspaces of $\widetilde{C}^{\infty}$ given by the spaces of polynomials with variables in
\begin{equation}\label{eq: proof decompose to use kunneth}
\begin{gathered}
  \{\partial^i J_{\bar{a}_m}, \partial^i DJ_{\bar{a}_m}\,|\,m\in M, i\geq 0\},\ \{\phi^{\beta}\,|\,\beta\in I_+\},\\
   \{\partial^j J_{\bar{u}^{\alpha}},\partial^j D\phi^{\alpha}\},\ \{\partial^j DJ_{\bar{u}^{\alpha}},\partial^{j+1}\phi^{\alpha}\},
\end{gathered}
\end{equation}
respectively. Then, the complex $\widetilde{C}^{\infty}$ decomposes as
\begin{equation*}
  \widetilde{C}^{\infty}=\mathcal{V}(\bar{\g}_0) \otimes \widetilde{C}_+ \otimes \bigotimes_{\substack{\alpha\in I_+\\ j\geq 0}}\widetilde{C}_{\alpha,j}\otimes \bigotimes_{\substack{\alpha\in I_+\\ j\geq 0}}D\widetilde{C}_{\alpha,j}.
\end{equation*}
Now, take the cohomology with the use of the differential $d^{\infty}_0$ to get the first total complex $\widetilde{A}_1^{\infty}$ of the spectral sequence. As a result of the K\"{u}nneth formula, we have 
\begin{equation} \label{eq: proof kunneth applied}
  \widetilde{A}_1^{\infty}=\mathcal{V}(\bar{\g}_0)\otimes \widetilde{C}_+ \otimes \bigotimes_{\substack{\alpha\in I_+\\ j\geq 0}}H(\widetilde{C}_{\alpha,j}, d^{\infty}_0) \otimes \bigotimes_{\substack{\alpha\in I_+\\ j\geq 0}}H(D\widetilde{C}_{\alpha,j}, d^{\infty}_0).
\end{equation}
Moreover, the last two cohomologies in \eqref{eq: proof kunneth applied} are trivial due to the following lemma.
\begin{lemma} \label{lem: proof trivial cohomology}
  For $\alpha\in I_+$ and $j\geq 0$, let $\widetilde{C}_{\alpha,j}$ and $D\widetilde{C}_{\alpha, j}$ be the subspaces of $\widetilde{C}$ defined in \eqref{eq: proof decompose to use kunneth}. For the differential $d^{\infty}_0$ on $\widetilde{C}^{\infty}$ given by \eqref{eq: proof d_0, C_1, C_2}, each subspace is closed under $d^{\infty}_0$. Furthermore, cohomologies with respect to $d^{\infty}_0$ are
  \begin{equation} \label{eq: rom 1,2 cohomology}
    H(\widetilde{C}_{\alpha, j},d^{\infty}_0)=\CC, \quad H(D\widetilde{C}_{\alpha, j}, d^{\infty}_0)=\CC.
  \end{equation}
\end{lemma}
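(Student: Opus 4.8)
\emph{Strategy.} The plan is to recognize each of $\widetilde{C}_{\alpha,j}$ and $D\widetilde{C}_{\alpha,j}$ as a free supercommutative algebra on two generators of opposite parity carrying a Koszul-type differential, and then to invoke (or exhibit by an explicit contracting homotopy) the standard acyclicity of such a complex.

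First I would check closedness. Since $\widetilde{C}^{\infty}$ is the free supercommutative polynomial algebra on the set $T$ and the filtration \eqref{eq: proof polynomial filtration} is the filtration by total polynomial degree, the zeroth-page differential $d^{\infty}_0$ is an odd derivation of $\widetilde{C}^{\infty}$, hence is determined by its values \eqref{eq: proof d_0, C_1, C_2} on the generators in $T$. Reading off those values, $d^{\infty}_0$ carries the two generators $\partial^j J_{\bar{u}^{\alpha}}$, $\partial^j D\phi^{\alpha}$ of $\widetilde{C}_{\alpha,j}$ back into $\widetilde{C}_{\alpha,j}$ (the first to $-\partial^j D\phi^{\alpha}$, the second to $0$), and likewise carries the generators $\partial^j DJ_{\bar{u}^{\alpha}}$, $\partial^{j+1}\phi^{\alpha}$ of $D\widetilde{C}_{\alpha,j}$ into $D\widetilde{C}_{\alpha,j}$ (to $\partial^{j+1}\phi^{\alpha}$ and $0$). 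By the Leibniz rule both subalgebras are therefore $d^{\infty}_0$-stable; this is also exactly the compatibility of $d^{\infty}_0$ with the tensor-factor decomposition of $\widetilde{C}^{\infty}$ preceding \eqref{eq: proof kunneth applied} that makes the K\"unneth formula there legitimate.

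Next I would compute the cohomology. Fix $\alpha$ and $j$ and let $x$, $\xi$ denote the two generators of $\widetilde{C}_{\alpha,j}$; which of them is even depends on the parity of the root $\alpha$, but in all cases the oddness of $d^{\infty}_0$ forces $x$ and $\xi$ to have opposite parity, with $d^{\infty}_0 x = \pm\xi$ and $d^{\infty}_0\xi = 0$ (the latter being part of the content of \eqref{eq: proof d_0, C_1, C_2}). When $\xi$ is the odd generator, $\widetilde{C}_{\alpha,j}\cong\CC[x]\otimes\CC[\xi]$ with $x$ even, $\xi$ odd, and this is, up to sign, the de Rham complex of the affine line, whose cohomology is $\CC$; when $x$ is the odd generator one has instead $\widetilde{C}_{\alpha,j}\cong\CC[x]\otimes\CC[\xi]$ with $x$ odd, $\xi$ even, so that $d^{\infty}_0(x\xi^n)=\pm\xi^{n+1}$, and again the cohomology is $\CC$. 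Both cases are covered uniformly by the contracting homotopy $h$ that returns each monomial in the image of $d^{\infty}_0$ to its obvious preimage with the reciprocal coefficient (for instance $h(x^m\xi)=\mp\tfrac{1}{m+1}x^{m+1}$ in the first case) and kills the complementary monomials, so that $\mathrm{id}-(d^{\infty}_0 h + h d^{\infty}_0)$ is the projection onto the constant subalgebra $\CC\subset\widetilde{C}_{\alpha,j}$. Running the identical argument with $(x,\xi)$ replaced by $(\partial^j DJ_{\bar{u}^{\alpha}},\ \partial^{j+1}\phi^{\alpha})$ gives $H(D\widetilde{C}_{\alpha,j},d^{\infty}_0)=\CC$.

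The computation itself is routine; I expect the only delicate points to be the super-sign bookkeeping---in particular confirming that the two generators of each factor really have opposite parity, so that the factor is the acyclic two-variable Koszul complex rather than something larger---and checking that $d^{\infty}_0$, being the zeroth-page differential of the degree filtration, acts on generators by exactly \eqref{eq: proof d_0, C_1, C_2}, which is where the degree count in \eqref{eq: proof classical SUSY differential} enters. Since the lemma is used only to trivialize the last two families of tensor factors in \eqref{eq: proof kunneth applied}, the precise sign in $d^{\infty}_0 x = \pm\xi$ plays no role.
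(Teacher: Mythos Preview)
Your proposal is correct and follows essentially the same approach as the paper: both identify each factor as the free supercommutative algebra on two generators of opposite parity with the Koszul differential $x\mapsto\pm\xi$, $\xi\mapsto 0$, and conclude acyclicity away from the constants. The paper carries this out by writing down an explicit monomial basis in each parity case (e.g.\ $\{(\partial^j D\phi^{\alpha})^r,\ \partial^j J_{\bar{u}^{\alpha}}(\partial^j D\phi^{\alpha})^r\}$ when $p(\alpha)=0$) and computing $d^{\infty}_0$ on it directly, whereas you phrase the same computation as the de~Rham complex of the affine line (or its parity-reversed analogue) together with a contracting homotopy; the content is identical.
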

\begin{proof}
First, we prove the statement for the subspace $\widetilde{C}_{\alpha, j}$. Since the space is supercommutative, $X^2=0$ for any odd element $X$ in $\widetilde{C}_{\alpha, j}$. Note that the parities of $\partial^j J_{\bar{u}^{\alpha}}$ and $\partial^j D \phi^{\alpha}$ are different. Therefore, the space $\widetilde{C}_{\alpha, j}$ is described as
\begin{equation} \label{eq: tild C elements form}
  \widetilde{C}_{\alpha, j}=\left\{
\begin{array}{ll}
  \textup{Span}_{\CC}\left\{\left(\partial^{j}D\phi^{\alpha}\right)^r, \partial^jJ_{\bar{u}^{\alpha}}\left(\partial^{j}D\phi^{\alpha}\right)^r\,\big\vert\,r\in \ZZ_{\geq 0}\right\} &\text{ if }p(\alpha)=0,\\
  \textup{Span}_{\CC}\left\{\left(\partial^{j} J_{\bar{u}^{\alpha}}\right)^r, \left(\partial^{j} J_{\bar{u}^{\alpha}}\right)^r \partial^{j}D\phi^{\alpha}\,\big\vert\, r\in \ZZ_{\geq 0}\right\} &\text{ if }p(\alpha)=1.
\end{array}\right.
\end{equation}
Notice that we have an explicit action $\eqref{eq: proof d_0, C_1, C_2}$ of $d^{\infty}_0$ on the degree $1$ polynomial, and $d^{\infty}_0$ acts as a differential on the polynomial whose degree is bigger than $1$. Thus, when $p(\alpha)=0$,
\begin{equation*}
  d^{\infty}_0\left(\left(\partial^{j} D \phi^{\alpha}\right)^r\right)=0,\quad  d^{\infty}_0 \left(\partial^j J_{\bar{u}^{\alpha}}\left(\partial^{j} D \phi^{\alpha}\right)^r\right)=-\left(\partial^{j} D \phi^{\alpha}\right)^{r+1}
\end{equation*}
for each $r\geq 0$, which implies that $H(\widetilde{C}^{\alpha, j},d^{\infty}_0)=\CC$. Similarly, when $p(\alpha)=1$,
\begin{equation*}
  d^{\infty}_0\left(\left(\partial^{j+1} J_{\bar{u}^{\alpha}}\right)^r \partial^{j}D\phi^{\alpha}\right)=0,\quad  d^{\infty}_0 \left(\left(\partial^j J_{\bar{u}^{\alpha}}\right)^r\right)=-r\left(\partial^j J_{\bar{u}^{\alpha}}\right)^{r-1}\partial^j D \phi^{\alpha}
\end{equation*}
for each $r\geq 0$, which also leads to the conclusion $H(\widetilde{C}_{\alpha, j},d^{\infty}_0)=\CC$. In case of $D\widetilde{C}_{\alpha, j}$, one can show the statement analogously with the description of $D\widetilde{C}_{\alpha, j}$ given by
\begin{equation}
  D\widetilde{C}_{\alpha, j}=\left\{
\begin{array}{ll}
  \textup{Span}_{\CC}\left\{\left(\partial^{j} DJ_{\bar{u}^{\alpha}}\right)^r, \left(\partial^{j} DJ_{\bar{u}^{\alpha}}\right)^r \partial^{j+1}\phi^{\alpha}\,\big\vert\, r\in \ZZ_{\geq 0}\right\} &\text{ if }p(\alpha)=0,\\
  \textup{Span}_{\CC}\left\{\left(\partial^{j+1}\phi^{\alpha}\right)^r, \partial^jDJ_{\bar{u}^{\alpha}}\left(\partial^{j+1}\phi^{\alpha}\right)^r\,\big\vert\, r\in \ZZ_{\geq 0}\right\} &\text{ if }p(\alpha)=1.
\end{array}\right.
\end{equation}
\end{proof}
Combining \eqref{eq: proof kunneth applied} with Lemma \ref{lem: proof trivial cohomology}, we have
\begin{equation} \label{eq: proof tild A_1}
  \widetilde{A}^{\infty}_1\simeq \mathcal{V}(\bar{\g}_0)\otimes \widetilde{C}_+
\end{equation}
for the subspace $\mathcal{V}(\bar{\g}_0)$ and $\widetilde{C}_+$ of $\widetilde{C}$ in \eqref{eq: proof decompose to use kunneth}. Denote the differential on the first page of the spectral sequence by $d^{\infty}_1$, then
\begin{equation}
  E^{\infty}_1=H(\widetilde{A}^{\infty}_1,d^{\infty}_1).
\end{equation}
Since we only need the description of the subspace $E^{\infty}_1(\leq 1)$ of charge at most $1$, it is enough to consider the subspace $\widetilde{A}^{\infty}_1(\leq 1 )$ of charge at most $1$ in \eqref{eq: proof tild A_1}. As a direct consequence of \eqref{eq: proof classical SUSY differential} and Lemma \ref{lem: conditions with Lie brackets}, we get that $d_1^{\infty}(J_{\bar{a}})=0$ for any $a\in \g_0$, and $d_1^{\infty}(\phi^{\alpha})=0$ if and only if $\alpha$ is indecomposable in $I_+$. Therefore, we get the desired statement as follows:
\begin{equation*}
  E^{\infty}_1(\leq 1)\simeq H(\widetilde{A}^{\infty}_1, d^{\infty}_1)\simeq\mathcal{V}(\bar{\g}_0) \oplus \Big(\mathcal{V}(\bar{\g}_0)\otimes \bigoplus_{\alpha\in I_0}\phi^{\alpha}\Big).
\end{equation*}

\subsection{Proof of Theorem \ref{thm: main-screening}} \label{appendix: proof screening operator} \hfill \\
Let us use the notations in Section \ref{sec: free field realization} and Appendix \ref{appendix: superfield formalism} to prove Theorem \ref{thm: main-screening}. As in Section \ref{bigsec: FF realization of SUSY W-algebras}, all the statements are for generic $k\in \CC$. Recall that we assume $\g_0=\h$ and we do not distinguish the elements under the isomorphisms \eqref{eq: heisenberg and fock isomorphism}. By Corollary \ref{cor: FF realization of susy W-alg}, we can restrict our attention to $\left|-\frac{\alpha}{\nu}\right>_{(0|0)}$ for each simple root $\alpha$. To prove the theorem, it is enough to show that the formula of the superfield $Y(\left|-\frac{\alpha}{\nu} \right>, Z)$ in $E^k_1$ is equal to the exponential form \eqref{eq: exponential definition}. Denote the coroot of $\alpha\in \Pi$ by $h_{\alpha}$. Then $[u_{\alpha}, u^{\alpha}]=(u_{\alpha}|u^{\alpha})h_{\alpha}=(-1)^{p(\alpha)}h_{\alpha}$, which induces
\begin{equation} \label{eq: proof making equation}
  {d_{\textup{st}}}_{(0|0)}J_{\bar{u}^{\alpha}}=(-1)^{p(\alpha)}:\!\phi^{\alpha}J_{\bar{h}_{\alpha}}\!:-(k+h^{\vee}) D\phi^{\alpha}.
\end{equation}
From the description of $E^k_1(\leq 1)$ in Theorem \ref{thm: module structure of the first total complex}, the LHS of \eqref{eq: proof making equation} should be $0$ in $E_1^k(\leq 1)$. Therefore, in $E_1^k{(\leq 1)}$,
\begin{equation}
  D\phi^{\alpha}=(-1)^{p(\alpha)}\frac{1}{k+h^{\vee}}:\!\phi^{\alpha}J_{\bar{h}_{\alpha}}\!:=-\frac{1}{k+h^{\vee}}:\!J_{\bar{h}_{\alpha}}\phi^{\alpha}\!:.
\end{equation}
Equivalently, $D\left|-\tfrac{\alpha}{\nu}\right>=-\frac{1}{\nu}:\!\alpha \left|-\tfrac{\alpha}{\nu}\right>\!:$ for $\alpha=(\bar{h}_{\alpha})_{(-1)}\vac$. Hence, the superfield $\superz$ satisfies the equation
\begin{equation} \label{eq: superfield equation}
  (\partial_{\theta}+\theta \partial_z)\superz =-\frac{1}{\nu}:\!\alpha(Z) \superz\!:
\end{equation}
for $\alpha(Z)=Y(\alpha, Z)$, due to Proposition \ref{prop: superfield property}. Also, from the vacuum axiom of SUSY vertex algebras, and the $\widehat{\pi}$-module structure of $E^k_1(\leq 1)$, we can deduce further relations. For each $\beta\in \h^*$, let $\beta\in \widehat{\pi}$ be an element given by $\beta=(\bar{h}_{\beta})_{(-1)}\vac$ for $h_{\beta}\in \h$ satisfying $(h_{\beta}|h)=\beta(h)$ for any $h\in \h$. Then the relations are stated as
\begin{gather}
  \superz \vac \rvert_{Z=0}=\left|-\tfrac{\alpha}{\nu}\right>, \label{eq: superfield equation vacuum}\\
  \big[\beta(Z), \superw \big]=-\frac{1}{\nu}\,(\alpha|\beta)\,\delta(Z,W)\,\superw \label{eq: superfield equation commutator}
\end{gather}
for any $\beta\in \h^*$ and $\beta(Z)=Y(\beta,Z)$. 
 Now we can show that $\superz=\expo$ by proving it is the solution to the equations \eqref{eq: superfield equation}, \eqref{eq: superfield equation vacuum} and \eqref{eq: superfield equation commutator}. Hence, by the following lemmas, the theorem follows.

\begin{lemma} \label{lem: appendix screening condition}
  For a simple root $\alpha\in\Pi$, recall the formal series $\expo$ in \eqref{eq: exponential definition}. This series satisfies the equality
  \begin{equation*}
    [\beta(Z),\expo]=-\frac{1}{\nu}\,(\alpha|\beta)\,\delta(Z,W)\,\expo
  \end{equation*}
  for any $\beta\in \h^*$.
\end{lemma}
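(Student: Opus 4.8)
The plan is to verify directly that the explicit operator-valued Laurent series $e^{-\frac{1}{\nu}\int \alpha(Z)}$ in \eqref{eq: exponential definition} satisfies the commutation relation with each $\beta(Z)$ by reducing to the mode-level brackets of the SUSY Heisenberg algebra. First I would record the relevant brackets: from \eqref{eq: lattice VA Lambda bracket}, equivalently \eqref{eq: proof Heisenberg basis}, the only nonzero brackets among the modes are $[\bar h_{\beta\,(n|i)}, \bar h_{\alpha\,(m|j)}]$, and these are scalars, proportional to $(\alpha|\beta)$. Consequently, for each summand $\alpha_{(n|*)}$ appearing in the four exponential factors of \eqref{eq: exponential definition}, the commutator $[\beta_{(m|*)}, \alpha_{(n|*)}]$ is a $c$-number. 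This means that $\beta(Z)$ commutes with $e^{-\frac{1}{\nu}\int\alpha(Z)}$ up to a scalar-valued distribution: $[\beta(Z),\, e^{-\frac{1}{\nu}\int\alpha(Z)}] = F(Z,W)\, e^{-\frac{1}{\nu}\int\alpha(Z)}$, where $F(Z,W)$ is computed by summing the individual scalar contributions.

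The key computational step is to assemble $F(Z,W)$ explicitly. I would split $\beta(Z) = \beta_+(Z) + \beta_-(Z)$ into creation and annihilation parts (and separately track the $\theta$-part, which picks out the $\alpha_{(n|0)}$ modes, versus the $\bar\theta$-free part which sees the $\alpha_{(n|1)}$ modes). Using the standard identity $[X, e^Y] = [X,Y]e^Y$ valid when $[X,Y]$ is central, each exponential factor in \eqref{eq: exponential definition} contributes a term; the $Z^{-\frac{1}{\nu}\alpha_{(0|0)}|0}$ factor contributes the $\beta_{(0|0)}$-zero-mode piece. Summing the geometric-type series $\sum_{n} Z^{\cdots} W^{\cdots}$ that arise, one recognizes the SUSY delta distribution \eqref{eq: delta distribution}, namely $\delta(Z,W) = (\theta-\eta)\sum_n z^n w^{-n-1}$, so that $F(Z,W) = -\frac{1}{\nu}(\alpha|\beta)\,\delta(Z,W)$, which is exactly the claimed relation. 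The shift operator $s_{-\frac\alpha\nu}$ enters only through \eqref{eq: commutator with shift operator}, contributing the zero-mode term $-\frac1\nu\beta(h_\alpha)$, and this is precisely the piece that makes the $n=0$ coefficient of the delta distribution come out right.

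The main obstacle I expect is bookkeeping: correctly pairing up which modes of $\beta(Z)$ bracket nontrivially with which modes in each of the four exponential factors (plus the $Z^{-\frac{1}{\nu}\alpha_{(0|0)}|0}$ tail and the shift $s_{-\frac\alpha\nu}$), and verifying that the various pieces reassemble into a single clean $\delta(Z,W)$ rather than leaving residual non-delta terms. In particular, the interplay between the $\theta$-odd parts (which involve $(\partial_\eta+\eta\partial_w)$-type derivatives and the odd modes $\alpha_{(n|1)}$) and the even parts must be handled so that the odd variable $\theta-\eta$ of the delta distribution emerges correctly; a sign error here would be easy to make. A useful sanity check, which I would carry out, is to specialize to $\beta=\alpha$ and compare against the relation $D\left|-\tfrac\alpha\nu\right> = -\tfrac1\nu :\!\alpha\left|-\tfrac\alpha\nu\right>\!:$ derived in \eqref{eq: proof making equation}, i.e.\ to confirm consistency with the superfield equation \eqref{eq: superfield equation}. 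Once Lemma \ref{lem: appendix screening condition} is established, together with the analogous (and easier) verification of the vacuum condition \eqref{eq: superfield equation vacuum} and the differential equation \eqref{eq: superfield equation}, uniqueness of the solution to that system forces $Y\big(\left|-\tfrac\alpha\nu\right>,Z\big) = e^{-\frac1\nu\int\alpha(Z)}$, and the screening-operator formula \eqref{eq: susy screening} follows from Corollary \ref{cor: FF realization of susy W-alg} by taking the super residue.
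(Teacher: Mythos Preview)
Your proposal is correct and follows essentially the same route as the paper's proof: reduce to mode-level Heisenberg commutators (which are central scalars), apply $[X,e^Y]=[X,Y]e^Y$ to each exponential factor, and reassemble the resulting series into the SUSY delta distribution, with the shift operator $s_{-\frac{\alpha}{\nu}}$ supplying the missing $m=0$ piece. One small correction: the factor $Z^{-\frac{1}{\nu}\alpha_{(0|0)}|0}$ actually commutes with every $\beta_{(m|i)}$ (since $[\beta_{(m|0)},\alpha_{(0|0)}]=m\,\delta_{m,0}(\alpha|\beta)=0$ and the mixed $i\neq j$ brackets vanish), so it contributes nothing; the zero-mode term comes entirely from $s_{-\frac{\alpha}{\nu}}$ via \eqref{eq: commutator with shift operator}, exactly as you say in your next sentence.
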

\begin{proof}
  For any $\beta\in \h^*$, the corresponding element $\beta\in \widehat{\pi}$ satisfies $[\beta {}_{\Lambda}\alpha]=(\alpha|\beta)\chi$. By Remark \ref{rmk: SUSY VA defs equivalence}, it can be translated to the commutator of the superfields as
  \begin{equation} \label{eq: lemma proof commutator of superfields}
    [\beta(Z), \alpha(W)]=-(\partial_{\eta}+\eta \partial_w)\delta(Z,W)(\alpha|\beta).
  \end{equation}
  Comparing the coefficients of \eqref{eq: lemma proof commutator of superfields}, we get
  \begin{equation} \label{eq: lemma proof commutator coefficients}
    [\beta_{(m|0)}, \alpha_{(n|0)}]=\delta_{m+n,0}m (\alpha|\beta), \quad [\beta_{(m|1)}, \alpha_{(n|1)}]=\delta_{m+n, -1}(\alpha|\beta)
  \end{equation}
  and $[\beta_{(m|i)},\alpha_{(n|j)}]=0$ if $i\neq j$, where the $(m|i)$th products are introduced in \eqref{eq: superfield}. Therefore, by applying the result of \eqref{eq: lemma proof commutator coefficients}, we have
  \begin{equation}
    \begin{aligned}
    &\Big[\beta(Z),\ \frac{1}{\nu}\sum_{j\neq 0}\frac{W^{-n|0}}{n} \alpha_{(n|0)}-\frac{1}{\nu}\sum_{j\in \ZZ}W^{-j-1|1}\alpha_{(j|1)}\Big]\\
    &=-\frac{1}{\nu}\sum_{m\neq 0} Z^{-m-1|1}W^{m|0}(\alpha|\beta)+\frac{1}{\nu} \sum_{m\in \ZZ}Z^{-m-1|0}W^{m|1}(\alpha|\beta)\\
    &=-\frac{1}{\nu}(\alpha|\beta)\delta(Z,W)+\frac{1}{\nu}(\alpha|\beta)Z^{-1|1}.
      \end{aligned}
  \end{equation}
  Also, from the definition of $s_{-\frac{\alpha}{\nu}}$ in \eqref{eq: commutator with shift operator}, one can compute that
  \begin{equation}
    \big[\beta(Z), s_{-\frac{\alpha}{\nu}}\big]=-\frac{1}{\nu}(\alpha|\beta)Z^{-1|1}s_{-\frac{\alpha}{\nu}}.
  \end{equation}
  Hence, if we write $e^{-\frac{1}{\nu}\int \alpha(W)}=s_{-\frac{\alpha}{\nu}}\, \widetilde{\exp}$ in the expression \eqref{eq: exponential definition}, the commutator of $e^{-\frac{1}{\nu}\int \alpha(W)}$ and $\beta(Z)$ is equal to
  \begin{equation*}
    \begin{aligned}
    &\Big[\beta(Z), e^{-\frac{1}{\nu}\int \alpha(W)}\Big]\\
    &=\Big[\beta(Z), s_{-\frac{\alpha}{\nu}}\Big]\widetilde{\exp}+(-1)^{p(\alpha)+1}s_{-\frac{\alpha}{\nu}}\Big[\beta(Z),\ \frac{1}{\nu}\sum_{j\neq 0}\frac{W^{-n|0}}{n} \alpha_{(n|0)}-\frac{1}{\nu}\sum_{j\in \ZZ}W^{-j-1|1}\alpha_{(j|1)}\Big]\widetilde{\exp}\\
    &=-\frac{1}{\nu}(\alpha|\beta)Z^{-1|1}e^{-\frac{1}{\nu}\int \alpha (W)}+\Big(-\frac{1}{\nu}(\alpha|\beta)\delta(Z,W)+\frac{1}{\nu} (\alpha|\beta)Z^{-1|1}\Big)e^{-\frac{1}{\nu}\int \alpha (W)}\\
    &=-\frac{1}{\nu}\, (\alpha|\beta)\, \delta(Z,W)\,e^{-\frac{1}{\nu}\int \alpha(W)}.
    \end{aligned}
  \end{equation*}
\end{proof}

\begin{lemma}
  For a simple root $\alpha\in\Pi$, recall the formal series $\expo$ in \eqref{eq: exponential definition}. This series satisfies the following equality
  \begin{equation*}
    (\partial_{\theta}+\theta \partial_z)\expo=-\frac{1}{\nu}:\!\alpha(Z)\expo\!:.
  \end{equation*}
\end{lemma}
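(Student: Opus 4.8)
The plan is to verify the identity directly, using the explicit form of $\expo$ in \eqref{eq: exponential definition} and keeping careful track of which summand of $\partial_\theta+\theta\partial_z$ acts on which factor; the computation is parallel to the bosonic free-field vertex-operator computation, decorated with the two fermionic factors. Abbreviate $\mathcal{E}(Z)=\expo$ and write it as the ordered product $\mathcal{E}(Z)=s_{-\frac{\alpha}{\nu}}\,e^{P_<(Z)}\,e^{Q_<(Z)}\,e^{Q_>(Z)}\,e^{P_\geq(Z)}\,Z^{-\frac{1}{\nu}\alpha_{(0|0)}|0}$, where $P_<,P_\geq$ gather the $\alpha_{(n|1)}$ modes (whose coefficients $Z^{-n-1|1}=\theta z^{-n-1}$ carry the odd variable) and $Q_<,Q_>$ gather the $\alpha_{(n|0)}$ modes (whose coefficients $Z^{-n|0}=z^{-n}$ do not). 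Throughout I will use the mode relations obtained in the proof of Lemma \ref{lem: appendix screening condition}, namely $[\alpha_{(m|0)},\alpha_{(n|0)}]=\delta_{m+n,0}\,m\,(\alpha|\alpha)$, $[\alpha_{(m|1)},\alpha_{(n|1)}]=\delta_{m+n,-1}(\alpha|\alpha)$ and $[\alpha_{(m|i)},\alpha_{(n|j)}]=0$ for $i\neq j$, together with the defining properties \eqref{eq: commutator with shift operator} of $s_{-\frac{\alpha}{\nu}}$. A first useful observation is that the two fermionic exponentials truncate: writing $P_<(Z)=\theta\,p_<(z)$ with $p_<(z):=-\tfrac1\nu\sum_{n<0}z^{-n-1}\alpha_{(n|1)}$, each term carries a single $\theta$, so $P_<(Z)^2=0$ and hence $e^{P_<(Z)}=1+P_<(Z)$; likewise $e^{P_\geq(Z)}=1+P_\geq(Z)$ with $P_\geq(Z)=\theta\,p_\geq(z)$, $p_\geq(z):=-\tfrac1\nu\sum_{n\geq 0}z^{-n-1}\alpha_{(n|1)}$, and moreover $p_<(z)^2=p_\geq(z)^2=0$ since the $\alpha_{(n|1)}$ involved pairwise anticommute.

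Next I would apply $\partial_\theta+\theta\partial_z$ through the ordered product by the graded Leibniz rule. The operator $\partial_\theta$ annihilates $e^{Q_<}$, $e^{Q_>}$ and $Z^{-\frac1\nu\alpha_{(0|0)}|0}$ and acts on the fermionic factors by $\partial_\theta(1+P_<)=p_<(z)$ and $\partial_\theta(1+P_\geq)=p_\geq(z)$; conversely $\theta\partial_z$ annihilates every factor of the form $\theta\cdot(\text{operator})$ (since $\partial_z$ leaves the prefactor $\theta$ untouched and $\theta^2=0$), so it kills the fermionic factors and acts on the remaining ones by $\partial_z e^{Q_<}=q_<(z)\,e^{Q_<}$, $\partial_z e^{Q_>}=q_>(z)\,e^{Q_>}$ and $\partial_z Z^{-\frac1\nu\alpha_{(0|0)}|0}=-\tfrac1\nu z^{-1}\alpha_{(0|0)}\,Z^{-\frac1\nu\alpha_{(0|0)}|0}$, with $q_<(z):=-\tfrac1\nu\sum_{n<0}z^{-n-1}\alpha_{(n|0)}$ and $q_>(z):=-\tfrac1\nu\sum_{n>0}z^{-n-1}\alpha_{(n|0)}$ (the exponentials differentiate cleanly because $[Q_<,q_<]=0=[Q_>,q_>]$).

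The third step is to slide the extracted modes to the outermost positions and compare with the expansion of $:\!\alpha(Z)\mathcal{E}(Z)\!:$ in \eqref{eq: appendix normally ordered superfields}. Since $\alpha_{(n|1)}$ commutes with $s_{-\frac{\alpha}{\nu}}$ for $n<0$ and with $Z^{-\frac1\nu\alpha_{(0|0)}|0}$ for all $n$, since $(1+P_\geq)^{-1}p_\geq(z)=p_\geq(z)$, and since all mixed brackets $[\alpha_{(\cdot|1)},\alpha_{(\cdot|0)}]$ vanish, the $\partial_\theta$-part can be rewritten with $p_<(z)$ at the far left of $\mathcal{E}(Z)$ and $p_\geq(z)$ at the far right; similarly the $\theta\partial_z$-part produces $q_<(z)$ between $P_<$ and $Q_<$, which slides past both and past $s_{-\frac{\alpha}{\nu}}$ to the far left, and $q_>(z)$ together with $-\tfrac1\nu z^{-1}\alpha_{(0|0)}$, which slide to the far right. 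Recalling that the $\theta$-free part of $\alpha(Z)$ carries the modes $\alpha_{(n|1)}$ and the $\theta$-part carries the modes $\alpha_{(n|0)}$, that $q_>(z)-\tfrac1\nu z^{-1}\alpha_{(0|0)}=-\tfrac1\nu\sum_{n\geq 0}z^{-n-1}\alpha_{(n|0)}$, and that the creation modes ($n<0$) of $\alpha(Z)$ stand to the left while the annihilation modes ($n\geq0$) stand to the right of $\mathcal{E}(Z)$ in $:\!\alpha(Z)\mathcal{E}(Z)\!:$, one matches the two expressions term by term in $1$ and $\theta$ and obtains $(\partial_\theta+\theta\partial_z)\mathcal{E}(Z)=-\tfrac1\nu:\!\alpha(Z)\mathcal{E}(Z)\!:$.

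The step I expect to be the main obstacle is the sign bookkeeping in this last comparison: one must track the Koszul signs incurred when $\partial_\theta$, the variable $\theta$, and the extracted odd modes $p_<,p_\geq,q_<$ cross the operator-valued factor $s_{-\frac{\alpha}{\nu}}$ (of parity $p(u_\alpha)+1$) and the even factors $1+P_<,1+P_\geq$, and then verify that these combine with the sign $(-1)^{p(\alpha)p(\mathcal{E})}$ in \eqref{eq: appendix normally ordered superfields} to produce the asserted equality rather than its negative. A secondary point, harmless but needing explicit mention, is that the $q_<,q_>$ terms, produced between inner factors, really can be pushed to the outermost positions; this rests on $[\alpha_{(m|1)},\alpha_{(n|0)}]=0$. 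Everything else is routine manipulation of formal Laurent series.
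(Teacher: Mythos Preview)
Your proposal is correct and follows essentially the same approach as the paper: the paper also writes $\expo$ as an ordered product of the shift operator, two ``fermionic'' exponentials in the $\alpha_{(n|1)}$ modes, two ``bosonic'' exponentials in the $\alpha_{(n|0)}$ modes, and the zero-mode factor, then treats the $\partial_\theta$ and $\theta\partial_z$ parts separately (your equations correspond to the paper's \eqref{eq: lemma proof goal1} and \eqref{eq: lemma proof goal2}), differentiates each factor, and slides the resulting modes outward to match the creation/annihilation split in the normally ordered product. Your flagged concern about Koszul signs across $s_{-\frac{\alpha}{\nu}}$ is exactly the bookkeeping the paper carries out explicitly, and it resolves with the expected sign $(-1)^{p(\alpha)+1}$ on the annihilation side.
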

\begin{proof}
  We prove that
  \begin{align}
    \theta \partial_z \big(\expo\big)=\,-\frac{1}{\nu}:\!\Big(\sum_{n\in \ZZ}Z^{-n-1|1}\alpha_{(n|0)}\Big)\expo\!:, \label{eq: lemma proof goal1}\\
    \partial_{\theta}\big(\expo\big)=\,-\frac{1}{\nu}:\!\Big(\sum_{n\in \ZZ}Z^{-n-1|0}\alpha_{(n|1)}\Big)\expo\!:, \label{eq: lemma proof goal2}
  \end{align}
   which sum up to the desired equality. Recall the expression $\widetilde{\exp}$ used in the proof of Lemma \ref{lem: appendix screening condition}. We denote the series
  \begin{equation} \label{eq: lemma proof exponential components}
    \begin{gathered}
    A(Z)_+=-\frac{1}{\nu}\sum_{n<0}Z^{-n-1|1}\alpha_{(n|1)},\quad A(Z)_-=-\frac{1}{\nu}\sum_{n\geq 0}Z^{-n-1|1}\alpha_{(n|1)},\\
    B(Z)_+=\frac{1}{\nu}\sum_{n<0}\frac{Z^{-n|0}}{n} \alpha_{(n|0)}, \quad B(Z)_-=\frac{1}{\nu}\sum_{n>0} \frac{Z^{-n|0}}{n} \alpha_{(n|0)}
    \end{gathered}
  \end{equation}
  so that $\widetilde{\exp}=\exp(A(Z)_+)\exp(B(Z)_+)\exp(B(Z)_-)\exp(A(Z)_-)Z^{-\frac{1}{\nu}\alpha_{(0|0)}|0}$. Since $\partial_z$ acts as an even derivation, one can compute that
  \begin{equation*}
    \begin{aligned}
    &\theta \partial_z\big(\expo\big)\\
    &=\theta\,s_{-\frac{\alpha}{\nu}}\,(\partial_z B(Z)_+)\cdot\widetilde{\exp}+ \theta\,s_{-\frac{\alpha}{\nu}}\,\widetilde{\exp}\cdot (\partial_z B(Z)_-)+\theta\,s_{-\frac{\alpha}{\nu}}\,\widetilde{\exp} \cdot\Big(-\frac{1}{\nu}Z^{-1|0}\alpha_{(0|0)}\Big)\\
    &=\Big(-\frac{1}{\nu}\sum_{n<0}Z^{-n-1|1}\alpha_{(n|0)}\Big)\expo+(-1)^{p(\alpha)+1}\expo\Big(-\frac{1}{\nu}\sum_{n\geq 0}Z^{-n-1|1}\alpha_{(n|0)}\Big)\\
    &=-\frac{1}{\nu}:\!\Big(\sum_{n\in \ZZ}Z^{-n-1|1}\alpha_{(n|0)}\Big)\expo\!:,
    \end{aligned}
  \end{equation*}
  which proves \eqref{eq: lemma proof goal1}. Now, note that the differential $\partial_{\theta}$ acts as an odd differential on the space of formal power series in $Z$ and $Z^{-1}$. Also, for for any series $X(Z)$ of even parity, 
  \begin{equation} \label{eq: lemma proof exp differentiation}
    \partial_{\theta}\big(\exp(X(Z))\big)=(\partial_{\theta}X(Z))\big(\!\exp (X(Z))\big)=\big(\!\exp(X(Z))\big)(\partial_{\theta}X(Z)),
  \end{equation}
  when $\partial_{\theta}X(Z)$ and $X(Z)$ supercommute with each other. Apply this relation to obtain
\begin{equation*}
  \begin{aligned}
  &\partial_{\theta}\big(\expo\big)=(-1)^{p(\alpha)+1}s_{-\frac{\alpha}{\nu}}\,\partial_{\theta}\big(A(Z)_+\big)\cdot \widetilde{\exp}+(-1)^{p(\alpha)+1}s_{-\frac{\alpha}{\nu}}\,\widetilde{\exp}\cdot \partial_{\theta}(A(Z)_-)\\
  &=\Big(-\frac{1}{\nu}\sum_{n<0}Z^{-n-1|0}\alpha_{(n|1)}\Big)\expo +(-1)^{p(\alpha)+1}\expo \Big(-\frac{1}{\nu}\sum_{n\geq 0}Z^{-n-1|0}\alpha_{(n|1)}\Big)\\
  &=-\frac{1}{\nu}:\!\Big(\sum_{n\in \ZZ}Z^{-n-1|0}\alpha_{(n|1)}\Big)\expo\!:,
  \end{aligned}
\end{equation*}
which shows the equality \eqref{eq: lemma proof goal2}.
\end{proof}

\section*{Acknowledgement}
I would like to express my gratitude to my supervisor, Uhi Rinn Suh, for her invaluable support and guidance in enhancing this paper. Additionally, I am deeply thankful to Naoki Genra for the enriching discussions we had during his visit to Seoul National University in 2023. My appreciation also goes to Gahng Sahn Lee and Sin-Myung Lee for valuable discussions about Lie superalgebras and their representations.

\end{document}